\def\ShowComment{false}
\newcommand{\todo}[1]{{\color{red}\bf [TODO: #1]}}
\def\john#1{}
\newcommand{\todo}[1]{\hspace{-1.5mm}} %Replace with proper letter width if available.
\newcommand{\john}[1]{{}}
\newcommand{\bigboxplus}{
  \mathop{
    \vphantom{\bigoplus} 
    \mathchoice
      {\vcenter{\hbox{\resizebox{\widthof{$\displaystyle\bigoplus$}}{!}{$\boxplus$}}}}
      {\vcenter{\hbox{\resizebox{\widthof{$\bigoplus$}}{!}{$\boxplus$}}}}
      {\vcenter{\hbox{\resizebox{\widthof{$\scriptstyle\oplus$}}{!}{$\boxplus$}}}}
      {\vcenter{\hbox{\resizebox{\widthof{$\scriptscriptstyle\oplus$}}{!}{$\boxplus$}}}}
  }\displaylimits 
}
\DeclareMathOperator{\firstBranchNode}{firstBranchNode}
\DeclareMathOperator{\LCA}{LCA}
\DeclareMathOperator{\child}{children}
\DeclareMathOperator{\trunc}{trunc}
\newtheorem{thm}{Theorem} % reset theorem numbering for each chapter
\numberwithin{thm}{chapter}
\newtheorem{defn}[thm]{Definition} % definition numbers are dependent on theorem numbers
\newtheorem{obs}[thm]{Observation}
\newtheorem{lemma}[thm]{Lemma}
\newtheorem{property}[thm]{Property}
\newtheorem{corollary}[thm]{Corollary}
\newtheorem{cor}[thm]{Corollary}
\newtheorem{conj}{Conjecture}
\newcommand{\curly}[1]{\left\{ #1 \right\}}
\newcommand{\paren}[1]{\left( #1 \right)}
\newcommand{\angularbracs}[1]{\langle #1 \rangle}
\newcommand{\tree}{\mathcal{T}}
\newcommand{\angularbrac}[1]{\angularbracs{#1}}
\newcommand{\CONGEST}{{\cal{CONGEST }}}
\DeclareMathOperator{\msg}{msg}
\newcommand{\subgraphsequence}[2]{G_{\paren{{#1},{#2}}}}
\newcommand{\edgesatvertex}[1]{{E}_{{#1}^\downarrow}}
\newcommand{\graphatvertex}[1]{G_{#1}}
\newcommand{\Zidentity}{\mathbb{e}}
\newcommand{\Zabsorb}{\nparallel}
\newcommand{\mytitle}{Improved Algorithm for Min-Cuts in Distributed Networks}
\newcommand{\plog}{\text{polylog }}
\newcommand\desc[2][]{%
  \ifstrempty{#1}{%
	{#2}^{\downarrow}
  }{%
	{#2}^{\downarrow {#1}}
  }%
}
\newcommand\ancestor[2][]{%
  \ifstrempty{#1}{%
	\mathcal{A}\paren{#2}
  }{%
	\mathcal{A}_{#1}\paren{#2}
  }%
}
\newcommand\level[2][]{%
  \ifstrempty{#1}{%
	\ell\paren{#2}
  }{%
	\ell_{#1}\paren{#2}
  }%
}
\newcommand\parent[2][]{%
  \ifstrempty{#1}{%
	\pi\paren{#2}
  }{%
	\pi_{{#1}}\paren{#2}
  }%
}
\newcommand\Sketch[3][]{%
  \ifstrempty{#1}{%
	\mathcal{S}^{#2}(#3)
  }{%
	\mathcal{S}_{#1}^{#2}(#3)
  }%
}
\newcommand\rSketch[4][]{%
  \ifstrempty{#1}{%
	{\mathcal{S}}^{#2}(#3,#4)
  }{%
	{\mathcal{S}}_{#1}^{#2}(#3,#4)
  }%
}
\newcommand\bnum[2][]{%
	\mathcal{\xi}_{{#1}}\paren{#2}
}
\newcommand{\zetaoperator}{\texttt{zeta-operator}}
\newcommand{\CASE}[1]{\textbf{CASE}-#1}
\begin{document}

%%%%%%%%%%%%%%%%%%%%%%%%%%%%%%%%%%%%%%%%%%%%%%%%%%%%%%%%%%%%%%%%%%%%%%
% Title page
\titlename{\mytitle}
\title{\mytitle}

\author{Mohit Daga}

\date{OCTOBER 2017}
\department{COMPUTER SCIENCE AND ENGINEERING}

%\nocite{*}
\maketitle

%%%%%%%%%%%%%%%%%%%%%%%%%%%%%%%%%%%%%%%%%%%%%%%%%%%%%%%%%%%%%%%%%%%%%%
% Certificate
\certificate

\vspace*{0.5in}

\noindent This is to certify that the thesis titled {\bf \mytitle}, 
submitted by {\bf Mohit Daga},  
  to the Indian Institute of Technology, Madras, for
the award of the degree of {\bf Master of Science}, is a bona fide
record of the research work done by him under my supervision.  The
contents of this thesis, in full or in parts, have not been submitted
to any other Institute or University for the award of any degree or
diploma.

\vspace*{1.5in}

\begin{singlespacing}
\hspace*{-0.25in}
\parbox{2.5in}{
\noindent {\bf Dr. John Augustine} \\
\noindent Research Guide \\ 
\noindent Associate Professor \\
\noindent Dept. of CSE \\
\noindent IIT-Madras, 600 036 \\
} 
\hspace*{1.0in} 
%\parbox{2.5in}{
%\noindent {\bf Prof.~S.~C.~Rajan} \\
%\noindent Research Guide \\ 
%\noindent Assistant Professor \\
%\noindent Dept.  of  Aerospace Engineering\\
%\noindent IIT-Madras, 600 036 \\
%}  
\end{singlespacing}
\vspace*{0.25in}
\noindent Place: Chennai\\
Date:

%%%%%%%%%%%%%%%%%%%%%%%%%%%%%%%%%%%%%%%%%%%%%%%%%%%%%%%%%%%%%%%%%%%%%%
% Acknowledgements
\acknowledgements

%TEX root = ../thesis.tex
First and foremost, I would like to extend my sincere gratitude to my thesis supervisor Dr. John Augustine. I am thankful to John for his constant encouragement, guidance and optimism. John actively helped me in improving my technical writing and presentation with lot of patience. Our countless discussions about research (we literally burnt the midnight oil) enabled the timely completion of this thesis. His opinion and vision about a career in research and academia are well placed.

I also closely collaborated with Dr. Danupon Nanongkai, who introduced me to the problem covered in this thesis. I thank him for inviting me to KTH in April'17. Danupon also reviewed a draft of this work and gave me valuable inputs.

During the initial days at IIT-Madras, I interacted with Dr. Rajseakar Manokaran, in his capacity as my faculty adviser. He inspired me to pursue research in Theoretical Computer Science. I thank Dr. N.S. Narayanaswamy,  chairperson of my General Test Committee (GTC)  for being critical of my seminar presentations and helping me to grow as a researcher. I also thank Dr. Krishna Jagannathan and Dr. Shweta Agrawal for accepting to be part of my GTC.

I feel proud to be part of IIT Madras and CSE Department. I thank the administration here for establishing tranquility and serenity of campus after the devastating 2016 cyclone. I thank the support staff of my department. Special thanks to Mr. Mani from the CSE office for scheduling the required meetings and helping me to understand the procedures and requirements with patience. 

During my stay at IIT Madras, I was part of two groups: ACT Lab and TCS Lab. I thank my lab mates for maintaining good work culture. My various discussions on a wide range of topics with my lab mates were enjoyable and enriching. This experience will be helpful when I move to Stockholm for my Ph.D. studies.

Last but not the least, I thank my mother Beena, my father Manohar and my sisters: Kaushal and Sheetal. 
%They have taught me to appreciate the present, work for future and learn from the past. 
This thesis is dedicated to them. 

\abstract{

\noindent KEYWORDS: \hspace*{0.5em} \parbox[t]{4.4in}{Connectivity, Reliability, Edge Min-Cuts}

\vspace*{24pt}

\noindent 
In this thesis, we present fast deterministic algorithm to find small cuts in distributed networks. 
Finding small min-cuts for a network is essential for ensuring the quality of service and reliability. 
Throughout this thesis, we use the \CONGEST model which is a typical message passing model used to design and analyze algorithms in distributed networks. 
We survey various algorithmic techniques in the \CONGEST model and give an overview of the recent results to find cuts. 
We also describe elegant graph theoretic ideas like cut spaces and cycle spaces that provide useful intuition upon which our work is built. 

Our contribution is a novel fast algorithm\john{Do you want to to claim your contribution as a single algorithm or as several algorithms? I am assuming the singular.} to find small cuts. 
Our algorithm relies on a new characterization of trees and cuts introduced in this thesis. Our algorithm is built upon several  new algorithmic ideas that, when coupled with our characterization of trees and cuts, help us to find the required min-cuts. 
Our novel techniques include  a \emph{tree restricted semigroup function (TRSF)}, a novel \emph{sketching} technique, and a \emph{layered algorithm}. 
TRSF is defined with respect to a spanning tree and is based on a commutative semigroup. 
This simple yet powerful technique helps us to deterministically find min-cuts of size one (bridges) and min-cuts of size two optimally. 
Our sketching technique samples a small but relevant vertex set which is enough to find small min-cuts in certain cases. 
Our layered algorithm finds min-cuts in smaller sub-graphs \emph{pivoted} by nodes at different levels in a spanning tree and uses them to make the decision about the min-cuts in the complete graph. 
This is interesting because it enables us to show that even for a global property like finding min-cuts, local information can be exploited in a coordinated manner. 
}
\pagebreak

%%%%%%%%%%%%%%%%%%%%%%%%%%%%%%%%%%%%%%%%%%%%%%%%%%%%%%%%%%%%%%%%%
% Table of contents etc.

\begin{singlespace}
\tableofcontents
\thispagestyle{empty}

\listoftables
\addcontentsline{toc}{chapter}{LIST OF TABLES}
\listoffigures
\addcontentsline{toc}{chapter}{LIST OF FIGURES}
\end{singlespace}

%%%%%%%%%%%%%%%%%%%%%%%%%%%%%%%%%%%%%%%%%%%%%%%%%%%%%%%%%%%%%%%%%%%%%%
% Abbreviations
\abbreviations

\noindent 
\begin{tabbing}
xxxxxxxxxxx \= xxxxxxxxxxxxxxxxxxxxxxxxxxxxxxxxxxxxxxxxxxxxxxxx \kill
\textbf{IITM}   \> Indian Institute of Technology, Madras \\
\textbf{TRSF} \> Tree Restricted Semigroup function\\
\end{tabbing}

\pagebreak

%%%%%%%%%%%%%%%%%%%%%%%%%%%%%%%%%%%%%%%%%%%%%%%%%%%%%%%%%%%%%%%%%%%%%%
% Notation

\chapter*{\centerline{NOTATION}}
\addcontentsline{toc}{chapter}{NOTATION}

\begin{singlespace}
\begin{tabbing}
xxxxxxxxxxx \= xxxxxxxxxxxxxxxxxxxxxxxxxxxxxxxxxxxxxxxxxxxxxxxx \kill
\textbf{$\desc[T]{v}$}  \> set of descendants of vertex $v$ (including itself) in some tree $T$ \\
\textbf{$\ancestor[T]{v}$} \> set of ancestors of vertex $v$ (including itself) in some tree $T$ \\
\textbf{$\parent[T]{v}$} \> parent of vertex $v$ in some tree $T$\\
\textbf{$\child_T(v)$} \> children of node $v$ in some tree $T$\\
\textbf{$\delta(A)$} \> cut set induced by vertex set $A$\\
\textbf{$\eta_T(v)$} \> $|\delta(\desc[T]{v})|$\\
\textbf{$\gamma(A,B)$} \> $|\delta(A) \cap \delta(B)|$, where $A,B$ are vertex set\\
\textbf{$\level[T]{v}$} \> level of node $v$ in a rooted tree $T$, root at level $0$\\
 \textbf{$\alpha_T(x,l)$} \> ancestor of vertex $x$ at level $l$ \\
\textbf{$R_T(v)$} \> canonical tree of a node $v$ with respect to the spanning $T$\\
\textbf{$\Sketch[T]{k}{v}$} \> $k$-Sketch of node $v$ w.r.t a spanning tree $T$ \\
\textbf{$\bnum[T]{v}$} \> branching number of node $v$ w.r.t some tree $T$\\
\textbf{$\rSketch[T]{k}{v}{a}$} \> reduced $k$-Sketch of a node $v$ and some $a \in \desc[T]{v}\setminus v$ w.r.t a spanning tree $T$\\
\textbf{$\rho_T(v)$} \> a path from the root to node $v$ in the tree $T$\\
\textbf{$\LCA_T(v_1,v_2)$} \> Lowest Common Ancestor of two node $v_1,v_2$ in some tree T\\
\textbf{$\overline{\mathcal{N}}_T({A})$} \> $\curly{y \mid x \in A, y \notin A, (x,y) \in E, (x,y)\ \text{is a non-tree edge in tree }T}$\\
\textbf{$\mathcal{P}_T(A)$}\> $\curly{\rho_T(a) \mid a\in A}$
\end{tabbing}

\end{singlespace}

\pagebreak
\clearpage

% The main text will follow from this point so set the page numbering
% to arabic from here on.
\pagenumbering{arabic}

%%%%%%%%%%%%%%%%%%%%%%%%%%%%%%%%%%%%%%%%%%%%%%%%%%
% Introduction.

\chapter{Introduction}
\label{chapter:1_introduction}
%TEX root = ../../thesis.tex
Networks of massive sizes are all around us. From large computer networks which are the backbone of the Internet to sensor networks of any manufacturing unit, the ubiquity of these networks is unquestionable. Each node in these kinds of networks is an independent compute node or abstracted as one. Further, nodes in these networks only have knowledge of their immediate neighbors, but not the whole topology. There are many applications which require us to solve problems related to such networks. To solve such problems, a trivial idea is as follows: Each node sends its adjacency list to one particular node through the underlying communication network; that node then solves the required problem using the corresponding sequential algorithm. Such an idea may not be cost effective due to the constraints of the capacity of links and large sizes of these networks.  The question then is: Does there exist a faster cost-effective algorithm through which nodes in a network can coordinate among each other to find the required network property? Such a question is part of a more generalized spectrum of \emph{Distributed Algorithms}. 

In this thesis, we give fast distributed algorithm for finding small min-cuts. An edge cut is the set of edges whose removal disconnects the network and the one with a minimum number of edges is called as a \emph{min-cut}. Finding small min-cuts has applications in ensuring reliability and quality of service. This enables the network administrator to know about the critical edges in the network.

Our approach to finding the min-cut is based on our novel distributed algorithmic techniques and backed by new characterizations of cuts and trees. Our algorithmic techniques are as follows:
\begin{itemize}
	\item \emph{Tree Restricted Semigroup Function} wherein a semigroup function is computed for each node in the network based on a rooted spanning tree
	\item \emph{Sketching Technique}: which allows us to collect small but relevant information of the network in a fast and efficient manner 
	\item \emph{Layered Algorithm} which finds min-cuts in smaller sub-graphs and strategically uses the information to make the decision about min-cuts in the complete graph.
\end{itemize}

\section{Notations and Conventions}
We consider an undirected, unweighted and simple graph $G = (V,E)$ where $V$ is the vertex set and $E$ is the edge set. We use $n$ and $m$ to denote the cardinalities of $V$ and $E$ respectively. Edges are the underlying physical links and vertices are the compute nodes in the network. We will use the term vertex and node interchangeably. Similarly, the terms edges and links will be used interchangeably. 

A cut $C = (A,V\setminus A)$, where $\emptyset \subsetneq A \subsetneq V$ is a partition of the vertex set into two parts. The cut-set is the edges crossing these two parts. A min-cut is a cut-set with a minimum number of edges. Let $d_G(u,v)$ be the distance between any two nodes $u,v$ in the network which is defined by the number of edges in the shortest path between them. We will consider connected networks, so $d_G(u,v)$ is guaranteed to be well-defined and finite. The diameter of the graph $G$ is defined as $\max\limits_{u,v \in V} d_G(u,v)$. We will use $D$ to denote the diameter of the graph and $n$ for the size of the vertex set and $m$ for the size of edge set. 

In a typical communication network, the diameter $D$ is much smaller than the number of vertices $n$. This is because the networks are often designed in such a way  that there is an assurance that a message sent from any  node to any other node should not be required to pass through too many links.

\section{Distributed Model}
Different types of communication networks have different parameters. This is primarily based on how communication takes place between any two nodes in a distributed system.  For the algorithm designers, these details are hidden and abstracted by appropriate modeling of the real world scenario. An established book in this domain titled \lq Distributed Computing: A Locality-Sensitive Approach\rq\ \citep{peleg2000distributed} remarks: \lq\lq\emph{The number of different models considered in the literature is only slightly lower than the number of papers published in this field}". But certainly, over the years, the research community working around this area have narrowed down to a handful of models. Out of them, the \CONGEST model has been quite prominent and captures several key aspects of typical communication networks.

In the \CONGEST model, as in most other distributed networks, each vertex acts as an independent compute node. Two nodes may communicate to each other in synchronous rounds only if they have a physical link between them. The message size is restricted to $O(\log n)$ bits across each link in any given round. We will focus our attention on communication which is more expensive than computation. 

In the \CONGEST model, the complexity of an algorithm is typically measured in terms of the number of rounds required for an algorithm to terminate. Parallels of this could be drawn to the time complexity of sequential algorithms. Much of the focus has been on the study of the distributed round complexity of algorithms but recently there has been some renewed interest to study the message complexity (the sum total of messages transmitted across all edges) as well \citep{Pandurangan:2017:TMD:3055399.3055449,Elkin:2017:SDD:3087801.3087823}. Apart from these some recent research is also being directed towards the study of asymptotic memory required at each node during the execution of the algorithm \citep{DBLP:journals/corr/ElkinN17}. In this thesis, we will limit our focus on the analysis of round complexity. 

\section{Classification of network problems based on locality}
Any problem pertaining to a communication network may depend either on local information or global information. Local information of a node comprises knowledge of immediate neighbors or nodes which are just a constant number of hops away. Global information includes knowledge of nodes across the network.

One of the prominent examples of a problem which just requires local information is of finding a \emph{maximal independent set} (MIS). An \emph{independent set} is a set of vertices such that no two vertices in the set have an edge between them. An MIS is an independent set such that no vertex can be added into it without violating independence. It has been shown by \cite{Luby:1985:SPA:22145.22146} that finding MIS requires $O(\log n)$ rounds. 

On the other side, to gather global information at all nodes, it takes at least $D$ rounds. This is because the farthest node could be as much as $D$ distance away. Thus the complexity of algorithms which require global information is $\Omega(D)$. 

Simple problems like finding the count of the number of nodes in the network, constructing a BFS tree, leader election etc are problems which require global information and takes $O(D)$ rounds. Apart from this, there are some problems which require global information and have been proved to take at least $\Omega(D + \sqrt{n})$ rounds. A prominent example is that of finding a Minimum Spanning Tree (MST). It has been shown that to even find an approximate MST we require $\Omega(D + \sqrt{n})$ rounds \citep{DasSarma:2011:DVH:1993636.1993686}. One of the key idea used in most of the algorithms which take $O(D + \sqrt{n})$ is a result from \citep{Kutten:1998:FDC:296903.296906}, which gives a distributed algorithm that can partition nodes of any spanning tree $T$ into $O(\sqrt{n})$ subtrees such that each subtree has $O(\sqrt{n})$ diameter. Basically, this approach can be considered as a classical divide and conquer paradigm of distributed algorithms. To solve any problem using this idea, nodes are divided into $O(\sqrt{n})$ clusters and then the problem is solved in each of the clusters and subsequently, a mechanism is given to combine the results.

\section{Overview of past results}
In this section, we will give an overview of the current progress in finding a min-cut. A more detailed technical review of the relevant historical results will be given in next chapter. 

In the centralized setting, the problem of finding min-cut has seen a lot of advances
\citep{
	Karger:1993:GMR:313559.313605,Karger:1994:URS:314464.314582,Karger:1994:RSC:195058.195422,karger1993o,nagamochi1992computing,Matula:1993:LTP:313559.313872,Gabow:1991:MAF:103418.103436,Stoer:1997:SMA:263867.263872,Karger:2000:MCN:331605.331608}. Recently, in the centralized setting even a linear time algorithm have been proposed for finding min-cut by \cite{kawarabayashi2015deterministic} and further improved by \cite{Henzinger:2017:LFP:3039686.3039811}. 

A few attempts at finding min-cuts in the distributed setting (in the \CONGEST model in particular) have been made in past. These include work by \cite{ahuja1989efficient, thurimella1995sub, tsin2006efficient, DBLP:journals/corr/abs-cs-0602013, pritchard2008fast, nanongkai2014almost,ghaffari2016distributed}. A  decade-old result by \cite{pritchard2008fast}, currently stands out as the most efficient way to find min-cut of small sizes in particular of size one. They gave an $O(D)$ round algorithm using random circulation technique for finding min-cuts of size $1$. Their result of finding a min-cut of size one has also been extended to give a Las Vegas type randomized algorithms to find min-cut of size 2 in $O(D)$ rounds. But a deterministic analogue of the same has so far been elusive.

More recently \cite{nanongkai2014almost} have given a $O((\sqrt{n} \log^* n + D)k^4 \log^2 n)$\footnote{$\log^* n$ (usually read "log star"), is the number of times the logarithm function must be iteratively applied before the result is less than or equal to $1$} rounds algorithm to find min-cut of size $k$.  There have also been results to find an approximate min-cut. \cite{ghaffari2013distributed} gave a distributed algorithm that, for
any weighted graph and any $\epsilon \in (0,1)$, with high probability finds a cut of size at most $O(\epsilon^{-1}k)$ in $O(D) + \tilde{O}(\sqrt{n})$ \footnote{$\widetilde{O}$ hides  logarithmic factors} rounds, where $k$ is the size of the minimum cut. \cite{nanongkai2014almost} improves this aproximation factor and give a $(1 + \epsilon)$  approximation algorithm in $O((\sqrt{n}\log^* n + D) \epsilon^{-5} \log^3 n)$ time. Also, \cite{ghaffari2016distributed} have given a min-cut algorithm for planar graphs which finds a $(1-\epsilon)$ approximate min-cut in $\widetilde{O}(D)$ rounds.\john{I moved the \cite{ghaffari2016distributed} result to the end of the sequence. Also, I think the approximation notation is a bit inconsistent. Shouldn't it be an $(1+\epsilon)$-approximate algorithm in both cases? Why is \cite{nanongkai2014almost} a $(1+\epsilon)$-approximation?}

While a linear time algorithm for finding min-cut exists in the centralized setting, in distributed setting the best known exact algorithm still takes $O((\sqrt{n} \log^* n + D)k^4 \log^2 n)$ rounds to deterministically find a min-cut of size $k > 1$. Moreover, a $\widetilde{O}(D)$ round algorithm exists to find a min-cut in planar graphs. The following question arises: Is $\Omega(\sqrt{n} + D)$ a lower bound for finding min-cuts or can we save the large factor of $\sqrt{n}$ as known for planar graphs for at least the small min-cuts. We answer this question through our thesis. We present a new algorithm\john{Here again, I am rewriting to emphasize one algorithm.} to deterministically find all the min-cuts of size one, two, and three. For min-cut of size either one or two our algorithm takes $O(D)$ rounds. For min-cuts of size three our algorithm takes $O(D^2)$ rounds.

\section{Organization of this thesis}
In Chapter \ref{chpater:2_prelim_background}, we will survey simple distributed algorithms for the \CONGEST model and also review graph properties like the cuts spaces and cycle spaces. We also give a brief overview of previously know techniques to find min-cut in distributed setting, in particular the techniques based on greedy tree packing and random circulations.

In Chapter \ref{chpater:3_technical_overview}, we will present our characterization of trees and cuts. Here we will also give a road-map of the novel distributed techniques introduced in this thesis. 
In Chapter \ref{chapter:4_trsf} and Chapter \ref{chapter:5_min-cut-3}, we will present distributed algorithms which will ensure that whenever there exists a min-cut of the kind the required quantities by our charecterization will be communicated to at least one node in network. In Chapter \ref{chapter:4_trsf}, we introduce \emph{Tree Restricted Semigroup Function} and show that this is enough for finding min-cuts of size $1$ and $2$ optimally. Further, in Chapter \ref{chapter:5_min-cut-3}, we will introduce two new techniques: \emph{Sketching} and \emph{Layered Algorithm} to find min-cuts of size $3$.

\chapter{Preliminaries and Background}
\label{chpater:2_prelim_background}
%TEX root = ../../thesis.tex
In this chapter, we will review fundamental network algorithms in the distributed setting. Among the distributed algorithms, we will discuss the construction of various types of spanning trees including Breadth First Search (BFS) Tree, Depth First Search (DFS) tree and Minimum Spanning Tree (MST). Further, we will discuss some simple tree based algorithmic paradigms generally used in distributed networks.  We also review  important and relevant concepts like cut spaces and cycle spaces and show that these are vector spaces and orthogonal to each other. Finally, we will survey two important recent techniques used to find min-cuts in distributed networks.

\section{Simple algorithms in Distributed Networks}
\label{section:chapter_2_simple_algorithms}
%TEX root = ../../thesis.tex
As mentioned in the earlier chapter, we use the \CONGEST model of distributed networks which considers point-to-point, synchronous communications between any two nodes in a network. In any given synchronous round, communication is only allowed between nodes with a physical link and is restricted to $O(\log n)$ bits. 

The network will be modeled as a simple undirected graph $G=(V,E)$. As stated earlier, we will use $n$ to denote the number of nodes, $m$ to denote the number of edges and $D$ as the diameter of the network. Each node in this network has a unique $id$ which is of size $O(\log n)$ bits. Whenever we say node $x$ in the network, we mean a node with $x$ as its $id$. 

Various algorithms in distributed setting start with constructing a tree. A tree gives structure to the problem by defining a relationship between nodes as decedents or as ancestors or neither of them. Moreover, it also serves as a medium for information flow between nodes in a structured way using the tree edges. In this section, we state results about the construction of various types of spanning trees and review simple distributed algorithms on trees. 

\subsection{Tree Constructions and Notations}
Let ${T}$ be any spanning tree rooted at node $r_T$ (or just $r$ when clear from the context). Let $\level[T]{x}$ be the level of node $x$ in $T$ such that the root $r$ is at level $0$ and for all other nodes $x$, $\level[T]{x}$ is just the distance from root following the tree edges. For $x \neq r$, let $\parent[T]{x}$ denote the parent of the vertex $x$ in $T$. For every node $v$, let $\child_T(v)$ be the set of children of $v$ in the tree $T$.  Further for any node $v$, we will use $\ancestor[T]{v}$ to denote the set of all ancestors of node $v$ in tree $T$ including $v$ and $\desc[T]{v}$ as the set of vertices which are descendants of $v$ in the tree $T$ including $v$. We will briefly review construction of different types of trees. For details the reader is referred to \citep{peleg2000distributed}.
\paragraph{BFS Tree.} Our approach to find min-cuts uses an arbitrary BFS tree. Although, our technique is invariant of the type of the spanning tree chosen but a BFS tree is ideal because it can be constructed in just $O(D)$ rounds. The following lemma states the guarantees regarding construction of a BFS trees
\begin{lemma}
	\label{lemma:bfs-tree}
	There exists a distributed algorithm for construction of a BFS tree which requires at most $O(D)$ rounds.
\end{lemma}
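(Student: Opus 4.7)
The plan is to prove this via a standard synchronous flooding (wave) algorithm originating from the root. First I would assume a root $r$ is designated, either by fiat or by invoking a preliminary leader-election (e.g., the node with minimum $id$), which can itself be carried out within $O(D)$ rounds. With $r$ fixed, the algorithm proceeds as follows: in round $1$, the root sends a BFS message containing its $id$ and the level value $0$ to all its neighbors. For every other node $v$, upon receiving a BFS message for the first time in some round $k$, $v$ sets $\parent[T]{v}$ to be any one of the senders (breaking ties by, say, smallest $id$), sets $\level[T]{v} \coloneqq k$, and in round $k+1$ transmits a BFS message carrying the value $k$ to all its neighbors. Messages received after the first are used only to discover non-tree edges and are otherwise ignored for the tree construction. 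Each transmitted message consists of an $id$ and a level value, both of size $O(\log n)$ bits, so the \CONGEST bandwidth constraint is respected on every link.

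For correctness and the round bound, I would prove by induction on $k$ the invariant that every vertex $v$ with $d_G(r,v)=k$ first receives a BFS message in round $k$ and correctly sets $\level[T]{v}=k$ with $\parent[T]{v}$ being a neighbor at distance $k-1$ from $r$. The base case $k=0$ is immediate since the root initializes its own level. For the inductive step, any $v$ with $d_G(r,v)=k$ has some neighbor $u$ with $d_G(r,u)=k-1$; by the inductive hypothesis $u$ broadcasts in round $k$, so $v$ receives the message no later than round $k$. Conversely, $v$ cannot receive any BFS message before round $k$, since such a message would arise from a path of length less than $k$ from $r$ to $v$, contradicting $d_G(r,v)=k$. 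Hence $v$ correctly records level $k$ and a valid BFS parent. Since $\max_v d_G(r,v) \le D$, the algorithm terminates within $D+1 = O(D)$ rounds, and the resulting parent pointers induce a spanning tree whose root-to-vertex paths are shortest paths.

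The main obstacle, such as it is, lies less in the flooding itself and more in two auxiliary issues: (i) ensuring a common root so that the wave is coherent, and (ii) detecting termination without global knowledge of $D$. For (i), if no root is given, I would prepend an $O(D)$-round leader election (or simply use any predesignated node), which does not affect the asymptotic bound. For (ii), one can piggyback an acknowledgment phase in which each node, once its subtree has stabilized, reports upward to its parent; convergecast over the constructed tree completes in an additional $O(D)$ rounds, again preserving the $O(D)$ bound. Combining these pieces yields the claimed distributed algorithm.
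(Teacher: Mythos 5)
Your proposal is essentially the same standard flooding/wave construction that the paper sketches: an initiator assigns itself level $0$, broadcasts, and each node joins at the round it first hears a BFS message, choosing one sender as its parent. Your addition of a leader-election prelude and a convergecast-based termination phase is a reasonable elaboration but not a different approach; the paper simply assumes an arbitrary initiator and omits termination detection.
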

A brief description of an algorithm to construct BFS tree as follows: an arbitrary node $r$ initiates the construction of BFS tree, it assigns itself a level $\level{r} = 0$. 
Then the node $r$ advertises to all its adjacent neighbors about its level, who then joins the BFS tree at level $1$ as children of $r$. Subsequently, each node at level $1$ also advertises its level to all its neighbors, but only those nodes join the BFS tree as children of nodes at level $1$ who are not part of the BFS tree already. Here, ties are broken arbitrarily. This process goes on until all the nodes are part of the BFS tree. 

\paragraph{DFS Tree.} Unlike BFS tree, the depth of a DFS tree is oblivious of the diameter. In fact, two of the early approaches to find cut edges in distributed networks used a DFS tree \citep{Hohberg:1990:FBC:89472.89477,ahuja1989efficient}. The following lemma states the guarantees regarding construction of a DFS trees. 
\begin{lemma}
	There exists a distributed algorithm for construction of a DFS tree which requires at most $O(n)$ rounds.
\end{lemma}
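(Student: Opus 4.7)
The plan is to simulate the classical sequential DFS procedure by passing a single \emph{DFS token} through the network and to charge each round to a token movement along some edge.

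First I would fix an arbitrary root $r$, which initializes itself as visited with $\level{r}=0$ and holds the token at round $0$. The invariant I would maintain is: at every round, exactly one node $v$ in the current partial tree holds the token, and every already-visited node knows, for each of its neighbors, whether that neighbor is visited. To maintain the second part of the invariant cheaply, as soon as a node $u$ first becomes visited, in the immediately following round it sends a single \textsc{Visited} message on each of its incident edges; since this is a different recipient on each edge and each message is $O(\log n)$ bits, this obeys the \CONGEST bandwidth bound and takes just one round.

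Next, when node $v$ holds the token and has at least one unvisited neighbor, $v$ selects such a neighbor $u$ (say of minimum id), declares $v = \parent[T]{u}$, sets $\level[T]{u} = \level[T]{v}+1$, and sends the token across the edge $(v,u)$ in the next round; $u$ then broadcasts its \textsc{Visited} notification as above. If instead all of $v$'s neighbors are visited, then $v$ returns the token to $\parent[T]{v}$ (for the root, this terminates the algorithm). The tree edges are precisely those on which the token was sent in the ``extend'' step.

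For the round bound I would argue as follows. Each edge of the eventual DFS tree carries the token exactly twice overall: once in the forward (extend) direction and once in the backtracking direction. Non-tree edges never carry the token. So the total number of token transitions is at most $2(n-1)$. Between two consecutive token transitions, only a constant number of rounds elapse (one round for the holder to decide based on its locally stored visited-status table, one round for the token to cross the edge, and one round for the newly visited node to broadcast its notification before it in turn makes a decision). Summing over all transitions yields $O(n)$ rounds.

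The main obstacle I would expect is precisely the step where the token holder needs to know which of its neighbors are unvisited without paying a per-neighbor query cost; naively asking each neighbor in turn would give an $O(n \cdot \Delta)$ bound. Handling this by having each node \emph{push} its \textsc{Visited} status once, on all incident edges simultaneously, resolves it within the \CONGEST budget and is the key ingredient that keeps the complexity linear. Termination detection is then immediate since the root knows when it has received the token back and has no unvisited neighbors.
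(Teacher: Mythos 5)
The paper states this lemma without proof, treating it as a standard fact from the distributed computing literature and pointing the reader to Peleg's textbook for details. Your argument is the classical token-based distributed DFS due to Awerbuch, and it is correct: the essential ingredient you isolate---each newly visited node pushes a one-round \textsc{Visited} notification in parallel over all its incident edges, so the token holder always knows locally which neighbors are still unvisited---is exactly what keeps the token confined to tree edges and brings the round count from the naive $O(m)$ down to $O(n)$. One point worth stating explicitly if you write this up is why the token holder's visited-status table is guaranteed to be current at decision time: the only node whose notification could conceivably not yet have arrived is the one that was visited in the immediately preceding token step, and that node is necessarily the tree neighbor from which (or to which) the token just moved, whose status the holder already knows.
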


\paragraph{MST Tree.} The following lemma is known regarding the complexity to construct an MST tree. In fact, it is also known that we cannot do better in terms of round complexity for construction of the MST tree. \citep{DasSarma:2011:DVH:1993636.1993686}.
\begin{lemma}
For a weighted network, there exists a distributed algorithm for construction of MST tree which requires $O(\sqrt n + D)$ rounds.
\end{lemma}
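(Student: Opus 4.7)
The plan is to follow the Kutten--Peleg style algorithm, which combines Boruvka's fragment-merging idea with a pipelined MST computation along a BFS tree. First I would build a BFS tree in $O(D)$ rounds using the previous lemma; this tree will serve both as a coordination backbone for synchronizing phases and as the medium for the final pipeline.

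Phase one is a controlled version of Boruvka's algorithm. Each node starts as a singleton MST fragment; in each Boruvka phase every fragment identifies its minimum outgoing edge (MOE), which by the cut property must belong to the MST, and fragments joined by their MOEs are merged. Internally, aggregating the MOE at the fragment's leader costs time proportional to the fragment's diameter, so the key modification is to cap the fragment diameter at $\sqrt{n}$: any merge that would produce a fragment of diameter exceeding the cap is deferred. Each capped phase therefore costs $O(\sqrt{n})$ rounds, because all intra-fragment communication stays inside a ball of radius $\sqrt{n}$. Since a non-deferred Boruvka phase at least doubles the size of each participating fragment, after $O(\log n)$ phases every surviving fragment must have saturated the diameter cap, and a counting argument then shows that at most $O(\sqrt{n})$ such fragments can coexist.

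Phase two gathers the remaining inter-fragment edges at the root of the BFS tree and computes the MST of the contracted graph there. Because at most $O(\sqrt{n})$ fragments survive Phase one, the MST on the contracted multigraph uses only $O(\sqrt{n})$ edges; each node sends, for every other fragment it sees, the cheapest edge it knows to that fragment, and internal nodes of the BFS tree aggregate by keeping only the minimum per fragment-pair. A standard pipelined convergecast of these $O(\sqrt{n})$ aggregated candidates along a tree of depth $D$ takes $O(\sqrt{n} + D)$ rounds, after which the root solves the resulting $O(\sqrt{n})$-vertex MST instance locally and pipelines the chosen edges back down the tree in another $O(\sqrt{n} + D)$ rounds.

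The hard part will be the correctness and round-complexity analysis of Phase one: the diameter cap is precisely what prevents a naive per-phase cost of $O(n)$, but it also risks stalling the merge process or producing fragments whose local MOE is not globally optimal among still-merging pairs. The crux is a charging argument showing that any fragment that fails to merge in a given phase must already be saturated, combined with the observation that saturated fragments contain $\Omega(\sqrt{n})$ nodes and hence number at most $O(\sqrt{n})$. Once Phase one is justified, the total round count is dominated by $O(\sqrt{n})$ per phase across a polylogarithmic number of phases, plus the $O(\sqrt{n} + D)$ cost of Phase two, yielding the claimed $O(\sqrt{n} + D)$ bound up to factors absorbed in the big-O.
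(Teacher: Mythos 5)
The paper does not actually prove this lemma; it is stated as a known result (the classical $O(\sqrt{n}+D)$-round MST algorithm of Kutten--Peleg and Garay--Kutten--Peleg), with a citation for the matching lower bound. Your reconstruction correctly identifies that algorithm's two-phase structure: controlled Bor\r{u}vka with a diameter cap of $\sqrt{n}$, followed by a pipelined convergecast on a BFS tree. Phase one is essentially right, including the counting argument that a fragment of diameter $\sqrt{n}$ contains $\Omega(\sqrt{n})$ nodes and hence at most $O(\sqrt{n})$ fragments survive.

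There is, however, a genuine gap in your Phase two. You claim that aggregating ``the minimum per fragment-pair'' at internal BFS-tree nodes yields $O(\sqrt{n})$ candidates, but with $O(\sqrt{n})$ fragments there are $\Theta(n)$ fragment pairs, so this rule can leave $\Theta(n)$ candidate edges at an internal node. Pipelining $\Theta(n)$ items up a tree of depth $D$ costs $O(n+D)$, not $O(\sqrt{n}+D)$. The observation that the contracted MST has only $O(\sqrt{n})$ edges is true but does not by itself justify that the convergecast carries only $O(\sqrt{n})$ items. The standard fix is to aggregate more aggressively: each internal BFS-tree node, upon receiving candidate inter-fragment edges from its children and combining them with its own, computes a \emph{minimum spanning forest} of the contracted multigraph on these edges (equivalently, repeatedly discards the heaviest edge on any cycle, which the cycle property guarantees is MST-safe) and forwards only that forest, which has at most $O(\sqrt{n})$ edges. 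With this filtering, each node forwards $O(\sqrt{n})$ items and the pipelined convergecast does take $O(\sqrt{n}+D)$ rounds. An alternative fix is to continue running Bor\r{u}vka phases but route each phase's $O(\sqrt{n})$ fragment-MOEs through the BFS tree; this gives $O((\sqrt{n}+D)\log n)$, which is weaker than the stated bound, so the MSF-filtering route is the one you want. Either way, you need to invoke the cycle property explicitly to justify discarding edges mid-convergecast; ``minimum per fragment pair'' alone is not enough.
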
 
A recent technique to find min-cuts given by \cite{nanongkai2014almost} assigns weights to each edge in an unweighted graph and compute MST tree, further these weights are updated and a new MST tree is constructed. This goes on for $\plog n$ iterations. We review this technique in Section \ref{section:chapter_2_past_techniques}.

\subsection{Convergecast and Broadcast}
\label{sec:covergecast_broadcast}
We will now give brief details about simple distributed algorithms. In particular, we will describe \emph{Broadcasts} and \emph{Convergecasts}. These are mechanisms to move data along the edges of any fixed tree. In this thesis, we will often use these techniques to compute properties of the given network which will, in turn, help us to find min-cuts.

\paragraph{Convergecasts} In distributed network algorithms, there are applications where it is required to collect information upwards on some tree, in particular, this is a flow of information from nodes to their ancestors in the tree. This type of a technique is called as \emph{Convergecast}. In a BFS tree $T$, any node has at most $D$ ancestors, but for any node $v$ the number of nodes in descendants set $\desc[T]{v}$ is not bounded by $D$. Thus during convergecast, all nodes in the set $\desc[T]{v}$ might want to send information upwards to one of the ancestor node $v$. This may cause a lot of contentions and is a costly endeavor. In this thesis, whenever we use a covergecast technique, we couple it with either an \emph{aggregation based strategy} or a \emph{contention resolution mechanism}.

In an aggregation based strategy, during covergecast, the flow of information from nodes at a deeper level is synchronized and aggregated together to be sent upwards. In Chapter \ref{chapter:4_trsf}, we introduce \emph{tree restricted semigroup function} which is such an aggregation based technique. Moreover, our algorithm to compute the specially designed \emph{Sketch} also uses this kind of convergecasting. Further, in a contention resolution mechanism, only certain information are allowed to move upwards based on some specific criteria, thus limiting contention. In Chapter \ref{chapter:5_min-cut-3}, we use such contention resolution mechanism to convergecast details found by our \emph{Layered Algorithm}.

\paragraph{Broadcasts} In this technique, the flow of information takes place in the downward direction along the tree edges. When such an information is aimed to be communicated to all the nodes in the decedent set then it is called \emph{broadcast}. In the general broadcast algorithm, for some spanning tree $T$, the root is required to send a message of size $O(\log n)$ bits to all the nodes. We state the following lemma from \cite{peleg2000distributed} whose proof uses a simple flooding based algorithm.
\begin{lemma}[Lemma 3.2.1 from \citep{peleg2000distributed}]
	For any spanning tree $T$, let the root have a message of size $O(\log n)$ bits, then it can be communicated to all the nodes in $O(Depth(T))$ rounds.
	\label{lemma:broadcast_general}
\end{lemma}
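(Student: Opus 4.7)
The plan is to prove this by a straightforward induction on the level $\level[T]{v}$ of each node $v$ in $T$, using a simple flooding schedule. First I would fix the schedule: in round $1$, the root $r$ transmits the message $M$ to each of its children in $\child_T(r)$; thereafter, in round $i+1$, every node $v$ that first received $M$ in round $i$ forwards $M$ to each child in $\child_T(v)$. The claim I want to establish by induction is: every node $v$ with $\level[T]{v} = i$ has received $M$ by the end of round $i$.

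The base case $i=0$ is trivial since $r$ starts with $M$. For the inductive step, suppose every node at level $i$ has $M$ by round $i$. Any node $u$ with $\level[T]{u} = i+1$ satisfies $\parent[T]{u}$ is at level $i$, so by the inductive hypothesis $\parent[T]{u}$ holds $M$ by round $i$ and hence transmits $M$ to $u$ in round $i+1$ according to the schedule. Thus $u$ has $M$ by round $i+1$, completing the induction. Applying the claim at $i = \text{Depth}(T)$, every node in $V$ has received $M$, which gives the desired round complexity.

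The only thing to check for \CONGEST-compliance is that the schedule never violates the bandwidth restriction. Since $M$ has size $O(\log n)$ bits, it fits into a single \CONGEST message along any edge in one round. Moreover, each transmission occurs along a tree edge from a parent to one of its children, so no edge carries more than one distinct transmission in either direction in any round, avoiding contention entirely. The main (and only) subtlety is really just this no-contention observation, which is immediate from the tree structure; there is no genuine obstacle to the argument beyond stating the flooding schedule carefully and reading off the bound.
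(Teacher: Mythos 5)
Your proof is correct and takes essentially the same flooding-based approach as the paper's, which also argues round-by-round down the levels of $T$. You simply make the level induction explicit and add the (correct, if routine) observation that the $O(\log n)$ message size and tree structure guarantee $\CONGEST$-compliance.
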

\begin{proof}
	Here we assume that there exists a spanning tree $T$ and each node is aware of the spanning tree edges incident to it. Now, all we need is to flood the message to the entire tree. In the first round, the root sends this message to all its children, who are at level $1$. Further, in the second round, nodes at level $1$ sends the message to all their children who are at level $2$ and so on. Thus, in all, we will require $O(Depth(T))$ rounds. 
\end{proof}

At various stages of the algorithms presented in this thesis, we require two variants of the general broadcast techniques. They are described below. We also give the respective algorithm for them in Lemma \ref{lemma:broadcat_types}.
\begin{enumerate}[itemindent=1cm]
	\item[\emph{Broadcast Type-1}] Given a spanning tree $T$, {\em all} nodes $v \in V$ have a message of size $O(\log n)$ bits which is required to be communicated to all the nodes in the vertex set $\desc[T]{v}$.
	\item[\emph{Broadcast Type-2}] Given a spanning tree $T$, {\em all} nodes $v \in V$ have $Depth(T)$ messages of size $O(\log n)$ bits which is required to be communicated to all the nodes in the vertex set $\desc[T]{v}$.
\end{enumerate} 
\begin{lemma}
	\label{lemma:broadcat_types}
	There exist distributed algorithms which require $O(Depth(T))$ and \sloppy{$O(Depth(T)^2)$} rounds respectively for Broadcast Type-1 and Broadcast Type-2.
	\label{lemma:broadcast_1_2}
\end{lemma}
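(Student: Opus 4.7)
My plan is to prove both parts by pipelining messages along the tree edges of $T$, building on the flooding idea from Lemma \ref{lemma:broadcast_general}.

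For Broadcast Type-1, I would design a schedule in which, in each round, every non-leaf node forwards one message down each of its tree edges to its children, and the messages are ordered so that the ancestors' messages arrive at a descendant in order of increasing depth. Concretely, in round $1$ every node $v$ sends its own message to each child; in round $r \ge 2$, every node $v$ at level $\ell \ge r-1$ sends to each child the message it received from its parent in round $r-1$, i.e.\ the message originated by its ancestor at level $\ell-r+1$. The key observation is that a node has exactly one parent, so it receives at most one message per round, and it uses each outgoing edge at most once per round, so congestion is automatically avoided. A simple induction on $r$ shows that after round $r$, every node at level $\ell \le r$ has already forwarded its own message to its entire descendant subtree down to depth $r$, and after $Depth(T)$ rounds every descendant of every node $v$ has received $v$'s message. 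Hence Type-1 terminates in $O(Depth(T))$ rounds.

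For Broadcast Type-2, the cleanest approach is to reduce to Type-1: each node $v$ has an ordered list of $Depth(T)$ messages $m_1^v, m_2^v, \ldots, m_{Depth(T)}^v$, and I would run $Depth(T)$ independent instances of the Type-1 broadcast in sequence, where instance $i$ broadcasts $m_i^v$ from every $v$ to $\desc[T]{v}$. Each instance takes $O(Depth(T))$ rounds by the previous paragraph, so the total round complexity is $O(Depth(T)^2)$. Alternatively, one can pipeline all messages together using the same scheduling rule (node at level $\ell$ sends the next buffered message from its parent in each round), which gives the same asymptotic bound.

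The main obstacle, in my view, is verifying that the pipelined Type-1 schedule does not create conflicts on any edge or at any node: since every node's in-degree in $T$ is one, the conflict analysis reduces to checking that at round $r$, each node at level $\ell \ge r-1$ has the required message to forward (which follows inductively from the fact that in round $r-1$ its parent forwarded the message of its level-$(\ell-r+1)$ ancestor). Once that inductive invariant is established, both round bounds follow immediately.
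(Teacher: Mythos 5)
Your proposal is correct and follows essentially the same approach as the paper: pipelining per-level messages down the tree for Type-1 (exploiting that each node has a single parent so no edge is contested), and reducing Type-2 to $Depth(T)$ sequential runs of Type-1. You give a more explicit round-by-round schedule and invariant than the paper, but the underlying idea is identical.
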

\begin{proof}
	First, let us consider \emph{Broadcast Type-1}. 
	Here we will use Lemma \ref{lemma:broadcast_general} coupled with \emph{pipelining}. 
	Each node $v$ initiates a broadcast of its $O(\log n)$ bits sized message to the subtree rooted at node $v$. using the algorithm from Lemma \ref{lemma:broadcast_general} in the first round. 
	If $v$ is not the root of the spanning tree, then in the first round it will have received the message broadcasted by its parent and it is imperative to node $v$ to send this message to all its children. 
	Note that this does not hinder with the broadcast initiated by node $v$ to the subtree rooted at it.
	In all the subsequent rounds a train of messages is \emph{pipelined} one after another through the tree edges, enabling broadcasts initiated by nodes to run without any hindrance or congestion. 
	Thus it only takes $O(Depth(T))$ rounds for this type of broadcast. 
	For \emph{Broadcast Type-2}, we just need to run $Depth(T)$ iterations of \sloppy\emph{Broadcast Type-1} taking $O(Depth(T)^2)$ rounds.
\end{proof}

\section{Cuts Space and Cycle Space}
\label{section:chapter_2_cut_cycles}
%TEX root = ../../thesis.tex
In this section, we will discuss about the \emph{cut space} and \emph{cycle spaces}. We will first define \emph{cut spaces}, \emph{cycle spaces} and prove that, they are vector spaces. Further, we will show that they are orthogonal. All these facts are well known and have been part of standard graph theory books, for example, see \citep{bondy1976graph}.

As defined earlier, $E$ and $V$ are edge set and vertex set respectively. We will work with the field $\mathbb{Z}_{2}$ and the
space $\mathbb{Z}_{2}^{|E|}$.  Let $S \subseteq E$,
then its characteristics vector is defined as follows: $\chi_{e}^{S} = 1$  (at coordinate $e$) for $e \in S$ and
$\chi_{f}^{S} = 0$ for $f \notin E$. Here we will use the operator $\oplus$ (symmetric difference or XOR) for the space $\mathbb{Z}_{2}^{|E|}$. Let
$R,S \subseteq E$. 
Note that $R,S$ are edge sets whereas $\chi^{R}$ and $\chi^{S}$ are binary vectors in $\mathbb{Z}_{2}^{|E|}$. 
When we use $R \oplus S$, we will mean symmetric difference of the two sets and when we use $\chi^{R} \oplus \chi^{S}$, we will mean XOR operation between the two vectors. 
It is easy to see that $\chi^{R} \oplus \chi^{S} = \chi^{R \oplus S}$

Let $\phi \subseteq E$. If the subgraph $(V,\phi)$ has all the vertex with even degrees then $\phi$ is a \emph{binary circulation}. The set of all binary circulations of a graph is called the \emph{cycle space} of the graph. Similarly, let $A \subseteq V$. Define an induced cut $\delta(A)$ as the set of edges with exactly one endpoint in $A$ or essentially the cut set $(A,V\setminus A)$. The set of all the induced cuts	is called the \emph{cut space} of the graph.

\begin{thm}
	The cut space and cycle space are vector spaces of $\mathbb{Z}_{2}^{|E|}$
\end{thm}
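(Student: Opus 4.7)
The plan is to verify the subspace criteria for both sets inside the ambient space $\mathbb{Z}_{2}^{|E|}$. Since the underlying field is $\mathbb{Z}_2$, scalar multiplication is trivial (by $0$ or $1$), so the entire burden reduces to showing each set contains the zero vector and is closed under $\oplus$. I will identify each edge subset with its characteristic vector throughout, and use the identity $\chi^{R}\oplus\chi^{S}=\chi^{R\oplus S}$ noted above to translate vector-level closure into set-level closure.

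For the cut space, first I would observe that $\delta(\emptyset)=\emptyset$, giving the zero vector. For closure, the crucial claim is
\[
\delta(A)\oplus\delta(B)=\delta(A\oplus B)
\]
for all $A,B\subseteq V$, where on the right $\oplus$ denotes symmetric difference of vertex sets. I would prove this by a direct case analysis on an arbitrary edge $e=(u,v)$: tabulate the four possibilities for $(u\in A?,\,u\in B?)$ against the four for $v$, and check in each of the sixteen cases that $e$ lies in $\delta(A)\oplus\delta(B)$ exactly when exactly one of $u,v$ belongs to $A\oplus B$. A cleaner way to package this is to note that the indicator $[u\in A\oplus B]$ equals $[u\in A]\oplus[u\in B]$ (addition in $\mathbb{Z}_2$), and an edge is in $\delta(X)$ iff $[u\in X]\oplus[v\in X]=1$; then bilinearity of $\oplus$ makes the identity immediate. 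This shows the cut space is closed under $\oplus$, hence a subspace.

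For the cycle space, the zero vector corresponds to the empty edge set, trivially a binary circulation since every vertex has degree $0$. For closure, let $\phi_1,\phi_2\subseteq E$ both be binary circulations, and consider any vertex $v$. The degree of $v$ in the subgraph $(V,\phi_1\oplus\phi_2)$ equals
\[
\deg_{\phi_1}(v)+\deg_{\phi_2}(v)-2\,|\{e\in\phi_1\cap\phi_2 : e \text{ incident to } v\}|,
\]
because edges lying in both $\phi_1$ and $\phi_2$ are removed by the symmetric difference and therefore contribute to neither side. Reducing mod $2$, the degree of $v$ in $\phi_1\oplus\phi_2$ has the same parity as $\deg_{\phi_1}(v)+\deg_{\phi_2}(v)$, which is even by hypothesis. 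Hence $\phi_1\oplus\phi_2$ is itself a binary circulation, proving the cycle space is also a subspace of $\mathbb{Z}_2^{|E|}$.

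I do not anticipate any real obstacle: the only place that requires care is the combinatorial identity $\delta(A)\oplus\delta(B)=\delta(A\oplus B)$, and even there the bilinearity reformulation above kills the case analysis in one line. The rest is bookkeeping about parities of degrees and the trivial observation that $\mathbb{Z}_2$-subspaces are exactly $\oplus$-closed subsets containing $\mathbf{0}$.
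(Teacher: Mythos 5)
Your proposal is correct and follows essentially the same route as the paper: verify the subspace axioms over $\mathbb{Z}_2$ by exhibiting the zero vector (via $\delta(\emptyset)=\emptyset$ and the empty circulation) and showing closure under $\oplus$ via the identity $\delta(A)\oplus\delta(B)=\delta(A\oplus B)$ for cuts and the degree-parity computation $\deg_{\phi_1}(v)+\deg_{\phi_2}(v)-2\deg_{\phi_1\cap\phi_2}(v)$ for circulations. The one place you improve on the paper's exposition is the cut-space closure identity, where the paper simply asserts ``take two cases $A\cap B=\emptyset$ and $A\cap B\neq\emptyset$'' without further detail, while your indicator-function reformulation ($e\in\delta(X)\iff [u\in X]\oplus[v\in X]=1$, together with $[u\in A\oplus B]=[u\in A]\oplus[u\in B]$) gives a one-line, fully rigorous proof; you also correctly observe that over $\mathbb{Z}_2$ inverses come for free, a point the paper lists as a requirement but never explicitly discharges.
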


\begin{proof}
	To show that both the structures are vector spaces it is sufficient to show that there exists a $0$ for the vector space, an addition operator and inverse of each element. 
	We will show that $\emptyset_E$ is the required $0$ of the vector space and $\oplus$ is the operator. 
	For cut space take a set $A = \emptyset$. 
	Then $\delta(A) = \emptyset_E$ that is $\chi^{\delta(A)} = \vec 0_2$. 
	Similarly for cycle space if take $\phi = \emptyset \subset E$ then all the vertex in the graph $(V,\phi)$ have degree as $0$ which is even. 
	Let $A,B \subseteq V$. 
	Then $\delta(A)$ and $\delta(B)$ are induced cut defined for A,B. Let $A \oplus B$ be the symmetric difference of the set $A$ and $B$. 
	As per the definition $\delta(A)$ is the set of edges which have exactly one end in $A$ and similarly for $\delta(B)$ which are the edges with one end in $B$. 
	Now $A \oplus B$ is a symmetric difference between the set $A,B$  and $\delta(A \oplus B)$ is the set of edges with one end in $A \oplus B$. 
	To see that $\delta(A) \oplus \delta(B)$ is equal to $\delta(A \oplus B)$ just take two cases where $A \cap B = \emptyset$ and $A \cap B \not= \emptyset$. 
	
	Now let us turn our attention to  the cycle space. Let $R,S\subseteq E$. And let $G_1 = (V,R)$ and $G_2 = (V,S)$ be two sub-graphs such that all vertices are of even degrees in them; that is both $R,S$ are binary circulations. Now consider the subgraph $G_{1,2} = (V,R\oplus S)$ and a vertex $v \in V$. The degree of vertex $v$ in $G_{1,2}$ is $\deg_R(v) + \deg_S(v) - 2\deg_{R\cap S}(v)$ which is even. Hence $R\oplus S$ is also a binary circulation.
\end{proof}
The following corollary will be useful for proving our characterization of cuts and trees given in 
Chapter \ref{chpater:3_technical_overview}.
\begin{cor}
	\label{main-cor}
	Let $A_1,A_2,\ldots, A_j$ be set of vertices such that $\forall i\ A_i \subset V$ then $\delta(A_1 \oplus A_2 \oplus \ldots \oplus A_j) = \delta(A_1) \oplus \delta(A_2) \oplus \ldots \oplus \delta(A_j)$
\end{cor}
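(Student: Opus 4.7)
The plan is to derive the corollary by straightforward induction on $j$, leveraging the two-set identity $\delta(A \oplus B) = \delta(A) \oplus \delta(B)$ that was already established inside the preceding theorem while proving that the cut space is closed under $\oplus$. Since the real content—namely verifying that symmetric difference on vertex sets commutes with the induced-cut map—was handled there by the case analysis on $A \cap B = \emptyset$ versus $A \cap B \neq \emptyset$, the corollary is essentially just the extension of this binary identity to an arbitrary finite number of arguments.

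First, I would take as the base case $j = 2$, which is exactly the statement $\delta(A_1 \oplus A_2) = \delta(A_1) \oplus \delta(A_2)$ extracted from the theorem's proof. For the inductive step, assuming the identity holds for some $j \geq 2$, I set $B := A_1 \oplus A_2 \oplus \cdots \oplus A_j$, which is itself a subset of $V$ since $\oplus$ is an internal operation on the power set of $V$. Applying the base case to the two sets $B$ and $A_{j+1}$ gives
\[
\delta(B \oplus A_{j+1}) \;=\; \delta(B) \oplus \delta(A_{j+1}),
\]
and then substituting the inductive hypothesis $\delta(B) = \delta(A_1) \oplus \cdots \oplus \delta(A_j)$ yields the desired equality for $j+1$ sets. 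Associativity of $\oplus$ on $\mathbb{Z}_2^{|E|}$ (which holds because $(\mathbb{Z}_2^{|E|}, \oplus)$ is a vector space, as just proved) ensures that the grouping on the right-hand side is unambiguous.

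There is no real obstacle here; the only thing to be careful about is to note explicitly that symmetric difference of subsets of $V$ remains a subset of $V$, so that the base-case identity applies at every step of the induction. I would keep the write-up to a couple of lines, stating the induction and citing the theorem for the base case rather than repeating the case analysis.
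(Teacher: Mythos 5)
Your induction is correct and is exactly the argument the paper implicitly relies on: the corollary is stated without a separate proof precisely because repeated application of the binary identity $\delta(A \oplus B) = \delta(A) \oplus \delta(B)$ from the preceding theorem (your base case) extends to any finite $j$ by the induction you describe. Nothing more is needed.
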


\begin{thm}
	The cut space and cycle space are orthogonal.
\end{thm}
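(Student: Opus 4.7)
The plan is to work with the standard $\mathbb{Z}_2$-inner product on $\mathbb{Z}_2^{|E|}$, under which two subspaces are orthogonal iff every generator of one is orthogonal to every generator of the other. Since the cut space is spanned by the induced cuts $\delta(A)$ and the cycle space consists of all binary circulations $\phi$, it suffices to prove: for any $A \subseteq V$ and any binary circulation $\phi \subseteq E$, the inner product $\chi^{\delta(A)} \cdot \chi^{\phi}$, which equals $|\delta(A) \cap \phi| \pmod{2}$, is $0$.

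The main computation I would use is a double-counting argument on vertex degrees restricted to $\phi$. Consider the sum $\sum_{v \in A} \deg_{\phi}(v)$. Every edge of $\phi$ with both endpoints in $A$ contributes $2$ to this sum; every edge of $\phi$ with exactly one endpoint in $A$ (that is, every edge in $\delta(A) \cap \phi$) contributes exactly $1$; every edge of $\phi$ with no endpoint in $A$ contributes $0$. Writing $E_\phi(A)$ for the edges of $\phi$ with both endpoints in $A$, this gives
\[
\sum_{v \in A} \deg_{\phi}(v) \;=\; 2\,|E_\phi(A)| \;+\; |\delta(A) \cap \phi|.
\]
By assumption $\phi$ is a binary circulation, so $\deg_{\phi}(v)$ is even for every $v$, making the left-hand side even. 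The term $2\,|E_\phi(A)|$ is obviously even, so $|\delta(A) \cap \phi|$ must be even as well. This is exactly the required orthogonality on generators.

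To finish, I would invoke bilinearity of the inner product over $\mathbb{Z}_2$: an arbitrary element of the cut space is a sum (symmetric difference) of induced cuts, and an arbitrary element of the cycle space is itself a binary circulation, so the pairwise relation above extends to all pairs, showing the two spaces are orthogonal.

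I do not anticipate a real obstacle here; the only subtlety is being careful that the inner product on $\mathbb{Z}_2^{|E|}$ is the parity of $|\cdot \cap \cdot|$, and that the cut space is closed under $\oplus$ (which is Corollary \ref{main-cor}), so reducing to a single pair $(\delta(A),\phi)$ is legitimate. The whole argument is then one line of double counting plus an appeal to $\deg_\phi(v)$ being even.
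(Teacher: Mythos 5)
Your proposal is correct and follows essentially the same argument as the paper: both reduce to showing $|\delta(A)\cap\phi|$ is even for a generator $\delta(A)$ and a circulation $\phi$, via the double count $\sum_{v\in A}\deg_\phi(v)=2|E_\phi(A)|+|\delta(A)\cap\phi|$ together with the evenness of each $\deg_\phi(v)$. The only cosmetic difference is that you explicitly spell out the bilinearity step at the end, which the paper leaves implicit.
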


\begin{proof}
	Let $A \subset V$ and $\phi$ be a binary circulation. Let $\chi^\phi$ and $\chi^{\delta(A)}$ be the vectors corresponding to the $\phi$ and induced cut $\delta(A)$.
	We have to show that $\chi^{\delta(A)}.\chi^\phi = 0\paren{\mod 2}$ which is equivalent to showing that $|\chi^\phi \cap \chi^{\delta(A)}|$ is even.
	Now $\sum_{a \in A} \deg_\phi(a) = \sum_{a\in A}|\phi \cap \delta(a)|$. Observe that $\sum_a \deg_\phi(a)$ is even because $\phi$ is a binary circulation. Further $\sum_{a\in A}|\phi \cap \delta(a)|$ counts the edges with both ends in $A$ twice and every edge in $|\phi \cap \delta(A)|$ once and does not count any other edge. Hence $|\phi \cap \delta(A)|$ is even.
\end{proof}

\section{Min-Cuts in Distributed Networks}
\label{section:chapter_2_past_techniques}
%TEX root = ../../thesis.tex
In this section we will review some past results for finding min-cuts, in particular, the ideas employed in two recent works \citep{pritchard2011TALG,nanongkai2014almost}. While \cite{nanongkai2014almost} relied on greedy tree packing introduced by \cite{thorup2001fully}; \cite{pritchard2011TALG} gave a novel technique of random circulations.
\subsection{Using Greedy Tree Packing}
First, we will define the concept of greedy tree packing. A \emph{tree packing} $\mathbb{T}$ is a multiset of spanning trees. Let the load of any edge $e$ w.r.t a tree packing $\mathbb{T}$, denoted  by $\mathcal L^{\mathbb{T}}(e)$, be defined as the number of trees in $\mathbb{T}$ containing $e$. A tree packing $\mathbb{T} =\curly{T_1,\ldots, T_j}$ is greedy if each $T_i$
is a minimum spanning tree (MST) with respect to the loads induced by $\curly{T_1,\ldots, T_{i-1}}$. \cite{thorup2001fully} has given the following results related to tree packing:

\begin{thm}[\cite{thorup2001fully}]
	Let $G = (V,E)$ be an unweighted graph. Let $m$ be the number of edges in $G$ and $k$ be the size of min-cut. Then a greedy tree-packing with $\omega(k^7 \log ^3 m)$ contains a tree crossing some min-cut exactly once.
	\label{thorup-tree-packing}
\end{thm}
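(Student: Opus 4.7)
The plan is to combine classical tree-packing duality with a convergence analysis of the greedy MST procedure, and conclude by pigeonhole.

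\textbf{Step 1 (Duality bounds).} By Nash--Williams--Tutte, any graph with edge-connectivity $k$ admits a fractional spanning-tree packing of value at least $k/2$, so greedy packings should asymptotically approach this. On the other side, fix any min-cut $C$ of size $k$ and any packing $\mathbb{T}=\{T_1,\ldots,T_\tau\}$. Each $T_i$ is spanning, hence must cross $C$ at least once, so
$$
\tau \;\le\; \sum_{i=1}^{\tau}|T_i\cap C| \;=\; \sum_{e\in C}\mathcal{L}^{\mathbb{T}}(e).
$$
The goal therefore becomes: show that for a greedy $\mathbb{T}$ with $\tau=\omega(k^7\log^3 m)$, the right-hand side is at most $\tau+(\tau-1)$, forcing some tree $T_i$ with $|T_i\cap C|=1$.

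\textbf{Step 2 (Greedy MST balances loads on every cut).} The greedy step chooses $T_i$ to minimise $\sum_{e\in T_i}\mathcal{L}^{(i-1)}(e)$, so it avoids heavily loaded edges. I would analyse this through a potential such as $\Phi(\mathbb{T})=\sum_e \binom{\mathcal{L}^{\mathbb{T}}(e)}{2}$, proving that each greedy round increases $\Phi$ by strictly less than what a packing with unbalanced loads on some min-cut $C$ would force. A multiplicative-weights / martingale argument then shows that after $\tau$ rounds, the excess $\sum_{e\in C}\mathcal{L}^{\mathbb{T}}(e)-\tau$ (i.e. the amount by which trees \emph{over}-cross $C$) is bounded by $O(k^{\alpha}\sqrt{\tau\log m})$ for some small constant $\alpha$, simultaneously for any \emph{fixed} min-cut.

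\textbf{Step 3 (Union bound and pigeonhole).} By Karger's bound there are at most $O(n^2)=O(m^2)$ distinct min-cuts. Union-bounding Step 2 over all of them, and choosing $\tau=\omega(k^7\log^3 m)$ large enough, makes the excess strictly less than $\tau$ for every min-cut $C$. Hence
$$
\tau \;\le\; \sum_{i=1}^{\tau}|T_i\cap C|\;<\;2\tau,
$$
and since each summand is a positive integer, at least one term must equal $1$, i.e.\ some tree crosses $C$ exactly once.

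The main obstacle is Step 2: quantifying how fast the greedy MST procedure drives the loads on min-cut edges toward balance. The exponent $k^7$ reflects several places where $k$ enters, namely a factor $O(k)$ from the size of the cut, an $O(k^c)$ amplification inside a swap/exchange argument for MSTs (bounding how a single imbalanced edge can force load onto up to $k$ sibling cut-edges), and a further factor to convert a per-cut deviation bound into a uniform one via concentration. Once that quantitative balancing lemma is in hand, the averaging-and-pigeonhole conclusion in Step 3 is immediate.
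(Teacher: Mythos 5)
The paper does not prove Theorem~\ref{thorup-tree-packing}; it states it as a black-box citation to Thorup and uses it only to summarize how \cite{nanongkai2014almost} proceed. So there is no in-paper proof to compare against, and your proposal has to be judged on its own merits.

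Judged that way, the proposal is not yet a proof: everything difficult is concentrated in Step~2, which you describe aspirationally (``I would analyse this through a potential\ldots'', ``a multiplicative-weights / martingale argument then shows\ldots'') and yourself flag as ``the main obstacle.'' Concretely, there is no argument that a deterministic greedy MST step, which only minimizes the first moment $\sum_{e\in T_i}\mathcal{L}^{(i-1)}(e)$, controls a second-moment potential such as $\sum_e\binom{\mathcal{L}^{\mathbb{T}}(e)}{2}$ restricted to an arbitrary fixed cut $C$; the martingale and union-bound vocabulary does not apply because greedy tree packing has no underlying probability space; and the bound $O(k^\alpha\sqrt{\tau\log m})$ on the excess is asserted, not derived. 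There is also a structural overreach: if Steps~2--3 held as written, you would conclude that \emph{every} min-cut has a tree crossing it exactly once, whereas the theorem (and the thesis's remark immediately after it) guarantee only that \emph{some} min-cut does --- a sign that the uniform-over-all-min-cuts balancing claim in Step~2 cannot be correct as stated. Step~1's duality inequality $\tau \le \sum_{e\in C}\mathcal{L}^{\mathbb{T}}(e)$ and the pigeonhole in Step~3 are the standard and correct scaffolding; the quantitative load-balancing lemma that bridges them is precisely what Thorup proves and precisely what your proposal leaves open.
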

Based on the above theorem, \cite{nanongkai2014almost} construct a greedy tree packing of $\omega(k^7 \log ^3 m)$ trees. There is a well-known algorithm to find MST in $O(D + \sqrt{n})$ rounds. Further, in each tree, the authors find the size of the smallest edge cut which shares only one edge with the tree. They give an algorithm to do the same in $O(D + \sqrt{n})$ rounds for each tree. Note that they cannot find all the min-cuts because of the limitations of Theorem \ref{thorup-tree-packing} which only guarantees that there will exists some (but not all) min-cut which will share exactly one edge with at least one tree in {tree packing} $\mathbb{T}$.

\subsection{Using Random Circulations}
The random circulation technique has its foundation based on a well-known fact that cut spaces and cycle spaces are orthogonal to each other \citep{bondy1976graph}. Based on this technique \cite{pritchard2008fast} gave the following result
\begin{thm}[\cite{pritchard2008fast}]
	There is a Las Vegas distributed algorithm to compute all the min-cuts of size 1 and 2 (cut edges and cut pairs) in $O(D)$ time.
\end{thm}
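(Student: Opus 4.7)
The plan is to lift the cycle/cut orthogonality proved in Section~\ref{section:chapter_2_cut_cycles} from $\mathbb{Z}_2$ to a large finite field $\mathbb{Z}_N$ with $N = \mathrm{poly}(n)$, so that random edge labels act as fingerprints that expose bridges and cut pairs by simple arithmetic tests. First I would build a BFS tree $T$ rooted at an arbitrary node in $O(D)$ rounds by Lemma~\ref{lemma:bfs-tree} and orient every edge (say low-id to high-id). Each non-tree edge $f$ independently samples $w(f) \in \mathbb{Z}_N$ uniformly, and for each tree edge $e = (v,\parent[T]{v})$ the value $w(e)$ is declared to be the signed sum of $w(f)$ over non-tree edges in the fundamental cut $\delta(\desc[T]{v})$. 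The resulting assignment is a random circulation, and it can be computed by a single bottom-up convergecast on $T$: every vertex aggregates the signed subtree sums received from its children with the signed contributions of its own incident non-tree edges and forwards the total to its parent. Since each message carries a single $O(\log n)$-bit sum, this finishes in $O(D)$ rounds.

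The cut--cycle duality then gives, for every vertex subset $S$, the identity $\sum_{e \in \delta(S)} \pm w(e) = 0$. Specialising to fundamental cuts shows that a tree edge $e_v$ is a bridge iff $w(e_v) = 0$, and specialising to cuts of size two shows that $\{e_1,e_2\}$ is a cut pair iff $w(e_1) = \pm w(e_2)$. Conversely, a union bound over the $O(m^2)$ candidate pairs makes spurious equalities occur with probability at most $O(m^2/N)$, so both tests are also sufficient with high probability. Bridges are therefore identified locally by the endpoints of each tree edge. For cut pairs, observe that deleting two non-tree edges leaves $T$ intact, so any cut pair contains at least one tree edge; if both are tree edges then one must be an ancestor of the other, and if the partner is a non-tree edge $f = (u_1,u_2)$ then the tree edge lies on the $u_1$--$u_2$ path in $T$.

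The main obstacle will be carrying out these matches in $O(D)$ (rather than $O(D^2)$) rounds, since each node has $\Theta(D)$ ancestor weights and potentially many candidate partners. My remedy is a single invocation of Broadcast Type-1 (Lemma~\ref{lemma:broadcat_types}) in which every tree edge weight is piped down to its own subtree; after $O(D)$ pipelined rounds every node knows the complete sequence of tree-edge weights on its root path and can locally compare its own tree-edge weight with its ancestors (catching all tree--tree cut pairs), as well as compare each of its incident non-tree edges with the ancestor chain up to the appropriate $\LCA$ (catching all tree--nontree cut pairs); the two endpoints of each non-tree edge exchange results in one final round for consistency. To upgrade from Monte Carlo to Las Vegas, every reported cut is verified in $O(D)$ rounds by running a BFS in the graph with the claimed edges removed; a failed verification triggers a fresh sampling of the $w(f)$'s. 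Choosing $N = \mathrm{poly}(n)$ makes the expected number of restarts $1 + o(1)$, keeping the total expected round complexity at $O(D)$.
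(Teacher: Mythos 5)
The statement you are asked to prove is only \emph{cited} in this thesis: Chapter~2 quotes it as a black-box result of \cite{pritchard2008fast} and supplies no proof, so there is no ``paper's own proof'' to compare against. Evaluating your reconstruction on its merits, the circulation construction, the bridge test $w(e_v)=0$, the cut--cycle orthogonality argument, and the union bound are all sound. But the cut-pair matching step rests on a false structural claim. You assert that if a cut pair consists of two tree edges, one must be an ancestor of the other. That is not true, and the paper itself records the counterexample: Observation~\ref{obs:induced-cut-size-2-2-respects-tree} shows that when a 2-cut 2-respects the tree, the two subtrees $\desc{u_1}$ and $\desc{u_2}$ are either nested \emph{or mutually disjoint}. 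In the disjoint case the separated side is $\desc{u_1}\cup\desc{u_2}$, neither tree edge lies on the other's root path, and your matching procedure---which compares a node's own tree-edge weight only against weights on its ancestor chain---never brings $w\bigl((\parent{u_1},u_1)\bigr)$ and $w\bigl((\parent{u_2},u_2)\bigr)$ together at any node, so every such cut pair is missed.

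The repair is to let the endpoints of each non-tree edge exchange their full $O(D)$-length root-path weight vectors (another $O(D)$ pipelined rounds, not the single ``consistency'' round you describe). If $\curly{(\parent{u_1},u_1),(\parent{u_2},u_2)}$ is a disjoint tree--tree cut pair and neither edge is itself a bridge, then $\gamma(\desc{u_1},\desc{u_2})>0$, so some non-tree edge $(x,y)$ has $x\in\desc{u_1}$ and $y\in\desc{u_2}$; after the exchange, $x$ holds both tree-edge weights and can compare them. This is exactly the case the thesis must also contort itself to handle deterministically via the $\zeta$-based tree restricted semigroup function (Lemma~\ref{lemma:2-cut-2-respect-disjoint})---the disjoint case is genuinely the nontrivial one. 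A secondary issue: verifying a reported cut by rerunning BFS in the residual graph is not safe for an $O(D)$ bound, since deleting two edges can inflate the diameter well beyond $D$; it is cleaner to verify a candidate pair against the arithmetic criterion of Lemma~\ref{lemma:2-respect-2-cut} using the $\eta$ and $\gamma$ values that the convergecast has already produced.
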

The above result cannot be made deterministic for min-cuts of size 2. Moreover extending the random circulation technique for min-cuts of size $3$ or more do not\john{I added ``do not". } seem plausible due to its fundamental limitations.  But as shown in \citep{DBLP:journals/corr/abs-cs-0602013}, 1-cuts can be found deterministically in $O(D)$ rounds.

\chapter{Technical Overview}
\label{chpater:3_technical_overview}
%TEX root = ../../thesis.tex
We give deterministic algorithm for finding  min-cuts of size one, two, and\john{Just to confirm, we will be able to output {\em all} min-cuts of size 1, 2, and 3, right? Previously: ... for finding if there exists a min-cut of size one, two, and three.} three.
We find the min-cut of size one and two in $O(D)$ time and of size three in $O(D^2)$ time. We recreate the optimal result for min-cut of size one.
For min-cuts of size two and three our results resolve the open problem from \cite{pritchard2011TALG}.
We give a new characterization involving trees and min-cuts. 
We also introduce a new algorithmic technique named as the \textit{Tree restricted Semigroup function} (TRSF) which is defined with respect to a tree $T$. The TRSF is a natural approach to find min-cuts of size one. We also show a non-trivial application of TRSF in finding  min-cuts of size two, which is quite intriguing. 
For finding, if there exists a min-cut of size $3$ we introduce a new \emph{sketching technique} where each node finds a view of the graph with respect to a tree by getting rid of edges which may not be a part of a min-cut. 
The sketching idea coupled with our characterization results helps us to find the required min-cut.
\section{Characterization of Trees and Cuts}
In this subsection, we establish some fundamental relationships between trees and cuts which will help us find the required min-cuts. When a min-cut shares $k$ edges with a spanning tree then we say that it $k$-respects the tree. Each min-cut will at least share one edge with a spanning tree otherwise the said min-cut will not be a cut because there will exist a spanning tree which connects all the nodes. We give the following simple observation in this regard.
\begin{obs}
	\label{obs:fundamental_observation_about_cut_respecting_trees}
For any induced cut of size k and a spanning tree $T$, there exists some $j \in [1, \min(k, n)]$ such that induced cut $j -$ respects the tree $T$.
\end{obs}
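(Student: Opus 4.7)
The plan is to let $j$ denote the number of edges that the induced cut $\delta(A)$ and the spanning tree $T$ share, i.e.\ $j = |\delta(A) \cap E(T)|$, and then show separately that $1 \le j$ and $j \le \min(k, n)$.

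For the upper bound, I would argue by two straightforward inclusions. Since $\delta(A) \cap E(T) \subseteq \delta(A)$, we have $j \le |\delta(A)| = k$. Since $\delta(A) \cap E(T) \subseteq E(T)$ and $T$ is a spanning tree on $n$ vertices, we have $j \le n-1 \le n$. Combining gives $j \le \min(k, n)$. This part is essentially a one-line observation and should not be the bottleneck.

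For the lower bound, I plan to proceed by contradiction: assume $j = 0$, so $T$ contains no edge of $\delta(A)$. Because $\emptyset \subsetneq A \subsetneq V$, I can pick some $u \in A$ and some $v \in V \setminus A$. Since $T$ is a spanning tree, there is a unique path $P$ in $T$ between $u$ and $v$. Walking along $P$ from $u$ to $v$, the endpoint labels transition from ``in $A$'' to ``in $V \setminus A$'' at least once, so $P$ must contain some edge $(x, y)$ with $x \in A$ and $y \in V \setminus A$. By definition this edge lies in $\delta(A)$, and it also lies in $E(T)$, contradicting $j = 0$. Hence $j \ge 1$.

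The main conceptual content, such as it is, is the lower-bound argument: spanning trees have to cross every nontrivial vertex bipartition at least once. The upper bound is immediate, and nothing in the proof uses anything beyond the definitions of a cut and of a spanning tree, so I do not anticipate any real obstacles; the observation serves mostly as a foundation that later arguments (including the $1$-respects and $2$-respects cases handled via TRSF, and the $3$-respects case handled via the sketching and layered algorithms) will build upon.
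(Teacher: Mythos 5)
Your proof is correct and takes essentially the same approach as the paper, which offers only an informal one-line justification (``Each min-cut will at least share one edge with a spanning tree otherwise the said min-cut will not be a cut''); you formalize that lower-bound intuition via the tree path between two vertices on opposite sides of the cut, and the upper bound is the same trivial double inclusion. No issues.
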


The above simple observation lets us break the problem of finding small min-cuts into different cases. A min-cut of size $1$ will always share one edge with any spanning tree. A min-cut of size $2$ shares either $1$ or $2$ edges with any spanning tree. Similarly a min-cut of size $3$ either shares $1,2$ or $3$ edges with any spanning tree. 

We now make a simple observation along with a quick proof sketch
\begin{obs}
	When an induced cut of size $k$ 1-respects a spanning tree $T$, then there exists at least one node $v$ such that $|\delta(\desc[T]{v})| = k$.
\end{obs}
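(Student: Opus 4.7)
The plan is to identify the node $v$ directly from the unique tree edge that the cut crosses. Let $A \subset V$ be the induced cut of size $k$, so $\delta(A)$ has exactly $k$ edges and $|\delta(A) \cap E(T)| = 1$. Call this unique tree edge $e = (u, \parent[T]{u})$, and set $v := u$ (the endpoint deeper in $T$). I claim $A = \desc[T]{v}$ or $A = V \setminus \desc[T]{v}$, from which $\delta(\desc[T]{v}) = \delta(A)$ and hence $\eta_T(v) = |\delta(\desc[T]{v})| = k$ follows immediately, since $\delta(A) = \delta(V \setminus A)$.

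The key step is to argue that the two components obtained from $T$ by removing $e$ are exactly $\desc[T]{v}$ and $V \setminus \desc[T]{v}$, and that each of these components lies entirely on one side of the cut. The first part is standard: removing the single tree edge $(v, \parent[T]{v})$ from the spanning tree $T$ yields precisely the subtree rooted at $v$, namely $\desc[T]{v}$, together with its complement. For the second part, suppose for contradiction that some component, say $\desc[T]{v}$, contains both a vertex in $A$ and a vertex in $V \setminus A$. Then any tree path in $T$ inside this component joining such a pair of vertices must contain at least one edge of $\delta(A)$ (by a telescoping/parity argument along the path). But that edge would be a tree edge of $T$ distinct from $e$ lying in $\delta(A)$, contradicting the assumption that the cut $1$-respects $T$. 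The same argument applies to the other component.

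Thus each side of the partition $\{\desc[T]{v}, V \setminus \desc[T]{v}\}$ is contained in a single side of $\{A, V \setminus A\}$, which forces $A$ to equal one of these two sets. In either case $\delta(A) = \delta(\desc[T]{v})$, and we conclude $|\delta(\desc[T]{v})| = k$, as desired.

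I do not anticipate a serious obstacle here; the only subtlety is the telescoping argument that shows no component of $T \setminus e$ can be split by the cut without producing a second tree edge in $\delta(A)$. This can also be phrased cleanly via Corollary~\ref{main-cor}: if $\desc[T]{v}$ were split by $A$, then writing $A = (A \cap \desc[T]{v}) \cup (A \setminus \desc[T]{v})$ and applying the XOR identity on induced cuts would force at least two tree edges of $T$ into $\delta(A)$, again contradicting $1$-respect.
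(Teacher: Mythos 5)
Your proposal is correct and follows essentially the same approach as the paper's proof sketch: take $v$ to be the deeper endpoint of the unique tree edge in the cut, observe that $T \setminus e$ splits $V$ into $\desc[T]{v}$ and its complement, and argue that neither component can be further split by the cut without producing a second tree edge in $\delta(A)$. You spell out the contradiction (via the telescoping/path argument, or alternatively via Corollary~\ref{main-cor}) more explicitly than the paper, which merely asserts it, but the underlying reasoning is identical.
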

\begin{proof}[Proof Sketch]
Consider the cut edge $(u,v)$ that is  in $T$ with $u$ being closer to the root $r_T$. Since the tree only 1-respects the cut, $\desc[T]{v}$ remains on one side of the cut, while $r_T$ is on the other. Moreover, $V \setminus \desc[T]{v}$ fully remains on the side with $r_T$; otherwise, $T$ will not be limited to 1-respecting the cut. Thus, $\desc[T]{v}$ will serve our required purpose.
\end{proof}
 We will now give a distributed algorithm such that each node $v$ finds $|\delta(\desc[T]{v})|$ in $O(D)$ rounds. For the remaining cases, we will give lemmas to characterize the induced cuts and tree. These characterizations follow from Corollary \ref{main-cor}. We begin with the characterization of an induced cut  of size 2 when it 2-respects a tree $T$. 

% % % % % % % % % % % % % % % % % % % % % % % % % % % %
% % % % Begin: Characterizing Lemma about 2-cuts % % % %
% % % % % % % % % % % % % % % % % % % % % % % % % % % %

\begin{lemma}[2-cut, 2-respects a spanning tree $T$]
\label{lemma:2-respect-2-cut}
Let $T$ be any spanning tree. When $u \neq v$ and $u,v \in V\setminus r$. Then the following two statements are equivalent.
\begin{enumerate}[$P_1:$]
\item $\curly{(\parent[T] u,u),(\parent[T] v,v)}$ is a cut set induced by ${\desc[T]{u}}\oplus {\desc[T]{v}}$.
\item $|\delta(\desc[T]{u})| = |\delta(\desc[T]{v})| = |\delta(\desc[T]{u}) \cap \delta\paren{\desc[T]{v}}| + 1$.
\end{enumerate}
\end{lemma}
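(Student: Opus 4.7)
The plan is to prove both directions of the equivalence by leveraging Corollary \ref{main-cor}, which tells us that $\delta(\desc[T]{u} \oplus \desc[T]{v}) = \delta(\desc[T]{u}) \oplus \delta(\desc[T]{v})$. This converts statements about the cut induced by a symmetric difference of vertex sets into statements about the symmetric difference of two induced cuts, which is where the inclusion-exclusion identity $|\delta(\desc[T]{u}) \oplus \delta(\desc[T]{v})| = |\delta(\desc[T]{u})| + |\delta(\desc[T]{v})| - 2|\delta(\desc[T]{u}) \cap \delta(\desc[T]{v})|$ becomes useful. A small but essential preliminary observation is that for every non-root vertex $w$, the tree edge $(\parent[T]{w}, w)$ always belongs to $\delta(\desc[T]{w})$, since exactly one endpoint ($w$) lies in $\desc[T]{w}$.

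For the direction $P_1 \Rightarrow P_2$, I would start from the assumption $\delta(\desc[T]{u} \oplus \desc[T]{v}) = \{(\parent[T]{u},u),(\parent[T]{v},v)\}$ and invoke Corollary \ref{main-cor} to rewrite the left side as $\delta(\desc[T]{u}) \oplus \delta(\desc[T]{v})$. Since $(\parent[T]{u},u) \in \delta(\desc[T]{u})$ and it appears in the symmetric difference, it cannot belong to $\delta(\desc[T]{v})$; symmetrically $(\parent[T]{v},v)\notin \delta(\desc[T]{u})$. Every other edge is either in both induced cuts or in neither. Therefore $|\delta(\desc[T]{u}) \setminus \delta(\desc[T]{v})| = 1$ and $|\delta(\desc[T]{v}) \setminus \delta(\desc[T]{u})| = 1$, which immediately yields $|\delta(\desc[T]{u})| = |\delta(\desc[T]{v})| = |\delta(\desc[T]{u}) \cap \delta(\desc[T]{v})| + 1$, i.e.\ $P_2$.

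For $P_2 \Rightarrow P_1$, I would first compute $|\delta(\desc[T]{u}) \oplus \delta(\desc[T]{v})| = 2$ using inclusion-exclusion together with the two equalities of $P_2$. By Corollary \ref{main-cor}, $|\delta(\desc[T]{u} \oplus \desc[T]{v})| = 2$. The task is then to identify the two edges. Here the main step is to show that $(\parent[T]{u},u) \in \delta(\desc[T]{u}) \oplus \delta(\desc[T]{v})$: we already know $(\parent[T]{u},u) \in \delta(\desc[T]{u})$, so it suffices to verify $(\parent[T]{u},u) \notin \delta(\desc[T]{v})$. I would split into three cases according to the tree position of $u$ relative to $v$---namely, $u$ is a proper ancestor of $v$, $u$ is a proper descendant of $v$, or $u$ and $v$ are incomparable---and check in each case that both $u$ and $\parent[T]{u}$ lie on the same side of $\desc[T]{v}$ (both in, or both out). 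The same case analysis applies symmetrically to $(\parent[T]{v},v)$. Since these two distinct edges lie in the size-$2$ set $\delta(\desc[T]{u} \oplus \desc[T]{v})$, that set equals $\{(\parent[T]{u},u),(\parent[T]{v},v)\}$, which is $P_1$.

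The only step that is more than routine book-keeping is the case analysis showing that the tree edge $(\parent[T]{u},u)$ never crosses the cut $\delta(\desc[T]{v})$, and its symmetric counterpart for $v$. Once that geometric/tree fact is nailed down, the rest is pure counting powered by Corollary \ref{main-cor}.
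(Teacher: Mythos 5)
Your proposal is correct and matches the paper's strategy: invoke Corollary~\ref{main-cor} to rewrite $\delta(\desc[T]{u}\oplus\desc[T]{v})$ as $\delta(\desc[T]{u})\oplus\delta(\desc[T]{v})$, then use the tree fact that $(\parent[T]{u},u)\in\delta(\desc[T]{u})$ but $(\parent[T]{u},u)\notin\delta(\desc[T]{v})$ (the paper isolates this as Observation~\ref{obs:simple_observation_2_cuts}; you reprove the needed half via your three-way case analysis) to drive the counting. The only cosmetic difference is in $P_2\Rightarrow P_1$: you count $|\delta(\desc[T]{u})\oplus\delta(\desc[T]{v})|=2$ by inclusion--exclusion and then locate the two tree edges inside it, whereas the paper first identifies the non-tree portions of $\delta(\desc[T]{u})$ and $\delta(\desc[T]{v})$ as equal and reads off the symmetric difference directly --- both routes are equivalent.
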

To prove the above lemma the following simple observation will be helpful.
\begin{obs}
\label{obs:simple_observation_2_cuts}
For any $u\neq v$ and $u,v \in V\setminus r$, and a spanning tree $T$ the edge $(\parent[T]{v},v) \in \delta({\desc[T]{v}})$  but $(\parent[T]{v},v) \notin \delta({\desc[T]{u}})$; also $(\parent[T]{u},u) \in \delta({\desc[T]{u}})$  but $(\parent[T]{u},u) \notin \delta({\desc[T]{v}})$. 
\end{obs}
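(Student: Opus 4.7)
The plan is to derive both clauses from a single structural fact about rooted trees: for any vertex $w \neq r$, the tree edge $(\parent[T]{w}, w)$ lies in $\delta(\desc[T]{u})$ if and only if $w = u$. Granted this equivalence, the first clause reduces to the trivial instance $u := v$, and the second to $u \neq v$; the two statements about the edge $(\parent[T]{u}, u)$ with respect to $\desc[T]{u}$ and $\desc[T]{v}$ then follow by swapping the roles of $u$ and $v$. The argument is purely combinatorial and uses nothing beyond the defining property of a rooted spanning tree.

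First I would verify the inclusion $(\parent[T]{v}, v) \in \delta(\desc[T]{v})$: by definition $v \in \desc[T]{v}$, while $\parent[T]{v}$ is a strict ancestor of $v$ (it exists because $v \neq r$), so $\parent[T]{v} \in \ancestor[T]{v} \setminus \curly{v}$ and hence lies outside $\desc[T]{v}$, since the descendant and strict-ancestor sets of any node in a tree are disjoint. Thus exactly one endpoint of the edge is in $\desc[T]{v}$, and the edge belongs to the induced cut.

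For the non-membership $(\parent[T]{v}, v) \notin \delta(\desc[T]{u})$ with $u \neq v$, I would split on whether $v$ is a descendant of $u$. If $v \in \desc[T]{u}$, then since $v \neq u$, $v$ is a proper descendant of $u$, and $\parent[T]{v}$ lies on the root-to-$v$ path in $T$ between $u$ and $v$ (possibly equal to $u$), so $\parent[T]{v} \in \desc[T]{u}$ as well. If instead $v \notin \desc[T]{u}$, then $\parent[T]{v} \notin \desc[T]{u}$ either, because descendant sets are closed under taking tree-children: were $\parent[T]{v}$ a descendant of $u$, its tree-child $v$ would be too, contradicting the sub-case assumption. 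In both sub-cases both endpoints of $(\parent[T]{v}, v)$ lie on the same side of the partition induced by $\desc[T]{u}$, so the edge is not in $\delta(\desc[T]{u})$. The symmetric claim about $(\parent[T]{u}, u)$ and $\desc[T]{v}$ follows by interchanging $u$ and $v$. There is essentially no difficulty here; the only point to watch is the explicit use of the hypothesis $u \neq v$, since if $u = v$ the two tree edges coincide with cut edges of each other's descendant sets and the non-membership portion would trivially fail.
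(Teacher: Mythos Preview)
Your proof is correct and follows essentially the same approach as the paper: a direct case analysis on the relative position of $u$ and $v$ in the tree, showing that both endpoints of the parent edge land on the same side of the other node's descendant set. Your two-case split (on whether $v \in \desc[T]{u}$) is slightly more economical than the paper's three sub-cases (which further distinguishes $v \in \ancestor[T]{u}$ from $v$ being incomparable to $u$), but the underlying idea is identical.
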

\begin{proof}
Consider the given spanning tree $T$. Here the edge $(\parent[T]{u},u)$ has one end in the vertex set $\desc[T]{u}$ and the other end in the vertex set $V \setminus \desc[T]{u}$. Thus the edge $(u,\parent[T]{u})$ is part of the cut set $(\desc[T]{u}, V\setminus \desc[T]{u}) = \delta(\desc[T]{u})$. Now, consider any other node $v$. Either $v \in \desc[T]{u}$ or it does not. Let's take the first case when $v \in \desc[T]{u}$. Since $v \neq u$, thus both the end points of the edge $(\parent[T]{u},u)$ are outside the set $\desc[T]{v}$ hence $(\parent[T]{u},u) \notin \delta(\desc[T]{v})$.

When $v \notin \desc[T]{u}$, then also similar argument holds. If $v \in \ancestor[T]{u}$, then both the endpoints are in the set $\desc[T]{v}$, otherwise both of them are outside the set $\desc[T]{v}$. In either of them the edge $(\parent[T]{u},u) \notin \delta(\desc[T]{v})$.
\end{proof}
\begin{proof}[Proof of Lemma \ref{lemma:2-respect-2-cut}]
Consider the forward direction, $\curly{(\parent[T] u,u),(\parent[T]v,v)}$ is a cut set induced by ${\desc[T]{u}}\oplus {\desc[T]{v}}$. Therefore $\delta({\desc[T]{v}} \oplus {\desc[T]{u}}) = \curly{(\parent[T]u,u),(\parent[T]v,v)}$. Further from Corollary \ref{main-cor}, we have $\delta({\desc[T]{v}} \oplus {\desc[T]{u}})=\delta({\desc[T]{u}}) \oplus \delta({\desc[T]{v}})$. Therefore $\curly{(\parent[T]u,u),(\parent[T]v,v)} = \delta({\desc[T]{u}}) \oplus \delta({\desc[T]{v}})$ which along with Observation \ref{obs:simple_observation_2_cuts} implies that $\delta({\desc[T]{u}})\setminus (\parent[T]{u},u) = \delta({\desc[T]{v}})\setminus(\parent[T]{v},v)$ thus $|\delta({\desc[T]{v}})|-1 = |\delta({\desc[T]{v}}) \cap\delta({\desc[T]{u}})|$ and $|\delta({\desc[T]{u}})|- 1= |\delta({\desc[T]{v}}) \cap\delta({\desc[T]{u}})|$. 

For the other direction, given that $|\delta({\desc[T]{u}}) \cap \delta({\desc[T]{v}})| + 1 = |\delta({\desc[T]{v}})| = |\delta({\desc[T]{u}})|$, which along with Observation \ref{obs:simple_observation_2_cuts} implies that $\delta({\desc[T]{u}})\setminus (\parent[T]{u},u) = \delta({\desc[T]{v}})\setminus (\parent[T]{v},v)$. Hence $\delta({\desc[T]{u}}) \oplus \delta({\desc[T]{v}}) = \curly{(\parent[T]{u},u),(\parent[T]{v},v)}$. Using the fact that $\delta({\desc[T]{v}}) \oplus \delta({\desc[T]{u}}) = \delta({\desc[T]{v}} \oplus {\desc[T]{u}})$, the edge set $\curly{(\parent[T]u,u),(\parent[T]v,v)}$ is a cut induced by ${\desc[T]{v}} \oplus {\desc[T]{u}}$. 
\end{proof}

When an induced cut of size $2$, 2-respects a spanning tree there could be two sub-cases. Consider nodes $u,v$ as in the above lemma. Here either $\desc[T]{u} \cap \desc[T]{v} = \emptyset$ or $\desc[T]{u} \subseteq \desc[T]{v}$ (which is same as $\desc[T]{v} \subseteq \desc[T]{u}$). The above lemma unifies these two cases. In the next chapter, we will show that how one of the nodes in the network finds all the three quantities as required by the above lemma when there exists a min-cut of this kind.

Now we will see the similar characterizations for an induced cut of size 3 when it 2 respects the cut and 3-respects the cut.

% % % % % % % % % % % % % % % % % % % % % % % % % % % %
% % % % End: Characterizing Lemma about 2-cuts % % % %
% % % % % % % % % % % % % % % % % % % % % % % % % % % %
% % % % % % % % % % % % % % % % % % % % % % % % % % % % % % % % %
% % % % % % % Begin: 3-cut which 2 respect% % % % % % % % % % % % 
% % % % % % % Characterizing Lemma Introduction % % % % % % % % % 
% % % % % % % % % % % % % % % % % % % % % % % % % % % % % % % % %
\begin{lemma}[3-cut, 2-respects a spanning tree $T$]
Let $T$ be any tree. Let $v_1,v_2 \in V\setminus r$ be two different nodes and $e$ be a non-tree edge, then the following two are equivalent.
\begin{enumerate}[$P_1:$]
\item $\curly{(\parent[T]{v_1},v_1),(\parent[T]{v_2},v_2),e}$ is a cut set induced by the vertex set $\desc[T]{v_1}\oplus \desc[T]{v_2}$.
\item $|\delta(\desc[T]{v_1})| -2 = |\delta(\desc[T]{v_2})| -1 = \gamma({\desc[T]{v_1}},{\desc[T]{v_2}})$ or $\delta(\desc[T]{v_1}) -1 = |\delta(\desc[T]{v_2})| -2 = \gamma({\desc[T]{v_1}},{\desc[T]{v_2}})$. 
\end{enumerate}
\label{lemma:gen_3_cut_2_respect}
\end{lemma}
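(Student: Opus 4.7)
The plan is to mimic the proof of Lemma \ref{lemma:2-respect-2-cut} almost verbatim, with one extra twist to account for the non-tree edge $e$. The two workhorses will again be Corollary \ref{main-cor}, which converts the symmetric difference of vertex sets into the symmetric difference of their induced cut sets, and Observation \ref{obs:simple_observation_2_cuts}, which locates the tree edges $(\parent[T]{v_1},v_1)$ and $(\parent[T]{v_2},v_2)$ inside $\delta(\desc[T]{v_1})$ and $\delta(\desc[T]{v_2})$, respectively, each on exactly one side of the symmetric difference. The structural difference from the 2-cut case is that $e$, being a non-tree edge, can lie in $\delta(\desc[T]{v_1})$ or in $\delta(\desc[T]{v_2})$, and which side it falls on is precisely what produces the two sub-cases in $P_2$.

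For the forward direction ($P_1 \Rightarrow P_2$), I would start from $P_1$ and Corollary \ref{main-cor} to write $\delta(\desc[T]{v_1}) \oplus \delta(\desc[T]{v_2}) = \{(\parent[T]{v_1},v_1),(\parent[T]{v_2},v_2),e\}$. Observation \ref{obs:simple_observation_2_cuts} puts the two tree edges on opposite sides, so $e$ must lie in exactly one of the two cut sets. In the case $e \in \delta(\desc[T]{v_1})$, peeling off the symmetric-difference elements yields
\[
\delta(\desc[T]{v_1}) \setminus \{(\parent[T]{v_1},v_1),e\} \;=\; \delta(\desc[T]{v_2}) \setminus \{(\parent[T]{v_2},v_2)\} \;=\; \delta(\desc[T]{v_1}) \cap \delta(\desc[T]{v_2}),
\]
which after counting is exactly the first clause of $P_2$. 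A symmetric argument with $e \in \delta(\desc[T]{v_2})$ delivers the second clause.

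For the reverse direction, I would assume WLOG the first clause of $P_2$, with $\gamma = \gamma(\desc[T]{v_1},\desc[T]{v_2})$, so that $|\delta(\desc[T]{v_1})| = \gamma+2$ and $|\delta(\desc[T]{v_2})| = \gamma+1$. A short count then gives $|\delta(\desc[T]{v_1}) \oplus \delta(\desc[T]{v_2})| = |\delta(\desc[T]{v_1})| + |\delta(\desc[T]{v_2})| - 2\gamma = 3$. Observation \ref{obs:simple_observation_2_cuts} forces the two tree edges $(\parent[T]{v_1},v_1),(\parent[T]{v_2},v_2)$ into this three-element symmetric difference, leaving a unique third edge, which must be the hypothesized non-tree edge $e$. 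Applying Corollary \ref{main-cor} then rewrites the three-element set as $\delta(\desc[T]{v_1} \oplus \desc[T]{v_2})$, giving $P_1$. The main obstacle I expect is this uniqueness step in the reverse direction: arguing that the third edge in $\delta(\desc[T]{v_1}) \oplus \delta(\desc[T]{v_2})$ is forced to be a non-tree edge and must coincide with the $e$ named in the hypothesis. The non-tree property is cheap (any tree edge in this symmetric difference must be of the form $(\parent[T]{x},x)$ for some $x$, and the count constraint rules out any $x \notin \{v_1,v_2\}$), but this is the one place where the bookkeeping needs to be done cleanly.
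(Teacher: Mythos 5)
Your proposal is correct and matches the paper's own proof essentially step for step: both directions pass through Corollary \ref{main-cor} to convert $\delta(\desc[T]{v_1}\oplus\desc[T]{v_2})$ into $\delta(\desc[T]{v_1})\oplus\delta(\desc[T]{v_2})$ and then invoke Observation \ref{obs:simple_observation_2_cuts} together with a WLOG case split on which side of the symmetric difference the extra edge $e$ falls. One small nit: in the reverse direction the non-tree status of the third edge needs no separate counting argument, since Observation \ref{obs:simple_observation_2_cuts} already shows that for any $x\notin\{v_1,v_2\}$ the tree edge $(\parent[T]{x},x)$ lies in neither $\delta(\desc[T]{v_1})$ nor $\delta(\desc[T]{v_2})$, hence cannot appear in their symmetric difference.
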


\begin{proof}
The proof is similar to Lemma \ref{lemma:2-respect-2-cut}. 
Consider the forward direction, \sloppy$\curly{(\parent{v_1},v_1),(\parent{v_2},v_2),e}$ is a cut set induced by ${\desc{v_1}}\oplus {\desc{v_2}}$. Therefore $\delta({\desc{v_2}} \oplus {\desc{v_1}}) = \curly{(\parent{v_1},v_1),(\parent{v_2},v_2),e}$. 
Further from Corollary \ref{main-cor}, we have $\delta({\desc{v_1}} \oplus {\desc{v_2}})=\delta({\desc{v_1}}) \oplus \delta({\desc{v_2}})$. Therefore $\curly{(\parent{v_1},v_1),(\parent{v_2},v_2),e} = \delta({\desc{v_1}}) \oplus \delta({\desc{v_2}})$. Since $e \in \delta(\desc{v_1}) \oplus \delta(\desc{v_2})$. Thus $e \in  \delta(\desc{v_1})$ or  $e \in \delta(\desc{v_2})$ but not in both due to the symmetric difference operator. 
Without loss in generality, let $e \in \delta(\desc{v_1})$. Now Observation \ref{obs:simple_observation_2_cuts} implies that $\delta({\desc{v_1}})\setminus \curly{(\parent{v_1},v_1),e} = \delta({\desc{v_2}})\setminus(\parent{v_2},v_2)$ thus $|\delta({\desc{v_1}})|-2 = |\delta({\desc{v_2}}) \cap\delta({\desc{v_2}})|$ 
and $|\delta({\desc{v_2}})|- 1= |\delta({\desc{v_1}}) \cap\delta({\desc{v_2}})|$. 
Similarly when $e \in \delta(\desc{v_2})$ then $|\delta({\desc{v_1}})|-1 = |\delta({\desc{v_2}}) \cap\delta({\desc{v_2}})|$ and $|\delta({\desc{v_2}})|- 2= |\delta({\desc{v_1}}) \cap\delta({\desc{v_2}})|$

For the other direction, without loss in generality, let us choose one of the two statements. In particular, let $|\delta({\desc{v_1}}) \cap \delta({\desc{v_2}})| = |\delta({\desc{v_1}})| - 1 = |\delta({\desc{v_2}})| -2$, which along with Observation \ref{obs:simple_observation_2_cuts} implies that $|\delta({\desc{v_1}})\setminus (\parent{v_1},v_1)| = |\delta({\desc{v_2}})\setminus (\parent{v_2},v_2)| -1$. 
Thus there exists exactly one edge $e$ which is not in $\delta(\desc{v_1})$ but in $\delta(\desc{v_2})$
Hence $\delta({\desc{v_1}}) \oplus \delta({\desc{v_2}}) = \curly{(\parent{v_1},v_1),(\parent{v_2},v_2),e}$. Using the fact that $\delta({\desc{v_2}}) \oplus \delta({\desc{v_1}}) = \delta({\desc{v_2}} \oplus {\desc{v_1}})$; the edge set $\curly{(\parent{v_1},v_1),(\parent{v_2},v_2),e}$ is a cut induced by ${\desc{v_1}} \oplus {\desc{v_2}}$.
\end{proof}
% % % % % % % % % % % % % % % % % % % % % % % % % % % % % % % % %
% % % % % % % Begin: Characterizing Lemma % % % % % % % % % % % % 
% % % % % % % Lemmas. for 3-cut 3-respect % % % % % % % % % % % % 
% % % % % % % % % % % % % % % % % % % % % % % % % % % % % % % % %
\begin{lemma}[3-cut, 3-respects a spanning tree $T$]
\label{lemma:gen_3_cut_3_respect}
Let $T$ be any tree. Let $v_1,v_2,v_3 \in V\setminus r$, further each of $v_1,v_2$ and $v_3$ are different then the following two statements are equivalent:\vspace{-0.7cm}
\begin{enumerate}[$P_1:$]
\item $\curly{(v_1,\parent[T]{v_1}),(v_2,\parent[T]{v_2}),(v_3,\parent[T]{v_3})}$ is a cut-set induced by $\desc[T]{v_1}\oplus \desc[T]{v_2} \oplus \desc[T]{v_3}$.
\item $|\delta(\desc[T]{v_i})| -1 = \sum\limits_{j \in \curly{1,2,3}\setminus \curly{i}}\gamma(\desc[T]{v_i},\desc[T]{v_j}),\ \forall i \in \curly{1,2,3}$.
\end{enumerate}
\end{lemma}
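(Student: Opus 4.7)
My plan is to mirror the strategy used for Lemmas \ref{lemma:2-respect-2-cut} and \ref{lemma:gen_3_cut_2_respect}. Throughout I will write $S_i := \delta(\desc[T]{v_i})$ for brevity. The two workhorses are Corollary \ref{main-cor}, which gives $\delta(\desc[T]{v_1}\oplus \desc[T]{v_2}\oplus \desc[T]{v_3}) = S_1\oplus S_2\oplus S_3$, and Observation \ref{obs:simple_observation_2_cuts}, which says that each tree edge $(v_i,\parent[T]{v_i})$ lies in $S_i$ but in neither $S_j$ nor $S_k$ for $\{i,j,k\}=\{1,2,3\}$.

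For the forward implication $P_1\Rightarrow P_2$, I would restate $P_1$ as the equality $S_1\oplus S_2\oplus S_3 = \{(v_1,\parent[T]{v_1}),(v_2,\parent[T]{v_2}),(v_3,\parent[T]{v_3})\}$. Any edge outside this set must appear in an even number ($0$ or $2$) of the $S_i$'s. Moreover, no edge can lie in all three $S_i$'s, because such an edge would appear in the XOR and would therefore have to coincide with some $(v_i,\parent[T]{v_i})$, contradicting Observation \ref{obs:simple_observation_2_cuts}. Hence every edge of $S_i\setminus\{(v_i,\parent[T]{v_i})\}$ is shared with exactly one other $S_j$, and a direct count gives $|S_i|-1 = \gamma(\desc[T]{v_i},\desc[T]{v_j})+\gamma(\desc[T]{v_i},\desc[T]{v_k})$, which is $P_2$.

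For the backward implication $P_2\Rightarrow P_1$, I would split edges by Venn-diagram region. Setting $a_i = |S_i\setminus(S_j\cup S_k)|$, $b_{ij} = |(S_i\cap S_j)\setminus S_k|$, and $t = |S_1\cap S_2\cap S_3|$, one has $|S_i| = a_i + b_{ij}+b_{ik}+t$ and $\gamma(\desc[T]{v_i},\desc[T]{v_j})+\gamma(\desc[T]{v_i},\desc[T]{v_k}) = b_{ij}+b_{ik}+2t$. Feeding these into $P_2$ reduces the hypothesis to $a_i = 1+t$ for each $i$. Since Observation \ref{obs:simple_observation_2_cuts} places $(v_i,\parent[T]{v_i})\in S_i\setminus(S_j\cup S_k)$, the XOR $S_1\oplus S_2\oplus S_3$ already contains the three tree edges and in total has $3(1+t)+t = 3+4t$ elements, so establishing $P_1$ amounts to forcing $t=0$.

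The main obstacle is precisely ruling out $t>0$, since the algebraic identity $a_i = t+1$ is on its own consistent with any nonnegative $t$. My plan is to exploit the tree geometry: an edge in $S_1\cap S_2\cap S_3$ must simultaneously cross all three descendant boundaries, which in a rooted tree forces a very restrictive endpoint configuration (for example, in the fully nested case, one endpoint must lie in $\desc[T]{v_1}$ and the other in $V\setminus \desc[T]{v_3}$, while in the pairwise-disjoint case no such edge can exist at all). Combining this with the $a_i=t+1$ ``private'' edges that the hypothesis forces into each $S_i$, and with Observation \ref{obs:simple_observation_2_cuts}, I expect to derive that any configuration with $t>0$ is inconsistent with the assumption that $\{(v_i,\parent[T]{v_i})\}_{i=1,2,3}$ is the cut-set induced by $\desc[T]{v_1}\oplus \desc[T]{v_2}\oplus \desc[T]{v_3}$, so $t=0$ and $P_1$ follows.
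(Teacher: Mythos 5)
Your forward argument, and your Venn-diagram reduction of $P_2$ to ``$a_i = t+1$ for each $i$,'' are both correct, and you have correctly located the difficulty: the hypothesis gives no handle on the triple-overlap count $t$. But your plan for closing that gap cannot work. The geometric argument you sketch is circular --- you propose to show that $t>0$ is ``inconsistent with the assumption that $\{(v_i,\parent[T]{v_i})\}_i$ is the cut-set induced by $\desc[T]{v_1}\oplus\desc[T]{v_2}\oplus\desc[T]{v_3}$,'' but that assumption \emph{is} $P_1$, the very thing you are trying to derive --- and, more fundamentally, the implication $P_2 \Rightarrow P_1$ is false as stated. Concretely: let $T$ be a tree on $\{r,v_1,v_2,v_3,w,u_1,u_2\}$ with tree edges $(r,v_1),(r,w),(v_1,v_2),(v_1,u_1),(v_2,v_3),(v_2,u_2)$, and add non-tree edges $(v_1,w),(w,v_3),(v_2,u_1),(v_3,u_2)$. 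Then $\delta(\desc[T]{v_1})=\{(r,v_1),(v_1,w),(w,v_3)\}$, $\delta(\desc[T]{v_2})=\{(v_1,v_2),(v_2,u_1),(w,v_3)\}$, $\delta(\desc[T]{v_3})=\{(v_2,v_3),(v_3,u_2),(w,v_3)\}$. Each has size $3$ and every pairwise intersection is exactly $\{(w,v_3)\}$, so $P_2$ holds for all $i$; yet $(w,v_3)$ lies in all three sets ($t=1$), and $\delta(\desc[T]{v_1}\oplus\desc[T]{v_2}\oplus\desc[T]{v_3})$ has seven edges, not three, so $P_1$ fails.

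This is a defect of the lemma itself, not merely of your attempt, and the paper's own backward-direction proof has the same hole: it begins ``Imagine there is only one edge $e\neq(\parent[T]{v_1},v_1)$'' and derives a contradiction, but that ``only one'' is never justified --- when triple-overlap edges and $\delta(\desc[T]{v_i})$-private extra edges occur in matched numbers (as in the example, where $a_i=2$ and $t=1$), the counts in $P_2$ balance and no contradiction arises. A small correction to your geometry as well: in the nested case $\desc[T]{v_3}\subset\desc[T]{v_2}\subset\desc[T]{v_1}$, a triple-overlap edge has one endpoint in $\desc[T]{v_3}$ and the other \emph{outside} $\desc[T]{v_1}$ (you wrote the reverse containments), and such edges are genuinely realizable, so tree geometry alone cannot force $t=0$. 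Any repair needs an extra hypothesis --- for instance that $|\delta(\desc[T]{v_1}\oplus\desc[T]{v_2}\oplus\desc[T]{v_3})|=3$, or a global $3$-edge-connectivity assumption as in the chapter where the lemma is applied --- or else $P_2$ must be demoted to a necessary condition only.
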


\begin{proof}
	Consider the forward direction  $\delta(\desc{v_1}\ \oplus\ \desc{v_2}\ \oplus\ \desc{v_3}) = \{(v_1,\parent{v_1}),(v_2,\parent{v_2}),$ \sloppy $(v_3,\parent{v_3})\}$. Also, by Corollary \ref{main-cor}, we know that $ \delta(\desc{v_1}) \oplus \delta(\desc{v_2}) \oplus \delta(\desc{v_3}) = \delta(\desc{v_1}\oplus \desc{v_2} \oplus \desc{v_3})$. 
	Notice that there are 3 equations in $P_2$ when $i \in [1,3]$. We will show for $i=1$ and the argument for the rest will follow similarly. 
	For all edge $e \neq (v_1,\parent{v_1})$ and $e \in \delta(\desc{v_1})$ then $e$ must be present exactly once in either of $\delta(\desc{v_2})$ or $\delta(\desc{v_3})$. Because if it is present in none of them then among the three sets $\delta(\desc{v_1}),\delta(\desc{v_2})$ and $\delta(\desc{v_3})$, $e$ is exactly in one that is $\delta(\desc{v_1})$. Thus $e \in \delta(\desc{v_1}\oplus \desc{v_2} \oplus \desc{v_3})$ which is not true. Also if it is present in both $\delta(\desc{v_2})$ and $\delta(\desc{v_3})$ then similarly $e \in \delta(\desc{v_1}\oplus \desc{v_2} \oplus \desc{v_3})$, which again is not true. Thus apart from the edge $(\parent{v_1},v_1)$ whenever there exists an $e \in \delta(\desc{v_1})$ then it is also present in exactly one of $\delta(\desc{v_2})$ or $\delta(\desc{v_3})$.  Thus $|\delta(\desc{v_1})| - 1  = |\delta(\desc{v_1}) \cap \delta(\desc{v_2})| + |\delta(\desc{v_1}) \cap \delta(\desc{v_3})|$.
	
	For the backward direction, we will again focus our attention to three different edge sets $\delta(\desc{v_1}),\delta(\desc{v_2})$ and $\delta(\desc{v_3})$. 
	We know that the edge $(\parent{v_1},v_1) \in \delta(\desc{v_1})$ but not in either of $\delta(\desc{v_2})$ and $\delta(\desc{v_3})$. 3
	Similarly for $(\parent{v_2},v_2)$ and $(\parent{v_3},v_3)$. Lets the edge set $C = \delta(\desc{v_1}) \oplus \delta(\desc{v_2}) \oplus \delta(\desc{v_3})$. 
	Thus $C$ is sure to contain the edges $(v_1,\parent{v_1}),(v_2,\parent{v_2})$ and $(v_3,\parent{v_3})$ because of the symmetric difference operator between the three sets.  
	Now based on the condition in $P_2$ we will show that there is no other edge $e$ in $C$. Imagine there is only one edge $e \neq (\parent{v_1},v_1)$ and $e \in \delta(\desc{v_1})$ and $e \in C$. Thus $e$ could either be in both $\delta(\desc{v_2})$ and $\delta(\desc{v_3})$ or none of them. Imagine $e$ is in both $\delta(\desc{v_2})$ and $\delta(\desc{v_3})$, then $|\delta(\desc{v_1})| -2 = |\delta(\desc{v_1}) \cap \delta(\desc{v_2})| + |\delta(\desc{v_1}) \cap \delta(\desc{v_3})|$ which contradicts $P_2$. Imagine $e$ is in none of $\delta(\desc{v_2})$ and $\delta(\desc{v_3})$, then $|\delta(\desc{v_1})| = |\delta(\desc{v_1}) \cap \delta(\desc{v_2})| + |\delta(\desc{v_1}) \cap \delta(\desc{v_3})|$ which again contradicts $P_2$. 
\end{proof}
Lemma \ref{lemma:gen_3_cut_2_respect} and \ref{lemma:gen_3_cut_3_respect} characterize an induced cut of size 3 when it shares 2 edges and 3 edges with a spanning tree respectively. Similar to a min-cut of size 2, here again we have few different cases. These lemmas unify all the different cases. We will discuss these different cases in Chapter \ref{chapter:5_min-cut-3}.

We will work with an arbitrary BFS tree $\tree$ of the given network. Whenever a quantity is computed with respect to this BFS tree $\tree$, we will skip the redundant $\tree$ from the subscript or superscript, for example $\parent{v}$ instead of $\parent[\tree]{v}$ and $\desc{v}$ instead of $\desc[\tree]{v}$. At the beginning each node $v\neq r$ knows $\level{v}$, $\parent{v}$ and the ancestor set $\ancestor{v}$. The BFS tree and these associated quantities can be computed in the beginning in $O(D)$ rounds. For any node $v$ which is not the root of the BFS tree, $\level{v}$ and  $\parent{v}$ are known to $v$ at the time of construction of the BFS tree as shown in the proof of the Lemma \ref{lemma:bfs-tree}. Further the ancestor set $\ancestor{v}$ can also be known to node $v$ in $O(D)$ rounds using \emph{Broadcast Type - 1}. And can be done as follows: each node $a$ tells all the nodes in the set $\desc{a}$ about it being one the ancestor. This information is just of size $O(\log n)$ bits. 

\section{Our Contribution}
In this thesis, we present deterministic algorithm to find min-cuts of size 1,2 and 3. 
\begin{figure}[h]
	\centering
	\includegraphics{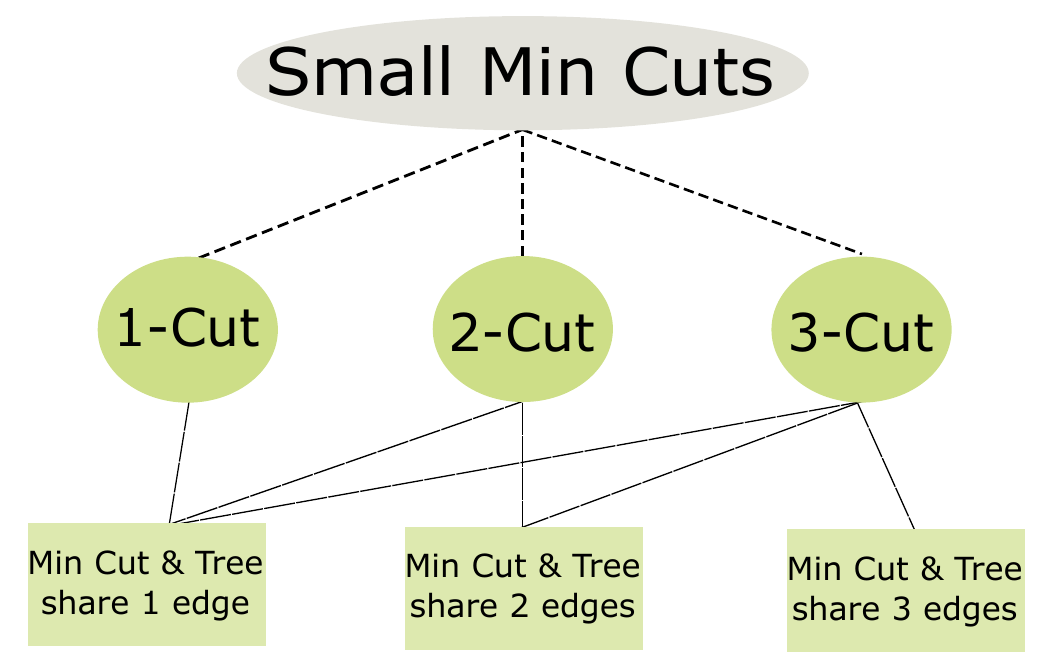}
	\caption{Roadmap for finding small min-cuts}
	\label{figure:roadmap}
\end{figure}
For min-cuts of size $1$ and $2$, our algorithm takes $O(D)$ rounds. For min-cuts of size $3$, our algorithm takes $O(D^2)$ rounds.

We use our characterization given in Lemma \ref{lemma:2-respect-2-cut}, Lemma \ref{lemma:gen_3_cut_2_respect} and Lemma \ref{lemma:gen_3_cut_3_respect}. Further, in subsequent chapters, we will give communication strategies which will ensure that at least one node in the network knows about the required information as per these lemmas whenever a min-cut of the kind exists. The idea is basically to break the problem into smaller parts. Recall from Observation \ref{obs:fundamental_observation_about_cut_respecting_trees} that a min-cut of size 1 shares an edge with any spanning tree, a min-cut of size 2 may share 1 or 2 edges and finally a min-cut of size $3$ may share 1,2 or 3 edges with the tree. In each of these cases, we our communication strategies ensure that at least one of the node will have the required information as per the aforementioned characterization lemmas.  We give a road-map  of this division in Figure \ref{figure:roadmap}.

Our results of finding min-cuts are split into two chapters. In Chapter \ref{chapter:4_trsf}, we present a new technique called the \emph{tree restricted semigroup function}. We will show that this is enough to optimally find the min-cuts of size $1$ and $2$. In Chapter \ref{chapter:5_min-cut-3}, we will give algorithm to find min-cut of size $3$. We introduce two techniques: \emph{sketching} and \emph{layered algorithm} to do the same.
\chapter{Tree Restricted Semigroup Function}
\label{chapter:4_trsf}
	%TEX root = ../../thesis.tex
\label{chapter:restricted_semigroup_function}
In this chapter, we will define tree restricted semigroup function and demonstrate its utility in finding the induced-cuts of size 1 and 2. A tree restricted semigroup function is defined with respect to a tree $T$ and is based on a commutative semigroup. Recall that a commutative semigroup is a set $\mathcal X$ together with a commutative and associative binary operation $\boxplus$ which is a function $\boxplus: \mathcal{X} \times \mathcal{X} \rightarrow \mathcal{X}$. {Tree restricted semigroup function} is inspired by semigroup function defined in \cite[Def. 3.4.3]{peleg2000distributed}.

Let $\mathcal X$ be a commutative semigroup with operator $\boxplus$. Let $T$ be any spanning tree. We will formally define $f : V \rightarrow \mathcal X$ to be a tree restricted semigroup function with respect to the tree $T$ in Definition \ref{defn:restricted_semigroup_function}. 
For any node $v$, $f(v)$ will only depend on the vertex set $\desc[T]{v}$.
Let $a \in V$ be any node, for each $v \in \ancestor[T]{a}$, node $a$ computes the value $X_{a}^{v}$ through a pre-processing step which is the contribution by node $a$ for calculating  $f(v)$. 

\begin{defn}[Tree Restricted Semigroup function]
\label{defn:restricted_semigroup_function}
Let $\mathcal X$ be a commutative semigroup with the operator $\boxplus$. Then, the function $f : V \rightarrow \mathcal X$ is a tree restricted semigroup function if $f(v) = \bigboxplus\limits_{a \in \desc[T]{v}} X_{a}^{v} $
\end{defn}
Further for any $a \in V$ and $v \in \ancestor[T]{a}$, define $X_{\desc[T]{a}}^{v} = \bigboxplus\limits_{a' \in \desc[T]{a}} X_{a'}^{v}$. Note that $X_{\desc[T]{a}}^{v} \in \mathcal X$. We will say that $X_{\desc[T]{a}}^{v}$ is the contribution of nodes in the set $\desc[T]{a}$ for computing $f(v)$.
\begin{obs}
\label{obs:info_required_to_aggregate}
 Let $a$ be an internal node in the tree $T$ and $v \in \ancestor[T]{a}$ then $$X_{\desc[T]{a}}^{v} = X_{a}^{v} \boxplus \paren{\bigboxplus\limits_{c \in \child_T(a)} X_{\desc[T]{c}}^{v}}$$ \end{obs}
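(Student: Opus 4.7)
The plan is to prove this observation by unfolding the definition of $X_{\desc[T]{a}}^{v}$ and exploiting the disjoint decomposition of the descendant set $\desc[T]{a}$ together with the commutativity and associativity of $\boxplus$. The observation is really a statement that the aggregate $X_{\desc[T]{a}}^{v}$ can be computed in a bottom-up fashion along the tree, which is exactly what makes the eventual convergecast-style algorithm feasible.

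First, I would recall the key combinatorial fact about rooted trees: for any internal node $a$, the descendant set partitions as $\desc[T]{a} = \{a\} \sqcup \bigsqcup_{c \in \child_T(a)} \desc[T]{c}$, where the union over children is disjoint because the subtrees rooted at distinct children of $a$ share no vertex. I would note (or prove in one line) this disjointness by appealing to the uniqueness of the tree path from any descendant of $a$ back up to $a$, which must pass through exactly one child of $a$.

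Next, starting from the definition $X_{\desc[T]{a}}^{v} = \bigboxplus_{a' \in \desc[T]{a}} X_{a'}^{v}$, I would split off the term corresponding to $a'=a$ using the above partition, obtaining
\[
X_{\desc[T]{a}}^{v} \;=\; X_a^{v} \;\boxplus\; \bigboxplus_{c \in \child_T(a)}\, \bigboxplus_{a' \in \desc[T]{c}} X_{a'}^{v}.
\]
The rearrangement into nested aggregations over children is justified by the commutativity and associativity of $\boxplus$, which allow arbitrary reordering and regrouping of the finite operands. Finally, applying the definition $X_{\desc[T]{c}}^{v} = \bigboxplus_{a' \in \desc[T]{c}} X_{a'}^{v}$ to each inner aggregation yields exactly the claimed identity.

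The only potential subtlety, which I would flag explicitly, is ensuring the decomposition is truly disjoint so that each $X_{a'}^{v}$ is included exactly once on the right-hand side; if two children's subtrees shared a vertex we would double-count, which for a general semigroup (lacking inverses or idempotence) would falsify the identity. Apart from this, the proof is a straightforward unfolding, so I do not anticipate a deeper obstacle.
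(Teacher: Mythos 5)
Your proof is correct and matches the intended (but unstated) argument: the paper presents this as an observation with no written proof, relying on exactly the decomposition $\desc[T]{a} = \{a\} \sqcup \bigsqcup_{c \in \child_T(a)} \desc[T]{c}$ and the commutativity/associativity of $\boxplus$ that you spell out. Your remark that disjointness is essential for a general semigroup (no inverses, no idempotence) is a worthwhile point the paper leaves implicit.
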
 
\begin{obs} 
\label{obs:info_required_to_compute_trsf}
For any internal node $v$
$$f(v) = X_{v}^{v} \boxplus \paren{\bigboxplus\limits_{c \in \child_T(v)} X_{c^\downarrow}^{v}}$$ 
\end{obs}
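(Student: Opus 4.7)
The plan is to unfold the definition of $f(v)$ and partition the summation index set $\desc[T]{v}$ into the singleton $\{v\}$ and the descendant subtrees of $v$'s children. Since $T$ is a rooted tree, the key set-theoretic fact I will use is that
\[
\desc[T]{v} = \{v\} \;\sqcup\; \bigsqcup_{c \in \child_T(v)} \desc[T]{c},
\]
where the union is disjoint because distinct children have disjoint descendant subtrees in any rooted tree. This is the only combinatorial content needed; everything else is a manipulation of the semigroup operator $\boxplus$.

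Starting from the definition in Definition \ref{defn:restricted_semigroup_function}, I would write
\[
f(v) \;=\; \bigboxplus_{a \in \desc[T]{v}} X_a^v \;=\; X_v^v \;\boxplus\; \bigboxplus_{c \in \child_T(v)} \;\bigboxplus_{a \in \desc[T]{c}} X_a^v,
\]
where the re-indexing is justified by the disjoint partition above together with the commutativity and associativity of $\boxplus$ (which let me reorder and regroup the terms arbitrarily). Then I would invoke the shorthand $X_{\desc[T]{c}}^v = \bigboxplus_{a' \in \desc[T]{c}} X_{a'}^v$ introduced just before Observation \ref{obs:info_required_to_aggregate} to rewrite the inner aggregate, obtaining
\[
f(v) \;=\; X_v^v \;\boxplus\; \bigboxplus_{c \in \child_T(v)} X_{\desc[T]{c}}^v,
\]
which is the claimed identity.

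There is no real obstacle here; the only thing to be careful about is that $\boxplus$ is only a semigroup operation (not necessarily with an identity), so I should ensure that the decomposition is written in a way that is always well-defined. Since $v$ is an internal node, the outer aggregate $\bigboxplus_{c \in \child_T(v)}$ is over a nonempty set, and $X_v^v$ is always present as the first argument, so the expression is well-formed regardless of whether $\mathcal{X}$ has an identity. If desired, the same argument specializes cleanly via Observation \ref{obs:info_required_to_aggregate} applied with $a = v$, since $X_{\desc[T]{v}}^v = f(v)$ by comparing the definitions.
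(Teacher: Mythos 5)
Your proof is correct and is exactly the argument the paper implicitly relies on: the paper states this as an unproven observation, so the intended justification is precisely the disjoint partition $\desc[T]{v} = \{v\} \sqcup \bigsqcup_{c \in \child_T(v)} \desc[T]{c}$ combined with commutativity and associativity of $\boxplus$ and the shorthand $X_{\desc[T]{c}}^{v}$. Your caveat about $\boxplus$ being only a semigroup operation (so one must avoid implicitly invoking an identity element) and your note that the claim is the $a=v$ instance of Observation~\ref{obs:info_required_to_aggregate} via $X_{\desc[T]{v}}^{v}=f(v)$ are both accurate and slightly more careful than the paper's treatment.
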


\begin{algorithm}[h]
	
\DontPrintSemicolon
\SetKwFor{Forp}{for}{parallely}{endfor}
\SetKwFor{Forw}{for}{wait}{endfor}
\SetKwProg{phase}{Phase}{}{}
\SetKwProg{preprocessing}{Pre-Processing}{}{}
\SetKwProg{availableInfo}{Available Info:}{}{}
\preprocessing{}{For any node $a, \forall v \in \ancestor[T]{a}$, node $a$ knows $X_{a}^{v}$ through a pre-processing} 

\setcounter{AlgoLine}{0}
\phase{1 : Aggregation phase run on all node $a$, aggregates $X_{\desc[T]{a}}^{v}\ \forall v\in \ancestor{a}$}{

\lForw{rounds $t = 1$ to $Depth(T)-\level[T]{a}$}{}
$l \gets 0$\;
\For{rounds $t = Depth(T) - \level[T]{a} + 1$ to $h\paren{T}$}{
	$v \leftarrow$ ancestor of node $a$ at level $l$\;
	\lIf{$a$ is leaf node}{$X_{a^{\downarrow T}}^{v} \gets X_{a}^{v}$}
	\Else{
		\lForp{$c \in \child_T(a)$}{collect $\angularbrac{l,X_{c^{\downarrow T}}^{v}}$}
		$X_{a^{\downarrow T}}^{v}\gets X_{a}^{v} \boxplus \paren{ \bigboxplus\limits_{c \in \child_T(a)}X_{c^{\downarrow T}}^{v}}$\;
		} \label{interstep:cut-1-respects-tree-internal-node}
	send to the parent  $\angularbrac{l,X_{a^{\downarrow T}}^{v}}$\;
	$l \gets l+1$\;
}	
} \setcounter{AlgoLine}{0}
\phase{2: Computation Phase (run on all node $v \in V$), finds f(v)}{
\lavailableInfo{}{Each node $v$ knows $X_{c^{\downarrow T}}^{v}$ for all $c \in \child(v)$ }
\lIf{$v$ is a leaf node}{$f(v) \gets X_{v}^{v}$}
\Else{
	$f(v) \gets X_{v}^{v} \boxplus \paren{\bigboxplus\limits_{c \in \child_T(v)} X_{c^{\downarrow T}}^{v}}$
}
}
\caption{Tree Restricted Semigroup function}
\label{algo:trsf_phase-2}
\end{algorithm}
\begin{defn}
A tree restricted semigroup function $f(\cdot)$ is efficiently computable if for all $v$, $f(v)$ can be computed in $O(Depth(T))$ time in the \CONGEST model.
\end{defn}
In Lemma \ref{lemma:tree_restricted_semigroup_function},  we will give sufficient conditions for a tree restricted semigroup function to be efficiently computable.
\begin{lemma}
\label{lemma:tree_restricted_semigroup_function}
For all $a \in V$ and $v \in \ancestor[T]{a}$ if $X_a^v, X_{\desc[T]{a}}^v, f(v)$ are all of size $O(\log n)$ bits and if $X_a^v$ can be computed in $O(Depth(T))$ time by node $a$ then the tree restricted semigroup function $f(\cdot)$ is efficiently computable.
\end{lemma}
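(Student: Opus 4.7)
The plan is to analyze Algorithm \ref{algo:trsf_phase-2} directly, establishing correctness first and then bounding the round complexity by a careful pipelining argument. Phase~2 is purely local at each node $v$ and uses Observation \ref{obs:info_required_to_compute_trsf}, so the real work is to prove that at the end of Phase~1 every node $v$ holds the values $X_{\desc[T]{c}}^{v}$ for each $c\in\child_T(v)$, and that this can be achieved in $O(\text{Depth}(T))$ rounds.

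For correctness I would argue by induction on $\text{Depth}(T)-\level[T]{a}$, i.e.\ from the leaves upward. The base case is a leaf $a$, which by construction sets $X_{\desc[T]{a}}^{v}=X_a^{v}$ for each ancestor $v$, matching the definition. For the inductive step, an internal node $a$ at level $\level[T]{a}$ receives, by hypothesis, the correct value $X_{\desc[T]{c}}^{v}$ from every child $c\in\child_T(a)$ for each ancestor $v$ of $a$, and then applies Observation \ref{obs:info_required_to_aggregate} to produce $X_{\desc[T]{a}}^{v}$. The pre-processing step guarantees that $a$ has $X_a^{v}$ for every $v\in\ancestor[T]{a}$, so the aggregation on line \ref{interstep:cut-1-respects-tree-internal-node} of the algorithm is well-defined. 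Once Phase~1 completes, every internal node $v$ has received $X_{\desc[T]{c}}^{v}$ from each child $c$ (these being the last messages $v$ consumes in its role as a parent), and Phase~2 combines them with $X_v^{v}$ via Observation \ref{obs:info_required_to_compute_trsf}.

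For the round-complexity bound, the pre-processing takes $O(\text{Depth}(T))$ rounds by assumption. For Phase~1, the key observation is that node $a$ idles for exactly $\text{Depth}(T)-\level[T]{a}$ rounds so that, at round $\text{Depth}(T)-\level[T]{a}+1$, it begins sending the pair $\langle 0, X_{\desc[T]{a}}^{r}\rangle$ (aggregation for the root) to its parent, and in each subsequent round sends $\langle l, X_{\desc[T]{a}}^{v}\rangle$ for the ancestor $v$ at level $l$. Because $X_a^{v}$ and $X_{\desc[T]{a}}^{v}$ are each $O(\log n)$ bits, each such message, together with the level index $l$ (also $O(\log n)$ bits), fits in a single \CONGEST round on the tree edge to $\parent[T]{a}$. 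The staggered start times ensure that when a parent $a'$ at level $\level[T]{a}-1$ needs to send the aggregate for ancestor at level $l$, it has already received the corresponding messages from all children for that same $l$ in the previous round. Thus messages are pipelined along each root-to-leaf path with no congestion, and the last message reaches its intended aggregator by round $O(\text{Depth}(T))$. Phase~2 is executed in a single local step, so the total round complexity is $O(\text{Depth}(T))$.

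The main obstacle, and the point that needs the most care in the written proof, is verifying that the pipelining schedule is collision-free: that is, that in any round $t$ and for any tree edge $(a,\parent[T]{a})$, at most one $O(\log n)$-bit message traverses the edge. This follows from the invariant that in round $t$, a non-leaf node $a$ is sending the aggregate for its ancestor at level $l = t - (\text{Depth}(T)-\level[T]{a}+1)$, while having just received in round $t-1$ the corresponding aggregates at level $l$ from each of its children; the fixed offset $\text{Depth}(T)-\level[T]{a}$ in the waiting phase is precisely what synchronizes the pipeline across the tree. Once this invariant is established, both the $O(\log n)$-bandwidth bound on each link per round and the $O(\text{Depth}(T))$ round bound follow immediately, giving the desired efficient computability.
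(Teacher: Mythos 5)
Your proposal is correct and follows essentially the same route as the paper: it analyzes Algorithm \ref{algo:trsf_phase-2} directly, uses Observations \ref{obs:info_required_to_aggregate} and \ref{obs:info_required_to_compute_trsf} for the aggregation/computation steps, and bounds the round complexity by the staggered waiting schedule that pipelines one $O(\log n)$-bit message per tree edge per round. Your write-up is somewhat more explicit than the paper's (it names the leaf-upward induction for correctness and states the collision-free pipeline invariant $l = t - (\text{Depth}(T)-\level[T]{a}+1)$ outright, whereas the paper leaves these implicit), but the substance is the same.
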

\begin{proof}
For any node $v$, tree restricted function $f(v)$ depends on the value $X_{a}^{v}$ for all $a\in \desc[T]{v}$, thus each such node $a$ \emph{convergecasts} (see Sec. \ref{sec:covergecast_broadcast}) the required information up the tree which is supported by aggregation of the values.  We will give an algorithmic proof for this lemma. The algorithm to compute a efficiently computable semigroup function $f(\cdot)$ is given in Algorithm \ref{algo:trsf_phase-2}. 

The aggregation phase of the algorithm given in Phase {1} runs for at most $Depth(T)$ time and facilitates a coordinated aggregation of the required values and convergecasts them in a synchronized fashion. Each node $a$ in Phase {2}, sends $\level[T]{a}$ messages of size $O(\log n)$  to its parent, each message include $X_{a^{\downarrow T}}^{v}$ where $v \in \ancestor[T]{a}$; which as defined earlier is the contribution of nodes in $a^{\downarrow T}$ to $f(v)$. This message passing takes $O(1)$ time since $X_{a^{\downarrow T}}^{v} \in \mathcal X$ is of size $O(\log n)$ bits. For brevity, we assume this takes exactly $1$ round, this enables us to talk about each round more appropriately as follows: Any node $a$ at level $\level[T]{a}$ waits for round $t = 1$ to $Depth(T) - \level[T]{a}$. For any $l \in [0, \level[T]{a}-1]$, in round $t = Depth(T) - \level[T]{a} + l + 1$  node  $a$ sends to its parent $\angularbrac{l, X_{a^{\downarrow T}}^{v}}$ where $v$ is the ancestor of $a$ at level $l$. When node $a$ is an internal node then as per Observation \ref{obs:info_required_to_aggregate}, $X_{\desc[T]{a}}^{v}$  depends on $X_{a}^{v}$ which can be pre-calculated in $O(Depth(T))$ rounds. Also, $X_{a^{\downarrow T}}^{v}$ depends on $X_{c^{\downarrow T}}^{v}$ for all $c \in \child_T(a)$ which are at level $\level[T]{a}+1$ and have send to $a$ (which is their parent) the message $\angularbrac{l, X^{v}_{c^{\downarrow T}}}$ in the $(Depth(T) - \level[T]{a} + l)^\text{th}$ round. For a leaf node $a$, $X_{a^{\downarrow T}}^{v} = X_{a}^{v}$ which again is covered in pre-processing step.

In Phase {2}, node $v$ computes tree restricted semigroup function $f(v)$. As per Observation \ref{obs:info_required_to_compute_trsf} of the algorithm each internal node $v$ requires $X_{\desc[T]{c}}^{v}\ \forall c \in \child_T(v)$ and $X_v^{v}$. $X_v^{v}$ is computed in the pre-processing step. And $X_{\desc[T]{c}}^{v}$ is received by $v$ in the aggregation phase. When node $v$ is a leaf node, $f(v)$ depends only on $X_{v}^{v}$.
\end{proof}
For any node $v$ and a tree $T$, let us define $\eta_T(v) = |\delta(\desc[T]{v})|$. In Subsection \ref{subsection:size-of-tree-cuts}, we will show that $\eta_T(\cdot)$ is an efficiently computable  {tree restricted semigroup function}. This will be enough to find if there exists a min-cut of size $1$ or a min-cut of size $2,3$ which 1-respects our fixed spanning tree. 

Further in Subsection \ref{subsection:induced_cuts_of_size_2}, we will find induced cuts of size 2 and define a tree restricted semigroup function $\zeta_T(\cdot)$ which will enable us to find induced-cuts of size 2. As mentioned earlier, we will work with a fixed BFS tree $\tree$. We will skip the redundant $\tree$ in the notations. For example $\eta(\cdot)$ instead of $\eta_T(\cdot)$ and $\zeta(\cdot)$ instead of $\zeta_T(\cdot)$.
%\input{subparts/restricted_semigroup_old.tex}
	%TEX root = ../../thesis.tex
\section{Min-Cuts of size 1}
\label{subsection:size-of-tree-cuts}
In this subsection we will prove that the function $\eta(\cdot)$ is an efficiently computable tree restricted semigroup function. Here we will work with our fixed BFS tree $\tree$.  Recall that the function $\eta: V \rightarrow [m]$ where $\eta(v) = |\delta(v^\downarrow)|$ and $\delta(v^\downarrow)$ is the cut induced by the vertex set $v^\downarrow$. 
For any node $a \in v^\downarrow$, we will use $H_{a}^{v} \triangleq |\delta(a) \cap \delta(\desc{v})|$. This is equal to the number of  vertices adjacent to $a$ which are not in vertex set $v^\downarrow$. The commutative semigroup associated here is the set of all positive integers $\mathbb Z^+$ with \lq addition\rq operator. Thus for a node $a\in \ancestor{v}$, $H_{\desc{a}}^{v} = \sum_{a' \in \desc{a}}H_{a'}^{v}$. We give the pre-processing steps in Algorithm \ref{algo:preprocessing-cut-1-respects-tree} which calculates $H_{a}^{v}$ for all $v \in \ancestor{a}$

\begin{algorithm}[h]
\DontPrintSemicolon
\SetKwFor{Forp}{for}{parallely}{endfor}
\lForp{$b$ adjacent to $a$}{send the ancestor set $\ancestor{a}$ to $b$} 
\lForp{$b$ adjacent to $a$}{receive the ancestor set $\ancestor{b}$} \label{line:pre-processing-eta-v-communication} 
\For{$v \in \ancestor{a}$}{
	$H_{a}^{v} \gets |\curly{b\mid (a,b)\in E,\ v\notin \ancestor{b}}|$\;
	\tcc{node $a$ can execute the above step locally because it knows $\ancestor{b}$, $\forall\ (a,b) \in E$}
}
\caption{Pre-Processing for computing the function $\eta(\cdot)$ (run at all node $a$ for finding $H^{v}_{a}\ \forall\ v \in \ancestor{a}$)}
\label{algo:preprocessing-cut-1-respects-tree}
\end{algorithm}
\begin{obs}
\label{prepo:pre-processing-size-tree-cuts}
Pre processing as given in Algorithm \ref{algo:preprocessing-cut-1-respects-tree} takes $O(D)$ time.
\end{obs}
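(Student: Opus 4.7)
The plan is to show that each of the three steps of Algorithm 2 can be done in $O(D)$ rounds, so the total pre-processing time is $O(D)$.

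First I would observe that since $\tree$ is a BFS tree of the network $G$ whose diameter is $D$, the depth of $\tree$ is at most $D$. Hence for every node $a \in V$ we have $|\ancestor{a}| \le D + 1$. Since each vertex $id$ is encoded in $O(\log n)$ bits, the entire ancestor set $\ancestor{a}$ fits in $O(D \log n)$ bits. The BFS tree and the ancestor set $\ancestor{a}$ at every node $a$ are available at the start of the pre-processing, as remarked at the end of the characterization section (they are computed in $O(D)$ rounds using \emph{Broadcast Type-1}).

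Next I would analyze the communication step. In line 1 and line 2 (which I treat as happening in parallel across all incident edges in both directions), each node $a$ must transmit the $O(D \log n)$-bit string encoding $\ancestor{a}$ to each of its neighbors, and simultaneously receive the analogous strings from them. Splitting each string into a sequence of $O(D)$ chunks of $O(\log n)$ bits and pipelining them, every edge can carry the required information in $O(D)$ rounds, and since different edges operate independently, this bound holds globally.

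Finally, after the communication, every node $a$ holds $\ancestor{b}$ for every neighbor $b$ of $a$. The loop in lines 3--4, which computes $H_a^v = |\{b \mid (a,b)\in E,\ v \notin \ancestor{b}\}|$ for each $v \in \ancestor{a}$, is a purely local computation at $a$ (no messages are exchanged) and therefore contributes $0$ rounds. Summing the contributions gives an $O(D)$ round bound on the pre-processing, which is exactly what the observation claims. I do not anticipate any real obstacle here; the only subtlety is ensuring that the pipelining of the $O(D \log n)$-bit ancestor sets across every link can indeed be carried out concurrently, which follows from the fact that each link is used only by its two endpoints and each round allows $O(\log n)$ bits per direction per link.
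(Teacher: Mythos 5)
Your proposal is correct and follows essentially the same argument as the paper: bound $|\ancestor{a}|$ by $O(D)$ using the BFS depth, observe that exchanging ancestor sets across each incident edge therefore takes $O(D)$ rounds (you spell out the pipelining of $O(\log n)$-bit chunks, which the paper leaves implicit), and note that the final loop is purely local computation.
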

\begin{proof}
Any node $a$ has at most $D$ ancestors in $\tree$. So for every node $b$ adjacent to $a$, it takes $O(D)$ time to communicate the set $\ancestor{a}$ to it. Similarly, it takes $O(D)$ time to receive the set $\ancestor{b}$ from node $b$. Now $H_{a}^{v}$ for any $v \in \ancestor{a}$ will just be an internal computation at node $a$. 
\end{proof}
\begin{lemma}
\label{lemma:eta_v_is_restricted_semigroup_function}
$\eta(\cdot)$ is an efficiently computable {tree restricted semigroup function}.
\end{lemma}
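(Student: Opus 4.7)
The plan is to verify that $\eta(\cdot)$ fits the framework of Definition~\ref{defn:restricted_semigroup_function} with $\mathcal{X} = (\mathbb{Z}^+, +)$ and $X_a^v := H_a^v$, and then invoke Lemma~\ref{lemma:tree_restricted_semigroup_function} (the sufficient-conditions lemma) to conclude efficient computability.

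First I would establish the decomposition $\eta(v) = \sum_{a \in \desc{v}} H_a^v$. The argument is a simple double-counting: every edge $e \in \delta(\desc{v})$ has, by definition of an induced cut, exactly one endpoint in $\desc{v}$; call that endpoint $a_e$. Then $e$ contributes $1$ to $H_{a_e}^{v}$ (since its other endpoint lies outside $\desc{v}$) and contributes $0$ to $H_{a'}^{v}$ for every other $a' \in \desc{v}$ (either because $e$ is not incident to $a'$ at all, or because $e$'s incidence at $a'$ is internal to $\desc{v}$). Summing $H_a^v$ over all $a \in \desc{v}$ therefore counts each edge of $\delta(\desc{v})$ exactly once, which gives the identity. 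This shows $\eta(v)$ is a TRSF on the commutative semigroup $(\mathbb{Z}^+, +)$.

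Next I would verify the three size/time preconditions of Lemma~\ref{lemma:tree_restricted_semigroup_function}. Since the graph has $m \leq \binom{n}{2}$ edges, the quantities $H_a^v \leq \deg(a) \leq n$, the partial aggregates $H_{\desc{a}}^v \leq m$, and the final value $\eta(v) \leq m$ all fit in $O(\log n)$ bits, so they can be transmitted in a single \CONGEST round. For the pre-processing, Observation~\ref{prepo:pre-processing-size-tree-cuts} already shows that every node $a$ can compute $H_a^v$ for all $v \in \ancestor{a}$ in $O(D)$ rounds: each endpoint exchanges its ancestor set with each neighbor (which takes $O(D)$ rounds because $|\ancestor{a}| \leq D$ in the BFS tree $\tree$), after which $H_a^v$ is a local computation. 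Since $\tree$ is a BFS tree, $Depth(\tree) = O(D)$, so pre-processing meets the required $O(Depth(T))$ bound.

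Having verified all the hypotheses of Lemma~\ref{lemma:tree_restricted_semigroup_function}, the conclusion that $\eta(\cdot)$ is efficiently computable follows immediately by applying that lemma. The only subtle point is the double-counting argument in the first paragraph; the rest is routine bookkeeping of message sizes and round counts. I do not expect any genuine obstacle, because the machinery of Algorithm~\ref{algo:trsf_phase-2} handles the convergecast and aggregation generically once the semigroup structure has been identified.
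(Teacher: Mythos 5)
Your proposal is correct and matches the paper's proof in all essentials: you identify the same semigroup $(\mathbb{Z}^+,+)$, the same contributions $X_a^v = H_a^v$, the same decomposition $\eta(v) = \sum_{a\in\desc{v}} H_a^v$, the same $O(\log n)$-bit bounds, and you invoke the same two ingredients (Observation~\ref{prepo:pre-processing-size-tree-cuts} for pre-processing and Lemma~\ref{lemma:tree_restricted_semigroup_function} for efficient computability). Your double-counting justification of the decomposition identity is slightly more explicit than the paper's one-line statement, but it is the same argument.
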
 
\begin{proof}
Let $v \in V$, $\eta(v)$ is the number of edges going out of the vertex set $\desc{v}$. That is $\forall a \in \desc{v}$, $\eta(v)$ is the sum of the number of vertices incident to $a$ which are not in the set $\desc{v}$. Thus $\eta(v) = \sum_{a \in \desc{v}} H_a^{v}$. By Observation \ref{prepo:pre-processing-size-tree-cuts}, for all $v \in \ancestor{a}$, $H_{a}^{v}$ can be computed in $O(D)$ time. Also for any $v$, $\eta(v)$ could be as big as the number of edges $m$. Therefore, for any $a \in V$ and $v \in \ancestor{a}$ we have, $0 \leq H_a^{v} \leq H_{\desc{a}}^{v}\leq \eta(v) \leq m$. Thus $H_a^{v}$,$H_{\desc{a}}^{v}$ and $f(v)$ can be represented in $O(\log n)$ bits. Hence by Lemma \ref{lemma:tree_restricted_semigroup_function}, $\eta(\cdot)$ is an efficiently computable tree restricted semigroup function.
\end{proof}

To compute $\eta(\cdot)$, we use Algorithm \ref{algo:trsf_phase-2} given in Lemma \ref{lemma:tree_restricted_semigroup_function}. Further during the computation of $\eta(\cdot)$, each node $a$ computes $H_{\desc{a}}^{v}$ for all $v \in \ancestor{a}$ which is the aggregated value of all the decendents of node $a$. We summarize these results in the following Lemmas.
\begin{lemma}
The function $\eta(\cdot)$ can be computed in $O(D)$ time for every node $v\in V$.
\end{lemma}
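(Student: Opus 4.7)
The plan is to invoke the machinery established immediately above. By Lemma \ref{lemma:eta_v_is_restricted_semigroup_function}, $\eta(\cdot)$ is an efficiently computable tree restricted semigroup function, which by the definition of efficient computability means that $\eta(v)$ can be computed in $O(Depth(\tree))$ rounds for every $v \in V$ in the \CONGEST model. The concrete algorithm that achieves this is the two-phase procedure of Algorithm \ref{algo:trsf_phase-2}, specialized so that $X_a^v = H_a^v$ and the semigroup operator $\boxplus$ is ordinary integer addition; the required per-node contributions $H_a^v$ (for $v \in \ancestor{a}$) are produced by the pre-processing in Algorithm \ref{algo:preprocessing-cut-1-respects-tree}.

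Next I would use that our fixed spanning tree $\tree$ is a BFS tree of $G$, so $Depth(\tree) \le D$. The pre-processing of Algorithm \ref{algo:preprocessing-cut-1-respects-tree} takes $O(D)$ rounds by Observation \ref{prepo:pre-processing-size-tree-cuts}. The aggregation phase of Algorithm \ref{algo:trsf_phase-2} runs for $O(Depth(\tree)) = O(D)$ rounds, with each round transmitting a single $O(\log n)$-bit message across each tree edge. The final computation phase is a single local step at each node using values already received. Adding these three contributions gives a total of $O(D)$ rounds, which is what we want.

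The only step that required substantive work, namely verifying that the intermediate values $H_a^v$, $H_{\desc{a}}^v$, and $\eta(v)$ all fit in $O(\log n)$ bits so that the generic Lemma \ref{lemma:tree_restricted_semigroup_function} applies, has already been dispatched inside Lemma \ref{lemma:eta_v_is_restricted_semigroup_function} (each quantity is bounded by $m$ and hence representable in $O(\log n)$ bits). Consequently the present lemma is essentially an immediate corollary; the main thing to be careful about is citing the BFS-tree depth bound $O(D)$ rather than a depth bound for an arbitrary spanning tree, and making explicit that the $O(D)$ rounds of pre-processing are added to the $O(Depth(\tree))$ rounds of the main algorithm before concluding. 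I do not anticipate any genuine obstacle.
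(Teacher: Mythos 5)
Your proposal is correct and follows essentially the same route as the paper: invoke Lemma \ref{lemma:eta_v_is_restricted_semigroup_function} to establish that $\eta(\cdot)$ is an efficiently computable tree restricted semigroup function, then use the fact that $\tree$ is a BFS tree of depth $O(D)$ to conclude the $O(D)$ bound. The paper's own proof is two sentences long and cites the same ingredients; your version merely spells out the pre-processing and aggregation phases more explicitly, which is fine but adds nothing new.
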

\begin{proof}
In Lemma \ref{lemma:eta_v_is_restricted_semigroup_function}, we show that $\eta(\cdot)$ is an efficiently computable {tree restricted semigroup function} defined with respect to the BFS tree $\tree$. Also, $Depth(\tree) = O(D)$. Thus by Definition \ref{defn:restricted_semigroup_function}, $\eta(\cdot)$ can be computed in $O(D)$
\end{proof}
\begin{lemma}
\label{lemma:H_v_a_known_to_all_nodes}
For each node $a$ and $v \in \ancestor{a}$, node a knows $H_{\desc{a}}^{v} = |\delta\paren{\desc{a}} \cap \delta\paren{\desc{v}}|$ in $O(D)$ rounds.
\end{lemma}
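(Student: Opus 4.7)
The plan is to observe that the value $H_{\desc{a}}^{v}$ that each node already computes as an intermediate quantity during the convergecast in Phase~1 of Algorithm~\ref{algo:trsf_phase-2} for $\eta(\cdot)$ is in fact $|\delta(\desc{a})\cap \delta(\desc{v})|$, so no additional communication is needed beyond what we already perform when computing $\eta(\cdot)$.

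First I would establish the combinatorial identity $H_{\desc{a}}^{v} = |\delta(\desc{a})\cap \delta(\desc{v})|$. Unrolling definitions, $H_{\desc{a}}^{v} = \sum_{a'\in\desc{a}}|\delta(a')\cap\delta(\desc{v})|$. Since $v\in\ancestor{a}$, we have $\desc{a}\subseteq\desc{v}$, so any edge contributing to $\delta(a')\cap\delta(\desc{v})$ for some $a'\in\desc{a}$ must go from $a'$ to a node outside $\desc{v}$; in particular, exactly one endpoint lies in $\desc{a}$, which means it belongs to $\delta(\desc{a})\cap\delta(\desc{v})$. Conversely, every edge in $\delta(\desc{a})\cap\delta(\desc{v})$ has a unique endpoint $a'\in\desc{a}$, and the other endpoint lies outside $\desc{v}$, so it is counted exactly once in the sum. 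This gives the claimed equality.

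Next I would invoke the machinery already set up. By Lemma~\ref{lemma:eta_v_is_restricted_semigroup_function}, $\eta(\cdot)$ is an efficiently computable tree restricted semigroup function on the BFS tree $\tree$, computed via Algorithm~\ref{algo:trsf_phase-2}. Inspecting Phase~1, for each level $l\in[0,\level{a}-1]$ and the corresponding ancestor $v$ of $a$ at level $l$, node $a$ locally assembles $X_{\desc{a}}^{v}=H_{\desc{a}}^{v}$ (using its pre-processed $H_{a}^{v}$ together with the values $H_{\desc{c}}^{v}$ received from its children $c\in\child_{\tree}(a)$, as in Observation~\ref{obs:info_required_to_aggregate}) \emph{before} passing it to its parent. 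Thus $H_{\desc{a}}^{v}$ is resident at $a$ for every $v\in\ancestor{a}$ by the end of the convergecast.

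Finally, since $\tree$ is a BFS tree we have $Depth(\tree)=O(D)$, so Phase~1 terminates in $O(D)$ rounds, and the pre-processing of the $H_{a}^{v}$ values requires $O(D)$ rounds by Observation~\ref{prepo:pre-processing-size-tree-cuts}; combining these gives the $O(D)$ bound. The only subtle step is the first one (the identity between the semigroup aggregate and the intersection of cuts), but this follows directly once one uses $\desc{a}\subseteq\desc{v}$; the remainder is a bookkeeping argument reusing Lemma~\ref{lemma:tree_restricted_semigroup_function}.
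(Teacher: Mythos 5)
Your proof is correct and follows essentially the same route as the paper's: you observe that $H_{\desc{a}}^{v}$ is exactly the intermediate aggregate each node already computes during Phase~1 of the convergecast for $\eta(\cdot)$, so no extra communication is needed beyond the $O(D)$ rounds already spent. You go further than the paper's proof by explicitly verifying the identity $H_{\desc{a}}^{v} = |\delta(\desc{a})\cap\delta(\desc{v})|$ (using $\desc{a}\subseteq\desc{v}$ to show each crossing edge is counted exactly once), a step the paper leaves implicit; this is a sound and worthwhile clarification rather than a different method.
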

\begin{proof}
During the computation of the tree restricted semigroup function $\eta(\cdot)$, each node $a$ for every $v \in \ancestor{a}$, computes the aggregated value $H_{\desc{a}}^{v} = \sum_{a' \in \desc{a}} H_{a}^{v}$ which is the contribution of nodes  in $\desc a$ towards the computation of $\eta(v)$. Thus the lemma follows.
\end{proof}
\begin{thm}\label{thm:min-cut-size-1}
Min-cut of size one (bridge edges)  can be found in $O(D)$ time.
\end{thm}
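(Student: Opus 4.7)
The plan is to leverage the fact that every bridge of $G$ must be a tree edge of our fixed BFS tree $\tree$, and then to recognize bridges through the function $\eta(\cdot)$ just developed. Concretely, I claim that for $v \neq r$, the tree edge $(\parent{v},v)$ is a bridge if and only if $\eta(v)=1$. The forward direction: a min-cut of size one must $1$-respect every spanning tree, for otherwise the tree already connects both sides; by the observation immediately following Observation~\ref{obs:fundamental_observation_about_cut_respecting_trees}, there is some $v$ with $|\delta(\desc{v})|=1$, and that single edge in $\delta(\desc{v})$ is necessarily the tree edge $(\parent{v},v)$. The converse is immediate: if $\eta(v)=|\delta(\desc{v})|=1$, then $(\parent{v},v)$ is the unique edge crossing $(\desc{v},V\setminus \desc{v})$ and its removal disconnects the graph.

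Given this characterization, the algorithm has three steps. First, I would construct the BFS tree $\tree$ rooted at an arbitrary node $r$ in $O(D)$ rounds via Lemma~\ref{lemma:bfs-tree}, so that every non-root node learns $\parent{\cdot}$, $\level{\cdot}$, and its ancestor set $\ancestor{\cdot}$ (the ancestor set via a single Broadcast Type-1 along $\tree$). Second, I would run the pre-processing of Algorithm~\ref{algo:preprocessing-cut-1-respects-tree} so that each node $a$ knows $H_a^{v}$ for every $v\in\ancestor{a}$; this takes $O(D)$ rounds by Observation~\ref{prepo:pre-processing-size-tree-cuts}. Third, I would invoke Algorithm~\ref{algo:trsf_phase-2} on the tree restricted semigroup function $\eta(\cdot)$, which by Lemma~\ref{lemma:eta_v_is_restricted_semigroup_function} is efficiently computable, so that every node $v$ learns $\eta(v)$ in $O(\mathrm{Depth}(\tree))=O(D)$ rounds.

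Finally, each node $v\neq r$ tests locally whether $\eta(v)=1$, and if so declares the edge $(\parent{v},v)$ to be a bridge; this test needs no further communication and, by the characterization above, correctly outputs precisely the set of bridges. Summing the three phases yields the claimed $O(D)$ round complexity.

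The substantive content of the argument is really just the if-and-only-if characterization stated in the first paragraph; everything else is machinery already provided by the chapter. The only potential obstacle is to be careful that non-tree edges are never bridges (which is why all bridges live in $\tree$) and that the quantity $\eta(v)=1$ captures exactly the event that the only edge exiting $\desc{v}$ is the parent edge — both of which are direct consequences of the definitions of $\desc{v}$ and $\delta(\cdot)$ and require no additional work.
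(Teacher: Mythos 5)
Your proof is correct and takes essentially the same approach as the paper: both rest on computing the tree restricted semigroup function $\eta(\cdot)$ in $O(D)$ rounds and then observing locally at each node $v$ that $\eta(v)=1$ implies $(\parent{v},v)$ is the unique edge of $\delta(\desc{v})$ and hence a bridge. Your write-up is slightly more thorough in that it spells out the completeness direction (every bridge $1$-respects $\tree$ and is therefore detected at some $v$), which the paper leaves implicit from the surrounding observations, but the substance is the same.
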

\begin{proof}
For any node $v\neq r$, the edge $(\parent{v},v) \in \delta(v^\downarrow)$. Thus when $\eta(v) = 1$, then $(\parent{v},v)$ is the only edge in $\delta(v^\downarrow)$ and is a cut edge. 
\end{proof}
\begin{lemma}
\label{lemma:induced-cut-one-respects}
If $\eta(v) = k$ then $\delta(v^\downarrow)$ is a cut-set of size $k$.
\end{lemma}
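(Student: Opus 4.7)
The plan is to unwind the definitions. By the definition of $\eta$ stated at the beginning of Chapter~\ref{chapter:4_trsf} (and in the notation list), we have $\eta(v) = |\delta(\desc{v})|$. So the hypothesis $\eta(v) = k$ directly gives $|\delta(\desc{v})| = k$, and it only remains to verify that $\delta(\desc{v})$ is a legitimate cut-set in the sense of the definition from the Notations and Conventions section, i.e.\ that it corresponds to a nontrivial bipartition $(A, V\setminus A)$ with $\emptyset \subsetneq A \subsetneq V$.

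For this verification I would take $A = \desc{v}$. Since $v \in \desc{v}$, the set is nonempty. To see that $\desc{v} \subsetneq V$, note that the statement is only meaningful for $v \neq r$ (the root case being degenerate, as $\desc{r}=V$ and $\delta(V)=\emptyset$); under $v \neq r$, the root $r$ is not a descendant of $v$ in the rooted tree $\tree$, so $r \in V \setminus \desc{v}$, giving $\desc{v} \subsetneq V$. Hence $(\desc{v}, V\setminus \desc{v})$ is a valid partition of $V$, and by definition $\delta(\desc{v})$ is the cut-set induced by this partition.

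Combining the two observations, $\delta(\desc{v})$ is a cut-set, and its cardinality equals $\eta(v) = k$, so it is a cut-set of size $k$. There is really no obstacle here; the lemma is essentially a restatement of the definition of $\eta$ together with the observation that $\desc{v}$ always induces a nontrivial partition whenever $v$ is a non-root node of $\tree$. The only subtle point worth mentioning explicitly is that the statement tacitly restricts to $v \neq r$, which is the same hypothesis implicitly used in Theorem~\ref{thm:min-cut-size-1}.
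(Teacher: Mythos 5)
Your proposal is correct, and in fact it supplies more than the paper does: the paper states Lemma~\ref{lemma:induced-cut-one-respects} with no proof at all, treating it as an immediate consequence of the definition $\eta(v) = |\delta(\desc{v})|$. Your unwinding of the definitions is exactly what that implicit argument amounts to, and your explicit note that the statement tacitly assumes $v \neq r$ (so that $\desc{v}$ is a proper nonempty subset of $V$ and $\delta(\desc{v})$ is a genuine cut-set) is a reasonable point that the paper leaves unstated; it is the same implicit restriction that appears in the proof of Theorem~\ref{thm:min-cut-size-1}.
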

Having found $\eta(\cdot)$ for all the nodes we downcast them. That is each node $v$, downcasts its $\eta(v)$ value to the vertex set $\desc{v}$. This kind of broadcast is similar to \emph{Broadcast Type -1} defined in Chapter \ref{chpater:2_prelim_background}. Thus by Lemma \ref{lemma:broadcast_general}, this can be done in $O(D)$ time. That is all nodes $a \in \desc{v}$ will have the value of $\eta(v)$ in $O(D)$ time. This will be useful for the algorithm to find induced cuts of size 2 given in next subsection. We quantify the same in the following lemma.
\begin{lemma}
\label{lemma:eta_known_to_all_nodes}
For any node $v$, the nodes in the vertex set $\desc{v}$ know $\eta(v)$ in $O(D)$ rounds.
\end{lemma}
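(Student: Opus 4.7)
The plan is to reduce this directly to the broadcast primitives already established in Chapter 2. By the previous lemma, once Algorithm~\ref{algo:trsf_phase-2} has been executed with respect to the BFS tree $\tree$, every node $v$ locally holds the value $\eta(v)$. What remains is to disseminate this single number from each $v$ down to its descendant set $\desc{v}$.

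First, I would argue that $\eta(v)$ fits in one CONGEST message: since $\eta(v) = |\delta(\desc{v})| \leq m \leq n^2$, the value is representable in $O(\log n)$ bits. Second, observe that the required dissemination is exactly an instance of \emph{Broadcast Type-1} as defined in Section~\ref{sec:covergecast_broadcast}: \emph{every} node $v \in V$ simultaneously owns an $O(\log n)$-bit message (namely $\eta(v)$) which must reach every node in $\desc{v}$.

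Therefore I would invoke Lemma~\ref{lemma:broadcat_types}, which gives a distributed algorithm completing Broadcast Type-1 in $O(Depth(T))$ rounds for any spanning tree $T$. Applying this to our fixed BFS tree $\tree$, whose depth satisfies $Depth(\tree) = O(D)$, the downcast of all $\eta(v)$ values finishes in $O(D)$ rounds. Combined with the $O(D)$ cost of first computing the $\eta(v)$ values, the total round complexity remains $O(D)$, which is exactly what the lemma claims.

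I do not anticipate a genuine obstacle here: the content of the lemma is essentially that the already-developed pipelined broadcast machinery applies verbatim to the quantities produced by the TRSF computation. The only point requiring a line of justification is the bit-length bound on $\eta(v)$, needed to guarantee that the message respects the CONGEST bandwidth; everything else follows immediately from Lemma~\ref{lemma:broadcat_types} and the standard fact that a BFS tree has depth $O(D)$.
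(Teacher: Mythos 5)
Your proof is correct and follows exactly the same strategy as the paper: compute $\eta(\cdot)$ via the TRSF machinery, then downcast each $\eta(v)$ to $\desc{v}$ as an instance of Broadcast Type-1, giving $O(D)$ rounds since the BFS tree has depth $O(D)$. In fact your reference to Lemma~\ref{lemma:broadcat_types} is more precise than the paper's text, which cites the single-source broadcast Lemma~\ref{lemma:broadcast_general} even though the all-nodes-simultaneously setting is what Broadcast Type-1 (and its pipelining argument) is for.
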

	%TEX root = ../../thesis.tex
\section{Min-Cuts of size 2}
\label{subsection:induced_cuts_of_size_2}
In this subsection, we will give an algorithm to find min-cuts of size 2. The theme here will be the use of tree restricted semigroup function. We will define a new tree restricted semigroup function $\zeta(\cdot)$ which will be based on a specially designed semigroup. 

Before we get into details of $\zeta(\cdot)$, we will give details about cuts of size $2$. Let $A \subset V$ and $|\delta(A)|=2$. The question here is to find $\delta(A)$. Here, $\delta(A)$ could share either one edge with the tree $T$, or it could share $2$ edges with the tree. When $\delta(A)$ shares one edge with the tree then it can be found in $O(D)$ rounds as described by Lemma \ref{lemma:induced-cut-one-respects}. We make the following observation for the case when $\delta(A)$,  2-respects the tree. In this case, $\delta(A) = \curly{(\parent{a},a), (\parent{w},w)}$ for some $a,w \in V\setminus r$ and $a \neq w$. 
\begin{obs}
\label{obs:induced-cut-size-2-2-respects-tree}
Let $A\subset V$ and $|\delta(A)|=2$. Also, let $\delta(A)$ 2-respect the tree $\tree$ then $\delta(A) = \delta(\desc{a}) \oplus \delta(\desc{w})$ for some $a\neq w$ and $a,w \in V\setminus r$. Further either ${\desc{a}} \subset {\desc{w}}$ (nested) or ${\desc{a}} \cap {\desc{w}} = \emptyset$ (mutually disjoint).
\end{obs}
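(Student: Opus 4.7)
The plan is to argue by analyzing the structure of the tree after removing the two tree edges that belong to the cut. Since $\delta(A)$ 2-respects $\tree$, let the two tree edges in $\delta(A)$ be $(\parent{a},a)$ and $(\parent{w},w)$ for some distinct $a,w \in V\setminus r$ (they must be of this form because every tree edge other than the one incident to $r$ has the form $(\parent{x},x)$, and $r$ has no parent). Removing these two edges from $\tree$ yields exactly three connected components $C_1, C_2, C_3$. A first step is to observe that $A$ must be a union of a subset of $\{C_1,C_2,C_3\}$: if two vertices $u,v$ lie in the same $C_i$, there is a path in $\tree$ from $u$ to $v$ avoiding both cut tree edges, so no edge on this path can be in $\delta(A)$, forcing $u,v$ to lie on the same side of the cut.

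Next I would split into the two cases suggested by the structure of a rooted tree: for any two distinct nodes $a,w$ in the tree, either (i) $\desc{a} \cap \desc{w} = \emptyset$ (neither is a descendant of the other), or (ii) one is a strict descendant of the other, say $\desc{a} \subsetneq \desc{w}$. In case (i) the three components are $\desc{a}$, $\desc{w}$ and $V\setminus(\desc{a}\cup\desc{w})$; in case (ii) the three components are $\desc{a}$, $\desc{w}\setminus\desc{a}$ and $V\setminus\desc{w}$. In each case I would enumerate the (at most) six non-trivial unions of components to see which of them actually cut exactly the two specified tree edges $(\parent{a},a)$ and $(\parent{w},w)$. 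A direct check shows that in case (i) only $A = \desc{a}\cup\desc{w}$ and its complement cross both tree edges (the other unions 1-respect $\tree$), and in case (ii) only $A = \desc{w}\setminus\desc{a}$ and its complement do.

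Combining the two cases, $A$ (or $V\setminus A$) equals $\desc{a}\oplus\desc{w}$. Hence $\delta(A) = \delta(\desc{a}\oplus\desc{w})$, and applying Corollary \ref{main-cor} gives $\delta(A) = \delta(\desc{a}) \oplus \delta(\desc{w})$, which is the first assertion. The second assertion, namely that $\desc{a}$ and $\desc{w}$ are either nested or mutually disjoint, is simply the general tree fact used in the case split.

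The main obstacle I anticipate is being rigorous about the claim that $A$ is a union of components and that the 2-respecting condition excludes the configurations where only one of the two edges is cut; this requires care because one has to rule out all the intermediate unions (e.g., $V\setminus \desc{a}$ or $\desc{w}$ in the nested case) by verifying which tree edges each such $A$ crosses. Once that bookkeeping is done, the rest is immediate from Corollary \ref{main-cor}.
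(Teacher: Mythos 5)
Your proof is correct and follows essentially the same route as the paper: split into the nested and disjoint cases according to the relative position of $a,w$ in the tree, identify $A$ (up to complementation) as $\desc{a}\oplus\desc{w}$, and finish with Corollary \ref{main-cor}. The only difference is that you justify the identification of $A$ carefully via the three-component decomposition of $\tree$ minus the two cut tree edges, whereas the paper simply asserts it; your bookkeeping is correct and makes explicit what the paper leaves implicit.
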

\begin{proof}
Let $\delta(A) = \curly{(\parent a,a),(\parent w,w)}$ such that $a\neq w$ and $a,w \in V\setminus r$. WLOG let $\level a \geq \level w$. Because of the tree structure, either $\desc a \subset \desc w$ or $\desc a \cap \desc w = \emptyset$. 
When $\desc a \subset \desc w$ then the induced cut $\delta(A)$ is the cut set $(\desc w \setminus \desc a,V\setminus (\desc w \setminus \desc a)) = \delta(\desc w \setminus \desc a)$. 
Since $\desc a \subset \desc w$ thus $\desc w \setminus \desc a = \desc w \oplus \desc a$. 
Hence $\delta(\desc w \setminus \desc a) = \delta(\desc w \oplus \desc a)$ and finally by Corollary \ref{main-cor}, we have $\delta(\desc a \oplus \desc w) = \delta(\desc a) \oplus \delta(\desc w)$.
Similarly, when $\desc a \cap \desc w = \emptyset$ then the induced cut $\delta(A)$ is the cut set $(\desc a \cup \desc w, V\setminus \desc a \cup \desc w) = \delta(\desc a \cup \desc w) = \delta(\desc a \oplus \desc w) = \delta(\desc a) \oplus \delta(\desc w)$.
\end{proof}
% % % % % % % % % % % % % % % % % % % % % % % % % % % %
% % % % Begin: Characterizing Lemma about 2-cuts % % % %
% % % % % % % % % % % % % % % % % % % % % % % % % % % %
The above observation states that we have two different cases when an induced-cut of size $2$ shares both the edges with the tree. For any $A,B \subseteq V$, let $\gamma(A,B) = |\delta(A) \cap \delta(B)|$. 
In this subsection, we will prove the following lemma. This lemma will be enough to prove that the min-cuts of size $2$ can be deterministically found in $O(D)$ rounds. Moreover, it will also help us find min-cuts of size 3 as given in Chapter \ref{chapter:5_min-cut-3}.

\begin{lemma}
\label{lemma:induced-cut-2-cut-2-respects}
Let $a,w$ be two nodes. WLOG let $\level{a} \geq \level{w}$. If $\curly{(\parent a,a),(\parent w,w)}$ is a cut set induced by $\desc{a} \oplus \desc{w}$ and if $\gamma(\desc{a},\desc{w}) > 0$ then such an induced cut can be found in $O(D)$ rounds by node $a$.
\end{lemma}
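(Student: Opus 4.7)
The plan is to invoke the 2-cut characterization (Lemma~\ref{lemma:2-respect-2-cut}) and split on the two sub-cases of Observation~\ref{obs:induced-cut-size-2-2-respects-tree}: the \emph{nested} case where $\desc{a}\subset\desc{w}$ (so $w\in\ancestor{a}\setminus\{a\}$) and the \emph{disjoint} case where $\desc{a}\cap\desc{w}=\emptyset$.

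In the nested case the characterization reduces the search to the three-way test $\eta(a)=\eta(w)=\gamma(\desc{a},\desc{w})+1$. After the $\eta$-pass of Section~\ref{subsection:size-of-tree-cuts}, node $a$ already holds $\eta(a)$, the value $\eta(w)$ for every $w\in\ancestor{a}$ (Lemma~\ref{lemma:eta_known_to_all_nodes}), and the value $\gamma(\desc{a},\desc{w})=H_{\desc{a}}^w$ for every $w\in\ancestor{a}$ (Lemma~\ref{lemma:H_v_a_known_to_all_nodes}); so $a$ simply scans its at most $D$ ancestors locally and emits a matching $w$ with no further communication.

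For the disjoint case I begin by computing subtree-interval labels $[\mathrm{in}(v),\mathrm{out}(v)]$ of $\tree$ in an $O(D)$-round pre-processing (one convergecast for subtree sizes plus one broadcast to assign disjoint intervals), and then introduce a second tree-restricted semigroup function $\zeta(\cdot)$ whose carrier is the set of closed integer intervals together with a formal identity $\bot$, and whose operator is interval-union $[l_1,r_1]\boxplus[l_2,r_2]=[\min(l_1,l_2),\max(r_1,r_2)]$. For $a\in\desc{v}$ I set $X_a^v$ to be the interval-union of $[\mathrm{in}(y),\mathrm{out}(y)]$ taken over all non-tree neighbours $y$ of $a$ with $y\notin\desc{v}$, or $\bot$ if this set is empty. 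Each $X_a^v$ is computable locally by $a$ using the neighbour-ancestor sets already exchanged in Algorithm~\ref{algo:preprocessing-cut-1-respects-tree} together with the new interval labels; the values fit in $O(\log n)$ bits; and interval-union is commutative and associative, so Lemma~\ref{lemma:tree_restricted_semigroup_function} yields $\zeta(v)$ at every $v$ in $O(D)$ rounds. The structural key is that in any disjoint 2-cut meeting the hypotheses of Lemma~\ref{lemma:induced-cut-2-cut-2-respects}, Corollary~\ref{main-cor} gives $\delta(\desc{a})\setminus\{(\parent{a},a)\}=\delta(\desc{w})\setminus\{(\parent{w},w)\}$, so every non-tree edge of $\delta(\desc{a})$ has its far endpoint in $\desc{w}$; consequently $\zeta(a)\subseteq[\mathrm{in}(w),\mathrm{out}(w)]$, which pins $w$ down on the ancestor chain of the LCA of those far endpoints as a node outside $\ancestor{a}$ satisfying $\eta(w)=\eta(a)$.

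Having $\zeta(a)$, node $a$ locates $w$ by routing: it pushes the token $(a,\eta(a),\zeta(a))$ upward in $\tree$, and each ancestor $v$ tests whether $[\mathrm{in}(v),\mathrm{out}(v)]\supseteq\zeta(a)$; the first such ancestor $v_0$ is the meeting point of $a$'s and $w$'s tree paths. At $v_0$ the token is redirected downward along the unique child whose subtree-interval still encloses $\zeta(a)$, and the first downstream node $u$ with $\eta(u)=\eta(a)$ replies back to $a$ along the reverse tree path; a short edge-counting argument confirms that any such $u$ automatically satisfies $\gamma(\desc{a},\desc{u})=\eta(a)-1$ and therefore forms a valid 2-cut with $a$. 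The main obstacle I anticipate is congestion control in this final routing step --- naively a single tree edge could be crossed by $\Theta(n)$ distinct tokens --- so I plan to aggregate on the upward leg by merging tokens that share the same $(\eta(a),\zeta(a))$ pair, forward only a single representative, and disseminate the answer back to the original $a$'s via Broadcast Type-1 (Lemma~\ref{lemma:broadcast_1_2}). Arguing that this pipelining keeps the distinct-token count on any fixed tree edge to $O(D)$ is the least routine step; once established, the overall round complexity matches the $O(D)$ target of the lemma.
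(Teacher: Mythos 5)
Your decomposition into the nested and disjoint sub-cases (via Observation~\ref{obs:induced-cut-size-2-2-respects-tree}) and your handling of the nested case match the paper exactly: $\eta(\cdot)$, Lemma~\ref{lemma:eta_known_to_all_nodes}, and Lemma~\ref{lemma:H_v_a_known_to_all_nodes} already give node $a$ everything it needs, and $a$ just scans its $O(D)$ ancestors. The divergence is in the disjoint case, and there you have a real gap.

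The paper handles the disjoint case with a \emph{single} TRSF pass: its semigroup $\mathcal{Z}$ carries the entire answer in a 4-tuple $\angularbracs{w,\parent{w},\eta(w),\gamma(\cdot,\desc{w})}$, with $\Zidentity$ and $\Zabsorb$ absorbing the ``no out-edge'' and ``out-edges hit more than one candidate'' outcomes, respectively. After the aggregation, $Z_{\desc{a}}^{v}$ for $v = \alpha(a,\level{w})$ hands node $a$ the triple $(\eta(w),\gamma(\desc{a},\desc{w}),w)$ directly, in place, so the $O(D)$ bound follows immediately from Lemma~\ref{lemma:tree_restricted_semigroup_function}. Your interval-union TRSF only localizes $w$: $\zeta(a)$ tells you a bounding interval that must sit inside $[\mathrm{in}(w),\mathrm{out}(w)]$, but it does not carry $\eta(w)$ or $\gamma(\desc{a},\desc{w})$. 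To recover those you bolt on an upward-then-downward routing phase in which every node $a$ originates a token. That is where the proof breaks: the number of distinct $(\eta(a),\zeta(a))$ pairs that must cross a single tree edge is not bounded by $O(D)$ in general --- it can be $\Theta(n)$ on a shallow BFS tree --- and your proposed merge rule only collapses tokens that are \emph{exactly} equal. You flag this yourself as ``the least routine step,'' but it is not a routine step; it is the central difficulty, and the paper's 4-tuple semigroup is designed precisely to avoid it. A second, smaller issue: the ``short edge-counting argument'' for $\gamma(\desc{a},\desc{u})=\eta(a)-1$ does hold (every node on the routing path from $v_0$ down to $\LCA$ of the far endpoints has $\desc{u}$ containing all of $\delta(\desc{a})\setminus\{(\parent{a},a)\}$), but it is asserted rather than argued, and it is load-bearing for the correctness of your ``first downstream node with $\eta(u)=\eta(a)$'' rule.

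The structural insight you want is that the aggregation itself, not a subsequent lookup, should deliver the partner's statistics to $a$. Replacing the interval carrier with the paper's 4-tuple carrier --- identity, absorbing element, and otherwise $\angularbracs{w,\parent{w},\eta(w),\text{count}}$ with component-wise agreement enforced and counts summed --- eliminates the routing phase entirely and closes the gap.
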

Using the above lemma we now prove the following theorem.
\begin{thm}
Min-cuts of size $2$ can be found in $O(D)$ rounds.
\end{thm}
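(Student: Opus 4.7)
The plan is to use the case split provided by Observation \ref{obs:fundamental_observation_about_cut_respecting_trees}: any min-cut of size $2$ either $1$-respects the BFS tree $\tree$ (shares exactly one tree edge) or $2$-respects it (shares both edges with $\tree$). Each case can be resolved using machinery already built in this chapter, all within $O(D)$ rounds on top of the $O(D)$ rounds needed to build $\tree$.

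First I would run the TRSF computation of Lemma \ref{lemma:eta_v_is_restricted_semigroup_function} so that every node $v$ learns $\eta(v) = |\delta(\desc{v})|$. This costs $O(D)$ rounds. For the $1$-respecting case, Lemma \ref{lemma:induced-cut-one-respects} tells us that $\eta(v) = 2$ certifies that $\{(\parent v, v)\} \cup \delta(\desc{v}) \setminus \{(\parent v, v)\}$ is a $2$-cut that $1$-respects $\tree$, with the second edge necessarily being a non-tree edge incident to $\desc{v}$; any such $v$ can locally report a witness. This already disposes of every $2$-cut whose unique tree edge is $(\parent v, v)$.

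For the $2$-respecting case I would invoke Lemma \ref{lemma:induced-cut-2-cut-2-respects}. The one hypothesis of that lemma that is not automatic is $\gamma(\desc{a}, \desc{w}) > 0$, and the key observation is that this always holds when we are actually searching for a min-cut of size $2$. Indeed, applying Lemma \ref{lemma:2-respect-2-cut} to any $2$-cut that $2$-respects $\tree$ gives $\gamma(\desc{a},\desc{w}) = |\delta(\desc{a})| - 1 = |\delta(\desc{w})| - 1$. If either of the quantities $|\delta(\desc{a})|$ or $|\delta(\desc{w})|$ equals $1$, then $(\parent a, a)$ (respectively $(\parent w, w)$) is already a bridge, so the global min-cut has size $1$ and is detected in the first step; otherwise $|\delta(\desc{a})| \geq 2$ forces $\gamma(\desc{a},\desc{w}) \geq 1 > 0$, so Lemma \ref{lemma:induced-cut-2-cut-2-respects} applies and produces the cut in $O(D)$ rounds.

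The only mildly delicate point is this interplay between the two respect classes and the case $\gamma = 0$: I would make explicit that the algorithm either finds a bridge during the $\eta$-computation (in which case the min-cut is of size $1$, and any reported $2$-cut is simply not a \emph{min}-cut) or else every vertex satisfies $\eta(v) \geq 2$, so both cases of Lemma \ref{lemma:induced-cut-2-cut-2-respects} are exhaustively covered. Summing the $O(D)$ costs of BFS construction, $\eta$-computation, and the invocation of Lemma \ref{lemma:induced-cut-2-cut-2-respects} gives the claimed $O(D)$ bound.
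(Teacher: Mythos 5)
Your proposal is correct and follows essentially the same route as the paper: both split by Observation~\ref{obs:fundamental_observation_about_cut_respecting_trees} into the $1$-respecting case (handled by the $\eta(\cdot)$ computation and Lemma~\ref{lemma:induced-cut-one-respects}) and the $2$-respecting case (handled by Lemma~\ref{lemma:induced-cut-2-cut-2-respects}), and both discharge the hypothesis $\gamma(\desc{a},\desc{w})>0$ by observing that $\gamma=0$ would force a bridge and hence a min-cut of size~$1$. Your write-up simply spells out that last observation in more detail than the paper does, which is a welcome clarification rather than a deviation.
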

\begin{proof}
When there is a min-cut of size 2, it either 1-respects the tree or 2-respects the tree. When it 1-respects the tree, by Lemma \ref{lemma:induced-cut-one-respects}, we know that it can be found in $O(D)$ rounds because this only requires computation of $\eta(\cdot)$. 

When a min-cut of size 2, 2-respects the tree then by Lemma \ref{obs:induced-cut-size-2-2-respects-tree} we know that the cut is of the form $\delta(\desc{w}) \oplus \delta(\desc{a})$ for some nodes $a$ and $w$. Moreover $\gamma(\desc{a},\desc{w}) > 0$ since there is not cut of size $1$. From Lemma \ref{lemma:induced-cut-2-cut-2-respects} we know that this can be found in $O(D)$ rounds.
\end{proof}

We will now prove Lemma \ref{lemma:induced-cut-2-cut-2-respects}. 
When the induced cut of size 2, 2-respects the tree, we know by Observation \ref{obs:induced-cut-size-2-2-respects-tree} that there could be two different cases. 
For the easy case when ${\desc{a}} \subset {\desc{w}}$, we know from Lemma \ref{lemma:H_v_a_known_to_all_nodes} that $H_{\desc{a}}^{w}=|\delta(\desc{a}) \cap \delta(\desc{w})| = \gamma({\desc{a}},{\desc{w}})$ is known by node $a$ in $O(D)$ rounds.
Also from Lemma \ref{lemma:eta_known_to_all_nodes}, $\eta(w)$ is known by all the vertices $a \in {\desc{w}}$. Hence, if $\curly{(\parent w,w),(\parent a,a)}$ is an induced cut and $a \in \desc w$ and $a \neq w$, then as per Lemma \ref{lemma:2-respect-2-cut} node $a$ has the required information to find the induced cut. The following lemma summarizes this.

\begin{lemma}
Let $a,w$ be two vertices such that $\desc{a} \subset \desc{w}$. Let \sloppy $\delta(\desc{a} \oplus \desc{w}) =\curly{(\parent a,a),(\parent w,w)}$ be an induced cut, then node $a$ can find such a cut in $O(D)$ rounds.
\label{lemma:2-cut-2-respect-nested}
\end{lemma}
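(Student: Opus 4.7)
The plan is to leverage the characterization in Lemma~\ref{lemma:2-respect-2-cut} together with the information that node $a$ has already gathered during the TRSF computation for $\eta(\cdot)$. Specifically, specializing the condition $|\delta(\desc{u})|=|\delta(\desc{v})|=|\delta(\desc{u})\cap \delta(\desc{v})|+1$ from $P_2$ of Lemma~\ref{lemma:2-respect-2-cut} with $u=a$ and $v=w$, the pair $\{(\parent{a},a),(\parent{w},w)\}$ is an induced 2-cut iff $\eta(a)=\eta(w)=\gamma(\desc{a},\desc{w})+1$. So all I need to argue is that node $a$ knows each of these three quantities locally for every candidate ancestor $w$.

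First, I would note that node $a$ already possesses all three of these quantities for every $w \in \ancestor{a}$ after the $O(D)$-round TRSF computation. The value $\eta(a)$ is computed by node $a$ itself in Phase~2 of Algorithm~\ref{algo:trsf_phase-2}. The value $\eta(w)$ is known at $a$ by Lemma~\ref{lemma:eta_known_to_all_nodes}, since $a \in \desc{w}$ and each ancestor has broadcast its $\eta$-value down to all of its descendants in $O(D)$ rounds. Finally, $\gamma(\desc{a},\desc{w}) = H_{\desc{a}}^{w}$ is known at $a$ by Lemma~\ref{lemma:H_v_a_known_to_all_nodes}, since this value is exactly the aggregated contribution of the subtree rooted at $a$ toward $\eta(w)$, which $a$ materializes during the convergecast in Phase~1 before forwarding it to $\parent{a}$.

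Then, since $|\ancestor{a}| \leq D+1$, node $a$ iterates over each $w \in \ancestor{a} \setminus \{a\}$ purely locally and checks whether $\eta(a) = \eta(w) = \gamma(\desc{a},\desc{w}) + 1$. Whenever the equality holds, Lemma~\ref{lemma:2-respect-2-cut} guarantees that the pair $\{(\parent{a},a),(\parent{w},w)\}$ is exactly the induced 2-cut $\delta(\desc{a}\oplus\desc{w})$, which $a$ outputs. The hypothesis $\gamma(\desc{a},\desc{w})>0$ simply separates this case from two disjoint bridges (which would already be handled by Lemma~\ref{lemma:induced-cut-one-respects}).

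Since every piece of data required for the checks has been assembled within $O(D)$ rounds of TRSF preprocessing, and the subsequent ancestor-scan is a purely local computation at $a$, the total round complexity is $O(D)$. There is essentially no obstacle left to surmount at this point: the real technical work has already been done in building the TRSF $\eta(\cdot)$ and in deriving the algebraic characterization of Lemma~\ref{lemma:2-respect-2-cut}, so this lemma is in effect a direct corollary of that infrastructure specialized to the nested subcase.
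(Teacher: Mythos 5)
Your proposal is correct and matches the paper's own argument almost verbatim: the paper likewise deduces the lemma from Lemma~\ref{lemma:H_v_a_known_to_all_nodes} (giving $a$ the value $H_{\desc{a}}^{w}=\gamma(\desc{a},\desc{w})$), Lemma~\ref{lemma:eta_known_to_all_nodes} (giving $a$ the value $\eta(w)$ for each ancestor $w$), and then applies the characterization in Lemma~\ref{lemma:2-respect-2-cut} as a purely local check at $a$. The only small slip is your closing remark about a hypothesis $\gamma(\desc{a},\desc{w})>0$; that hypothesis does not appear in this lemma (it shows up only in Lemma~\ref{lemma:induced-cut-2-cut-2-respects} and Lemma~\ref{lemma:2-cut-2-respect-disjoint}), but since your argument never actually uses it, this does not affect correctness.
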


The other case is when $\desc{a}$ and $\desc{w}$ are disjoint sets. 
This is the non-trivial part for finding an induced cut of size 2. 
Recall from Lemma \ref{lemma:2-respect-2-cut}, that to make a decision about an induced cut of the form $\curly{(\parent a,a),(\parent w,w)}$ we require one of the node in the network to know $\gamma(\desc{a},\desc{w}), \eta(a)$ and $\eta(w)$. The idea here is for node $a$ (when $\level{a}\geq \level{w}$) to find $\eta(w)$ and $\gamma(\desc{a},\desc{w})$ which is quite a challenge because there does not exists a straightforward way through broadcast or convergecast. Moreover, there may not even exist an edge between the vertices $a$ and $w$. To deal with this, we introduce a new tree restricted semigroup function $\zeta(\cdot)$ 

The tree restricted semigroup function $\zeta(\cdot)$ is based on a specially defined semigroup $\mathcal Z$.
Before we give further details and intuition about the function $\zeta(\cdot)$, let us define the semigroup $\mathcal Z$. There are two special elements $\Zidentity$ and $\Zabsorb$ in the semigroup $\mathcal Z$. Apart from these special elements all other $Z \in \mathcal Z$ are a four tuple. Let the four-tuple be given as $Z = \angularbrac{Z[1],Z[2],Z[3],Z[4]}$, here $Z[1],Z[2] \in V$ and $Z[3], Z[4] \in \mathbb{Z}^+$.

The operator associated with the semigroup $\mathcal Z$ is $\odot$. Special elements $\Zidentity$ and $\Zabsorb$ are the identity and the absorbing element with respect to the operator $\odot$ of the semigroup $\mathcal Z$. These special elements are defined in such a way that for any $Z \in \mathcal Z$, $Z \odot \Zidentity = Z = \Zidentity \odot Z$ and $Z\ \odot\Zabsorb\ =\  \Zabsorb\ =\ \Zabsorb \odot Z$. (Here the symbols $\mathbb{0},\mathbb{1}$ are not chosen as identity and absorbing elements, because $\Zidentity$ corresponds to zero edges and $\Zabsorb$ is considered as a zero of the semigroup which we will see in Property \ref{property:Z_a_v}). In Algorithm \ref{algo:zeta_operator}, we define the operator $\odot$.
% % % % % % % % % % % % % % % % % % 
% % % % Begin: ZETA OPERATOR % % % %
% % % % Operator, Supporting Lemma % %
% % % % % % % % % % % % % % % % % % 
\begin{algorithm}[h]
\DontPrintSemicolon
\tcp{for $i \in \curly{1,2,3,4}$ let $Z_1[i]$ be the $i^\text{th}$ element in 4-tuple of $Z_1$. Similarly for $Z_2$}
\lIf{one of $Z_1,Z_2$ is $\Zabsorb$}{return $\Zabsorb$}
\lElseIf{$Z_1 = \Zidentity$}{return $Z_2$}
\lElseIf{$Z_2 = \Zidentity$}{return $Z_1$}
\lElseIf{$Z_1[1:3] = Z_2[1:3]$}{return $\angularbrac{Z_1[1],Z_1[2],Z_1[3],Z_1[4] + Z_2[4]}$}
\lElse{return $\Zabsorb$}
\caption{$Z_1 \odot Z_2 $ (Both $Z_1,Z_2 \in \mathcal{Z}$) }
\label{algo:zeta_operator}
\end{algorithm}
\begin{obs}
\label{proposition:zeta_operator}
$\odot\ (\zetaoperator)$ is commutative and associative.
\begin{proof}
Commutativity of $\odot$ is straightforward and is implied by the commutativity of addition over $\mathbb Z^+$. For associativity, let's imagine there exist $Z_1,Z_2,Z_3 \in \mathcal Z$. Further let's imagine that $Z_1,Z_2,Z_3 \notin \curly{\Zabsorb,\Zidentity}$. Now when the first two elements of each of the four tuples $Z_1,Z_2,Z_3$ is same, then associativity is trivial and implied by the associativity of the addition operator over $\mathcal Z^+$. When the first two elements are not equal in $Z_1,Z_2,Z_3$ then $Z_1 \odot (Z_2 \odot Z_3) = (Z_1 \odot Z_2) \odot Z_3 =\ \Zabsorb$. Similarly, If one of $Z_1,Z_2,Z_3$ is $\Zabsorb$, then also $Z_1 \odot (Z_2 \odot Z_3) = (Z_1 \odot Z_2) \odot Z_3 =\ \Zabsorb$, hence associativity is implied. And lastly when one of $Z_1,Z_2,Z_3$ is $\Zidentity$, then $Z_1 \odot (Z_2 \odot Z_3)$ becomes an operation between just two elements and associativity is implied directly.
\end{proof}
\end{obs}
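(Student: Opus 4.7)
The plan is to establish both properties by exhaustive case analysis on the types of the operands, since $\odot$ is defined in Algorithm \ref{algo:zeta_operator} by a chain of cases rather than a uniform formula. First I would handle commutativity, which follows directly from inspecting each of the five branches in the definition: the $\Zabsorb$-returning branch is symmetric in $Z_1$ and $Z_2$; the two $\Zidentity$-branches are mirror images that together guarantee $Z_1 \odot \Zidentity = \Zidentity \odot Z_1 = Z_1$; the aggregation branch inherits commutativity from ordinary addition on $\mathbb{Z}^+$ (and preserves $Z_1[1:3] = Z_2[1:3]$ symmetrically); and the fallback to $\Zabsorb$ is determined by the symmetric condition $Z_1[1:3] \neq Z_2[1:3]$.

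For associativity, I would verify $Z_1 \odot (Z_2 \odot Z_3) = (Z_1 \odot Z_2) \odot Z_3$ by splitting into three mutually exclusive cases based on the special elements that appear. In Case A, at least one of the inputs is $\Zabsorb$. Because $\Zabsorb$ absorbs from either side, any occurrence of $\Zabsorb$ at an inner step yields $\Zabsorb$ at the outer step as well, so both parenthesizations evaluate to $\Zabsorb$. In Case B, none of the inputs is $\Zabsorb$ but at least one equals $\Zidentity$; since $\Zidentity$ is a two-sided identity, it can simply be dropped from either parenthesization, reducing the identity to an expression involving only the two remaining operands, which is trivially associative.

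Case C is when all three operands are genuine four-tuples, and I would split once more on whether the first three coordinates agree across all three tuples. If $Z_1[1:3] = Z_2[1:3] = Z_3[1:3]$, both parenthesizations produce a four-tuple with those common first three coordinates and fourth coordinate $Z_1[4] + Z_2[4] + Z_3[4]$, and equality then follows from associativity of integer addition. Otherwise some pair of first-three-coordinate triples disagree, and the expected main obstacle is to check that \emph{both} parenthesizations collapse to $\Zabsorb$ in every such misalignment pattern. The key lemma I would invoke is that the aggregation branch preserves the first three coordinates, so any disagreement among the inputs either triggers $\Zabsorb$ immediately at the inner step (which then absorbs at the outer step) or survives unchanged into the outer step, where the disagreeing coordinates are still disagreeing and again trigger $\Zabsorb$. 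Enumerating the three possible mismatched pairs makes this concrete, and once this absorbing behavior is verified in all sub-patterns of Case C, the associativity proof is complete.
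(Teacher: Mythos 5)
Your proof follows essentially the same approach as the paper's: commutativity from inspection of the symmetric branches, and associativity by splitting on whether a $\Zabsorb$, a $\Zidentity$, or only genuine four-tuples appear, then further splitting on whether the leading coordinates agree. Your version is in fact slightly more careful than the paper's in two places: you correctly refer to $Z[1:3]$ (the first \emph{three} coordinates, as Algorithm~\ref{algo:zeta_operator} actually checks) where the paper's proof loosely says ``first two elements,'' and you explicitly state and use the key lemma that the aggregation branch preserves the first three coordinates, which is what makes the ``disagreement propagates to $\Zabsorb$ under both parenthesizations'' step rigorous rather than merely asserted.
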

For any node $v$, define the edge set $E\curly{\zeta(v)} \triangleq \delta(\desc{v})\setminus (\pi(v),v)$. The value of $\zeta(v)$ depends on the edge set $E\curly{\zeta(v)}$. For any node $a$ and $v \in \ancestor{a}$, let $E\curly{Z_{a}^{v}} \triangleq E\curly{\zeta(v)} \cap \delta(a)$. We define $Z_a^v$ as per the Property \ref{property:Z_a_v}. 
% % % % % %  % % % % % % % % % % % % % % % % % %
 % %% % % % Begin:Zeta_a^v Operator% % % % % % %
% % % %Z_a^v property, Algorithm, Lemma% % % % %
% %  % % % % % % % % % % % % % % % % % % % % % %
\begin{property}
 \label{property:Z_a_v}
 For any node $a$ and $v \in \ancestor{a}$, $Z_{a}^{v}$ takes one of the following three values:-
 \begin{enumerate}[i.]
 \item $\Zidentity$ when $E\curly{Z_{a}^{v}} = \emptyset$ \label{bullet:Z_a_v_empty}
 \item ${\angularbrac{w,\parent{w},\eta(w),\gamma(a,{\desc{w}})}}$ when there exists a node $w$ at level $\level{v}$ such that all the edges in $E\curly{Z_{a}^{v}}$ have one endpoint in ${\desc{w}}$ \label{bullet:Z_a_v_when_w_exists}
 \item $\Zabsorb$ otherwise \label{bullet:Z_a_v_otherwise}
 \end{enumerate}
\end{property}

Similar to the previous subsection the semigroup function $\zeta(\cdot)$ is defined as $\zeta(v) \triangleq \bigodot_{a\in \desc{v}} Z_a^v$. 
We will say that $Z_{{\desc{a}}}^{v}$ is the contribution of nodes in vertex set $\desc{a}$ to compute $\zeta(v)$. 
 We will now prove that for all node $a \in V$ and $v \in \ancestor{a}$, $Z_a^v$ can be computed in $O(D)$ time in Lemma  \ref{lemma:Z_a_v}, then using Definition \ref{defn:restricted_semigroup_function} and Lemma \ref{lemma:tree_restricted_semigroup_function}, we will prove that $\zeta(\cdot)$ is an efficiently computable tree restricted semigroup function.

For any $l < \level{a}$, let $\alpha(a,l)$ be the ancestor of node $a$ at level $l$. For notational convenience let $\alpha(a,\level{a}) = a$. 
\begin{lemma}
 \label{lemma:Z_a_v}
For all node $a \in V$ and $v \in \ancestor{a}$, $Z_{a}^{v}$ can be computed in $O(D)$ time as defined in Property \ref{property:Z_a_v}. 
 \end{lemma}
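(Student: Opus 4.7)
The plan is to give a local algorithm by which each node $a$ can, after an $O(D)$-round preprocessing phase, determine $Z_a^v$ for every $v \in \ancestor{a}$ without any further communication. The key observation is that $E\curly{Z_a^v} = \delta(a) \cap (\delta(\desc{v}) \setminus (\parent{v},v))$ is just a subset of edges incident to $a$, and for each neighbor $b$ of $a$ the edge $(a,b)$ lies in $\delta(\desc{v})$ precisely when $v \notin \ancestor{b}$ (since $v \in \ancestor{a}$ already). Therefore once $a$ knows $\ancestor{b}$ for every neighbor $b$, it can determine, for any choice of $v$, which of its incident edges participate in $E\curly{Z_a^v}$. This ancestor information is already delivered to $a$ by Algorithm~\ref{algo:preprocessing-cut-1-respects-tree} (Observation~\ref{prepo:pre-processing-size-tree-cuts}) in $O(D)$ rounds.

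Next, for case (\ref{bullet:Z_a_v_when_w_exists}) of Property~\ref{property:Z_a_v}, node $a$ needs to recognise whether all edges of $E\curly{Z_a^v}$ point into the same subtree $\desc{w}$ for some $w$ at level $\level{v}$. Because $a$ knows $\ancestor{b}$ for every neighbor $b$, it can compute $w_b := \alpha(b,\level{v})$ locally for each relevant neighbor $b$. If all the $w_b$ coincide with a common value $w$, then we are in case (\ref{bullet:Z_a_v_when_w_exists}); otherwise in case (\ref{bullet:Z_a_v_otherwise}). The count $\gamma(a,\desc{w})$ equals the number of such neighbors, which $a$ can tally locally, and $\parent{w} = \alpha(b,\level{v}-1)$ is also immediately readable from any $\ancestor{b}$.

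The only quantity that $a$ does not already have is $\eta(w)$ for the $w$ produced above. Note that $w$ may be different from $v$ and in general $a \notin \desc{w}$, so Lemma~\ref{lemma:eta_known_to_all_nodes} does not directly make $\eta(w)$ available at $a$. However, every relevant neighbor $b$ of $a$ satisfies $b \in \desc{w}$, so by Lemma~\ref{lemma:eta_known_to_all_nodes}, $b$ knows $\eta(w)$ for \emph{every} $w \in \ancestor{b}$. The plan is therefore to insert one additional communication step: after computing $\eta(\cdot)$ and downcasting as in Lemma~\ref{lemma:eta_known_to_all_nodes}, each node $b$ sends the list $\curly{(w,\eta(w)) : w \in \ancestor{b}}$ across each of its incident edges to its neighbor $a$. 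Since $|\ancestor{b}| \le D$ and each pair $(w,\eta(w))$ fits in $O(\log n)$ bits, this exchange is a pipeline of $O(D)$ messages on every edge and completes in $O(D)$ rounds in the \CONGEST model (each of $b$'s edges carries an independent pipeline in parallel).

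With these preprocessing steps in place, computing $Z_a^v$ at node $a$ is purely local: inspect the neighbors $b$ with $v \notin \ancestor{b}$, exclude the tree edge $(\parent{v},v)$ if applicable, compute their $w_b$'s, and then return $\Zidentity$, $\angularbrac{w,\parent{w},\eta(w),\gamma(a,\desc{w})}$, or $\Zabsorb$ according to cases (\ref{bullet:Z_a_v_empty})--(\ref{bullet:Z_a_v_otherwise}). The main obstacle, and the reason this requires care rather than being immediate, is precisely the delivery of $\eta(w)$ to $a$ when $a \notin \desc{w}$; the workaround relies on the fact that any edge in $E\curly{Z_a^v}$ witnessing this case has its other endpoint in $\desc{w}$, and that endpoint is a direct neighbor of $a$ which does know $\eta(w)$. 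Since every preprocessing step (Algorithm~\ref{algo:preprocessing-cut-1-respects-tree}, Lemma~\ref{lemma:eta_known_to_all_nodes}, and the additional $O(D)$-round exchange described above) runs in $O(D)$ rounds, the total cost is $O(D)$ as claimed.
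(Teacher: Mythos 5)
Your proposal takes essentially the same approach as the paper: both exchange the list of ancestors annotated with their $\eta$-values across edges (the paper bundles them into a single pipeline over non-tree edges in Algorithm~\ref{algo:preprocessing-cut-2-respects-tree}, while you route ancestor sets and $\eta$-values in two separate $O(D)$-round pipelines), and then compute $Z_a^v$ locally by checking whether all relevant neighbors share a common ancestor at level $\level{v}$. One detail worth making explicit is the case where some neighbor $b$ contributing to $E\curly{Z_a^v}$ lies at level strictly less than $\level{v}$: then $\alpha(b,\level{v})$ is undefined and must be read as ``no coinciding $w$,'' yielding $\Zabsorb$ — the paper handles this via the $l_{\min}$ initialization that short-circuits those levels before its main loop.
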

  % % % % % % % % % % % % % % % % % % % % %
\begin{proof}
Here we will give an algorithmic proof, We describe the steps in Algorithm \ref{algo:preprocessing-cut-2-respects-tree}. Further we will prove that this algorithm correctly finds $Z_a^v$ and takes $O(D)$ time.

\begin{algorithm}[h]
 \DontPrintSemicolon
 \SetKwFor{Forp}{for}{parallely}{endfor}
 $\mathcal N(a) \gets \curly{b\mid (a,b) \in E, (a,b)\ \text{is a non-tree edge}} $ \label{line:pre-processing-N-a}  \;
 \lForp{all $b \in \mathcal N(a)$}{send tuples $\angularbrac{\level{u},\eta(u),u}$ for all $u \in \ancestor{a}$ to $b$}  \label{line:pre-processing-zeta-v-communication-1} 
 \lForp{all $b \in \mathcal N(a)$}{receive tuples $\angularbrac{\level{u},\eta(u),u}$ for all $u \in \ancestor{b}$}  \label{line:pre-processing-zeta-v-communication-2} 
 $l_{\min} \gets \min(\curly{\level{b}\ |\ b\in \mathcal N(a)} \cup \ell(a))$\;
 \If{$l_{\min} \neq \ell(a)$}{
 	\lFor{$v \in \curly{\alpha(a,l) \mid l \in (l_{\min},\ell(a)] }$}{$Z_{a}^{v} =\ \Zabsorb$}
 }
 \For{$l = l_{\min}$ to $1$}{
 	$v \gets \alpha(a,l)$ \;
    \tcp{$A^l$ is set of ancestors at level $l$ of nodes in $\mathcal N(a)$ except $v$} 
 	$A^l \gets \curly{\alpha(b,l) \mid b \in \mathcal N(a)} \setminus v$ \;
 	\lIf{$A^l = \emptyset$}{$Z^{v}_{a} \gets \Zidentity$} \label{line:pre-processing-Z_a^v=empty}
 	\ElseIf{$|A^l| = 1$}{ 
 		$w \gets$ element in singleton $A^l$\;
 		$\gamma({\desc{w}},a) \gets |\curly{b\mid b \in \mathcal N(a), \alpha(b,l) = w}|$\;
 		$Z^{v}_{a} \gets \angularbrac{w,\parent{w},\eta(w),\gamma({\desc{w}},a)}$ 
 	}\lElse{$Z^{v}_{a} \gets \Zabsorb$}
 }
 \caption{Pre-Processing step for $\zeta$ (run at all node $a$ for finding $Z_{a}^{v}$ for all $v \in \ancestor{a}$)}
 \label{algo:preprocessing-cut-2-respects-tree}
\end{algorithm}

As per Property \ref{property:Z_a_v},  $Z_{a}^{v}$ just depend on the nodes adjacent to the node $a$. The required information is received and sent from the adjacent nodes in Algorithm \ref{algo:preprocessing-cut-2-respects-tree} at line \ref{line:pre-processing-zeta-v-communication-1} and \ref{line:pre-processing-zeta-v-communication-2}. 
Each of these takes only $O(D)$ time because at max $O(D)$ messages of $O(\log n)$ bits are communicated. 
$\mathcal {N}(a)$ is the non-tree neighbors of node $a$ found in line \ref{line:pre-processing-N-a}.
In Algorithm \ref{algo:preprocessing-cut-2-respects-tree}, decision in regard to $Z_{a}^{v}$ is taken based on $A^{\ell(v)}$ which is the set of ancestors at level $\level{v}$ of nodes in $\mathcal N(a)$ except the node $v$. 

When $A^{\ell(v)} = \emptyset$, then in line \ref{line:pre-processing-Z_a^v=empty}, $Z_a^v$ is set to $\Zidentity$. For bullet (\ref{bullet:Z_a_v_empty}) of Property \ref{property:Z_a_v}, we need to prove that $E\curly{Z_a^v} = \emptyset \implies A^{\ell(v)} = \emptyset $. Here $E\curly{Z_a^v} = \delta(\desc v)\setminus (\parent v,v) \cup \delta(a)$. When $a \neq v$, then $E\curly{Z_a^v}$ is the set of all the edges which are incident on $a$ and goes out of the vertex set $\desc v$. Now when $E\curly{Z_a^v} = \emptyset$, then all edges incident on $a$ have the other endpoint in the vertex set $\desc v$. Thus $A^{\ell(v)} = \emptyset$. When $a = v$, here $E\curly{Z_v^v}$ is the set of all edges other then $(\parent v,v)$ which are incident on $v$ and goes out of the vertex set $\desc v$. Note that $(\parent v,v)$ is a tree-edge thus $\parent v \notin \mathcal N(a)$. Thus $E\curly{Z_v^v} = \emptyset \implies A^{\ell(v)} = \emptyset$.

Now for correctness, if $E\curly{Z_a^v} = \emptyset$ and $v \neq a$ then no edge incident on node $a$ goes out of $\desc v$, hence the set $A^{\ell(v)} = \emptyset$.  When $v = a$, we know that $\mathcal N(v)$ (in line \ref{line:pre-processing-N-a}) contains only non-tree neighbors, thus $\parent{v} \notin \mathcal N(v)$. We know that edge $(\parent v,v) \notin E\curly{Z_v^v}$. Thus here when $E\curly{Z_v^v} = \emptyset$ then no edge other than $(\parent v,v)$ incident on $v$ goes out of $\desc v$, thus $A^{\ell(v)} = \emptyset$. Hence in both the cases $Z_a^v$ is correctly set to $\Zidentity$. (in line \ref{line:pre-processing-Z_a^v=empty})

For the other two bullet's in Property \ref{property:Z_a_v}, we employ the same idea. If $A^{\level{v}} = \curly{w}$ (for some node $w$) then it implies that all the neighbours adjacent to node $a$ have a common ancestor $w$ other than $v$ and thus $Z_{a}^{v}$ captures the required information about node $w$. And when $|A^{\level{v}}|>1$  it simply means there are more than one node at level $\level{v}$ as given in bullet $(iii)$ of Property \ref{property:Z_a_v}. 
\end{proof}

Similar to previous section, $Z_{\desc{a}}^{{v}}\triangleq \bigodot_{a'\in \desc{a}} Z_{a'}^v$. 
Since for any $a' \in \desc{a}$, $Z_{{{a'}}}^{v}$ depends on $E\curly{\zeta(v)} \cap \delta({a'})$  thus  $Z_{{\desc{a}}}^{v}$ depends on  $E\curly{Z_{{\desc{a}}}^{v}} = \bigcup_{a' \in \desc{a}} E\curly{Z_{{{a'}}}^{v}} = E\curly{\zeta(v)} \cap \delta(\desc{a})$. 
The following lemma about $Z_{\desc{a}}^{v}$ is now implicit and immediately follows from Property \ref{property:Z_a_v}. 
% % % % % %  % % % % % % % % % % % % % % % % % %
 % %% % % % End:Zeta_a^v Operator% % % % % % %
% % % %Z_a^v property, Algorithm, Lemma% % % % %
% %  % % % % % % % % % % % % % % % % % % % % % %

% % % % % % % % % % % % % % % % % % % % % % % 
% % % % %Begin:Z_{\desc{a}}^{v}% % %  % % % %
% % % % %Charactrization, ComputationTime % % 
% % % % % % % % % % % % % % % % % % % % % % % 
\begin{lemma}
For any node $a$ and $v \in \ancestor{a}$, $Z_{\desc{a}}^{v}$ depends on the edge set $E\curly{Z_{\desc{a}}^v} = \bigcup\limits_{a' \in \desc{a}} E\curly{Z_{a'}^v}$ and takes one of the following values
\label{lemma:Z_desc_a_v}
\begin{enumerate}[a)]
 \item $\Zidentity$ when $E\curly{Z_{\desc a}^{v}} = \emptyset$ 
 \item ${\angularbrac{w,\parent{w},\eta(w),\gamma(\desc a,{\desc{w}})}}$ when there exists a node $w$ at level $\level{v}$ such that all the edges in $E\curly{Z_{\desc a}^{v}}$ have one endpoint in ${\desc{w}}$
 \item $\Zabsorb$ otherwise
\end{enumerate}
\end{lemma}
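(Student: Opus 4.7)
The plan is to reduce Lemma \ref{lemma:Z_desc_a_v} to Property \ref{property:Z_a_v} via the associativity and commutativity of $\odot$ (Observation \ref{proposition:zeta_operator}), expanding
$$Z_{\desc a}^{v} = \bigodot_{a' \in \desc a} Z_{a'}^{v}$$
and then case-splitting on the multiset of values $\{Z_{a'}^{v} : a' \in \desc a\}$ as classified by Property \ref{property:Z_a_v}. A useful preliminary observation is that the sets $E\{Z_{a'}^{v}\} = E\{\zeta(v)\} \cap \delta(a')$, as $a'$ ranges over $\desc a$, are pairwise disjoint: any edge $(a',b)$ with $a' \in \desc a \subseteq \desc v$ and $b \notin \desc v$ is incident on exactly one vertex inside $\desc a$. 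Hence the union $E\{Z_{\desc a}^{v}\} = \bigcup_{a' \in \desc a} E\{Z_{a'}^{v}\}$ is in fact a disjoint union.

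For case (a), if $E\{Z_{\desc a}^{v}\} = \emptyset$ then by disjointness $E\{Z_{a'}^{v}\} = \emptyset$ for every $a' \in \desc a$, so by Property \ref{property:Z_a_v}(i) each $Z_{a'}^{v} = \Zidentity$, and the iterated $\odot$ of identities is $\Zidentity$. For case (b), suppose there is a unique node $w$ at level $\level{v}$ such that every edge in $E\{Z_{\desc a}^{v}\}$ has one endpoint in $\desc w$. Then for each $a' \in \desc a$, either $E\{Z_{a'}^{v}\} = \emptyset$ (giving $Z_{a'}^{v} = \Zidentity$) or all its edges go into $\desc w$ (giving $Z_{a'}^{v} = \angularbrac{w,\parent{w},\eta(w),\gamma(a',\desc w)}$). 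Running Algorithm \ref{algo:zeta_operator} over these values, the identities drop out and the non-identity tuples all agree on their first three coordinates, so their fourth coordinates add. The resulting fourth coordinate is $\sum_{a' \in \desc a} \gamma(a', \desc w)$, which by the same disjointness argument equals $\gamma(\desc a, \desc w)$, yielding the claimed tuple.

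For case (c), the remaining possibility is that $E\{Z_{\desc a}^{v}\}$ is nonempty and cannot be explained by a single $w$ at level $\level{v}$. Two subcases arise: either some individual $Z_{a'}^{v}$ is already $\Zabsorb$ (because the edges incident on that single $a'$ do not share a common level-$\level{v}$ ancestor outside $v$), or every individual value is in $\{\Zidentity\} \cup \{\text{tuples}\}$ but at least two tuples disagree on their first three coordinates (because the witnessing nodes $w$ differ across different $a'$). In both subcases the $\odot$-combination short-circuits to $\Zabsorb$ by the last clause of Algorithm \ref{algo:zeta_operator}.

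The routine part is the bookkeeping in case (b); the only nontrivial step is the identity $\gamma(\desc a, \desc w) = \sum_{a' \in \desc a} \gamma(a', \desc w)$, which is the main obstacle and which I would argue by noting that an edge contributing to $\gamma(\desc a, \desc w) = |\delta(\desc a) \cap \delta(\desc w)|$ must have exactly one endpoint in $\desc a$ and one in $\desc w$ (since $\desc a \cap \desc w$ would be $\desc a$ itself, and any edge internal to $\desc a$ lies in neither $\delta(\desc a)$ nor $\delta(\desc w)$), and so is counted exactly once in the sum over $a' \in \desc a$.
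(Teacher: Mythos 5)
Your proof is correct and supplies the case analysis that the paper itself only asserts as ``implicit and immediately follows from Property~\ref{property:Z_a_v}.'' The disjointness of the sets $E\curly{Z_{a'}^v}$ over $a'\in\desc{a}$, together with the resulting identity $\gamma(\desc{a},\desc{w})=\sum_{a'\in\desc{a}}\gamma(a',\desc{w})$, are exactly the right ingredients, and your reduction via commutativity and associativity of $\odot$ matches the paper's intent. One small slip in your closing parenthetical: you write that $\desc{a}\cap\desc{w}$ ``would be $\desc{a}$ itself,'' but in case~(b) the node $w$ is at level $\level{v}$ and distinct from $v$, so $\desc{w}$ is disjoint from $\desc{v}\supseteq\desc{a}$; the intersection is empty, and that disjointness is precisely what guarantees each edge of $\delta(\desc{a})\cap\delta(\desc{w})$ is counted exactly once across the summands $\gamma(a',\desc{w})$.
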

Basically, $Z_{\desc{a}}^{v}$ captures if there exists some node $w$ at level $\level{v}$ such that all the edges which are in $\delta(\desc{a})$ and go out of the vertex set $\desc{v}$, have the other end point in the vertex set $\desc w$.
\begin{lemma}
$\zeta(\cdot)$ is an efficiently computable semigroup function.
\end{lemma}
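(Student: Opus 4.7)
The plan is to apply the general criterion for efficient computability of tree restricted semigroup functions, namely Lemma \ref{lemma:tree_restricted_semigroup_function}, to the function $\zeta(\cdot)$. For this to apply, I need to verify two things: first, that $\zeta(\cdot)$ really is a tree restricted semigroup function in the sense of Definition \ref{defn:restricted_semigroup_function}, which requires $\mathcal{Z}$ with $\odot$ to be a commutative semigroup; and second, that the three size/computation conditions in Lemma \ref{lemma:tree_restricted_semigroup_function} are met.

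The semigroup requirement is immediate from Observation \ref{proposition:zeta_operator}, which already established that $\odot$ is commutative and associative on $\mathcal{Z}$. Since the definition $\zeta(v) \triangleq \bigodot_{a\in \desc{v}} Z_a^v$ is exactly of the form required by Definition \ref{defn:restricted_semigroup_function}, with $Z_a^v$ playing the role of $X_a^v$, $\zeta(\cdot)$ is a tree restricted semigroup function.

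Next I would check the three hypotheses of Lemma \ref{lemma:tree_restricted_semigroup_function}. Computability of $Z_a^v$ in $O(\text{Depth}(\tree)) = O(D)$ rounds at node $a$ is precisely Lemma \ref{lemma:Z_a_v}. For the bit-size bounds, I would observe that every element of $\mathcal{Z}$ is either one of the two special symbols $\Zidentity,\Zabsorb$, which are a single constant-size token, or a 4-tuple $\angularbrac{w,\parent{w},\eta(w),\gamma(\cdot,\desc{w})}$ where the first two coordinates are vertex identifiers of $O(\log n)$ bits each, and the remaining two coordinates are integers bounded by $m = O(n^2)$, hence encodable in $O(\log n)$ bits. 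This size bound applies uniformly to $Z_a^v$, to the aggregated value $Z_{\desc{a}}^v$ (which by Lemma \ref{lemma:Z_desc_a_v} is again an element of $\mathcal{Z}$ of the same syntactic form), and to $\zeta(v)$ itself.

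With all three conditions of Lemma \ref{lemma:tree_restricted_semigroup_function} in hand, efficient computability of $\zeta(\cdot)$ follows directly. I expect no real obstacle here beyond being careful that the 4-tuple representation is genuinely $O(\log n)$ bits in every case; the only mildly delicate point is making sure that after repeated application of $\odot$ during the convergecast, the accumulator never grows beyond a single element of $\mathcal{Z}$, which is guaranteed by the algorithmic definition of $\odot$ in Algorithm \ref{algo:zeta_operator} (it either merges into a single 4-tuple, keeps the running value, or collapses to $\Zabsorb$), so no intermediate blow-up occurs.
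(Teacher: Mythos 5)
Your proposal is correct and follows essentially the same route as the paper's own proof: invoke Lemma \ref{lemma:Z_a_v} for the $O(D)$-round computability of $Z_a^v$, observe the $O(\log n)$-bit size bound for elements of $\mathcal{Z}$, and conclude via Lemma \ref{lemma:tree_restricted_semigroup_function}. Your version is a bit more explicit than the paper's in that you separately confirm the commutative-semigroup hypothesis via Observation \ref{proposition:zeta_operator} and note that $Z_{\desc{a}}^v$ and $\zeta(v)$ stay within $\mathcal{Z}$ under repeated application of $\odot$, but these are details the paper implicitly relies on as well, so this is a fuller writeup of the same argument rather than a different one.
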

\begin{proof}
In Lemma \ref{lemma:Z_a_v}, it was shown that there exists a $O(D)$ time algorithm to compute $Z_{a}^{v}$ for any node $a$ and $v \in \ancestor{a}$. Further each such $Z_{a}^{v}$ is either a special symbol among $\Zidentity,\Zabsorb$ or a four tuple. It is easy to see that this four tuple is of $O(\log n)$ bits because the first two elements in the four tuple are node ids and the last two are integers which cannot exceed the number of edges and we require $O(\log n)$ bits to represent them. Now invoking Lemma \ref{lemma:tree_restricted_semigroup_function}, we know that $\zeta(\cdot)$ is an efficiently computable tree restricted semigroup function.
\end{proof}
\begin{lemma}
	For all nodes $a$ and $v \in \ancestor{a}$, $Z_{\desc{a}}^{v}$ can be computed in $O(D)$ time.
	\label{lemma:Z_desc_a_v_time}
\end{lemma}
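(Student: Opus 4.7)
The plan is to invoke Algorithm \ref{algo:trsf_phase-2} (the generic TRSF computation procedure) together with the machinery already built up for $\zeta(\cdot)$. Since the previous lemma establishes that $\zeta(\cdot)$ is an efficiently computable tree restricted semigroup function, the promised guarantees of Lemma \ref{lemma:tree_restricted_semigroup_function} apply, and we essentially only need to observe that the aggregated quantities $Z_{\desc{a}}^{v}$ are exactly what Phase~1 of that algorithm produces as a by-product.

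First, I would note that in the aggregation phase of Algorithm \ref{algo:trsf_phase-2}, every node $a$ maintains, for each ancestor $v \in \ancestor{a}$, the value $X_{\desc[T]{a}}^{v} = \bigodot_{a' \in \desc{a}} X_{a'}^{v}$; instantiating $\boxplus = \odot$ and $X = Z$ gives exactly $Z_{\desc{a}}^{v}$. So the key is to argue that each of the three preconditions of Lemma \ref{lemma:tree_restricted_semigroup_function} is met here: (i) $Z_a^v$ can be computed in $O(D)$ time by node $a$ (which is precisely Lemma \ref{lemma:Z_a_v}), (ii) $Z_a^v$ has an $O(\log n)$-bit representation (either one of the two distinguished symbols $\Zidentity,\Zabsorb$ or a four-tuple consisting of two vertex identifiers of $O(\log n)$ bits plus two integers bounded by $m \le n^2$), and (iii) the aggregated value $Z_{\desc a}^v$ and the final value $\zeta(v)$ also fit in $O(\log n)$ bits, because by Lemma \ref{lemma:Z_desc_a_v} the aggregate is again one of $\Zidentity$, $\Zabsorb$, or a four-tuple of the same form.

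Then I would appeal to Observation~\ref{proposition:zeta_operator} (commutativity and associativity of $\odot$) to justify that the convergecast-with-aggregation in Phase~1 is a well-defined reduction; each node simply combines its own $Z_a^v$ with the $Z_{\desc c}^v$ received from its children using $\odot$, exactly as Observation \ref{obs:info_required_to_aggregate} prescribes. Since the BFS tree $\tree$ used throughout has $\mathrm{Depth}(\tree) = O(D)$, Phase~1 of Algorithm \ref{algo:trsf_phase-2} terminates within $O(D)$ rounds, and at the end of Phase~1 every node $a$ holds $Z_{\desc a}^{v}$ for each $v \in \ancestor{a}$.

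I do not expect a serious obstacle: the heavy lifting — computability of $Z_a^v$, bounded message size, and the semigroup laws for $\odot$ — has already been done in the preceding lemmas and observations. The only mild point to watch is that the aggregation needs values of size $O(\log n)$ per ancestor-level per round, which is consistent with the bandwidth constraints of the \CONGEST model since one value is pipelined per round, matching the schedule in Phase~1 exactly as in the proof of Lemma \ref{lemma:tree_restricted_semigroup_function}. Putting these pieces together yields the desired $O(D)$ bound.
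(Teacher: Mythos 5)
Your proposal is correct and takes essentially the same approach as the paper: the paper's own proof simply observes that $\zeta(\cdot)$ is an efficiently computable tree restricted semigroup function (hence computable in $O(D)$ rounds on the BFS tree), and that $Z_{\desc{a}}^{v}$ is produced as a by-product of the aggregation phase in that computation. You have unpacked the same argument in more detail—explicitly re-checking the hypotheses of Lemma~\ref{lemma:tree_restricted_semigroup_function} and the role of Observations~\ref{proposition:zeta_operator} and~\ref{obs:info_required_to_aggregate}—but the underlying reasoning is identical.
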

\begin{proof}
Since $\zeta(\cdot)$ is an efficiently computable semigroup function and can be computed in $O(D)$ rounds. Also, during the computation of $\zeta(\cdot)$, for all node $a$ and $v \in \ancestor{a}$, $Z_{\desc a}^{v}$ is  also computed.
\end{proof}
\begin{lemma}
Let $a,w$ be two vertices such that $\desc{w} \cap \desc{a} = \emptyset$. WLOG, let $\level{a} \geq \level{w}$. Let $\curly{(\parent a,a),(\parent w,w)}$ be an induced cut such that $\gamma(\desc w, \desc a) > 0$, then node $a$ can find it in $O(D)$ rounds.
\label{lemma:2-cut-2-respect-disjoint}
\end{lemma}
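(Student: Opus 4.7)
The plan is to reduce the question to a local check at node $a$ using the characterization in Lemma~\ref{lemma:2-respect-2-cut}: the pair $\{(\parent a, a), (\parent w, w)\}$ forms a 2-cut precisely when $\eta(a) = \eta(w) = \gamma(\desc a, \desc w) + 1$. Node $a$ already knows $\eta(a)$ and $\eta(v)$ for every ancestor $v$ (Lemma~\ref{lemma:eta_known_to_all_nodes}), as well as $\gamma(\desc a, \desc v) = H_{\desc a}^v$ for every such $v$ (Lemma~\ref{lemma:H_v_a_known_to_all_nodes}). The only missing ingredients are therefore $\eta(w)$ and $\gamma(\desc a, \desc w)$ for the unknown partner $w$, which lies outside $\ancestor a$.

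My strategy is to read off those two quantities from the aggregated tuples $Z_{\desc a}^v$ that node $a$ has in hand at the end of the $O(D)$-round evaluation of the tree-restricted semigroup function $\zeta(\cdot)$ (Lemma~\ref{lemma:Z_desc_a_v_time}). Concretely, for each ancestor $v \in \ancestor{a}$ I would have $a$ inspect $Z_{\desc a}^v$: if it is the four-tuple $\angularbrac{w, \parent w, \eta(w), \gamma(\desc a, \desc w)}$ described in Lemma~\ref{lemma:Z_desc_a_v}(b), then $a$ simply tests $\eta(a) = \eta(w) = \gamma(\desc a, \desc w) + 1$ and, if it holds, reports the cut $\{(\parent a, a), (\parent w, w)\}$. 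All the tests are purely local, so after the $O(D)$ precomputation no further rounds are spent.

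The main obstacle is completeness: showing that when the hypothesised cut really exists, at least one ancestor $v$ makes $Z_{\desc a}^v$ attain the useful four-tuple form rather than collapsing to $\Zabsorb$. The natural candidate is $v^\star = \alpha(a, \level w)$, the ancestor of $a$ at the same level as $w$. Since $\desc a \cap \desc w = \emptyset$, the node $w$ is not on the root-to-$a$ path, so $v^\star \neq w$, and the subtrees $\desc{v^\star}$ and $\desc w$ are disjoint subtrees rooted at the same level. Unpacking the definition, the edge set governing $Z_{\desc a}^{v^\star}$ is $(\delta(\desc{v^\star}) \setminus (\parent{v^\star}, v^\star)) \cap \delta(\desc a)$, i.e.\ the cut-edges incident to $\desc a$ that also leave $\desc{v^\star}$. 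The only such edges are the $\gamma(\desc a, \desc w)$ non-tree edges from $\desc a$ to $\desc w$: the tree edge $(\parent a, a)$ is internal to $\desc{v^\star}$ when $v^\star \neq a$ and is explicitly excluded when $v^\star = a$, and no other edge leaves $\desc a$ since the entire cut has only two edges. Hence every edge contributing to $Z_{\desc a}^{v^\star}$ lands in $\desc w$, which is exactly the singleton situation of Property~\ref{property:Z_a_v}(ii), forcing $Z_{\desc a}^{v^\star} = \angularbrac{w, \parent w, \eta(w), \gamma(\desc a, \desc w)}$.

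With completeness secured by this choice of $v^\star$, the lemma follows: node $a$ performs the $O(D)$-round precomputations for $\zeta(\cdot)$, $\eta(\cdot)$, and the $H$-values, then scans its $O(D)$ ancestors locally, identifies $w$ from the four-tuple, and verifies the cut via Lemma~\ref{lemma:2-respect-2-cut}, all within $O(D)$ rounds.
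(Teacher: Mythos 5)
Your proposal is correct and follows essentially the same approach as the paper: both choose $v^\star = \alpha(a,\level{w})$ and invoke Lemma~\ref{lemma:Z_desc_a_v} to conclude that $Z_{\desc a}^{v^\star}$ is the four-tuple $\angularbrac{w,\parent w,\eta(w),\gamma(\desc a,\desc w)}$, after which the test from Lemma~\ref{lemma:2-respect-2-cut} is local. The only difference is that you unpack the completeness argument (why the $\Zabsorb$ case cannot occur) rather than citing Lemma~\ref{lemma:Z_desc_a_v} directly, which is a helpful elaboration, not a different route.
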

\begin{proof}
	Here we have to prove that node $a$ will have access to $\eta(w)$ and $\gamma(\desc w,\desc a)$ if such a min cut occurs. Then confirming the min-cut is easy by Lemma \ref{lemma:2-respect-2-cut}. Let $v$ be the ancestor of node $a$ at level $\level{w}$.  By Observation \ref{lemma:Z_desc_a_v}, we know that $Z_{\desc{a}}^{v} = \angularbrac{w,\parent{w},\eta(w),\gamma(\desc{w},\desc{a})}$ in this case. Thus, the required information will be available at node $a$.
\end{proof}
% % % % % % % % % % % % % % % % % % % % % % % 
% % % % %End:Z_{\desc{a}}^{v}% % %  % % % %
% % % % %Charactrization, ComputationTime % % 
% % % % % % % % % % % % % % % % % % % % % % % 

% % % % % % % % % % % % % % % % % % % % % % % 
% % % % %Begin:Induced Cut of size 2  % % % %
% % % % %Algorithm, Proof % % % % % % % % % % 
% % % % % % % % % % % % % % % % % % % % % % % 
\begin{proof}[Proof of Lemma \ref{lemma:induced-cut-2-cut-2-respects}]
When the induced cut 2-respects the tree that is it is a symmetric difference of two tree cuts $\delta({\desc{a}})$ and $\delta({\desc{w}})$ for some $a,w\in V\setminus r$. As per Observation \ref{obs:induced-cut-size-2-2-respects-tree} here two cases could occur. The nested case when ${\desc{a}} \subset {\desc{w}}$ and the mutually disjoint case when ${\desc{a}} \cap {\desc{w}}$, results regarding them are given in Lemma \ref{lemma:2-cut-2-respect-nested} and Lemma \ref{lemma:2-cut-2-respect-disjoint}. In line \ref{line:algo-2-cut-nested-case} and \ref{line:algo-2-cut-mutually-disjoint} of Algorithm \ref{algo:2-cut algorithm}, we give details about the actual search. 

\begin{algorithm}[H]
	\DontPrintSemicolon
	\SetKwProg{availableInfo}{Available Info:}{}{}
	\lavailableInfo{}{
	Each node $a$, $\forall v \in \ancestor{a}$ knows $\eta(v),H_{\desc{a}}^{v}$ and $Z_{\desc{a}}^{v}$ (Lemma \ref{lemma:eta_known_to_all_nodes},\ref{lemma:H_v_a_known_to_all_nodes},\ref{lemma:Z_desc_a_v_time})		
	}
	\setcounter{AlgoLine}{0}
	\For{$l = 1$ to $\ell(a)$}{
		$v \gets$ $ \alpha(a,l)$  \tcp{ancestor of node $a$ at level $l$}
		\If{$Z_{{\desc{a}}}^{v} \notin \curly{\Zidentity,\Zabsorb}$}{
			Let $Z_{{\desc{a}}}^{v} = \angularbrac{w,\parent{w},\eta(w),\gamma({\desc{w}},{\desc{a}})}$ \;
			\If{$\eta(a) - \gamma({\desc{w}},{\desc{a}}) = 1$ \& $\eta(w) - \gamma({\desc{w}},{\desc{a}}) = 1$}{\label{line:algo-2-cut-nested-case}	
				$\curly{(\parent{w},w),(\parent{a},a)}$ is a 2-cut
			}
		}
		\If{$v \neq a$ \& $\eta(v) - H_{{\desc{a}}}^{v} = 1$ \& $\eta(a) - H_{{\desc{a}}}^{v} = 1$}{\label{line:algo-2-cut-mutually-disjoint}
			$\curly{(\parent{v},v),(\parent{a},a)}$ is a 2-cut
		}		
	}
	\caption{Algorithm to find 2-cut for node $a$}
	\label{algo:2-cut algorithm}
\end{algorithm}
\end{proof}
% % % % % % % % % % % % % % % % % % % % % % % 
% % % % % % %End:Induced Cut of size 2% % % %
% % % % % Algorithm, Proof% % % % % % % % % % 
% % % % % % % % % % % % % % % % % % % % % % % 
In this chapter, we formally defined tree restricted semigroup function. Further, we introduced two different types of tree restricted semigroup function $\eta(\cdot)$ and $\zeta(\cdot)$. We showed that these are enough to find min-cuts of size 1 and 2. In the next chapter, we will give algorithms to find min-cuts of size $3$.
\chapter{Min-Cuts of size three}
\label{chapter:5_min-cut-3}
	%!TEX root = ../../thesis.tex
In this chapter, we will give an algorithm to find a min-cut of size three. The idea here is to use Lemma \ref{lemma:gen_3_cut_2_respect} and Lemma \ref{lemma:gen_3_cut_3_respect} given in Chapter \ref{chpater:3_technical_overview} which characterizes the min-cut of size $3$. Having laid down these characterization lemmas, the critical aspect which remains to find the min-cut of size 3 (if it exists) is to communicate the required quantities by the characterizing lemmas to at least one node in the network whenever a min-cut of the kind occurs. For this chapter, we will assume that a min-cut of size $1,2$ do not occur.

Recall that we have fixed a BFS tree $\tree$ in the beginning. If there exists a min-cut in the network, there could be 7 different cases. These cases are different to each other based on the relation of the min-cut to the fixed BFS tree $\tree$. We enumerate these cases in Lemma \ref{lemma:min-cut-size-3}.  Further, give an algorithmic outline about how these cases can be found in Table \ref{table:overview_3_cut}.
\begin{lemma}
	If there exists a min-cut of size 3 then the following cases may arise for some $v_1,v_2,v_3 \in V\setminus r$ and $e,f$ as non-tree edge.
	\begin{enumerate}[\textbf{CASE}-1]
		\item $\curly{(\parent{v_1},v_1),e,f}$ is a min-cut
		\item $\curly{(\parent{v_1},v_1),(\parent{v_2},v_2),e}$ is a min-cut such that $\desc{v_2} \subset \desc{v_1}$
		\item $\curly{(\parent{v_1},v_1),(\parent{v_2},v_2),e}$ is a min-cut such that $\desc{v_2} \cap \desc{v_1} = \emptyset$
		\item $\curly{(\parent{v_1},v_1),(\parent{v_2},v_2),(\parent{v_3},v_3)}$ is a min-cut and  $\desc{v_3} \subset \desc{v_2} \subset \desc{v_1}$
		\item $\curly{(\parent{v_1},v_1),(\parent{v_2},v_2),(\parent{v_3},v_3)}$ is a min-cut $\desc{v_2} \subset \desc{v_1}$ and $\desc{v_3} \subset \desc{v_1}$ and $\desc{v_2} \cap \desc{v_3} = \emptyset$
		\item $\curly{(\parent{v_1},v_1),(\parent{v_2},v_2),(\parent{v_3},v_3)}$ is a min-cut and $\desc{v_3},\desc{v_2}$ and $\desc{v_1}$ are pairwise mutually disjoint
		\item $\curly{(\parent{v_1},v_1),(\parent{v_2},v_2),(\parent{v_3},v_3)}$ is a min-cut and $\desc{v_3} \subset \desc{v_2}$, $\desc{v_1} \cap \desc{v_2} = \emptyset$ and $\desc{v_1} \cap \desc{v_3} = \emptyset$
	\end{enumerate}
	\label{lemma:min-cut-size-3}
\end{lemma}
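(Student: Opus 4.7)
The plan is to apply Observation \ref{obs:fundamental_observation_about_cut_respecting_trees} to split the analysis based on $j \in \{1,2,3\}$, the number of tree edges a min-cut of size $3$ shares with $\tree$, and then within each $j$ enumerate the possible geometric configurations of the descendant subtrees using the well-known tree-structural fact that for any two nodes $u,u'$ in a rooted tree, the descendant sets $\desc{u}$ and $\desc{u'}$ are either nested or disjoint.

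First, I would handle $j=1$ directly: the single tree edge can be written as $(\parent{v_1},v_1)$ for some $v_1 \in V\setminus r$, and the remaining two edges of the cut are non-tree edges, call them $e$ and $f$; this is exactly \CASE-1. For $j=2$, the two tree edges are $(\parent{v_1},v_1)$ and $(\parent{v_2},v_2)$ for distinct $v_1,v_2 \in V\setminus r$, and the third edge is a non-tree edge $e$. Because $\desc{v_1}$ and $\desc{v_2}$ are either nested or disjoint, and by WLOG renaming we may take $\desc{v_2} \subset \desc{v_1}$ in the nested case, this yields exactly \CASE-2 and \CASE-3.

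For $j=3$, all three edges are tree edges $(\parent{v_i},v_i)$ for distinct $v_1,v_2,v_3 \in V \setminus r$. I would then examine the three pairwise relationships $(\desc{v_i},\desc{v_j})$, each of which is nested or disjoint, and count the number of pairs that are nested: three nested pairs force a total order on $\{\desc{v_1},\desc{v_2},\desc{v_3}\}$, giving (after relabeling) \CASE-4; two nested pairs together with one disjoint pair force a configuration where two subtrees sit inside a common ancestor subtree but are disjoint from one another, giving \CASE-5; zero nested pairs gives the pairwise-disjoint configuration \CASE-6; and exactly one nested pair with two disjoint pairs gives the configuration where one tree edge is disjoint from a nested pair, which is \CASE-7. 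The relabeling $v_1,v_2,v_3$ in each subcase is chosen so that the subscripts match the inclusion pattern stated in the lemma.

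The main subtlety is showing exhaustiveness of the $j=3$ sub-enumeration; specifically, ruling out the would-be configuration with exactly two nested pairs and one disjoint pair where the two "inner" sets sit in different "outer" sets. The hard part is observing that if $\desc{v_3}\subset \desc{v_2}$ and $\desc{v_3}\subset \desc{v_1}$, then $\desc{v_1}\cap \desc{v_2}\supseteq \desc{v_3}\neq\emptyset$, which forces $\desc{v_1}$ and $\desc{v_2}$ to also be nested, contradicting that only two pairs are nested. A similar chasing argument handles the symmetric possibilities, so the nested-pair count indeed partitions the $j=3$ case into precisely the four configurations \CASE-4 through \CASE-7. Once this is established the lemma follows by collecting the seven enumerated subcases.
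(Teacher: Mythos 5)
Your proof is correct, and it organizes the $j=3$ enumeration differently from the paper. The paper first fixes a level ordering $\level{v_1}<\level{v_2}<\level{v_3}$, then walks a decision tree on the inclusion relations ($\desc{v_3}\subset\desc{v_2}\subset\desc{v_1}$ or not, then $(\desc{v_2}\cup\desc{v_3})\subset\desc{v_1}$ or not, etc.), whereas you classify by the \emph{count} of nested pairs among the three descendant sets, which is $0,1,2,$ or $3$, and map each count to exactly one of \CASE{4} through \CASE{7}. Your invariant is arguably cleaner: it immediately makes exhaustiveness a matter of a single number, and you correctly spot the one subtlety it introduces, namely that a nested-pair count of $2$ is only realizable when the two nested pairs share the same outer set (the shared-inner and chain configurations both collapse to $3$ nested pairs by transitivity or by the nonempty-intersection argument). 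What the paper's route buys is concreteness: by pinning the levels up front, it never has to argue that certain pair-count configurations are unrealizable, at the price of the case split depending on an auxiliary ordering that the final statement then relaxes. Both arguments also rest on the same two ingredients you identified, namely Observation \ref{obs:fundamental_observation_about_cut_respecting_trees} for the split on $j$, and the tree-structural fact that two descendant sets are either nested or disjoint (the same fact the paper invokes via Observation \ref{obs:induced-cut-size-2-2-respects-tree} for $j=2$).
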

\begin{proof}
	From Observation \ref{obs:fundamental_observation_about_cut_respecting_trees}, we know that that if there exists a min-cut of size $3$ then it shares either $1,2$ or $3$ edges with the tree $\tree$. When a min-cut of size $3$ shares one edge with the tree then \CASE{1} applies. Here two edges are non-tree edges.
	
	When a min-cut of size $3$ shares 2-edges with the tree $\tree$ then there exists two tree edges. For some node $v_1,v_2 \in V\setminus r$ let these edges be $(\parent{v_1},v_1),(\parent{v_2},v_2)$. WLOG let $\level{v_1} \leq \level{v_2}$. Similar to  Observation \ref{obs:induced-cut-size-2-2-respects-tree}, there could be two cases here either $v_2 \subset v_1$ or $v_1 \cap v_2 = \emptyset$. Both of which are described in \CASE{2} and \CASE{3} respectively.
	
	The non-trivial part here is when the min-cut of size 3 shares all 3 edges with the tree. Here we will have 4 different cases. Let these cut edges be $(\parent{v_1},v_1),(\parent{v_2},v_2)$ and $(\parent{v_3},v_3)$. Also in the beginning let $\level{v_1} < \level{v_2} < \level{v_3}$. We start with the case when $\desc{v_3} \subset \desc{v_2} \subset \desc{v_1}$ which is described in \CASE{4}. Now when $\desc{v_3} \not\subset \desc{v_2}$ then since $v_2,v_3$ are different thus we have $\desc{v_2} \cap \desc{v_3} = \emptyset$ which is \CASE{5}. Notice that in both \CASE{4} and \CASE{5} we have $(\desc{v_2} \cup \desc{v_3}) \subset \desc{v_1}$. Now we move to the different scenario where $(\desc{v_2} \cup \desc{v_3}) \not\subset \desc{v_2}$. Here also we may have two cases when $\desc{v_2} \cap \desc{v_3} = \emptyset$ then we get \CASE{6} and when $\desc{v_2} \subset \desc{v_3}$ then we get \CASE{7}. Note that for \CASE{6} and \CASE{7} we may not require $\level{v_1} \leq \level{v_2}$ and $\level{v_1} \leq \level{v_3}$
\end{proof}
%!TEX root = ../../thesis.tex
	\begin{table}[t]
	\centering
	\begin{tabular}{|c|c|l|l|}\hline		
		~ & 
		\multicolumn{1}{c|}{\textbf{Characterization}} & 
		\textbf{Case} 	  & 
		\multicolumn{1}{c|}{\textbf{Technique}}  		
		\\ \hline
		%%% 1-respect
		\multirow{2}{*}{1-respects $\tree$} & 
		\multirow{2}{*}{\centering \parbox{4cm}{\centering  -}} &
		\multirow{2}{*}{\CASE{1}} & 
		\multirow{2}{*}{\parbox{4cm}{check if for some $v$, $\eta(v) = 3$}} \\
		&&&\\
		%%% 2-respect
		\hline
		\multirow{4}{*}{2-respects $\tree$} & 
		\multirow{4}{*}{\centering \parbox{4cm}{\centering  Lemma \ref{lemma:gen_3_cut_2_respect}}} &		
		\multirow{2}{*}{\CASE{2}} & 
		\multirow{2}{*}{\parbox{4cm}{\emph{Broadcast Type - 1}\\ Section \ref{sec:covergecast_broadcast} }} \\
		&&&\\\cline{3-4}
		&&\multirow{2}{*}{\CASE{3}} & 
		\multirow{2}{*}{\parbox{4cm}{2-Sketch \\ Section \ref{subsec:sketching_method}}} \\
			&&&\\\cline{3-4}
		\hline
		%%% 3-respect
		
		\multirow{8}{*}{3-respects $\tree$} & 
		\multirow{8}{*}{\centering \parbox{4cm}{\centering Lemma \ref{lemma:gen_3_cut_3_respect}}} &		
		\multirow{2}{*}{\CASE{4}} & 
		\multirow{2}{*}{\parbox{4cm}{\emph{Broadcast Type - 2}\\ Section \ref{sec:covergecast_broadcast} }} \\
		&&&\\\cline{3-4}
		&&\multirow{2}{*}{\CASE{5}} & 
		\multirow{2}{*}{\parbox{4cm}{Layered Algorithm\\ Section \ref{subsec:layered_technique}}} \\
		&&&\\\cline{3-4}
		&&\multirow{2}{*}{\CASE{6}} & 
		\multirow{2}{*}{\parbox{4cm}{3-Sketch\\ Section \ref{subsec:sketching_method}}} \\
		&&&\\\cline{3-4}
		&&\multirow{2}{*}{\CASE{7}} & 
		\multirow{2}{*}{\parbox{4cm}{Reduced 2-Sketch\\ Section \ref{subsec:sketching_method}}} \\
		&&&\\\cline{3-4}
		\hline		
	\end{tabular}
\caption[Overview of the case-structure of min-cut of size $3$]{Overview of the case-structure of min-cut of size $3$. A min-cut of size $3$ if exists may share $1,2$ or $3$ edges with the fixed BFS tree $\tree$.}
\label{table:overview_3_cut}
\end{table}

The different cases as mentioned in Lemma \ref{lemma:min-cut-size-3} are pictorially shown in Figure \ref{fig:3-cut-cases}.
%!TEX root = ../../thesis.tex
\begin{figure}
	\centering
	\begin{subfigure}[t]{0.5\textwidth}
		\centering
		\includegraphics[height=2.4in]{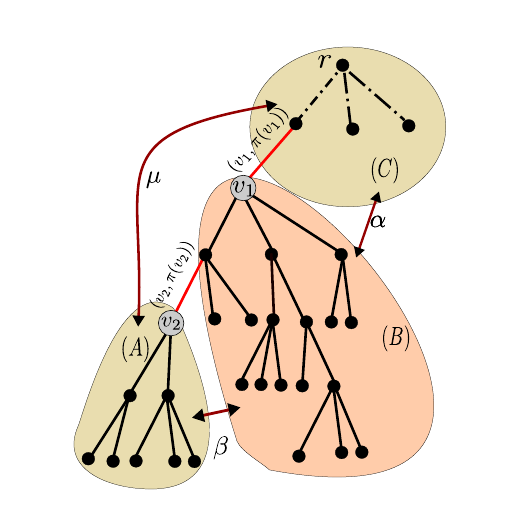}
		\caption{\CASE{2} (Either of $\alpha$ or $\beta$ is 1 and the other is $0$)}
	\end{subfigure}%
	~ 
	\begin{subfigure}[t]{0.5\textwidth}
		\centering
		\includegraphics[height=2.4in]{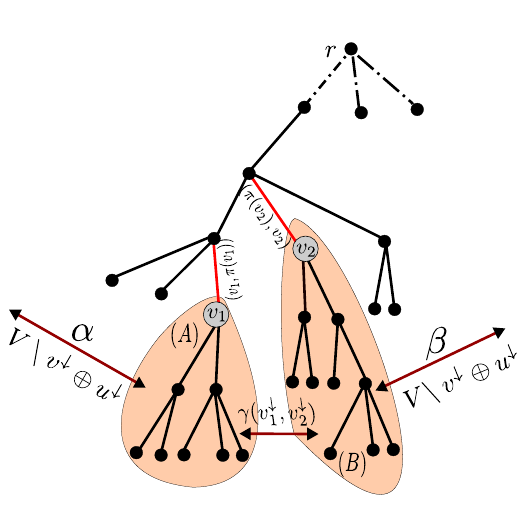}
		\caption{\CASE{3} (Either of $\alpha$ or $\beta$ is 1 and the other is $0$)}
	\end{subfigure}
	\begin{subfigure}[t]{0.5\textwidth}
		\centering
		\includegraphics[height=2.4in]{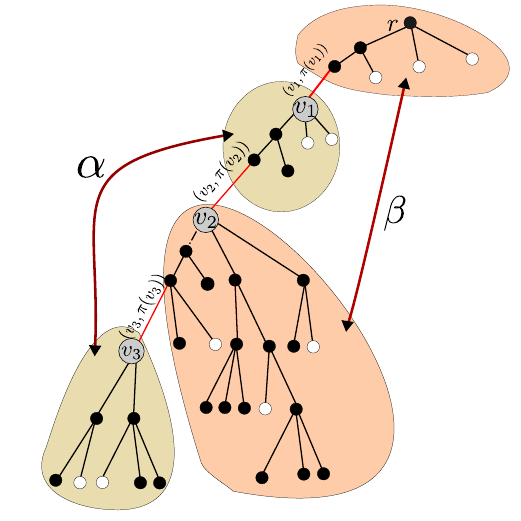}
		\caption{\CASE{4} (Both $\alpha$ and $\beta$ are non-zero)}
	\end{subfigure}%
	~ 
	\begin{subfigure}[t]{0.5\textwidth}
		\centering
		\includegraphics[height=2.4in]{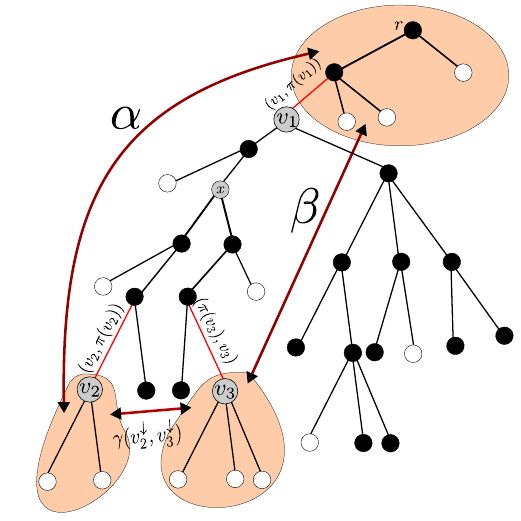}
		\caption{\CASE{5} (At least two of $\alpha$, $\beta$ and  $\gamma(\desc{v_2},\desc{v_3})$ are non-zero)}
	\end{subfigure}
	\begin{subfigure}[t]{0.5\textwidth}
		\centering
		\includegraphics[height=2.4in]{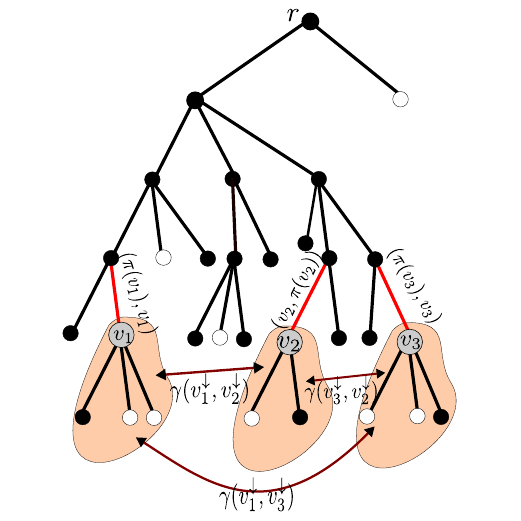}
		\caption{\CASE{6} (at least two of $\gamma(\desc{v_1},\desc{v_2})$, $\gamma(\desc{v_3},\desc{v_2})$, $\gamma(\desc{v_1},\desc{v_3})$ are non-zero)}
	\end{subfigure}%
	~ 
	\begin{subfigure}[t]{0.5\textwidth}
		\centering
		\includegraphics[height=2.4in]{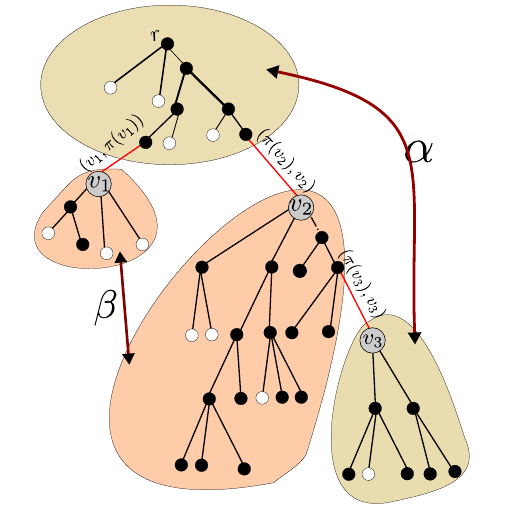}
		\caption{\CASE{7} (Both $\alpha$ and $\beta$ are non-zero)}
	\end{subfigure}
	\caption[Different cases of min-cuts of size three]{Different cases of an min cut of size three. Each figure in the above examples is a snippet of the tree and shows a different case of min-cut of size 3. Red edges are cut edges. Each shaded region in the above figures correspond to a vertex set. Sets with same color in the shaded region correspond to one side of cut. Thick edges with arrow ends may correspond to zero or more edges between two vertex set. The label on these edges represent the actual number of edges.}
	\label{fig:3-cut-cases}
\end{figure}
Among these cases, \CASE{1} is simple. Here for some node $v_1$ the induced cut is $(V\setminus \desc{v_1},\desc{v_1})$. Here we just need to find the size of tree cut $\eta(v_1)$ and if there exists such a min cut of size $3$, then $\eta(v_1) = 3$. This takes $O(D)$ time as shown in Lemma \ref{lemma:induced-cut-one-respects}.

Among the other cases only \CASE{2} and \CASE{4} have a simple broadcast based algorithm which is enough to let at least one of the node know the required quantities as per Lemma \ref{lemma:gen_3_cut_2_respect} and Lemma \ref{lemma:gen_3_cut_3_respect}. We give the details in the following lemmas.
\begin{lemma}
	When there exists a min-cut of size 3 as given in \CASE{2} (for some nodes $v_1,v_2 \in V\setminus r$ and a non-tree edge $e$ $\curly{(\parent{v_1},v_1),(\parent{v_2},v_2),e}$ is a min-cut such that $\desc{v_2} \subset \desc{v_1}$), then the cut edges can be found in $O(D)$ time. \label{lemma:3-cut-case-2}
\end{lemma}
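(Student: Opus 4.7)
The plan is to reduce the search for such a min-cut to a local check at the deeper endpoint $v_2$, using quantities that are already known from the TRSF machinery of Chapter \ref{chapter:4_trsf}. Since $\desc{v_2}\subset\desc{v_1}$, we have $v_1\in\ancestor{v_2}$, so $v_2$ is in a natural position to consult information about $v_1$ that has been pushed down the BFS tree.

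First I would, for every node $v_1$, run a single \emph{Broadcast Type-1} carrying the pair $\langle\eta(v_1),\id(v_1)\rangle$ down to every node in $\desc{v_1}$. By Lemma \ref{lemma:broadcat_types} this costs $O(D)$ rounds. Combined with Lemma \ref{lemma:H_v_a_known_to_all_nodes}, every node $v_2$ now holds, for each ancestor $v_1\in\ancestor{v_2}$, the three scalars $\eta(v_1)$, $\eta(v_2)$ and $H_{\desc{v_2}}^{v_1}=\gamma(\desc{v_1},\desc{v_2})$.

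Next, at each $v_2$, I would iterate over its at most $D$ ancestors and locally test the two equalities of Lemma \ref{lemma:gen_3_cut_2_respect} specialised to the nested case $\desc{v_2}\subset\desc{v_1}$: either $\eta(v_1)-2=\eta(v_2)-1=\gamma(\desc{v_1},\desc{v_2})$ or $\eta(v_1)-1=\eta(v_2)-2=\gamma(\desc{v_1},\desc{v_2})$. Whenever the test passes, by the backward direction of Lemma \ref{lemma:gen_3_cut_2_respect} there exists a non-tree edge $e$ with $\{(\parent{v_1},v_1),(\parent{v_2},v_2),e\}$ being a cut, so two of the three cut edges are immediately identified at $v_2$.

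The only remaining task is to produce the specific non-tree edge $e$. From the proof of Lemma \ref{lemma:gen_3_cut_2_respect}, $e$ is the unique edge lying in the symmetric difference $\delta(\desc{v_1})\oplus\delta(\desc{v_2})$ other than the two tree edges, and unpacking $\desc{v_2}\subset\desc{v_1}$ forces $e$ to have exactly one endpoint in $\desc{v_1}\setminus\desc{v_2}$ (and, depending on which of the two equalities holds, the other endpoint either in $\desc{v_2}$ or outside $\desc{v_1}$). Every node $a$ already knows $\ancestor{b}$ for each neighbour $b$ (from the preprocessing of Algorithm \ref{algo:preprocessing-cut-1-respects-tree}), so $a$ can locally mark any incident non-tree edge whose endpoints land in the required descendant/ancestor classes relative to $(v_1,v_2)$. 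These candidate edges can then be convergecast up to $v_2$ in $O(D)$ rounds; in a valid CASE-2 instance, the arithmetic of $\eta(v_1)+\eta(v_2)-2\gamma(\desc{v_1},\desc{v_2})=3$ forces exactly one such candidate, so no contention arises and an aggregation-style convergecast in the spirit of Algorithm \ref{algo:trsf_phase-2} suffices.

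The expected obstacle is the bookkeeping for the third edge rather than the verification step itself: one has to be careful that when $v_2$ triggers the check for several ancestors $v_1$ simultaneously, the candidate-edge convergecasts do not collide on shared tree links. This is handled by piggybacking the candidate identity onto the same pipelined schedule used in Broadcast Type-1, so the total round complexity remains $O(D)$.
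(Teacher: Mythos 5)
Your proposal matches the paper's approach: both rely on the TRSF machinery from Chapter \ref{chapter:4_trsf} (Lemmas \ref{lemma:eta_known_to_all_nodes} and \ref{lemma:H_v_a_known_to_all_nodes}) to make $\eta(v_1)$, $\eta(v_2)$, and $\gamma(\desc[T]{v_1},\desc[T]{v_2})=H^{v_1}_{\desc[T]{v_2}}$ available at $v_2$, and then have $v_2$ iterate over its $O(D)$ ancestors, checking the two equalities of Lemma \ref{lemma:gen_3_cut_2_respect}. Your final paragraph on explicitly convergecasting the third edge $e$ goes beyond what the paper does (Algorithm \ref{Algo:3-cut-case-2} stops once $\delta(\desc[T]{v_1}\oplus\desc[T]{v_2})$ is certified), but the extra step is correct and does not change the $O(D)$ bound.
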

\begin{proof}
	From Chapter \ref{chapter:4_trsf}, we know that each node $v$ knows $\eta(v) = |\delta(\desc{v})|$.
	Also for each $u \in \ancestor{v}$, $v$ knows $\eta(u)$ and $H^u_{\desc{v}}$ by Lemma \ref{lemma:eta_known_to_all_nodes} and \ref{lemma:H_v_a_known_to_all_nodes}. Thus if there exists an induced cut as in $\CASE{2}$ then we just need to make a simple evaluation based on Lemma \ref{lemma:gen_3_cut_2_respect}. We give the details of the evaluation in Algorithm \ref{Algo:3-cut-case-2}.
	\begin{algorithm}
		\DontPrintSemicolon
		\SetKwProg{availableInfo}{Available Info:}{}{}
		\lavailableInfo{}{
			Each node $x$, $\forall v \in \ancestor{x}$ knows $\eta(v),H_{\desc{x}}^{v}$ (Lemma \ref{lemma:eta_known_to_all_nodes} ,\ref{lemma:H_v_a_known_to_all_nodes})        
		}    \setcounter{AlgoLine}{0}
		\For{$v \in \ancestor{x}\setminus \curly{x,r}$}{
			\If{$\eta(v)-1 = H_{\desc{x}}^{v} = \eta(x) - 2$ OR  $\eta(v)-2 = H_{\desc{x}}^{v} = \eta(x) - 1$}{
				$\delta(\desc{v} \oplus \desc{x})$ is an induced cut of size 3\;
			}
		}
		\caption{Algorithm to find an induced cut of size $3$ as given in \CASE{2} (for some nodes $v_1,v_2 \in V\setminus r$ and a non-tree edge $e$ $\curly{(\parent{v_1},v_1),(\parent{v_2},v_2),e}$ is a min-cut such that $\desc{v_2} \subset \desc{v_1}$) run on all node $x \in V\setminus r$}
		\label{Algo:3-cut-case-2}
	\end{algorithm}    
\end{proof}

\begin{lemma}
	When there exists a min-cut of size 3 as given in \CASE{4} (For some $v_1,v_2,v_3 \in V\setminus r$, $\curly{(\parent{v_1},v_1),(\parent{v_2},v_2),(\parent{v_3},v_3)}$ is a min-cut and  $\desc{v_3} \subset \desc{v_2} \subset \desc{v_1}$) then the cut edges can be found in $O(D^2)$ time. \label{lemma:3-cut-case-4}
\end{lemma}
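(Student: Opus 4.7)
The plan is to apply the characterization in Lemma \ref{lemma:gen_3_cut_3_respect} at the deepest of the three nodes, namely $v_3$, and to use a pipelined broadcast to deliver the only quantity it does not already know. By the nested condition $\desc{v_3}\subset\desc{v_2}\subset\desc{v_1}$, both $v_1$ and $v_2$ are ancestors of $v_3$ in $\tree$. The three equations of Lemma \ref{lemma:gen_3_cut_3_respect} require, at $v_3$, the six quantities $\eta(v_i)$ for $i\in\{1,2,3\}$ and $\gamma(\desc{v_i},\desc{v_j})$ for $i\neq j$. Five of these are already in $v_3$'s hands after the preprocessing of Chapter \ref{chapter:4_trsf}: by Lemma \ref{lemma:eta_known_to_all_nodes}, $v_3$ knows $\eta(v_1)$ and $\eta(v_2)$ (and of course $\eta(v_3)$), and by Lemma \ref{lemma:H_v_a_known_to_all_nodes} it knows $H^{v_1}_{\desc{v_3}}=\gamma(\desc{v_3},\desc{v_1})$ and $H^{v_2}_{\desc{v_3}}=\gamma(\desc{v_3},\desc{v_2})$. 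The missing piece is $\gamma(\desc{v_1},\desc{v_2})=H^{v_1}_{\desc{v_2}}$, which is currently held only by $v_2$.

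First I would have every node $v$ package, for each ancestor $u\in\ancestor{v}$, the pair $\angularbrac{u,H^{u}_{\desc{v}}}$; this yields at most $Depth(\tree)=O(D)$ messages of $O(\log n)$ bits at each node. I would then invoke \emph{Broadcast Type-2} (Lemma \ref{lemma:broadcat_types}) so that every $v$ delivers all these pairs to every node in $\desc{v}$. By Lemma \ref{lemma:broadcat_types} this takes $O(Depth(\tree)^2)=O(D^2)$ rounds. After the broadcast, for any node $x$ and any pair of ancestors $u_1\in\ancestor{u_2}$ with $u_2\in\ancestor{x}$, node $x$ locally knows $\gamma(\desc{u_1},\desc{u_2})$, because $u_2$ has forwarded $H^{u_1}_{\desc{u_2}}$ down the subtree $\desc{u_2}$ (which contains $x$).

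Next, each node $x$ acts as a candidate $v_3$ and performs a purely local search. It iterates over all ordered pairs $(u_1,u_2)$ with $u_2\in\ancestor{x}\setminus\{x,r\}$ and $u_1\in\ancestor{u_2}\setminus\{r\}$; there are $O(D^2)$ such pairs. For each pair it tests the three equations of Lemma \ref{lemma:gen_3_cut_3_respect} with $(v_1,v_2,v_3)=(u_1,u_2,x)$, i.e.\
\[
\eta(v_i)-1 \;=\; \sum_{j\in\{1,2,3\}\setminus\{i\}} \gamma(\desc{v_i},\desc{v_j}), \qquad i\in\{1,2,3\}.
\]
All six quantities involved are now available at $x$, so the test is local. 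By Lemma \ref{lemma:gen_3_cut_3_respect}, the three equations hold if and only if $\{(\parent{v_1},v_1),(\parent{v_2},v_2),(\parent{v_3},v_3)\}$ is a cut induced by $\desc{v_1}\oplus\desc{v_2}\oplus\desc{v_3}$; combined with our standing assumption that no smaller min-cut exists, this detects every \CASE{4} min-cut whose deepest tree-edge endpoint is $x$.

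The total round complexity is dominated by the Broadcast Type-2 step at $O(D^2)$; the preprocessing of Chapter \ref{chapter:4_trsf} adds only $O(D)$ rounds, and the per-node enumeration is local and therefore free in the \CONGEST{} model. The main obstacle I anticipate is purely the broadcast bandwidth: every node must transmit $\Theta(D)$ ancestor-indexed values through tree edges that are simultaneously carrying the broadcasts of all their other descendants, so the argument hinges on the pipelining guarantee already encapsulated in Lemma \ref{lemma:broadcat_types}. Once that primitive is invoked, the rest of the proof is a mechanical verification against Lemma \ref{lemma:gen_3_cut_3_respect}.
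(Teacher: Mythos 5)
Your proof is correct and follows essentially the same approach as the paper: both rely on Lemmas \ref{lemma:eta_known_to_all_nodes} and \ref{lemma:H_v_a_known_to_all_nodes} to seed each node with its own $H$ values, then use \emph{Broadcast Type-2} to disseminate each node's $\angularbrac{u, H^u_{\desc{v}}}$ pairs downward so that every candidate $v_3$ locally holds all six quantities required by Lemma \ref{lemma:gen_3_cut_3_respect}, with the $O(D^2)$ bound coming from the broadcast step. The only difference is cosmetic: you spell out more explicitly which five of the six quantities are already local before the broadcast, but the underlying algorithm and complexity argument are the same.
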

\begin{proof}
	
	By Lemma \ref{lemma:H_v_a_known_to_all_nodes}, each node $a$ knows $H_{\desc{a}}^{v} = \gamma(\desc{a},\desc{v})$ for all $v \in \ancestor{a}$. 
	Further, for any node $a \in V\setminus r$ we want to make sure that for any two nodes $v,u \in \ancestor{v}\setminus \curly{a,r}$, node $a$ knows 
	$H_{\desc{u}}^{v}$ if $\level{u} > \level {v}$. 
	For this, each node $x$ at level $i$ has $i-1$ such quantities to broadcast to its descendants in $\desc{x}$. This is similar to \emph{Broadcast Type-2} and takes $O(D^2)$ rounds. After this step every node $x$ knows $H_{\desc{y}}^{z}$ for all $y,z\in \ancestor{x} \setminus \curly{r,x}$ and $\level{y} > \level{z}$. 
	Now at each node $x$, to determine if $\curly{(\parent{x},x),(\parent{y},y),(\parent{z},z)}$ is a min-cut, node $x$, checks if $\eta(x) -1 = H_{\desc{x}}^{y} + H_{\desc{x}}^{z}$ and $\eta(y) -1 = H_{\desc{y}}^{z} + H_{\desc{x}}^{y}$ and $\eta(z) -1 = H_{\desc{x}}^{z} + H_{\desc{y}}^{z}$    
\end{proof}

For other cases the problem boils down to efficiently computing the required quantities as per Lemma \ref{lemma:gen_3_cut_2_respect} and \ref{lemma:gen_3_cut_3_respect} and communicating them to at least one node in the network, then this node can make the required decision about the min-cut of size 3. Unfortunately, simple broadcast and convergecast techniques do not seem plausible for the cases which are left. This is because of the arbitrary placements of the nodes in the tree.

In the remaining part of this chapter, we introduce two new techniques which will take care of this. In Section \ref{subsec:sketching_method}, we give Sketching Technique for \CASE{3},\CASE{6} and \CASE{7}. Further, in Section \ref{subsec:layered_technique}, we give Layered Algorithm which is enough to find the min-cut as given by \CASE{5} if it exists.
	%TEX root = ../../thesis.tex
\section{Graph Sketching} \label{subsec:sketching_method}
In this section, we will introduce our graph sketching technique. 
Recall from Lemma \ref{lemma:gen_3_cut_2_respect} and Lemma \ref{lemma:gen_3_cut_3_respect} that to make a decision about min-cut of size $3$, a node requires certain information about other nodes. 
The whole graph has as many as $n$ nodes. 
It will be cost ineffective for every node to know the details about each of $n$ nodes. We introduce sketching technique, which reduces the number of nodes, any particular node has to scan, in order to find the required quantities to make a decision about the min-cut.

A sketch is defined for all the nodes in the network.
If there exists a min-cut, at least some node $v$, can make the decision regarding it using its sketch or appropriate sketch of other nodes communicated to it. 
In this section, first we will give the motivation behind the use of sketch, then in Subsection \ref{subsec:definition_sketch}, we will define sketch and introduce related 	notations. Here we will prove that the size of the sketch is not large. 
Later in Subsection \ref{subsec:algo_compute_k_sketch}, we will give the algorithm to compute the sketch and in Subsection \ref{subsec:application_graph_sketch}, we will showcase the application of graph sketch in finding a min-cut as given by \CASE{3}, \CASE{6} and \CASE{7}. Before going to the formal notation of the sketch definition, we will describe algorithmic idea for these cases.

First, we begin with \CASE{3}. Imagine that $\curly{(\parent u,u),(\parent v,v),e}$ is a min-cut of size $3$, for some nodes $v,u \in V\setminus r$ such that $\desc{v} \cap \desc{u} = \emptyset$ and a non-tree edge $e$ as given by \CASE{3} and shown in Fig. \ref{fig:motivation_sketch_case_3} (A). Now imagine node $u$ has to make a decision of this min-cut, then as per Lemma \ref{lemma:gen_3_cut_2_respect}, it will require information about $\eta(v), \gamma(\desc{u}, \desc{v})$. But upfront, node $u$ has no idea that it is part of such a min-cut and there exists some other node $v$, because it only has local information. Moreover, there are, as many as $n$ nodes, in the whole network which is lot of information for node $u$ to look at and is cost inefficient. Our sketching technique brings down the size of the set of nodes which any node $u$ has to scan to make a decision about a min-cut as give by \CASE{3}. Let \sloppy $\overline{\mathcal N}(A) \triangleq \curly{y \mid x \in A,y \notin A, (x,y) \in E\ \text{is a non-tree edge}}$. We make two simple observations: 
\begin{enumerate}[i)]
\item node $u$ needs to search only the paths from the root $r$ to all nodes $y \in \overline{\mathcal N}(\desc{u})$ shown in Fig. \ref{fig:motivation_sketch_case_3} (B). 
\item such paths can be significantly trimmed: for instance as shown in Fig. \ref{fig:motivation_sketch_case_3} (A), $\curly{(\parent u,u),(v,v_1),e}$ cannot be a min-cut because removing these edges does not partition the vertex set into two. 
\end{enumerate}
\begin{figure}
	\centering
	\includegraphics[scale=.50]{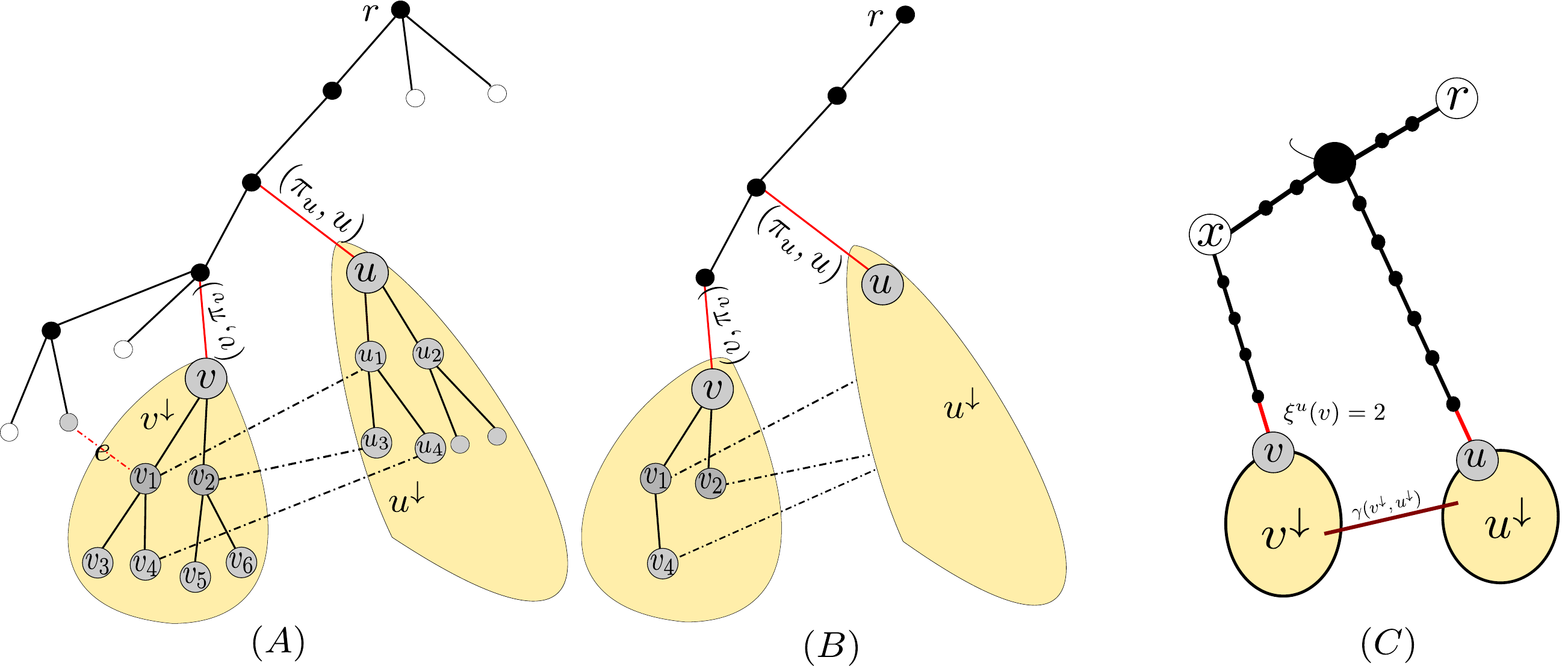}
	\caption[Motivation for Sketching Technique in \CASE{3}]{Demonstration of Sketching Technique for \CASE{3}. Each part in the above figure is a snippet of the tree. Edges with dashed stroke style are non-tree edges. Red edges are cut edges. Shaded region with vertices denote vertex sets.}
	\label{fig:motivation_sketch_case_3}
\end{figure}
Based on the above two simple observation, we can see that node $u$ can limit its scan of finding some node $x$ and subsequently $\eta(x), \gamma(\desc{x},\desc{u})$ to the bold path shown in Fig. \ref{fig:motivation_sketch_case_3} (C). Our sketch exactly computes this. We will give details about it in the later part of this section. The idea for \CASE{6} is similar to the one demonstrated here. 

To find a min-cut as given by \CASE{7}, we will use a different idea called \emph{reduced sketch}. Recall that, a min-cut as given by \CASE{7} is as follows: for $v_1,v_2,v_3 \in V\setminus r$, $\curly{(\parent{v_1},v_1),(\parent{v_2},v_2),(\parent{v_3},v_3)}$ is a min-cut such that $\desc{v_3} \subset \desc{v_2}$, $\desc{v_1} \cap \desc{v_2} = \emptyset$ such that $\desc{v_1} \cap \desc{v_3} = \emptyset$. Here we use the characterization given in Lemma \ref{lemma:gen_3_cut_3_respect} which requires that at least one node knows 6 quantities $\eta(v_1),\eta(v_2),\eta(v_3),\gamma(\desc{v_1},\desc{v_2}),\gamma(\desc{v_2},\desc{v_3}),\gamma(\desc{v_1},\desc{v_3})$.

In this case, we use a modified sketch. For any node $v$, our algorithm ensures that each node $c \in \desc{v}$ have information about strategically truncated and trimmed paths from root $r$ to all the vertices in the set $\overline{\mathcal{N}}(\desc{v}\setminus {\desc{c}})$. The same is illustrated in Fig. \ref{fig:motivation_sketch_case_7}. The pictorial representation of this case shown in Fig. \ref{fig:motivation_sketch_case_7} (A). Our algorithm makes sure that node $v_3$ (see that $v_3 \in v_2^\downarrow$) has information about the nodes in the bold path (specially truncated and trimmed paths from root to nodes in $\overline{\mathcal{N}}(v_2^\downarrow \setminus v_3^\downarrow)$) shown in Fig. \ref{fig:motivation_sketch_case_7} (C). The intermediate step is shown in Fig. \ref{fig:motivation_sketch_case_7} (B). Also, coupling it with Lemma \ref{lemma:H_v_a_known_to_all_nodes} node $v_3$ knows $H_{\desc{v_3}}^{v_2} = \gamma(\desc{v_3},\desc{v_2})$.
\begin{figure}
	\centering
	\includegraphics[scale=.50]{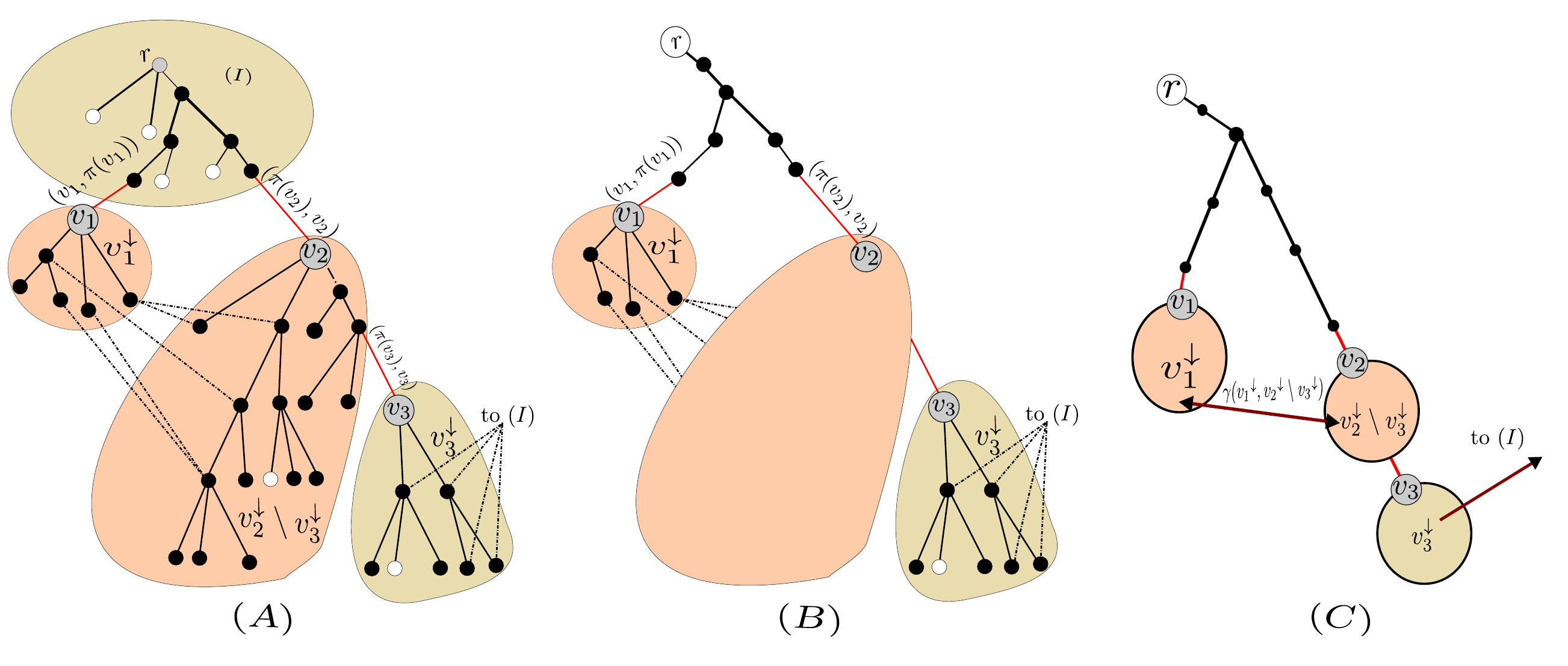}
	\caption[Motivation for Sketching Technique in \CASE{7}]{Motivation for Sketching Technique in \CASE{7}. Edges with dashed stroke style are non-tree edges. Shaded region with vertices denote vertex sets.}
	\label{fig:motivation_sketch_case_7}
\end{figure}

In the next subsection, we will give a formal definition of \emph{sketch} and \emph{reduced sketch}. We will give the definition for a general spanning tree $T$. The sketch is defined for a parameter $k$ which governs the number of branches which can be included in the sketch. Further, we will give distributed algorithms to compute sketch and reduced sketch. 
%TEX root = ../../thesis.tex
\subsection{Definition of Sketch}
\label{subsec:definition_sketch}
For any node $x$, let $\rho_T(x)$ represent the unique path from root $r$ to the node $x$ in tree $T$.
Further for any vertex set $A \subseteq V$, let $\mathcal P_T(A) \triangleq \curly{\rho_T(x) \mid x \in A}$. Basically, $\mathcal P_T(A)$ is a set of paths.
We say that a tree path $\rho_T(x)$ is parallel to a tree edge $e = (a,b)$, if $\desc[T]{x} \cap \desc[T]{a} = \emptyset$ and $\desc[T]{x} \cap \desc[T]{b} = \emptyset$.
Also, for any vertex set $A \subset V$ and a tree $T$ recall that $\overline{\mathcal N}_T(A) = \curly{y \mid x \in A, (x,y) \in E, (x,y)\text{ is a non-tree edge}}$.

Now, we define canonical tree which is the first structure towards defining sketch. The sketch which we will define is nothing but a  truncation of this canonical tree. For any node $v$, the canonical tree is the graph-union of paths from the root $r$ to non-tree neighbors of node $v$. This notation for a canonical tree is also overloaded for a vertex set as well and formally defined below.
\begin{defn}[Canonical Tree]
	\label{defn:sketch_tree}
	\underline{Canonical tree of a node $v$} is a subtree of some spanning tree $T$, denoted by $R_{T}(v)$ and formed by union (graph-union operation) of tree paths in $\mathcal P_T\paren{\curly{v \cup \overline{\mathcal N}_T\paren{\curly{v}}}}$.
	\underline{Canonical tree of a vertex set $\desc[T]{v}$} is denoted by $R_{T}(\desc[T]{v})$ and formed by union of the paths in $\mathcal P_T\paren{\curly{v \cup \overline{\mathcal N}_T(\desc[T]{v})}}$.
\end{defn}
Further, we also define a reduced canonical tree. We use the same notation since the idea is same.
\begin{defn}
	Let $v$ be an internal (non-leaf and non-root) node of a tree $T$. Let $c \in \desc[T]{v}$, we define the \underline{reduced canonical tree} denoted by $R_{T}(\desc[T]{v}\setminus \desc[T]{c})$ and formed by union of the paths in $\mathcal P_T\paren{\curly{v \cup \overline{\mathcal N}_T(\desc[T]{v} \setminus \desc[T]{c})}}$
\end{defn}
We define sketch as the truncation of the canonical tree and give an algorithm that can compute it in $O(D^2)$ rounds. A canonical tree could be of very large size. Thus its truncation is required. To characterize this truncation we will use branching number as defined in Definition \ref{defn:branching_number}.
Let $\firstBranchNode(T')$ of any rooted tree $T'$ be the branch node (a node which has at least two children in a tree $T'$) closest to root. If the tree $T'$ has no branch node then $\firstBranchNode(T')$ is the root itself.

\begin{defn}[Branching Number]
	\label{defn:branching_number}
	For any tree $T'$, the branching number of a node $b$ in tree $T'$ is denoted by $\bnum[T']{b}$. It is defined as
	$$\bnum[T']{b} \triangleq  \begin{cases}
	1& \level[T']{b} \leq  \level[T']{x}\ \&\ x\neq root(T')\\
	2& {b} = {x} = root(T')\\
	deg_{_{T'}}(\parent[T']{b}) + \bnum[T']{\parent[T']{b}} - 2& \level[T']{b} > \level[T']{x}\\
	\end{cases}
	$$
	where $x = \firstBranchNode(T')$.
\end{defn}
The aforementioned definition is illustrated through examples in Figure \ref{fig:branch_number}. Basically, for any given tree $T'$, branching number of any node in the tree is a function of number of splits in paths from root to that node.
\begin{figure}[h]
	\centering
	\includegraphics[scale=1.5]{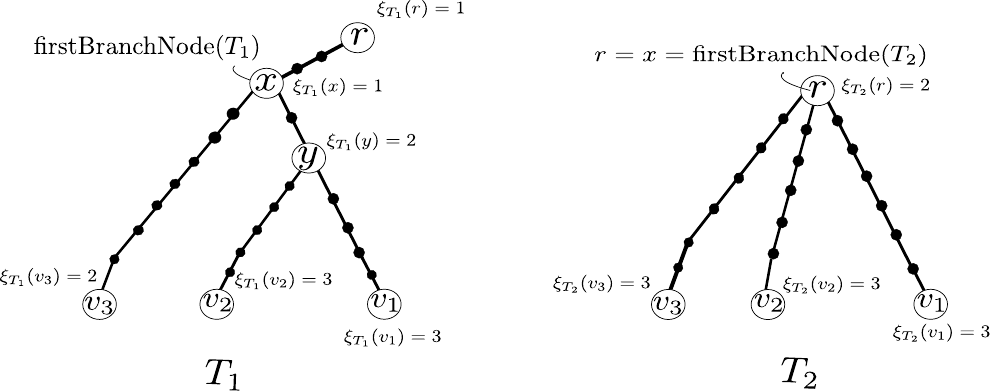}
	\caption[Branching Number]{Illustration of branching number on two separate trees $T_1$ and $T_2$}
	\label{fig:branch_number}
\end{figure}

We will now make a simple observation about branching number and give a characterizing lemma regarding the size of the canonical tree.
\begin{obs}
	\label{obs:branchin_number_increases}
	Let $v$ be a node and $c \in \child_T(v)$. Let $b \in R_T(\desc[T]{c})$, then $\xi_{R_T(\desc[T]{c})}(b) \leq \xi_{R_T(\desc[T]{v})}(b)$.
\end{obs}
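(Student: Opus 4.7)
The plan is to reduce the inequality to a monotonicity property of ancestor degrees in the two canonical trees, exploiting the recursive structure of the $\xi$-definition. Because Case 3 of the definition, $\xi_{T'}(b) = \deg_{T'}(\parent[T']{b}) + \xi_{T'}(\parent[T']{b}) - 2$, only consults $b$'s ancestors in $T'$, an induction down the root-to-$b$ path in $T$ should reduce everything to a comparison of degrees along that path.

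First, I would establish the key inclusion: for every node $a$ in $T$ that is either $v$, a proper ancestor of $v$, or not a descendant of $v$, $\deg_{R_T(\desc[T]{c})}(a) \leq \deg_{R_T(\desc[T]{v})}(a)$. A child $c'$ of $a$ in $T$ contributes a tree edge at $a$ to $R_T(\cdot)$ exactly when $\desc[T]{c'}$ meets the corresponding target set. The child of $a$ on the tree path to $v$ (which coincides with the child on the path to $c$, since $c \in \desc[T]{v}$) contributes to both canonical trees. For any other child $c'$ of $a$, the subtree $\desc[T]{c'}$ is disjoint from $\desc[T]{v}$, so any $y \in \overline{\mathcal{N}}_T(\desc[T]{c}) \cap \desc[T]{c'}$ automatically lies outside $\desc[T]{v}$ and hence belongs to $\overline{\mathcal{N}}_T(\desc[T]{v})$. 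This gives the inclusion of contributing children and therefore the stated degree inequality.

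Next, I would verify that in every case where the inequality is non-vacuous --- namely $b = v$, $b$ a proper ancestor of $v$ in $T$, or $b$ not a descendant of $v$ --- the node $b$ also lies in $R_T(\desc[T]{v})$. Indeed, if $b \in R_T(\desc[T]{c})$ is not a descendant of $v$, then $b$ lies on the tree path to some $y \in \overline{\mathcal{N}}_T(\desc[T]{c})$ with $y \notin \desc[T]{v}$, and that same path appears in $R_T(\desc[T]{v})$ by the preceding step. In every such case the ancestors of $b$ in $T$ fall within the scope of the degree comparison, and a short induction on $\level[T]{b}$ closes the argument: when $b$ lies strictly below the first branch node of $R_T(\desc[T]{c})$, Case 3 applies and both $\deg_{R_T(\desc[T]{c})}(\parent[T]{b})$ and $\xi_{R_T(\desc[T]{c})}(\parent[T]{b})$ are dominated by their counterparts in $R_T(\desc[T]{v})$; when $b$ sits at or above that first branch node, $\xi_{R_T(\desc[T]{c})}(b) \in \{1,2\}$ and the comparison follows from $\xi \geq 1$ together with the degree comparison at the root.

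The main obstacle will be reconciling the two $\xi$-definitions in the regime where $\firstBranchNode(R_T(\desc[T]{c}))$ and $\firstBranchNode(R_T(\desc[T]{v}))$ sit at different depths. A node that is strictly below the first branch in one canonical tree may sit at or above the first branch in the other, so the case of the definition that applies can switch between the two sides of the inequality. Carefully tracking how the $-2$ contribution at each Case 3 step is absorbed by the accompanying degree increase at a branching ancestor --- and especially keeping track of what happens when $R_T(\desc[T]{v})$ acquires a new branch at a node where $R_T(\desc[T]{c})$ has none (arising from some non-tree neighbour of $\desc[T]{v} \setminus \desc[T]{c}$ that is absent from $\overline{\mathcal{N}}_T(\desc[T]{c})$) --- is where the bookkeeping will be most delicate.
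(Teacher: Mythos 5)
Your route is genuinely different from --- and more honest than --- the paper's, which disposes of this observation in a single line by asserting that $R_T(\desc[T]{c})$ is a subtree of $R_T(\desc[T]{v})$ and silently invoking a monotonicity of branching numbers under subtree containment. That containment is false as written: $c$ itself lies in $R_T(\desc[T]{c})$ (as the endpoint of $\rho_T(c)$) but not in $R_T(\desc[T]{v})$, and more generally any $y \in \overline{\mathcal{N}}_T(\desc[T]{c}) \cap (\desc[T]{v}\setminus\desc[T]{c})$ contributes nodes of $R_T(\desc[T]{c})$ strictly below $v$ that are absent from $R_T(\desc[T]{v})$. What you do instead --- first pin down exactly which $b$ make the comparison well-posed (those that are $v$, proper ancestors of $v$, or not descendants of $v$, and you correctly show these lie in both canonical trees), then prove a degree comparison at those nodes by a targeted containment of the contributing non-tree-neighbour sets, and finally unwind the $\xi$-recursion --- is the argument that is actually needed, and it also makes explicit what the true containment is: above $v$ and outside $\desc[T]{v}$ the child's canonical tree is contained in the parent's, which is where the observation has content.

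Two small repairs to your sketch. Your degree claim should exclude $a=v$: there $\deg_{R_T(\desc[T]{v})}(v)=1$ (no path of $R_T(\desc[T]{v})$ descends past $v$) while $\deg_{R_T(\desc[T]{c})}(v)\geq 2$, so the inequality reverses; and your own justification (``the child of $a$ on the path to $v$'') is vacuous at $a=v$. This is harmless because the degree at $v$ is never consulted when computing $\xi_{R_T(\cdot)}(b)$ for any $b$ in both trees: Case 3 of the definition asks for $\deg(\parent[T]{b})$, and $\parent[T]{b}=v$ would force $b\in\desc[T]{v}\setminus\{v\}$, which you have already excluded from the comparison. Second, the first-branch-node mismatch you flag as the delicate point resolves cleanly inside your framework: if $x_c := \firstBranchNode(R_T(\desc[T]{c}))$ is a proper ancestor of $v$ then the degree comparison makes $x_c$ a branch node of $R_T(\desc[T]{v})$ too, so $\firstBranchNode(R_T(\desc[T]{v}))$ is no deeper than $x_c$ and Case 3 of the $\xi$-recursion applies in both trees wherever it applies in $R_T(\desc[T]{c})$; and if $x_c$ is $v$ or strictly below $v$, every node of $R_T(\desc[T]{c})$ strictly below $x_c$ is a proper descendant of $v$ and hence outside the scope of the comparison, so the only surviving $b$ have $\xi_{R_T(\desc[T]{c})}(b)\in\{1,2\}$ and the bound is immediate.
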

\begin{proof}
	The canonical tree $R_T(\desc[T]{c})$ of any node $c \in \child_T(v)$ is the subtree of the canonical tree $R_T(\desc[T]{v})$. Hence the observation.
\end{proof}
\begin{lemma}
	For any tree $T'$, the number of nodes in the tree that has branching number less than $k$ is $O(2^k Depth(T'))$.
	\label{obs:nodes_R_t_v}
\end{lemma}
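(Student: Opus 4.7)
The plan is to consider the subtree $S_k = \{b \in T' : \bnum[T']{b} \le k\}$ of $T'$, contract its maximal non-branching chains to obtain a tree $\widetilde{S}_k$, and show that $\widetilde{S}_k$ has only $O(2^k)$ nodes while each contracted edge expands to a path of length at most $Depth(T')$ in $S_k$.

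First I would check that $S_k$ is ancestor-closed, so that it really is a subtree of $T'$. From the recursive clause
\[
\bnum[T']{b} \;=\; \deg_{T'}(\parent[T']{b}) + \bnum[T']{\parent[T']{b}} - 2
\]
together with $\deg_{T'}(\parent[T']{b}) \ge 2$, we obtain $\bnum[T']{b} \ge \bnum[T']{\parent[T']{b}}$ whenever $\level[T']{b} > \level[T']{x}$; above the first branch node $x$ the branching number is identically $1$, and the root corner case is direct. A useful by-product for the next step is that the children of any node $b$ share the common value $\deg_{T'}(b) + \bnum[T']{b} - 2$, so they all lie in $S_k$ or none do.

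Next, contract every maximal degree-$2$ chain of $S_k$ into a single edge to obtain a tree $\widetilde{S}_k$ in which every internal node has at least two children. The main step is a budget argument bounding the number of leaves of $\widetilde{S}_k$. At any node $b \in \widetilde{S}_k$ with $\bnum[T']{b} = j$ and $c \ge 2$ children in $T'$, each child has branching number $j + c - 1$ (when $b$ is not the root) or $j + c - 2$ (when $b$ is the root), so branching at $b$ costs at least $c-1$ from the remaining budget $k-j$. Letting $f(j)$ denote the maximum number of leaves of branching number at most $k$ reachable from a node of value $j$, we have $f(k) = 1$ and
\[
f(j) \;=\; \max_{\substack{c \ge 2 \\ j + c - 1 \le k}} c \cdot f(j + c - 1).
\]
Since $c/2^c \le 1/2$ for every integer $c \ge 2$, a short induction shows the maximum is attained at $c = 2$ and $f(j) = 2^{k-j}$. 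Because the root has branching number in $\{1,2\}$, this gives at most $2^{k-1}$ leaves in $\widetilde{S}_k$, and hence $|\widetilde{S}_k| \le 2 \cdot 2^{k-1} - 1 = O(2^k)$ (every internal node having $\ge 2$ children).

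Finally, each edge of $\widetilde{S}_k$ expands to a path in $S_k$ of at most $Depth(T')$ edges, so
\[
|S_k| \;\le\; |\widetilde{S}_k| + (|\widetilde{S}_k| - 1)\cdot Depth(T') \;=\; O\bigl(2^k \cdot Depth(T')\bigr),
\]
which is the claim. The main subtlety to guard against will be the asymmetric treatment of the root and of the chain of branching-number-$1$ ancestors lying above the first branch node $x$, but these form a single chain contributing at most $O(Depth(T'))$ extra nodes, which are harmlessly absorbed into the bound.
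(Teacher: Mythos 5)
Your proposal is correct and follows essentially the same route as the paper's proof: the worst case is binary branching, giving $O(2^k)$ branch points separated by chains of length at most $Depth(T')$. The paper's version merely asserts that a binary tree is the worst case; your chain-contraction and budget recursion (showing the max is attained at $c=2$) make that assertion rigorous, but the underlying idea is identical.
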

\begin{proof}
	In the worst case $T'$  may be a binary tree. Then each branching node in the tree will have a degree $3$ and on the path beyond that branching node will have branching number one more than the parent (by the definition of branching node). On every branching node there are two paths which split in the tree $T'$. Thus we may have as many as $O(2^k)$ different branching paths. Each path may be $O(Depth(T'))$ long thus we have $O(2^k Depth(T'))$ such nodes.
\end{proof}
We now define graph-sketch of a node which is defined based on the canonical tree and comes with a parameter $k$ on which the truncation is based. We define the truncation of a tree as below.
\begin{defn}
	For some tree $T'$ and a number $k$, $\trunc(T',k)$ is an sub-tree of $T'$ induced by vertices with branching number less than or equal to $k$.
\end{defn}

Our graph sketch will be called as $k$-Sketch because it comes with a parameter $k$ on which the truncation is based. For every node in the $k$-sketch, meta information is also added. We define the $k$-Sketch of a node as below.

\begin{defn}[$k$-Sketch]
	\label{defn:k_sketch}
	For any node $v$ and a spanning tree $T$, let $R' =  \trunc(R_T(\desc[T]{v}),k)$. The $k$-Sketch of a node $v$ w.r.t. the spanning tree $T$ is denoted as $\Sketch[T]{k}{v}$ and it is defined as $\Sketch[T]{k}{v} \triangleq \curly{R' \cup \curly{\angularbracs{u:\eta_{T}(u),\parent[T]{u},\gamma\paren{\desc[T]{v},\desc[T]{u}}}\ \forall u \in R'}}$
\end{defn}
Basically $k$-Sketch is a truncated canonical tree packaged along with the meta information $\angularbracs{u:\eta_{T}(u),\parent[T]{u},\gamma\paren{\desc[T]{v},\desc[T]{u}}}$ for each node $u$ in the truncated tree.

We will give an algorithm to compute $k$-Sketch for every node $v \in V$ in the next sub-section and further showcase the application of $k$-Sketch to find a min-cut  if it exists as given by \CASE{3} and \CASE{6}. Similar to $k$-Sketch of a node $v$ we define the reduced $k$-Sketch which is based on the reduced canonical tree. This will be used to find a min-cut  if it exists as given by \CASE{7}.
\begin{defn}[Reduced $k$-Sketch]
	\label{defn:sketch-substraction}
	Let v be an internal node of a spanning tree $T$. Let $c \in \desc[T]{v}$ and let $R' =  \trunc(R_T(\desc[T]{v} \setminus \desc[T]{c}),k)$. The reduced $k$-Sketch of a node $v$ and $c$ w.r.t. the spanning tree $T$ is denoted as $\rSketch[T]{k}{v}{c}$ and it is defined as $\rSketch[T]{k}{v}{c} \triangleq \curly{R' \cup \curly{\angularbracs{u:\eta_{T}(u),\parent[T]{u},\gamma\paren{\desc[T]{v} \setminus \desc[T]{c},\desc[T]{u}}}\ \forall u \in R'}}$
\end{defn}
We know give the following Lemma about the size of the $k$-Sketch.
\begin{lemma}
	For any spanning tree $T$, the k-Sketch of a node $v$, $\Sketch[T]{k}{v}$ w.r.t. $T$ is of size $O(2^k Depth(T) \log n)$ bits.
	\label{lemma:size_sketch}
\end{lemma}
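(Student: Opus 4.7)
The plan is to combine the counting bound already provided in Lemma \ref{obs:nodes_R_t_v} with a per-node accounting of the attached meta information.

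First I would observe that $R' = \trunc(R_T(\desc[T]{v}), k)$ is, by the definition of $\trunc$, the subtree of the canonical tree $R_T(\desc[T]{v})$ induced by nodes whose branching number (in $R_T(\desc[T]{v})$) is at most $k$. Since $R_T(\desc[T]{v})$ is a subtree of the spanning tree $T$, its depth is at most $Depth(T)$. Applying Lemma \ref{obs:nodes_R_t_v} directly to $R_T(\desc[T]{v})$ therefore yields $|V(R')| = O(2^k\, Depth(T))$, and hence also $|E(R')| = O(2^k\, Depth(T))$ because $R'$ is a tree.

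Next I would argue that each node of $R'$ together with its associated meta data can be encoded in $O(\log n)$ bits. The meta tuple $\angularbracs{u:\eta_T(u),\parent[T]{u},\gamma(\desc[T]{v},\desc[T]{u})}$ consists of two vertex identifiers (each $O(\log n)$ bits since there are $n$ vertices) and two counts $\eta_T(u)$ and $\gamma(\desc[T]{v},\desc[T]{u})$, each bounded by the number of edges $m \le n^2$ and therefore also representable in $O(\log n)$ bits. The tree structure of $R'$ can be encoded either implicitly via the $\parent[T]{u}$ fields already carried in the meta tuples, or by listing each edge as an endpoint pair of $O(\log n)$ bits; in either case, the structural overhead is $O(\log n)$ per node.

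Multiplying the per-node cost $O(\log n)$ by the number of nodes $O(2^k\, Depth(T))$ gives the claimed bound of $O(2^k\, Depth(T)\, \log n)$ bits. I do not expect any real obstacle here; the only subtle point is ensuring that the branching-number bound used inside $R'$ is consistent with the one used in Lemma \ref{obs:nodes_R_t_v} (which is stated for an arbitrary tree $T'$ with its own branching numbers), but this follows because $R'$ is defined precisely as the set of nodes of $R_T(\desc[T]{v})$ having branching number at most $k$ in that same tree, so the hypothesis of Lemma \ref{obs:nodes_R_t_v} applies verbatim.
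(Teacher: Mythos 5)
Your proof is correct and follows essentially the same approach as the paper: invoke Lemma \ref{obs:nodes_R_t_v} to bound the number of nodes in the truncated canonical tree by $O(2^k\,Depth(T))$, then observe that each node together with its meta tuple (two vertex identifiers plus two edge counts bounded by $m \le n^2$) occupies $O(\log n)$ bits, and multiply. Your additional remarks about encoding the tree structure and about consistency of the branching-number definition are reasonable elaborations but do not change the argument in any substantive way.
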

\begin{proof}
	For any arbitrary node $u \in \Sketch[T]{k}{v}$, the sketch contains a three tuple $\angularbracs{\eta_{T}(u),\parent[T]{u},\gamma\paren{\desc[T]{v},\desc[T]{u}}}$. Here $\eta_{T}(u),\gamma\paren{\desc[T]{v},\desc[T]{u}} \leq |E| = O(n^2)$ and can be represented in $O(\log n)$ bits. Thus the three tuple is of $O(\log n)$ bits. Now by Lemma \ref{obs:nodes_R_t_v} it is clear that $\Sketch[T]{k}{v}$ is of size $O(2^k Depth(T) \log n)$ bits.
\end{proof}
\begin{corollary}
	For any spanning tree $T$ and an internal node $v$ and some $c \in \desc[T]{v}$ the reduced $k$-Sketch $\rSketch[T]{k}{v}{c}$ w.r.t. $T$ is of size $O(2^k Depth(T) \log n)$ bits.
	\label{lemma:size_reduced_sketch}
\end{corollary}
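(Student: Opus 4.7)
The plan is to mirror the proof of Lemma \ref{lemma:size_sketch} almost verbatim, since the only structural change from the $k$-Sketch to the reduced $k$-Sketch is replacing $\desc[T]{v}$ by $\desc[T]{v}\setminus \desc[T]{c}$ in the definition of the underlying canonical tree and in the third coordinate of each meta tuple. Neither change affects the bit-size of an individual tuple nor the combinatorial bound on the number of nodes in the truncated tree, so the same two-step counting argument should carry over.

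First I would bound the per-node meta information. A generic entry for a node $u$ in $\rSketch[T]{k}{v}{c}$ is the three-tuple $\angularbracs{u:\eta_T(u),\parent[T]{u},\gamma(\desc[T]{v}\setminus \desc[T]{c},\desc[T]{u})}$. The identifiers $u$ and $\parent[T]{u}$ are $O(\log n)$ bits, and both $\eta_T(u)$ and $\gamma(\desc[T]{v}\setminus \desc[T]{c},\desc[T]{u})$ count subsets of edges, hence are at most $|E|=O(n^2)$ and fit in $O(\log n)$ bits each. Thus each tuple contributes $O(\log n)$ bits.

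Next I would bound the number of nodes in $R' = \trunc(R_T(\desc[T]{v}\setminus \desc[T]{c}),k)$. By construction $R'$ is a subtree of the spanning tree $T$, so its depth is at most $Depth(T)$. Moreover, $R'$ only contains nodes of branching number at most $k$ with respect to the reduced canonical tree. Applying Lemma \ref{obs:nodes_R_t_v} to $R'$ gives $|R'| = O(2^k\, Depth(T))$.

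Combining the two gives the claimed bound: the total size of $\rSketch[T]{k}{v}{c}$ is the number of nodes times the per-node size, i.e.\ $O(2^k\, Depth(T)\log n)$ bits, and the corollary follows. I do not anticipate any real obstacle here; the only mild subtlety is making explicit that the truncation argument in Lemma \ref{obs:nodes_R_t_v} is applied to the reduced canonical tree rather than to $R_T(\desc[T]{v})$, but since that lemma is stated for an arbitrary tree, this transfer is automatic.
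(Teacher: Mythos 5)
Your proposal is correct and mirrors the paper's treatment: the paper states this corollary without a separate proof precisely because the argument for Lemma \ref{lemma:size_sketch} carries over unchanged, exactly as you observe. Your two-step counting (each meta tuple is $O(\log n)$ bits; Lemma \ref{obs:nodes_R_t_v} bounds the truncated reduced canonical tree by $O(2^k Depth(T))$ nodes) is the intended route, and your note that Lemma \ref{obs:nodes_R_t_v} applies to any tree correctly closes the only potential gap.
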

The $k$-Sketch of a node will be used to find a min-cut as given by \CASE{3} and \CASE{6}. Whereas the reduced $k$-Sketch will be used to find a min-cut as given by \CASE{7}. In the subsequent section, we will give algorithms to compute $k$-Sketch and the reduced $k$-Sketch. We will work with the fixed BFS tree $\tree$ and for simplicity in the notations the $\tree$ will be skipped from subscript or superscript. 
%TEX root = ../../thesis.tex
\subsection{Algorithm to Compute Sketch}
\label{subsec:algo_compute_k_sketch}
In this subsection, we will give distributed algorithms to compute $k$-Sketch and the reduced $k$-Sketch. We will prove that our algorithm takes $O(D^2)$ rounds. The idea to compute sketch is as follows: Each node computes its own $k$-Sketch (which is of size $O(D)\log n$ bits) and communicates the same to the parent. The parent node after receiving the sketch from all the children computes its own sketch and communicates the same further up. This process continues and at the end each node has its $k$-Sketch. Here we will use Observation \ref{obs:branchin_number_increases}, to argue that the sketch received from children is enough for a node to compute its sketch.
\begin{lemma}
	For all $v \in V$, $\Sketch{k}{v}$ can be computed in $O(D^2)$ rounds.
	\label{lemma:k-sketch_algorithm}
\end{lemma}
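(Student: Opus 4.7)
The plan is a bottom-up convergecast on the BFS tree $\tree$ in which each node $v$ builds $\Sketch{k}{v}$ from the sketches of its children together with its own local information. At every leaf $a$ the sketch can already be assembled locally: from the pre-processing of Chapter~\ref{chapter:4_trsf} (in particular, the ancestor-list exchange in Algorithm~\ref{algo:preprocessing-cut-2-respects-tree}) $a$ knows $\ancestor{b}$ for every non-tree neighbour $b$, so it can write down the canonical tree $R_{\tree}(\desc{a}) = \rho_{\tree}(a) \cup \bigcup_{b \in \overline{\mathcal N}_{\tree}(\{a\})} \rho_{\tree}(b)$, truncate it to those $u$ with $\bnum{u} \leq k$, and attach each required triple $\angularbrac{u:\eta(u),\parent{u},\gamma(\desc{a},\desc{u})}$: $\eta(u)$ was received in the same ancestor tuple, $\parent{u}$ is the next node on $\rho(u)$, and $\gamma(\desc{a},\desc{u})$ is simply the number of $a$'s non-tree edges with other endpoint in $\desc{u}$, which $a$ can count directly from the ancestor lists it received.

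For an internal node $v$, once every child $c \in \child(v)$ has finished its own sketch and transmitted it to $v$, node $v$ constructs $\Sketch{k}{v}$ in three stages. \emph{(i) Merging.} Since $R_{\tree}(\desc{v}) = \rho_{\tree}(v) \cup \bigcup_{c \in \child(v)} R_{\tree}(\desc{c})$ modulo paths whose endpoint lies in $\desc{v}\setminus\desc{c}$, $v$ takes the graph-union of the children's canonical trees and its own path $\rho_{\tree}(v)$, then drops any vertex lying strictly below $v$ that was carried as a "non-tree neighbour" by some child but is now internal to $\desc{v}$; such vertices are recognisable because their ancestor chain, visible in the merged structure, passes through $v$. \emph{(ii) Truncation.} Recompute the branching numbers on the merged tree and retain only nodes with branching number at most $k$; by Observation~\ref{obs:branchin_number_increases} the branching number can only grow when sibling subtrees are joined, so no pruning performed in a child discards information that $v$ would have had to keep. \emph{(iii) Meta-aggregation.} For every surviving $u$ set
\[
\gamma(\desc{v},\desc{u}) \;=\; \sum_{c \in \child(v)} \gamma(\desc{c},\desc{u}) \;+\; |\{e \in \delta(v) : \text{other endpoint lies in } \desc{u}\}|,
\]
where each summand is already in the corresponding child sketch and the last term is locally computable; $\eta(u)$ and $\parent{u}$ are inherited unchanged.

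For the round complexity, Lemma~\ref{lemma:size_sketch} bounds $|\Sketch{k}{c}|$ by $O(2^k D \log n)$ bits, so sending a child's sketch along one tree edge takes $O(2^k D)$ rounds in the \CONGEST model. The convergecast proceeds level by level up $\tree$: within one level, parent-child transmissions use edge-disjoint links, so all of them finish simultaneously in $O(2^k D)$ rounds, and with $O(D)$ levels the total is $O(2^k D^2) = O(D^2)$ for constant $k$. The main obstacle I anticipate lies in stage (i): a child $c$'s sketch may contain paths to vertices $y \in \overline{\mathcal N}_{\tree}(\desc{c}) \cap (\desc{v}\setminus\desc{c})$ that must be removed at $v$, and the updated $\gamma(\desc{v},\desc{u})$ values must reflect only those edges whose other endpoint is truly outside $\desc{v}$. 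Arguing that the bookkeeping for this pruning and for the corresponding $\gamma$ correction can be carried out locally by $v$ without inflating the transmitted object beyond $O(2^k D \log n)$ bits, and that the resulting structure agrees precisely with Definition~\ref{defn:k_sketch}, is where the proof is most delicate; once this is established, the inductive invariant that "$v$ holds $\Sketch{k}{v}$ after it has processed all its children" follows immediately.
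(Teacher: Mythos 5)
Your proposal follows the same bottom-up convergecast that the paper uses (Algorithm~\ref{algo:compute_sketch}): each node exchanges ancestor tuples with its non-tree neighbours, builds its local canonical tree, graph-unions in the children's sketches, truncates by branching number, aggregates the $\gamma$ fields, and passes the result to its parent, for a total of $O(D)$ levels times $O(D)$ rounds per level. The delicate points you flag at the end do resolve: because $R_{\tree}(\desc{v})$ contains no vertex strictly below $v$ (every $\rho(y)$ with $y \in \overline{\mathcal N}(\desc{v})$ has $y\notin\desc{v}$ and hence passes through no proper descendant of $v$), the vertices to drop are exactly those in $\desc{v}\setminus\{v\}$; and every $\gamma(\desc{c},\desc{u})$ value needed in the merge is in fact present in $\Sketch{k}{c}$ because Observation~\ref{obs:branchin_number_increases} guarantees that a $u$ surviving truncation at $v$ also survives at each child $c$ where $\gamma(\desc{c},\desc{u})\neq 0$, while no overcount arises since any edge with both ends in $\desc{v}$ is automatically absent from $\delta(\desc{u})$ in both the ancestor and the disjoint case.

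One genuine slip: your meta-aggregation term $|\{e \in \delta(v) : \text{other endpoint lies in } \desc{u}\}|$ is the correct $\gamma(\{v\},\desc{u})$ only when $\desc{u}\cap\desc{v}=\emptyset$. When $u\in\ancestor{v}$ we have $v\in\desc{u}$, so $\delta(v)\cap\delta(\desc{u})$ consists of edges incident to $v$ whose other endpoint lies \emph{outside} $\desc{u}$ — i.e., $H^u_v$ from Chapter~\ref{chapter:4_trsf} — not inside it. The paper's Algorithm~\ref{algo:compute_sketch} splits exactly on $u\in\ancestor{a}$ versus not for this reason; the same case split is needed at every internal node when computing the last summand, and again for the leaf base case you describe. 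This does not change the structure of the argument or its $O(D^2)$ bound, but as written the aggregation formula is wrong for all ancestors of $v$ in the sketch.
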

\begin{proof}We describe a detailed algorithm to compute this in Algorithm \ref{algo:compute_sketch}. In Line \ref{algo:computes_rho_}, Algorithm \ref{algo:compute_sketch} calculates $\mathcal P\paren{\curly{v \cup \overline{\mathcal N}(v)}}$ which as per the definitions is the tree paths of the non-tree neighbors of $v$ including $\rho(v)$. \begin{algorithm}[!t]
		\DontPrintSemicolon
		\SetKwProg{module}{Module}{}{}
		\SetKwProg{algo}{Algorithm}{}{}
		\SetKwFor{Forp}{for}{parallely do}{endfor}
		\SetKwProg{phase}{}{}{}
		\setcounter{AlgoLine}{0}
		\phase{Algorithm to be run on each node $a$}{
			\KwOut{k-Sketch $\Sketch{k}{a}$}
			\phase{Past Knowledge}{
				Each node $u\in V$ knows $\level{u}$ and $\eta(u)$ from previous section\;
				For each $u \in \ancestor{a}$,  $a$ knows $H_{{a}}^{u}$ and $\eta(u)$
			}
			$\overline{\mathcal N}(a) \gets \curly{b\mid (a,b) \in E, (a,b)\ \text{is a non-tree edge}} $ \label{algo:line_1}\;
			\Forp{all $b \in \overline{\mathcal N}(a)$}{send tuples $\angularbrac{\level{u},\eta(u),u}$ for all $u \in \ancestor{a}$ to $b$} \label{algo:comm_1}
			$\mathcal P \gets \curly{\rho(a)}$ \tcp{$\rho(a)$ can be computed easily because $a$ has all nodes in $\ancestor{a}$}
			\Forp{all $b \in \overline{\mathcal N}(a)$}{
				\lFor{all $u \in \ancestor{b}$}{receive tuples $\angularbrac{\level{u},\eta(u),u}$}  \label{algo:comm_2}
				Construct the path $\rho(b)$ using the information \;
				$\mathcal P \gets \mathcal P\cup \rho(b)$ \label{algo:computes_rho_}\;       
			}
			perform graph union of all the paths in $\mathcal P$ and form canonical tree $R(a)$ \;
			\For{all nodes $u \in R(a)$}{
				\lIf{$u \in \ancestor{a}$}{$\gamma(a,\desc{u}) = H_{a}^{u}$}
				\lElse{$\gamma(a,\desc{u}) = |\curly{b \mid b\in \overline{\mathcal N}(a) , u \in \ancestor{b}}|$}
				include the tuple $\angularbrac{u:\eta(u), \parent{u},\gamma(a,\desc{u})}$ for the node $u \in R(a)$\label{algo:line_11}
			}   
			\lIf{$a$ is an internal node}{
				wait until $\Sketch{k}{c}$ is received for all $c \in \child(a)$  \label{algo:comm_3}
			}
			$S \gets R(a) \cup_{c \in \child(a)}\Sketch{k}{c}$ \label{line:compute-sketch-tree}\;
			perform graph union of all trees in $S$ to form a tree $T$\;
			\For{node $u \in T$}{
				compute branching number $\bnum[T]{u}$ as per the definition\;
				\If{$\bnum[T]{u} > k$ and $u \in \desc{v}$}{remove node $u$ from $T$}
			}
			\For{node $u \in T$}{
				compute $\gamma(\desc{a},\desc{u})$ by adding appropriately from all $T' \in S$\;
			}
			Construct $\Sketch{k}{a}$ using $T$ by including $\angularbracs{u : \eta(u),\parent{u},\gamma(\desc{a},\desc{u})}$ for all node $u \in T$\;
			Send $\Sketch{k}{a}$ to $\parent{a}$
		}
		\caption{Distributed Algorithm for Computing $k$-Sketch}
		\label{algo:compute_sketch}
	\end{algorithm}

	This can be computed easily because each non-tree neighbor sends all its ancestors and their associated meta information. Having calculated $\mathcal P\paren{\curly{v \cup \overline{\mathcal N}(v)}}$, then it is easy to compute the sketch tree $R(v)$ by Definition \ref{defn:sketch_tree}.
	Now if a node is an internal node we again need to perform a graph union of other sketches received from children.
	This is also trivial and it is guaranteed that we will not lose any node here because of Observation \ref{obs:branchin_number_increases}.
	Further branching number is computed for all the nodes and those nodes which do not satisfy the condition of branching number are removed to form the sketch.
	Also among the three tuples of meta information two remain fixed from the sketch of children and $\gamma(\desc{u},\desc{v})$ for a node $u \in \Sketch{k}{v}$ can be computed by appropriate addition.
	\paragraph{Time Requirements} In Algorithm \ref{algo:compute_sketch}, communication between any two nodes occur only in line \ref{algo:comm_1}, \ref{algo:comm_2} and \ref{algo:comm_3}. In line \ref{algo:comm_1}, only $O(D)$ rounds are taken because a node has at most $O(D)$ ancestors in the BFS tree $\tree$.
	Similarly, line \ref{algo:comm_2} also takes $O(D)$ rounds because each of the neighbors also has $O(D)$ ancestors and the transfer of the three tuples from each of the neighbors happen in parallel.
	Further in line \ref{algo:comm_3}, a node $a$ waits to recieve all the sketch from its children. From Lemma \ref{lemma:size_sketch}, we know that the size of the sketch is $O(D \log n)$ bits when the tree in action is a BFS tree. Now a node at level $l$ will wait for all the sketch from its children which are at level $l+1$ and they inturn depend on all the children which are at level $l+2$ and so on. Thus line \ref{algo:comm_3} takes atmost $O(D^2)$ rounds.
\end{proof}

In Lemma \ref{lemma:k-sketch_algorithm}, we gave an algorithm for computing the $k$-Sketch. We will now move toward an algorithm to find a reduced sketch. There are two steps towards these details of which are given in Observation \ref{reduced_sketch_step_1} and Lemma \ref{lemma:reduced-sketch}.
\begin{obs}
	\label{reduced_sketch_step_1}
	For a constant $k$. For any internal node $a$, and $c' \in \child(a)$, $\rSketch{k}{a}{c'}$ can be computed at node $a$ in $O(D^2)$ rounds.
\end{obs}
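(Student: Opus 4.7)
The plan is to mimic the algorithm that produces $\Sketch{k}{a}$ (Algorithm \ref{algo:compute_sketch}) but drop the contribution coming from the forbidden child $c'$. First I would have $a$ carry out exactly the preprocessing used for $\Sketch{k}{a}$: exchange ancestor tuples with each non-tree neighbor, build $R(a)$ together with the values $H_a^u$, and wait to receive $\Sketch{k}{c}$ from every $c\in \child(a)$. By the proof of Lemma \ref{lemma:k-sketch_algorithm}, this already finishes within $O(D^2)$ rounds, and it is the only step that uses the network; everything that follows is local at $a$.

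Next, I would form the unioned tree $R'=R(a)\cup\bigcup_{c\in\child(a)\setminus\{c'\}}\Sketch{k}{c}$ and then apply $\trunc(\cdot,k)$. The claim is that this yields $\trunc(R_\tree(\desc{a}\setminus \desc{c'}),k)$. Correctness follows because every root-to-$y$ path for $y\in \overline{\mathcal N}(\desc{a}\setminus \desc{c'})$ either starts from a non-tree neighbor of $a$ (captured by $R(a)$) or from a non-tree neighbor of some $\desc{c}$ with $c\ne c'$ (captured by $\Sketch{k}{c}$, recalling that the sketch is itself the graph-union of such paths restricted by branching number). Observation \ref{obs:branchin_number_increases} guarantees that nodes surviving the $k$-truncation in $R(\desc{a}\setminus \desc{c'})$ were not erased inside any $\Sketch{k}{c}$, so nothing is lost by pre-truncating the children's sketches.

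Then I would attach the meta information. For each $u$ in the truncated tree, $\eta(u)$ and $\parent{u}$ are just copied from the children's sketches (or from $R(a)$). The only genuinely new quantity is $\gamma(\desc{a}\setminus \desc{c'},\desc{u})$. Using the disjoint partition $\desc{a}\setminus\desc{c'}=\{a\}\cup\bigcup_{c\ne c'}\desc{c}$, the cut $\delta(\desc{a}\setminus \desc{c'})\cap\delta(\desc{u})$ decomposes additively, giving
\[
\gamma(\desc{a}\setminus \desc{c'},\desc{u})=\gamma(\{a\},\desc{u})+\sum_{c\in\child(a)\setminus\{c'\}}\gamma(\desc{c},\desc{u}).
\]
Node $a$ can compute $\gamma(\{a\},\desc{u})$ directly from the ancestor lists of its non-tree neighbors (this also handles the case $u\in \desc{c'}$, which is exactly what makes the reduced sketch differ from the full one), and each $\gamma(\desc{c},\desc{u})$ is already packaged inside $\Sketch{k}{c}$ whenever $u$ is retained there, and contributes $0$ otherwise.

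The main obstacle I expect is exactly this last bookkeeping step: unlike in the full sketch, the relevant nodes $u$ may now lie inside $\desc{c'}$, reached through non-tree edges from $a$ or from some $\desc{c}$ into $\desc{c'}$. I would need to verify carefully that (i) the graph union of the $\Sketch{k}{c}$'s for $c\ne c'$ together with $R(a)$ really contains every such path (and that truncation with parameter $k$ is consistent with the definition of $R'$ in Definition \ref{defn:sketch-substraction}), and (ii) the additive decomposition above still holds in this case, which it does because $\desc{u}\subseteq\desc{c'}$ is disjoint from $\desc{a}\setminus\desc{c'}$, so $\delta(\desc{u})$ intersected with the partition pieces simply counts non-tree edges into $\desc{u}$. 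Once these two correctness checks are in place, the whole procedure runs in $O(D^2)$ rounds, dominated by the collection of the children's sketches.
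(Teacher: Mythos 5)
Your proposal takes the same approach as the paper: re-run Algorithm~\ref{algo:compute_sketch} but omit the sketch contributed by the child $c'$ when forming the graph union in Line~\ref{line:compute-sketch-tree}, then truncate and attach meta-information. The paper's proof is a single sentence that states exactly this modification; your write-up simply fills in the correctness details (coverage of paths into $\desc{c'}$, the additive decomposition of $\gamma$, and reliance on Observation~\ref{obs:branchin_number_increases}) that the paper leaves implicit.
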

\begin{proof}
	Basically, the idea here is not to include the sketch received from child $c$, while computing the sketch of node $a$ and the resultant becomes the reduced sketch $\rSketch{k}{v}{c}$. To do this, we just need to change Line \ref{line:compute-sketch-tree} in Algorithm \ref{algo:compute_sketch} to $\mathcal S \gets R(a) \cup_{c \in \child(a) \setminus \curly{c'}}\Sketch{k}{c}$ for each $c' \in \child(a)$ and this enables the computation $\rSketch{k}{a}{c'}$ at node $a$. 
\end{proof}
\begin{lemma}
	\label{lemma:reduced-sketch}
	For a fixed $k$, there exists a $O(D^2)$ round algorithm such that all nodes $x$ can compute $\rSketch{k}{v}{c}$ for all $v \in \ancestor{x}$.
\end{lemma}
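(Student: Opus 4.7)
The plan is to decompose the task into two phases: a local-computation phase followed by a coordinated pipelined downcast. First, I would invoke Observation \ref{reduced_sketch_step_1} at every internal node $a$: for each child $c' \in \child(a)$, node $a$ computes $\rSketch{k}{a}{c'}$ locally in $O(D^2)$ rounds. After this phase, for every node $v$ and every child $c$ of $v$, the reduced $k$-Sketch $\rSketch{k}{v}{c}$ resides at $v$. What remains is to distribute these sketches so that each descendant $x$ of $c$ obtains $\rSketch{k}{v}{c}$, because then for every $x$ and every $v \in \ancestor{x}$ the sketch corresponding to the unique child of $v$ on the path from $v$ to $x$ will be available at $x$.

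The second phase is a downcast along tree edges. Observe that through a fixed tree edge $(v,c)$ with $c \in \child(v)$, the data that needs to flow downward is precisely one reduced sketch per ancestor of $c$: namely $\rSketch{k}{v}{c}$ itself, and for each $u \in \ancestor{v}$, the sketch $\rSketch{k}{u}{c_u}$ where $c_u$ is the child of $u$ lying on the path from $u$ to $c$. These have already been computed at $u$ and propagated to $v$ (by induction on depth in the downcast) so they are available at $v$ when needed. Hence at most $O(D)$ reduced sketches cross the edge $(v,c)$, each of size $O(2^k \cdot D \log n) = O(D \log n)$ bits for fixed $k$ by Corollary \ref{lemma:size_reduced_sketch}, i.e.\ $O(D)$ messages per sketch and $O(D^2)$ atomic messages per tree edge.

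To orchestrate the downcast, I would pipeline sketches by ancestor level: in the earliest rounds every node $v$ pushes to each child $c$ the sketches originating at the shallowest ancestors (which are already settled at $v$), and it keeps forwarding deeper-level sketches as they become available. Since the bandwidth demand through any tree edge is $O(D^2)$ messages of $O(\log n)$ bits, and the pipeline depth is $O(D)$, a standard pipelining argument analogous to the proof of Broadcast Type-2 in Lemma \ref{lemma:broadcat_types} yields completion in $O(D^2)$ rounds. Adding the $O(D^2)$ rounds of Phase~1 keeps the total within $O(D^2)$.

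The main obstacle I anticipate is congestion management in the pipeline: a node $v$ must multiplex $O(D)$ distinct sketches per child while simultaneously receiving newly-arriving sketches from its own parent, all under the $O(\log n)$ per-round per-edge budget. The clean fix is the level-by-level scheduling above, which guarantees that at any round a given tree edge carries at most one $O(\log n)$-bit chunk of a single sketch, and that by the time a node must forward a sketch from ancestor $u$ it has already received (and buffered) all preceding chunks. A short inductive argument on $\level{v}$ then shows that every $x$ receives $\rSketch{k}{v}{c}$ by round $O(D^2)$, completing the proof.
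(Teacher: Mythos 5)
Your proposal correctly identifies and mirrors the paper's communication plan (compute $\rSketch{k}{a}{c'}$ locally at each internal node $a$ via Observation~\ref{reduced_sketch_step_1}, then downcast so every node on the path below $c'$ receives it), and your pipelining/bandwidth accounting is sound. However, there is a genuine gap at the end: you stop once node $x$ holds $\rSketch{k}{v}{c}$, where $c$ is the child of $v$ on the path from $v$ to $x$, and you treat that as fulfilling the lemma. But the lemma asks $x$ to compute $\rSketch{k}{v}{x}$, and these are different objects whenever $x \neq c$. The reduced sketch $\rSketch{k}{v}{c}$ is built from $\desc{v} \setminus \desc{c}$, while $\rSketch{k}{v}{x}$ is built from the strictly larger set $\desc{v} \setminus \desc{x}$; the former omits all the non-tree edges leaving $\desc{c} \setminus \desc{x}$ and has the wrong $\gamma(\cdot,\cdot)$ meta-values.

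What your proof is missing is the local post-processing step that the paper carries out in Algorithm~\ref{algo:compute-reduced-sketch}. The key observation is that $\desc{v}\setminus\desc{x}$ decomposes as a disjoint union of onion shells $\desc{\alpha(x,j)} \setminus \desc{\alpha(x,j+1)}$ for $j$ ranging from $\level{v}$ to $\level{x}-1$, and after your downcast node $x$ already holds the reduced sketch $\mathcal S_j = \rSketch{k}{\alpha(x,j)}{\alpha(x,j+1)}$ for each such $j$. Node $x$ must therefore graph-union the trees in $\{\mathcal S_j : \level{v} \le j \le \level{x}-1\}$, recompute branching numbers on the merged tree and re-truncate at threshold $k$, and add the $\gamma(\desc{v}\setminus\desc{x},\desc{u})$ counts across the shells to assemble $\rSketch{k}{v}{x}$. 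This is purely local computation and costs no extra rounds, so your $O(D^2)$ bound survives, but without this step the lemma's claim is simply not established. To fix the proof, add this combination phase and justify that the re-truncation is consistent (the analogue of Observation~\ref{obs:branchin_number_increases}: branching numbers only grow when merging, so nothing that survives in the final truncated tree could have been prematurely discarded from a shell).
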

\begin{proof}
	For any internal node $v \in V$, we ensure that for all $c \in \child(v)$ the reduced k-Sketch $\rSketch{k}{v}{c}$ is downcasted to all the nodes in $\desc{c}$.
	This takes $O(D^2)$ rounds. After this step every node $x$ at level $\level{x}$ has the sketch $\rSketch{k}{a}{b}$ for any node $a,b$ at level $i$ and $i+1$ for all $i \in [1,l-1]$ and $a,b \in \ancestor{a}$. Now based on this we will show that node $x$ can compute $\rSketch{k}{v}{x}$ for all $v \in \ancestor{x}$ as per Algorithm     \ref{algo:compute-reduced-sketch}.
	
	Each node $x \in V\setminus r$ has to perform some local computation based on the various reduced sketch received earlier. For $1 \leq i\leq \level{x}-1$ let $\mathcal S_i = \rSketch{k}{a}{b}$ when $a  = \alpha(x,i)$ and $b = \alpha(x,i+1)$ (recall that $\alpha(x,l)$ is the ancestor of node $x$ at level $l$). For some node $v = \alpha(x,l)$ which is the ancestor of node $x$ at some level $l < \level{x}$; to compute $\rSketch{k}{v}{x}$ node $x$ uses $\mathcal S = \cup_{l\leq j\leq\level{x}-1} \mathcal S_j$. Here $\mathcal S$ is basically a set of Sketches. Note that any two sketch in the set $\mathcal S = \cup_{l\leq j\leq\level{x}-1} \mathcal S_j$  are overlapping. That is they have information about disjoint vertex sets. The rest of the steps are exactly same as given in Algorithm \ref{algo:compute_sketch} and are described in detail in Algorithm     \ref{algo:compute-reduced-sketch}
	\begin{algorithm}
		\DontPrintSemicolon
		\KwIn{$\rSketch{k}{a}{b}$ for any node $a,b$ at level $i$ and $i+1$ such that $1 \leq i\leq \level{x}-1$}
		\KwOut{$\rSketch{k}{v}{x}$ for all $v \in \ancestor{x}$}
		\lFor{$1 \leq i\leq \level{x}-1$}{$\mathcal S_i \gets \rSketch{k}{a}{b}$ when $a  = \alpha(x,i)$ and $b = \alpha(x,i+1)$}
		\tcp{recall that $\alpha(x,l)$ is the ancestor of node $x$ at level $l$}
		\For{$1 \leq i\leq \level{x}-1$}{
			$v  = \alpha(x,i)$ \;
			$S \gets \cup_{i\leq j\leq\level{x}-1} \mathcal S_j$\;
			perform graph union of all trees in $\mathcal S$ to form a tree $T$\;
			compute branching number $\bnum[T]{u}\ \forall u \in T$ as per the definition\;
			\For{node $u \in T$}{
				\If{$\bnum[T]{u} > k$ and $u \in \desc{v}$}{
					remove $u$ from $T$
				}
			}
			\For{node $u \in T$}{
				compute $\gamma(\desc{v} \setminus \desc{x},\desc{u})$ by adding appropriately from all $T' \in S$\;
			}
			
			Construct $\rSketch{k}{v}{x}$ using $T$ by including $\angularbracs{u:\eta(u),\parent{u},\gamma(\desc{v} \setminus \desc{x},\desc{u})}$ for all node $u \in T$\;
		}
		\caption{To be run on all node $x \in V\setminus \curly{r}$}
		\label{algo:compute-reduced-sketch}
	\end{algorithm}
\end{proof}
%TEX root = ../../thesis.tex
\subsection{Application of graph sketch}
\label{subsec:application_graph_sketch}

Now we will describe the application of graph sketch to find a min-cut of size $3$ if it exists as given by \CASE{3}, \CASE{6} and \CASE{7}. We prove the same in Lemma \ref{lemma:3-cut-case-3-using-sketch}. Here, \CASE{3} is a direct application of $3$-Sketch; in \CASE{6} we will be required  to move $3$-Sketch in a strategic way and in \CASE{7} reduced $2$-Sketch will be used. We give further details of each of the cases in Lemma \ref{lemma:3-cut-case-3-using-sketch}, \ref{lemma:3-cut-case-6-using-sketch} and \ref{lemma:3-cut-case-7-using-sketch}. 

For any two nodes $v_1,v_2$, let $\LCA_{T}(v_1,v_2)$ be the lowest common ancestor of node $v_1$ and $v_2$ in tree $T$. 

\begin{lemma}
\label{lemma:3-cut-case-3-using-sketch}
For some $v_1,v_2\in V\setminus r$ and a non-tree edge $e$, if $\curly{(\parent{v_1},v_1),(\parent{v_2},v_2),e}$ is a min-cut as given in \CASE{3}, then node $v_1$ can make a decision about the said min-cut using $\Sketch{3}{v_1}$.
\end{lemma}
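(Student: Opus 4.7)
The plan is to use Lemma \ref{lemma:gen_3_cut_2_respect}, which tells us that to certify such a min-cut node $v_1$ only needs three quantities: $\eta(v_1)$, $\eta(v_2)$, and $\gamma(\desc{v_1},\desc{v_2})$. Node $v_1$ already knows $\eta(v_1)$ from Chapter \ref{chapter:4_trsf}, so the task reduces to showing that $v_2$ is included in $\Sketch{3}{v_1}$: by Definition \ref{defn:k_sketch} each node kept in the sketch is packaged with precisely the tuple $\angularbrac{v_2 : \eta(v_2), \parent{v_2}, \gamma(\desc{v_1},\desc{v_2})}$. Once this is established, $v_1$ can iterate over every node $u$ in its sketch, test property $P_2$ of Lemma \ref{lemma:gen_3_cut_2_respect} with $u$ in place of $v_2$, and output a min-cut whenever one of the form in \CASE{3} exists.

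First I would show that $v_2$ lies in the untruncated canonical tree $R(\desc{v_1})$. Applying $P_2$ to the hypothesized min-cut gives $\gamma(\desc{v_1},\desc{v_2}) \in \{\eta(v_2)-1,\eta(v_2)-2\}$; since we assume no min-cut of size $1$ or $2$ exists, $\eta(v_2)\ge 3$, so $\gamma(\desc{v_1},\desc{v_2})\ge 1$. Hence some non-tree edge $(y,z)$ has $y\in\desc{v_1}$ and $z\in\desc{v_2}$. Then $z\in\overline{\mathcal{N}}(\desc{v_1})$, so $\rho(z)$ is one of the paths whose graph-union defines $R(\desc{v_1})$, and $v_2$, being an ancestor of $z$, lies on $\rho(z)$ and therefore in $R(\desc{v_1})$.

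The main obstacle is to argue that $v_2$ survives the truncation at $k=3$, i.e.\ $\bnum[R(\desc{v_1})]{v_2}\le 3$. The key observation is that every root-to-leaf path contributing to $R(\desc{v_1})$ falls into one of three types: (a) it targets a node in $\desc{v_2}$ and therefore already passes through $v_2$, causing no branching above $v_2$; (b) it is $\rho(v_1)$; or (c) it is $\rho(z')$, where $z'$ is the far endpoint of the cut edge $e$, which contributes to $\overline{\mathcal{N}}(\desc{v_1})$ only in the subcase $e\in\delta(\desc{v_1})$ (in the other subcase, $P_2$ forces $\eta(v_1)=1+\gamma(\desc{v_1},\desc{v_2})$, implying $\overline{\mathcal{N}}(\desc{v_1})\subseteq\desc{v_2}$ and the absence of a type-(c) path). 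Thus above $v_2$ the trunk of $R(\desc{v_1})$ is broken off by at most two additional destinations, either as one ternary split at a common ancestor or as two successive binary splits at distinct ancestors; a short case analysis using Definition \ref{defn:branching_number}, taking care of the boundary subcase in which the first branch node coincides with the root of $R(\desc{v_1})$, yields $\bnum[R(\desc{v_1})]{v_2}\le 3$ in every configuration. With $v_2\in\Sketch{3}{v_1}$ established, the required meta-tuple is available at $v_1$ and the decision is immediate.
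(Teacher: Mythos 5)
Your proof takes essentially the same approach as the paper's: reduce to showing $v_2 \in \Sketch{3}{v_1}$, observe that $\overline{\mathcal{N}}(\desc{v_1})$ lies in $\desc{v_2}$ together with (at most) the far endpoint of $e$, and bound $\bnum[R(\desc{v_1})]{v_2}$ by $3$. The only difference is that you make explicit two steps the paper delegates to its figure — the derivation of $\gamma(\desc{v_1},\desc{v_2}) \ge 1$ (hence $v_2 \in R(\desc{v_1})$) from $P_2$ and 3-connectivity, and the boundary case where $\firstBranchNode$ is the root — so your argument is a more rigorous rendering of the same idea.
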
	
\begin{proof}
As per Lemma \ref{lemma:gen_3_cut_2_respect}, we know that node $v_1$ can decide for such a min-cut if it knows three quantities: $\eta(v_1),\eta(v_2)$ and $\gamma(\desc{v_1},\desc{v_2})$. Also, for any node $u,v$, we know that if there is a node $u \in \Sketch{3}{v}$ then the sketch also contains $\eta(u),\gamma(\desc{u},\desc{v})$. And every node $v$ in the network knows $\eta(v)$ from previous section. Thus here to prove this lemma we have to prove that node $v_2 \in \Sketch{3}{v_1}$. Then node $v_1$ can enumerate through all the nodes in $\Sketch{3}{v_1}$ which are not in $\ancestor{v}$ and apply the condition of Lemma \ref{lemma:gen_3_cut_2_respect} to test if there exists such a min-cut. 

\begin{figure}[h]
\centering
\includegraphics{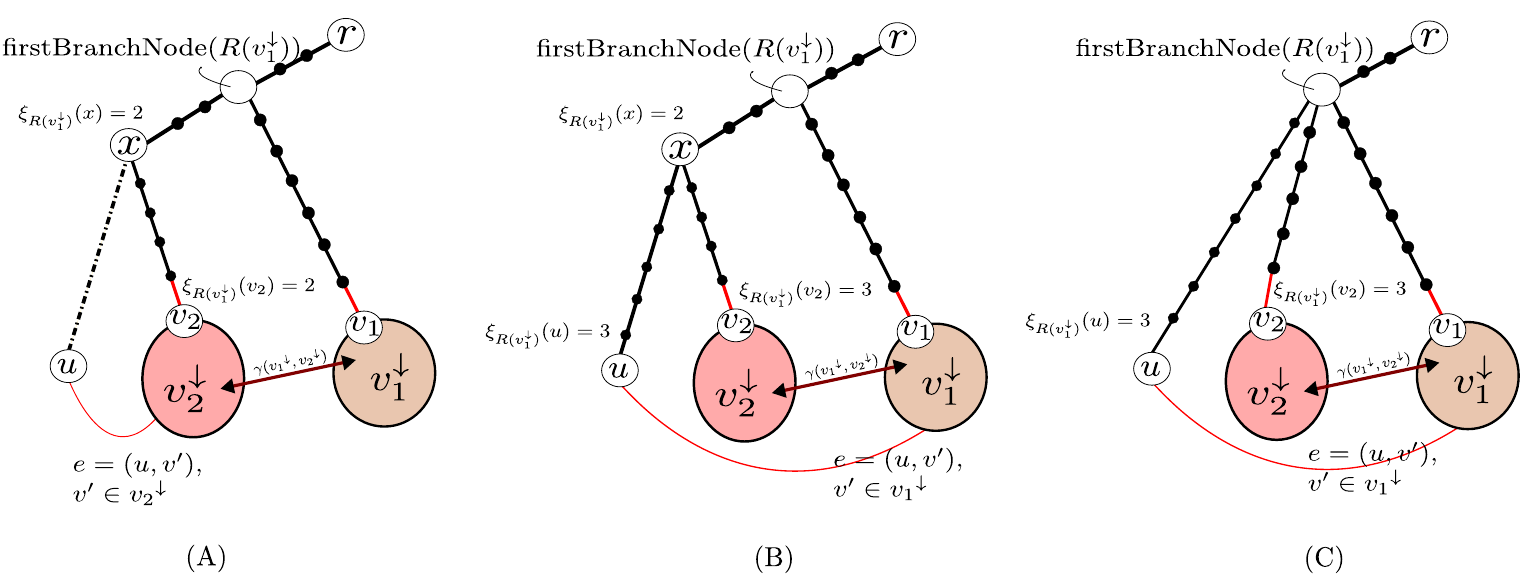}
\caption[Different types of sub-cases which occur when there exists a min-cut of size $3$ as in \CASE{3}]{Illustrating different types of sub-cases which might occur when there exists a min-cut of size $3$ as in \CASE{3}. The figures presented here are tree snippets. In each of the sub-cases we demonstrate the $3$-Sketch as computed by the node $v_1$. Solid black line denote the tree paths in Sketch of $v_1$. Dashed line represent other paths in tree $\tree$. Edges in red are cut edges. Here all the edges that go out of the vertex set $\desc{v_1}$ have their other end points in the vertex set $\desc{v_2}$ barring one non-tree edge $e$ in (B) and (C). The important fact here is that in all the different cases $v_2 \in \Sketch{3}{v_1}$}. 
\label{fig:case_3_lemma_sketch_will_contain_info}
\end{figure}
We will now show that if such a min-cut exists then node $v_2 \in \Sketch{3}{v_1}$. As illustrated in Fig. \ref{fig:case_3_lemma_sketch_will_contain_info} for all the different sub-cases node $v_2 \in \Sketch{3}{v_1}$. In (A) when the other non-tree edge $e$ has one end-point in $\desc{v_2}$ then the branching number of $v_2$ in the sketch-tree $R(\desc{v_1})$ is $\bnum[R(\desc{v_1})]{v_2} = 2$ thus $v_2 \in \Sketch{3}{v_1}$. Both (B) and (C) are similar in terms of the fact that the non-tree cut edge $e$ has the other endpoint in $\desc{v_1}$ but differ in terms of the branching node. Nevertheless here also $\bnum[R(\desc{v_1})]{v_2} = 3$ thus $v_2 \in \Sketch{3}{v_1}$

\end{proof}

\begin{lemma}
\label{lemma:3-cut-case-6-using-sketch}
For some $v_1,v_2,v_3\in V\setminus r$, let $\desc{v_1},\desc{v_2},\desc{v_3}$ be pair wise mutually disjoint. If there exists a min cut as in \CASE{6} such that $\curly{(\parent{v_1},v_1),(\parent{v_2},v_2),(\parent{v_3},v_3)}$ is a min-cut, then it can be found in $O(D^2)$ time.
\end{lemma}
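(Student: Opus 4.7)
The plan is to compute $\Sketch{3}{v}$ for every $v\in V$, which already takes $O(D^2)$ rounds by Lemma \ref{lemma:k-sketch_algorithm}, and then to have each node locally inspect its sketch for candidate triples matching the pattern of \CASE{6}.

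First I would establish the structural claim that whenever a min-cut of \CASE{6} exists, there is some $i\in\{1,2,3\}$ such that both of the remaining $v_j,v_k$ appear in $\Sketch{3}{v_i}$. By Lemma \ref{lemma:gen_3_cut_3_respect}, $|\delta(\desc{v_i})|-1=\gamma(\desc{v_i},\desc{v_j})+\gamma(\desc{v_i},\desc{v_k})$ for each $i$; combined with the chapter-wide assumption that $|\delta(\desc{v_i})|\ge 3$, any two vanishing pairwise $\gamma$'s share a common $v_i$ and would force that $v_i$'s tree edge to be a bridge. Hence at most one pairwise $\gamma$ is zero, and there is always some $v_i$ with non-tree edges reaching into both $\desc{v_j}$ and $\desc{v_k}$. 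The paths in the canonical tree $R_\tree(\desc{v_i})$ induced by those neighbors pass through $v_j$ and $v_k$ and split at most at $\LCA_\tree(v_1,v_2,v_3)$ and possibly at $\LCA_\tree(v_j,v_k)$; a branching-number calculation patterned on the one in Lemma \ref{lemma:3-cut-case-3-using-sketch} gives each of $v_j,v_k$ branching number at most $3$, so both survive the truncation and appear in $\Sketch{3}{v_i}$ together with the recorded meta data $\eta(v_j),\eta(v_k),\gamma(\desc{v_i},\desc{v_j}),\gamma(\desc{v_i},\desc{v_k})$.

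At such a $v_i$, five of the six quantities required by the $P_2$ condition of Lemma \ref{lemma:gen_3_cut_3_respect} are locally available; only $\gamma(\desc{v_j},\desc{v_k})$ is missing. The two $P_2$ equations corresponding to $v_j$ and $v_k$ each express this unknown as a function of known values, so $v_i$ enumerates every pair $(u_j,u_k)\in\Sketch{3}{v_i}$ whose subtrees are pairwise disjoint from each other and from $\desc{v_i}$, directly tests its own $P_2$ equation, and tests that the two derived expressions for the missing $\gamma$ agree. Whenever both checks pass, the triple $(v_i,u_j,u_k)$ is flagged as a candidate.

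The principal obstacle is that these consistency checks are necessary but not sufficient: the true $\gamma(\desc{u_j},\desc{u_k})$ may disagree with the derived value, and $v_i$ cannot observe it directly. I would resolve this with a verification pass in which every candidate is routed from $v_i$ to $u_j$ along the BFS tree, an $O(D)$ traversal. Since $u_k$ appears in $\Sketch{3}{u_j}$ precisely when $\gamma(\desc{u_j},\desc{u_k})>0$, node $u_j$ either reads the true value from its own sketch or, when $u_k$ is absent from it, infers that the value is $0$, and returns a one-bit confirmation along the same path. Candidates produced at distinct nodes can be routed in parallel and those at a single node can be pipelined along a BFS path, so the verification phase fits in $O(D^2)$ rounds; together with the sketch construction this yields the claimed $O(D^2)$ round complexity.
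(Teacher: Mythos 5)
Your structural analysis — at most one of the three pairwise $\gamma$'s can vanish, so some $v_i$ has non-tree edges into both other subtrees, forcing $v_j,v_k\in\Sketch{3}{v_i}$ — matches the paper's case split (paper Fig.~\ref{fig:non_ismorohpic_cases_of_case_6}). The divergence, and where the proposal develops a genuine gap, is in how the missing quantity $\gamma(\desc{v_j},\desc{v_k})$ reaches a decision node.

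Your verification pass relies on the claim that ``$u_k$ appears in $\Sketch{3}{u_j}$ precisely when $\gamma(\desc{u_j},\desc{u_k})>0$,'' and in particular that absence of $u_k$ from $\Sketch{3}{u_j}$ implies $\gamma=0$. This is false in general: membership in $\Sketch{3}{u_j}$ requires not just that some non-tree edge leaves $\desc{u_j}$ and lands in $\desc{u_k}$, but also that $u_k$'s branching number in the canonical tree $R(\desc{u_j})$ is at most $3$ (Definition~\ref{defn:k_sketch}). For arbitrary candidate pairs $(u_j,u_k)$ enumerated at $v_i$ — which by construction have not yet been certified to form a min-cut with $v_i$ — there is no structural reason to bound that branching number, so the verifier $u_j$ can return a spurious ``$\gamma=0$'' confirmation and flag a non-cut as a cut. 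The branching-number guarantee you borrow from Lemma~\ref{lemma:3-cut-case-3-using-sketch} only holds once the edges outside the three tree edges are constrained by the min-cut hypothesis; you cannot invoke it on a candidate you are still trying to verify.

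Separately, the scheduling claim for the verification phase is unjustified. A single node $v_i$ can have $\Theta(D^2)$ candidate pairs, each bound for a distinct $u_j$, and distinct source nodes' verification paths can share tree edges near their LCAs, so ``routed in parallel'' and ``pipelined along a BFS path'' do not immediately give $O(D^2)$ without a contention argument. The paper sidesteps both problems at once by observing that $\gamma(\desc{v_1},\desc{v_2})>0$ guarantees an actual non-tree edge $(v_1',v_2')$ between the two subtrees; after a \emph{Broadcast Type-2} downcast of sketches ($O(D^2)$ rounds) and a single $O(D^2)$-round exchange of ancestor sketches across each non-tree edge, the endpoint $v_1'$ holds both $\Sketch{3}{v_1}$ and $\Sketch{3}{v_2}$ and can read the true $\gamma(\desc{v_2},\desc{v_3})$ off the second sketch — no inference, no global routing. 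You should replace the verification pass with that non-tree-edge exchange.
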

\begin{proof}
For such a min-cut to exists then at least two of $\gamma(\desc{v_1},\desc{v_2}),\gamma(\desc{v_1},\desc{v_3}),\gamma(\desc{v_2},\desc{v_3})$ need to be non-zero otherwise the vertex sets $\desc{v_1} \cup \desc{v_2} \cup \desc{v_3}$ may not form a connected component. WLOG here we may have two non-isomorphic cases as illustrated in Fig. \ref{fig:non_ismorohpic_cases_of_case_6}.	
\begin{figure}[h]
   \centering
    \begin{subfigure}[t]{0.45\textwidth}
	   \centering
	   \includegraphics[height=1.4in]{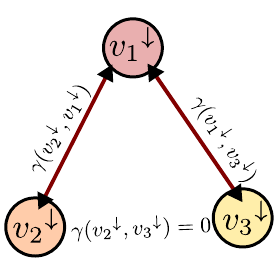}
	   \caption{One of $\gamma(\desc{v_1},\desc{v_2}),\gamma(\desc{v_1},\desc{v_3}),\gamma(\desc{v_2},\desc{v_3})$ is zero. WLOG let $\gamma(\desc{v_2},\desc{v_3}) = 0$}
	   \label{sub-fig:line-graph}
    \end{subfigure}    \hspace{4mm}
    \begin{subfigure}[t]{0.45\textwidth}
		\centering
		\includegraphics[height=1.4in]{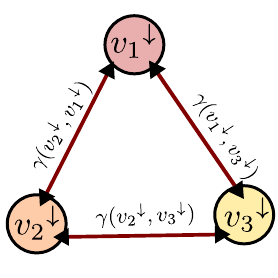}
		\caption{All three of $\gamma(\desc{v_1},\desc{v_2}),\gamma(\desc{v_1},\desc{v_3}),\gamma(\desc{v_2},\desc{v_3})$ are non-zero}
		\label{sub-fig:all-connected}
    \end{subfigure}
    \caption[Two different non-isomorphic sub-cases of a min-cut as given by \CASE{6}]{Two different non-isomorphic sub-cases of a min-cut as in \CASE{6}. Here a circle correspond to a vertex set. And an double arrow line between them indicate that there are some edges which have an end point in each of the vertex set.}
    \label{fig:non_ismorohpic_cases_of_case_6}
\end{figure}
Also, as per Lemma \ref{lemma:gen_3_cut_3_respect} we need $\eta(v_1), \eta(v_2),\eta(v_3),\gamma(\desc{v_1},\desc{v_2}),\gamma(\desc{v_2},\desc{v_3})$ and $\gamma(\desc{v_1},\desc{v_3})$ at at least one node to decide for a min-cut as given by \CASE{6}. 

We will first work with the case in Fig. \ref{sub-fig:line-graph}. Here, $\gamma(\desc{v_2},\desc{v_3}) = 0$. In this sub-case node $v_1$ can make the decision based on $\Sketch{3}{v_1}$. We will prove that if such a min-cut exists then $v_2,v_3 \in \Sketch{3}{v_1}$. We demonstrate the different cases in Figure \ref{fig:case_6_lemma_sketch_will_contain_info}. The three different cases are in terms of the intersection of the paths $\rho(v_1),\rho(v_2),\rho(v_3)$. In all the sub-cases demonstrated in the Fig. \ref{fig:case_6_lemma_sketch_will_contain_info} we can see that the branching number $\bnum{v_1}{v_2}, \bnum{v_1}{v_3} \leq 3$. Thus $v_2,v_3 \in \Sketch{3}{v_1}$. Here, $v_1$ just need to pick pairs of nodes $a,b \in \Sketch{3}{v_1}$ such that $\desc{a},\desc{b},\desc{v_1}$ are pair-wise mutually disjoint and compare $\eta(a),\eta(b),\eta(v_1),\gamma(\desc{a},\desc{v_1})$ and $\gamma(\desc{b},\desc{v_1})$ as per Lemma \ref{lemma:gen_3_cut_3_respect}.

\begin{figure}[h]
\centering
\includegraphics{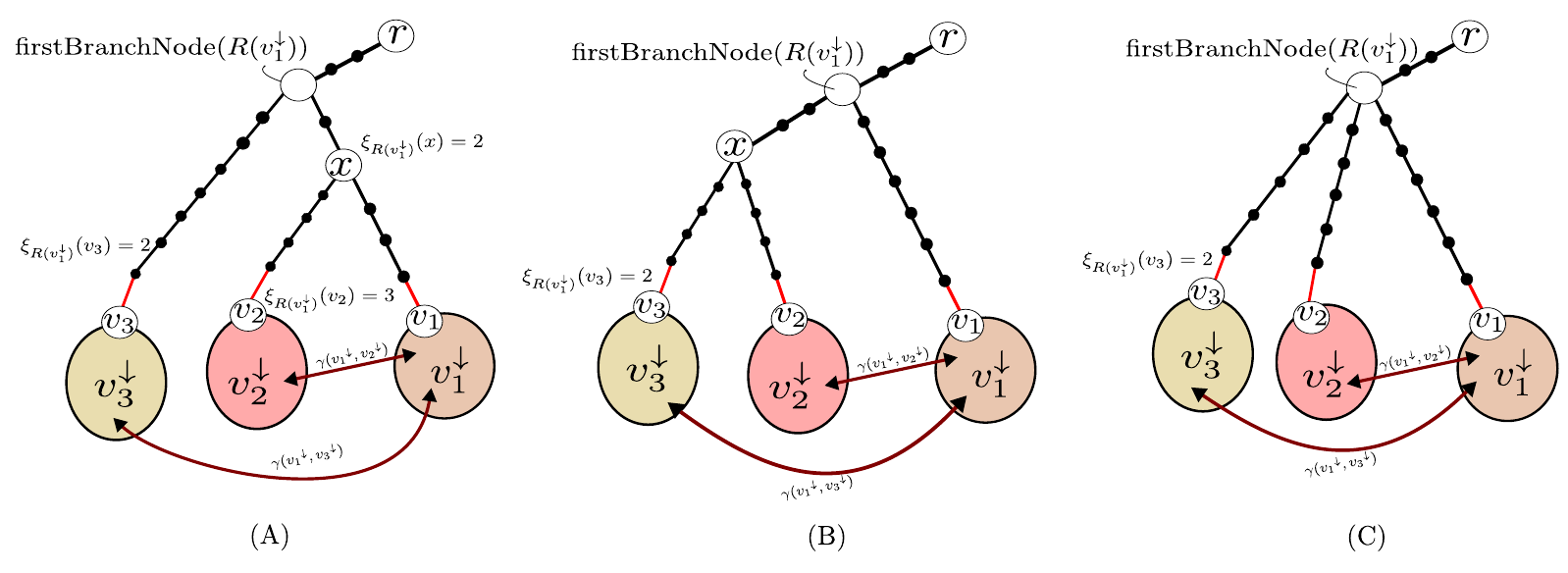}
\caption[3-Sketch of different cases as viewed from node $v$]{Illustrating different types of sub-cases. We demonstrate the $3$-Sketch as per the node $v_1$. Solid black line denote the tree paths in Sketch of $v_1$. Dashed line represent other paths in tree $\tree$. Edges in red are cut edges. The different sub-cases occur based on the intersection of three paths $\rho(v_1),\rho(v_2),\rho(v_3)$. In sub-case (A), $\LCA(v_1,v_3) = \LCA(v_2,v_3) \neq \LCA(v_1,v_2)$. In sub-case (B), $\LCA(v_1,v_2) = \LCA(v_1,v_3) \neq  \LCA(v_2,v_3)$. In sub-case (C), $\LCA(v_1,v_2) = \LCA(v_1,v_3) = \LCA(v_2,v_3)$. }
\label{fig:case_6_lemma_sketch_will_contain_info}
\end{figure}

Now, we will take the case as mentioned in Fig. \ref{sub-fig:all-connected}. By the same argument as above we can say that $v_1,v_2 \in \Sketch{3}{v_3}$ and $v_1,v_3 \in \Sketch{3}{v_2}$. In this case $\gamma(\desc{v_2},\desc{v_3}) \neq 0$ and $\Sketch{3}{v_1}$ does not have $\gamma(\desc{v_2},\desc{v_3})$ which is required  to make a decision about the said min-cut as per Lemma \ref{lemma:gen_3_cut_3_respect}. Thus node $v_1$ cannot make the required decision based on only $\Sketch{3}{v_1}$. 

Amidst, the above mentioned challenge, there is an easy way around. 
Each node $v \in V$ downcasts $\Sketch{3}{v}$ (its 3-Sketch) to all nodes in $\desc{v}$. 
Since 3-Sketch is of size $O(D \log n)$ thus this takes $O(D^2)$ rounds. 
After this step every node $u$ has $\Sketch{3}{v}\ \forall v\in \ancestor{u}$. 
Now each node $u$ sends $\Sketch{3}{a}\ \forall a \in \ancestor{u}$ to its non-tree neighbors. 
Since there are as many as $D$ such sketches thus this takes $O(D^2)$ time. Also, $\gamma(\desc{v_1},\desc{v_2}) > 0$ thus we know that there is at least one node node $v_1' \in \desc{v_1}$ and $v_2' \in \desc{v_2}$ such that $(v_1',v_2')$ is a non-tree edge. 
After the above mentioned steps node $v_1'$ as well as node $v_2'$ will have both $\Sketch{3}{v_1}$ and $\Sketch{3}{v_2}$. 
Now, both ${v_1'}$ and ${v_2'}$ can make the decision about the said min-cut. 
We will discuss steps which may be undertaken at node $v_1'$ for the sane, 
for each $v \in \ancestor{v_1'}$, $v_1'$ locally iterates through all the parallel paths of the $\Sketch{3}{v}$ and picks all possible $x,y$ such that $\desc{x},\desc{y},\desc{v}$ are mutually disjoint. 
Notice that through $\Sketch{3}{v}$, $v_1'$ has $\gamma(\desc{x},\desc{v}),\gamma(\desc{y},\desc{v}),\eta(x),\eta(y)$. Also it has $\eta(v)$ from previous calculations. Now from the sketches received through its non-tree neighbors $v_1'$ looks for $\gamma(\desc{x},\desc{y})$. If $\gamma(\desc{x},\desc{v}),\gamma(\desc{y},\desc{v}),\eta(x),\eta(y)$ and $\gamma(\desc{x},\desc{y})$ then satisfies Lemma \ref{lemma:gen_3_cut_3_respect} then it can make the decision about the required min-cut.
\end{proof}                                                                                          	

Next, we move to \CASE{7}. Here we will use reduced sketch as given in Definition \ref{defn:sketch-substraction}. First we the use the reduced $2$-sketch. 
\begin{obs}
\label{obs:3-cut-case-7-node-in-sketch}
For some $v_1,v_2,v_3\in V\setminus r$, let $\desc{v_3} \subset \desc{v_2}$, $\desc{v_1} \cap \desc{v_2} = \emptyset$ and $\desc{v_1} \cap \desc{v_3} = \emptyset$. If there exists a min cut as in \CASE{7} such that $\curly{(\parent{v_1},v_1),(\parent{v_2},v_2),(\parent{v_3},v_3)}$ is a min-cut then $v_1 \in \rSketch{2}{v_2}{v_3}$
\end{obs}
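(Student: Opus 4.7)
The plan is to verify that $v_1$ survives the truncation defining $\rSketch{2}{v_2}{v_3}$ --- i.e., that $v_1$ belongs to $\trunc(R(\desc{v_2}\setminus\desc{v_3}),2)$. The pivotal structural fact I would establish first is the inclusion $\overline{\mathcal{N}}(\desc{v_2}\setminus\desc{v_3})\subseteq\desc{v_1}$. Since the three named tree edges constitute the entire cut-set induced by $\desc{v_1}\oplus\desc{v_2}\oplus\desc{v_3}$ (Corollary~\ref{main-cor}), and since $\desc{v_3}\subset\desc{v_2}$ while $\desc{v_1}$ is disjoint from $\desc{v_2}$, one side of this cut is $\desc{v_1}\cup(\desc{v_2}\setminus\desc{v_3})$. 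Every non-tree edge must therefore have both endpoints on the same side, so any non-tree neighbour of a vertex in $\desc{v_2}\setminus\desc{v_3}$ that lies outside that set must lie in $\desc{v_1}$.

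Using this, I would first argue (i) that $v_1\in R(\desc{v_2}\setminus\desc{v_3})$. The standing no-1-cut assumption of this chapter gives $\eta(v_1)\geq 2$, so some non-tree edge is incident to $\desc{v_1}$; if every such edge stayed inside $\desc{v_1}$, then $(\parent{v_1},v_1)$ would be a bridge, so at least one such edge must cross to $\desc{v_2}\setminus\desc{v_3}$. Its $\desc{v_1}$-endpoint $a$ therefore lies in $\overline{\mathcal{N}}(\desc{v_2}\setminus\desc{v_3})$, which means the root-to-$a$ path $\rho(a)$ is one of the paths whose graph-union defines $R(\desc{v_2}\setminus\desc{v_3})$; since $v_1\in\ancestor{a}$, this places $v_1$ in the canonical tree. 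Next I would bound (ii) the branching number: by the structural inclusion above, every contributing path terminates either at $v_2$ or inside $\desc{v_1}$, and all paths of the latter type share the common prefix from the root through $v_1$; only $\rho(v_2)$ diverges from this prefix, and it does so at $\LCA(v_1,v_2)$. Hence the walk from the root down to $v_1$ in $R(\desc{v_2}\setminus\desc{v_3})$ encounters exactly one branching node, and all other intermediate ancestors of $v_1$ have degree two. Applying Definition~\ref{defn:branching_number} inductively down $\rho(v_1)$ --- splitting on whether $\LCA(v_1,v_2)$ is the root of $\tree$ --- yields $\bnum[R(\desc{v_2}\setminus\desc{v_3})]{v_1}=2$, so $v_1$ is retained by the truncation and hence lies in $\rSketch{2}{v_2}{v_3}$.

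The main obstacle I anticipate is the degree and branching-number bookkeeping in step~(ii): the piecewise definition has boundary behaviour both at the root and at the first branch node, and it is easy to miscount the degree of $\LCA(v_1,v_2)$ in the canonical tree (one parent edge plus two child edges, collapsing to two edges when it is the root) or to overlook the edge case $\parent{v_1}=\LCA(v_1,v_2)$. The structural step, by contrast, is essentially forced by Corollary~\ref{main-cor} combined with the chapter-wide no-1-cut hypothesis, and I expect it to go through routinely.
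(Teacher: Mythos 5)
Your proposal is correct and follows the same route as the paper's own (terse) proof: establish the structural fact that every non-tree edge leaving $\desc{v_2}\setminus\desc{v_3}$ lands in $\desc{v_1}$, deduce $v_1\in R(\desc{v_2}\setminus\desc{v_3})$, and observe that the branching number of $v_1$ in this canonical tree is $2$. You supply two details the paper only asserts implicitly --- that at least one such non-tree edge actually exists (you derive it from the standing no-1-cut hypothesis via $\eta(v_1)\geq 2$, though noting that $\{(\parent{v_2},v_2),(\parent{v_3},v_3)\}$ not being a 2-cut gives it equally fast), and the explicit induction down $\rho(v_1)$ showing $\bnum[R(\desc{v_2}\setminus\desc{v_3})]{v_1}=2$, correctly splitting on whether $\LCA(v_1,v_2)$ is the root --- but the argument is the same one.
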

\begin{proof}
If such a min-cut exists then all the edges going out of the vertex set $\desc{v_2} \setminus \desc{v_3}$ apart from $(\parent{v_2},v_2)$ and $((\parent{v_3},v_3))$ goes to the vertex set $\desc{v_1}$. Thus the reduced canonical tree $R(\desc{v_2} \setminus \desc{v_3})$ contains node $v_1$. We showcase this scenario in Fig. \ref{lemma:3-cut-case-7-using-sketch}. Also the branching number of $v_1$ will be $\bnum[R(\desc{v_2} \setminus \desc{v_3})]{v_1} = 2$ based on the definition. Thus $v_1 \in  \rSketch{2}{v_2}{v_3}$
\begin{figure}[h]
   \centering
   \includegraphics[scale=1.2]{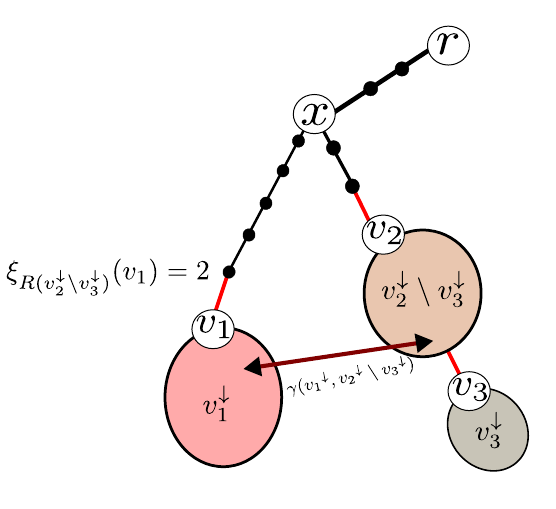}
    \caption[Reduced sketch $\rSketch{2}{v_2}{v_3}$ when for some $v_1,v_2,v_3\in V\setminus r$ as in \CASE{7}]{Reduced sketch $\rSketch{2}{v_2}{v_3}$ when for some $v_1,v_2,v_3\in V\setminus r$, let $\desc{v_3} \subset \desc{v_2}$, $\desc{v_1} \cap \desc{v_2} = \emptyset$ and $\desc{v_1} \cap \desc{v_3} = \emptyset$ and $\curly{(\parent{v_1},v_1),(\parent{v_2},v_2),(\parent{v_3},v_3)}$ is a min-cut}
\end{figure}
\end{proof}
\begin{lemma}
\label{lemma:3-cut-case-7-using-sketch}
For some $v_1,v_2,v_3\in V\setminus r$, let $\desc{v_3} \subset \desc{v_2}$, $\desc{v_1} \cap \desc{v_2} = \emptyset$ and $\desc{v_1} \cap \desc{v_3} = \emptyset$. If there exists a min cut as in \CASE{7} such that $\curly{(\parent{v_1},v_1),(\parent{v_2},v_2),(\parent{v_3},v_3)}$ is a min-cut then it can be found in $O(D^2)$ time.
\end{lemma}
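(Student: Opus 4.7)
The plan is to have $v_3$ use $\rSketch{2}{v_2}{v_3}$ to locate $v_1$ and then verify the 3-cut through a collapsed version of Lemma \ref{lemma:gen_3_cut_3_respect}. By Lemma \ref{lemma:reduced-sketch}, every node $x$ obtains $\rSketch{2}{v}{x}$ for all $v \in \ancestor{x}$ in $O(D^2)$ rounds; in addition, by Lemma \ref{lemma:eta_known_to_all_nodes} and Lemma \ref{lemma:H_v_a_known_to_all_nodes}, each $v_3$ already knows $\eta(v_3)$, $\eta(v_2)$, and $H_{\desc{v_3}}^{v_2} = \gamma(\desc{v_2},\desc{v_3})$ for every ancestor $v_2 \in \ancestor{v_3}$.

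The decisive observation is that in \CASE{7} the cut geometry forces $\gamma(\desc{v_1},\desc{v_3}) = 0$: any non-tree edge joining $\desc{v_1}$ (which lies inside $\desc{v_1} \cup (\desc{v_2}\setminus\desc{v_3})$) to $\desc{v_3}$ (which lies in the opposite side of the cut) would contribute a fourth cut edge, contradicting $|\delta|=3$. Consequently $\alpha := \gamma(\desc{v_2}\setminus\desc{v_3},\desc{v_1}) = \gamma(\desc{v_1},\desc{v_2})$, and the three $P_2$-equations of Lemma \ref{lemma:gen_3_cut_3_respect} collapse to $\eta(v_1)-1 = \alpha$, $\eta(v_2)-1 = \alpha + \gamma(\desc{v_2},\desc{v_3})$, and $\eta(v_3)-1 = \gamma(\desc{v_2},\desc{v_3})$, each expressible solely in quantities held at $v_3$.

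The algorithm then has $v_3$ iterate, for every $v_2 \in \ancestor{v_3}\setminus\curly{r}$, over every $v_1 \in \rSketch{2}{v_2}{v_3}$ with $v_1 \notin \ancestor{v_3}$ (which enforces $\desc{v_1}\cap\desc{v_3} = \emptyset$), read $\eta(v_1)$ and $\alpha$ directly from the sketch's meta data, and declare $\curly{(\parent{v_1},v_1),(\parent{v_2},v_2),(\parent{v_3},v_3)}$ a 3-cut whenever the three collapsed equations hold. By Observation \ref{obs:3-cut-case-7-node-in-sketch}, the required $v_1$ always lies in $\rSketch{2}{v_2}{v_3}$ whenever such a min-cut exists, so it will be examined; since the only communication is the sketch computation, the overall round complexity is $O(D^2)$.

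The main obstacle will be establishing that the three collapsed checks are \emph{sufficient} and not merely necessary: because $\alpha$ only exposes the difference $\gamma(\desc{v_1},\desc{v_2}) - \gamma(\desc{v_1},\desc{v_3})$, a direct appeal to $P_2$ is not automatic. I would discharge this by interpreting the three equations edge-by-edge — equation (1) forces every non-$(\parent{v_1},v_1)$ edge at $\desc{v_1}$ to land in $\desc{v_2}\setminus\desc{v_3}$, which a posteriori yields $\gamma(\desc{v_1},\desc{v_3})=0$ and $\alpha = \gamma(\desc{v_1},\desc{v_2})$; equation (3) forces every non-$(\parent{v_3},v_3)$ edge at $\desc{v_3}$ to leave $\desc{v_2}$; and equation (2) combined with the previous two forces every remaining $\desc{v_2}$-boundary edge into $\desc{v_1}$. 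Together these force $\delta(\desc{v_1}\oplus\desc{v_2}\oplus\desc{v_3}) = \curly{(\parent{v_1},v_1),(\parent{v_2},v_2),(\parent{v_3},v_3)}$, confirming the 3-cut.
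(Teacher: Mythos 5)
Your proposal follows the paper's proof essentially exactly: identify $v_3$ as the deciding node, appeal to Lemma~\ref{lemma:reduced-sketch} for $\rSketch{2}{v_2}{v_3}$ in $O(D^2)$ rounds, use Observation~\ref{obs:3-cut-case-7-node-in-sketch} for completeness, and test the three conditions from Algorithm~\ref{algo:case-7} (your ``collapsed'' equations are exactly the paper's line~\ref{line:algo-case-7-if-block} checks after substituting $x=v_3$, $v=v_2$, $u=v_1$, $H^{v_2}_{\desc{v_3}}=\gamma(\desc{v_2},\desc{v_3})$, and $\alpha=\gamma(\desc{v_1},\desc{v_2}\setminus\desc{v_3})$).

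One useful refinement worth noting: the paper's proof simply asserts that ``by Lemma~\ref{lemma:gen_3_cut_3_respect} the reported min-cut is correct,'' without addressing that the algorithm tests $\gamma(\desc{v_1},\desc{v_2}\setminus\desc{v_3}) = \gamma(\desc{v_1},\desc{v_2})-\gamma(\desc{v_1},\desc{v_3})$ rather than the quantities $\gamma(\desc{v_1},\desc{v_2})$ and $\gamma(\desc{v_1},\desc{v_3})$ appearing in $P_2$. You explicitly notice this gap and close it, observing that the first check $\eta(v_1)-1=\alpha$ forces $\gamma(\desc{v_1},\desc{v_3})=0$ (since $\eta(v_1)-1\ge \gamma(\desc{v_1},\desc{v_2})=\alpha+\gamma(\desc{v_1},\desc{v_3})$), after which the three checks coincide with $P_2$ and Lemma~\ref{lemma:gen_3_cut_3_respect} applies directly. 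In fact you can stop there — the subsequent edge-by-edge tracing of equations (2) and (3) re-derives the backward direction of Lemma~\ref{lemma:gen_3_cut_3_respect}, which you have already invoked — but the argument is correct and supplies rigor the paper leaves implicit.
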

\begin{proof}
Here we run Algorithm \ref{algo:case-7} on each node. 
\begin{algorithm}[h]
\DontPrintSemicolon
\SetKwProg{phase}{}{}{}
\phase{Past Knowledge}{
	$\rSketch{2}{v}{x}$ for all $v \in \ancestor{x}$ using Lemma \ref{lemma:reduced-sketch}\;
	For each $u \in \ancestor{x}$, $x$ knows $H_{\desc{x}}^{u}$ and $\eta(u)$
}
\setcounter{AlgoLine}{0}
\For{$v \in \ancestor{x} \setminus x$}{
	\For{$u \in \rSketch{2}{v}{x}\ \&\ u \notin \ancestor{x}$}{
		\label{line:quantitites_case-7} \If{$\eta(v) - 1 = H_{\desc{x}}^{v} + \gamma(\desc{u},\desc{v}\setminus\desc{x})\ \&$ $\eta(u) - 1 = \gamma(\desc{u},\desc{v}\setminus\desc{x})\ \&$ $\eta(x) - 1 = H_{\desc{x}}^{v}$} 
		{ 		\label{line:algo-case-7-if-block} 
			$\curly{(\parent{x},x),(\parent{v},v),(\parent{u},u)}$ is a min-cut
		}
	}
}
\caption{Algorithm to find an induced cut fo size $3$ as given in \CASE{7} to be run on all node $x \in V\setminus r$}
\label{algo:case-7}
\end{algorithm}
By Lemma \ref{lemma:gen_3_cut_3_respect} the reported min-cut is correct. Further when Algorithm \ref{algo:case-7} is run on node $v_3$ it will be able to make decision about the required min-cut because by Observation \ref{obs:3-cut-case-7-node-in-sketch} if there exists such a min-cut then $v_1 \in \rSketch{2}{v_2}{v_3}$. 

Also, this just requires $O(D^2)$ time because computing reduced sketch $\rSketch{2}{v}{x}$ at any node $x$ for all $v \in \ancestor{x}$ just takes $O(D^2)$ rounds as per Lemma \ref{lemma:reduced-sketch}.
\end{proof}

	%!TEX root = ../../thesis.tex
\section{Layered Algorithm} \label{subsec:layered_technique}
In the last section, we gave a technique to find a min-cut of size 3 as in \CASE{3}, \CASE{6} and \CASE{7} using a special graph-sketch. 
In this section, we will give an algorithm to find the min-cut as given by \CASE{5}. 
A $k$-Sketch cannot be used to find a min-cut as in \CASE{5} because here it is challenging for one of the nodes to know the $6$ quantities as required by Lemma \ref{lemma:gen_3_cut_3_respect} using a $k$-sketch. 
To resolve this challenge, we give layered algorithm where we solve for min-cut iteratively many times.

Recall a min-cut as given by \CASE{5} is as follows: for some node $v_1,v_2,v_3 \in V\setminus r$, $\curly{(\parent{v_1},v_1),(\parent{v_2},v_2),(\parent{v_3},v_3)}$ is a min-cut such that $\desc{v_2} \subset \desc{v_1}$ and $\desc{v_3} \subset \desc{v_1}$ and $\desc{v_2} \cap \desc{v_3} = \emptyset$. For the introduction of layered algorithm, let us assume that such a min-cut exists. Further let $v_1,v_2$ and $v_3$ be these specific nodes.
In Fig. \ref{fig:min-cut_case5}, we show a pictorial representation of such a min-cut.

	\begin{figure}[h]
	\centering
	\includegraphics[height=2.5in]{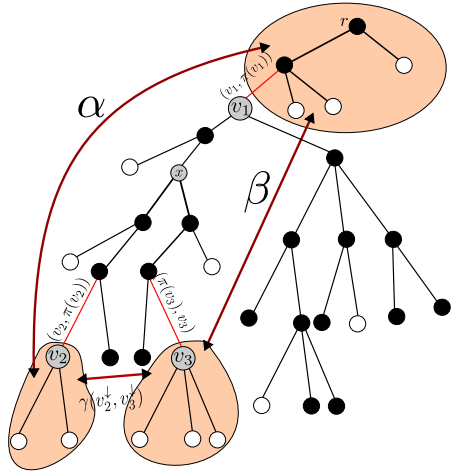}
	\caption{Min-cut of size $3$ as given by \CASE{5}}
	\label{fig:min-cut_case5}
\end{figure}

Similar to the previous section, we will use the characterization Lemma \ref{lemma:gen_3_cut_3_respect} which requires six quantities to make a decision about the min-cut. 
These are $\eta(v_1),\eta(v_2),\eta(v_3),\gamma(\desc{v_1},\desc{v_2}),\gamma(\desc{v_3},\desc{v_2})$ and $\gamma(\desc{v_1},\desc{v_3})$.

In this subsection, we will show that node $x = \LCA(v_2,v_3)$ can find all these six quantities. The challenge here is the fact that some of these quantities have to come from node $v_2$ and/or $v_3$.
Since node $x$ is higher up in the tree $\tree$ than $v_2,v_3$ thus the information from node $v_2,v_3$ when it travels up the tree may face contention from other nodes in $\desc{x}$. From Recall (from chapter \ref{chpater:2_prelim_background}) that convergecast is such a technique to send data to nodes which are higher up in the tree from nodes at a lower level. For convergecast to be efficient here we need to couple it with a contention resolution mechanism. The idea is as follows:
\begin{itemize}
	\item We calculate min-cuts of size one and two in sub-graphs \emph{pivoted} at nodes at different level in the tree $\tree$ (details are deferred to Section \ref{subsec:defn_layered_algo})
	\item After the above step, each node computes its \emph{one-cut detail} and \emph{two-cut detail} (definitions are deferred to Section \ref{sec:appl_layered_algorithm})
	\item Based on the above computation, if a min-cut exists as given by $\CASE{5}$, then node $v_2,v_3$ can designate themselves as special nodes using Lemma \ref{lemma:case_5_layered_algo_characterization_A}. Thus, the information from node $v_2$ and node $v_3$ could reach $LCA(v_2,v_3)$ efficiently.
\end{itemize}

\subsection{Definition of Layered Algorithm} \label{subsec:defn_layered_algo}
For any node $v \in V\setminus r$, let $\edgesatvertex{v}$ be the set of edges whose both endpoints are in the vertex set $v^\downarrow$.  Let subgraph \textit{pivoted} at any vertex $v\neq r$ be $\graphatvertex{v} \triangleq (v^\downarrow \cup \parent v,\edgesatvertex{v} \cup (\parent v,v))$. Note that this definition is different from the subgraph rooted at $v$ because here we add the extra edge $(\parent{v},v)$ and vertex $\pi(v)$.

\emph{Layered Algorithm} is a special type of algorithm where we will solve for induced-cut of size one and two in a layered fashion. 
As usual, we will work with the fixed BFS tree $\tree$. Here we will solve for induced-cut of size $1$ and $2$ repeatedly for $Depth(\tree) = D$ times. 
In the first iteration, our algorithm solves for these induced cuts in the whole graph $G$. 
Then it solves for the min-cuts of size $1$ and $2$ in all the sub-graph pivoted at nodes at level $1$; that is in all sub-graphs in the set $\curly{G_v \mid \level{v} = 1}$, 
subsequently in the sub-graphs for all nodes pivoted at level $2$ and so on until level $D$. 
This is the \emph{Layered Algorithm for min-cut}. 
We will showcase the utility of this algorithm later. 
In Observation \ref{obs:layered_algorithm_run_time}, we will argue that the layered algorithm for min-cut takes $O(D^2)$ rounds.

\begin{obs}
	\label{obs:layered_algorithm_run_time}
	The layered algorithm for min-cut runs in $O(D^2)$ rounds.
\end{obs}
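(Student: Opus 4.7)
The plan is to bound the round complexity by the product of two factors: (i) the number of layers processed, which is $O(D)$, and (ii) the cost of each layer, which I will argue is $O(D)$. The overall structure is sequential across layers but fully parallel within a layer, so the key technical point will be justifying the parallelism in the \CONGEST model.

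For the per-layer cost, I would first observe the crucial edge-disjointness property: for any fixed level $\ell$, the subgraphs $\{\graphatvertex{v} : \level{v} = \ell\}$ are pairwise edge-disjoint. Indeed, for two distinct nodes $v_1, v_2$ at level $\ell$, the descendant sets $\desc{v_1}$ and $\desc{v_2}$ are disjoint, so $\edgesatvertex{v_1} \cap \edgesatvertex{v_2} = \emptyset$; moreover, the pivot edges $(\parent{v_1}, v_1)$ and $(\parent{v_2}, v_2)$ are distinct tree edges. Nodes at level $\ell - 1$ may serve as parents to several level-$\ell$ pivots, but since \CONGEST bandwidth is charged per edge (not per vertex), this shared-node situation creates no contention. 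Hence, the parallel executions at layer $\ell$ fit within the $O(\log n)$-bit per-edge budget.

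Next, within a single subgraph $\graphatvertex{v}$, the algorithms of Chapter~\ref{chapter:4_trsf} for min-cuts of size $1$ and $2$ are driven by the depth of the spanning tree used. Restricting $\tree$ to $\graphatvertex{v}$ (namely, the subtree rooted at $v$ together with the pivot edge) yields a spanning tree of $\graphatvertex{v}$ of depth at most $\text{Depth}(\tree) - \level{v} + 1 = O(D)$. By Theorem~\ref{thm:min-cut-size-1} and the $O(D)$-round bound for $2$-cuts, each such subgraph computation finishes in $O(D)$ rounds; running them all in parallel at layer $\ell$ therefore costs $O(D)$ rounds total.

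Finally, summing over the $D+1$ layers (one per level of $\tree$), the layered algorithm requires $O(D) \cdot O(D) = O(D^2)$ rounds, yielding the claim. The main subtlety I expect is cleanly verifying the edge-disjointness property and using it to assert that the per-edge congestion incurred by simultaneous executions across all level-$\ell$ pivots is $O(1)$; the remainder of the argument is a direct application of the bounds from the previous chapter.
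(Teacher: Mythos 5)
Your proposal matches the paper's proof: both rely on the edge-disjointness of the pivoted subgraphs at a fixed level (so the size-$1$ and size-$2$ min-cut algorithms from Chapter~\ref{chapter:4_trsf} run in parallel within a layer at cost $O(D)$ per layer), and then multiply by the $O(D)$ levels to obtain $O(D^2)$. You merely spell out two details the paper leaves implicit — that shared parent vertices cause no per-edge contention in \CONGEST, and that the restricted spanning tree of each $\graphatvertex{v}$ has depth $O(D)$ — so the approach is the same.
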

\begin{proof}
	From Theorem \ref{thm:min-cut-size-1} and Lemma \ref{lemma:induced-cut-2-cut-2-respects}, we know that cuts of size 1,2 can be found in $O(D)$ rounds. 
	For all graphs $\curly{G_v \mid \level{v} = i}$ pivoted at any level $1 \leq i \leq D$ the algorithm to compute min-cut of size $1,2$ can be run independently because none of the sub-graphs in the set $\curly{G_v \mid \level{v} = i}$ share an edge. Thus it takes $O(D)$ round to run the min-cut algorithm in all such graphs. Now to run for all graphs pivoted at nodes at all levels it takes $O(D^2)$ rounds.
\end{proof}

We will now define the notation for the sequence of subgraphs. This is a short hand notation to say that a property is satisfied for a set of pivoted sub-graphs. Let $r\rightarrow w_1 \rightarrow w_2 \rightarrow \cdots \rightarrow w_{t}$ be a path in the BFS tree. Then we know that $\graphatvertex{w_1}, \graphatvertex{w_2}, \ldots, \graphatvertex{w_{t}}$ are subgraphs of $G$ as defined earlier. Let $i,j \in [1,t]$ and $i\leq j$, then let  $\subgraphsequence{w_i}{w_j}$ be a set of subgraphs such that $\subgraphsequence{w_i}{w_j} \triangleq \curly{G_{w_h}\mid i\leq h \leq j}$. We say a property is true for $\subgraphsequence{w_i}{w_j}$ if it is true for all the subgraphs in the set.

\begin{obs}
	\label{obs:layered_algor_graph_sequence_min_cut}
	For some nodes $x$ and $v$, let $x \in \desc{v}$. If $(\parent{x},x)$ is a min-cut for the sub-graph $G_v$ then $(\parent{x},x)$ is a min-cut for all sub-graphs in $\subgraphsequence{x}{v}$. Similarly, let $a,b$ be two nodes. Let node $z$ be the LCA of node $a,b$ in $tree$ and $v$ be some node such that ${v} \in \ancestor{z}$. If $\curly{(\parent{a},a),(\parent{b},b)}$ is a min cut for the sub-graph $G_v$ then it a min-cut for all the sub-graphs in $\subgraphsequence{z}{v}$.
\end{obs}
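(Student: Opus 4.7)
The plan is to exploit the subgraph containment $\graphatvertex{w}\subseteq\graphatvertex{v}$ for every $w$ on the tree path from the specified descendant to $v$, and then to argue that the identified cut, viewed inside the smaller $\graphatvertex{w}$, is still exactly the set of edges crossing the corresponding partition.

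First I would verify the containment. For any such $w$ we have $\desc{w}\subseteq\desc{v}$, and either $w=v$ (so $\parent{w}=\parent{v}$) or $w\neq v$ (so $\parent{w}\in\desc{v}$); in both cases the vertex set $\desc{w}\cup\curly{\parent{w}}$ of $\graphatvertex{w}$ sits inside the vertex set $\desc{v}\cup\curly{\parent{v}}$ of $\graphatvertex{v}$. Every edge of $\edgesatvertex{w}$ has both endpoints in $\desc{w}\subseteq\desc{v}$ and hence lies in $\edgesatvertex{v}$, while the distinguished edge $(\parent{w},w)$ either equals $(\parent{v},v)$ or lies in $\edgesatvertex{v}$. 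Therefore $\graphatvertex{w}$ is a subgraph of $\graphatvertex{v}$.

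For Part~1, the hypothesis that $(\parent{x},x)$ is a min-cut of $\graphatvertex{v}$ means it is the unique edge of $\graphatvertex{v}$ separating $\desc{x}$ from the rest. For any $w$ on the path from $x$ to $v$ we have $\desc{x}\subseteq\desc{w}$, so $\desc{x}$ determines a partition of $V(\graphatvertex{w})$ whose two sides are non-empty (containing $x$ and $\parent{w}$ respectively). Because $\graphatvertex{w}\subseteq\graphatvertex{v}$, any $\graphatvertex{w}$-edge crossing this partition is also a $\graphatvertex{v}$-edge crossing the larger one, and must therefore equal $(\parent{x},x)$. Thus $(\parent{x},x)$ is a bridge of $\graphatvertex{w}$, and so a min-cut of size one.

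For Part~2, let $S=\desc{a}\oplus\desc{b}$; by hypothesis $\curly{(\parent{a},a),(\parent{b},b)}$ is precisely the set of $\graphatvertex{v}$-edges crossing $(S,V(\graphatvertex{v})\setminus S)$. Since $z\in\desc{w}$, we have $\desc{a},\desc{b}\subseteq\desc{z}\subseteq\desc{w}$, so $S\subseteq V(\graphatvertex{w})$ and both $(\parent{a},a),(\parent{b},b)\in E(\graphatvertex{w})$ (by the same case analysis as in the containment step). Once more the $\graphatvertex{w}$-edges crossing $(S,V(\graphatvertex{w})\setminus S)$ form a subset of those of $\graphatvertex{v}$, which is $\curly{(\parent{a},a),(\parent{b},b)}$, and both of these edges still cross in $\graphatvertex{w}$. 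So exactly these two edges form a cut of $\graphatvertex{w}$, inheriting its min-cut status from $\graphatvertex{v}$.

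The only delicate point I anticipate is the boundary bookkeeping where one of $x,a,b$ coincides with $w$ (so that the corresponding cut edge becomes the distinguished edge $(\parent{w},w)$ of $\graphatvertex{w}$ rather than an edge of $\edgesatvertex{w}$), together with the trivial case $w=v$; neither changes the argument but both need to be checked when appealing to $E(\graphatvertex{w})\subseteq E(\graphatvertex{v})$.
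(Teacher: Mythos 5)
The paper states this as an unproved observation, so there is no paper proof to compare against; judged on its own, your argument is correct and essentially complete. The subgraph-containment plus cut-restriction route is the natural one: you establish $\graphatvertex{w}\subseteq\graphatvertex{v}$, identify the inducing partition ($\desc{x}$ in Part~1, $\desc{a}\oplus\desc{b}$ in Part~2) inside the smaller graph, observe that any crossing edge in $\graphatvertex{w}$ is also a crossing edge in $\graphatvertex{v}$, and verify that the tree edges in question are still present in $\graphatvertex{w}$ and still cross. The boundary cases you flag ($x=w$, or $a$ or $b$ equal to $w$, which forces the cut edge to be the distinguished edge $(\parent{w},w)$, and $w=v$) are indeed the only bookkeeping to check, and you handle them correctly.

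One small imprecision worth noting, which actually originates in the observation's wording rather than in your argument: in any pivoted subgraph $\graphatvertex{v}$ the distinguished edge $(\parent{v},v)$ is always a bridge, so the literal min-cut of $\graphatvertex{v}$ always has size~$1$; the term ``min-cut'' in the statement (and in the paper's later uses, e.g., Lemma~\ref{lemma:case_5_layered_algo_characterization_A} parts~(ii) and~(iv), which speak of ``1-cuts'' and ``2-cuts'') should be read as ``induced cut of that size.'' Your final sentence in Part~2 about ``inheriting min-cut status'' is therefore a mild overstatement --- the smaller graph could in principle acquire a new bridge --- but the substantive claim you actually prove, that the two edges form precisely the cut-set of the same partition restricted to $\graphatvertex{w}$, is exactly what the observation is used for downstream.
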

%\begin{proof}
%\ToDoY{Mohit}{write proof}
%\end{proof}

\subsection{Application of layered algorithm}
\label{sec:appl_layered_algorithm}
We will now showcase the application of the above-mentioned layered algorithm in finding a min-cut as given by \CASE{5}. First, we will give a simple process in Algorithm \ref{algo:1-2-cut-detail} to be run on individual nodes after the run of the \emph{Layered Algorithm for min-cut}. For each node $a$, this collects two quantities 1-cut detail $\mathcal D^{1}(a)$ and 2-cut detail $\mathcal D^{2}(a)$  which relevant information. After this information is collected by each node $a$ it will be sent upwards via a convergecast technique described later. 

Further, we will characterize \CASE{5} in  Lemma \ref{lemma:case_5_layered_algo_characterization_A}. Recall that in \CASE{5}, for some node $v_1,v_2,v_3 \in V\setminus r$,  $\curly{(\parent{v_1},v_1),(\parent{v_2},v_2),(\parent{v_3},v_3)}$ is a min-cut such that $\desc{v_2} \subset \desc{v_1}$ and $\desc{v_3} \subset \desc{v_1}$ and $\desc{v_2} \cap \desc{v_3} = \emptyset$. This will help us prove that one of the information from $\curly{\mathcal D^2\paren{v_2},\mathcal D^2\paren{v_3},\curly{\mathcal D^1\paren{v_3},\mathcal D^1\paren{v_2}}}$ will be received by $\LCA(v_2,v_3)$ after the convergecast algorithm. 
And this will be enough to make a decision about the min-cut based on the characterization given in Lemma \ref{lemma:gen_3_cut_3_respect}.

\begin{algorithm}
	\DontPrintSemicolon
	\SetKwFor{Forp}{for}{parallely}{endfor}
	\SetKwFor{Forw}{for}{wait}{endfor}
	\SetKwProg{phase}{}{}{}
	\phase{computing 1-cut detail}{
		$v \gets$ closest node to $r$ such that $(\parent{a},a)$ is a min-cut in $G_v$\;
		$\mathcal D^1(a) =  \angularbrac{node:a,parent:\parent{a},eta:\eta(a),pivotnode:v,numedges:H_{\desc{a}}^v,pivotnodelevel:\level{v}}$
	} \setcounter{AlgoLine}{0}
	\phase{computing 2-cut detail}{
		$vSet \gets \{u \mid \exists \text{ node } x\ s.t.\ \desc{x} \cap \desc{a} = \emptyset;\ \gamma(\desc{x},\desc{a}) > 0;\ \level{a} \geq \level{x};\curly{(\parent{a},a),(\parent{x},x)}$ $\text{ is a induced cut in }G_u\}$ \;
		\lIf{ $vSet = \emptyset$
		}{$\mathcal D^2(a) = \phi$}
		\Else{    
			$v \gets \underset{u \in vSet}{argmin}\ \level{u}$\;
			
			$bSet \gets \{x \mid \desc{x} \cap \desc{a} = \emptyset;\ \gamma(\desc{x},\desc{a}) > 0;\ \level{a} \geq \level{x};\curly{(\parent{a},a),(\parent{x},x)}$
			\quad \quad $\text{ is a induced cut in }G_v\}$ \;
			$b \gets\underset{x \in bSet}{argmin}\ \level{x}$\;
			$z \gets \LCA\paren{a,b}$ \;
			$\mathcal D^2(a) \triangleq \angularbrac{node1:a,parent1:\parent{a},eta1:\eta(a),node1PivotOutEdges: H^{v}_{\desc{a}}, node2:b,parent2:\parent{b},eta2:\eta(b),node2PivotOutEdges: H^{v}_{\desc{b}},betweenedges:\gamma(\desc{a},\desc{b}),lca:z,lcalevel:\level{z},pivotnode:v,pivotenodelevel:\level{v}}$
		}

		%\If{no $b$ exists such that $\curly{(\parent{a},a),(\parent{b},b)}$ is a min-cut in $G_v$ for some $v \in \ancestor{a}$}{
		%    $\mathcal D^2(a) = \phi$
		%}\Else{
		%    $v \gets$ closest node to $r$ such that $\curly{(\parent{a},a),(\parent{x},x)}$ is a induced cut in $G_v$ for some node $x$ such that $\desc{x} \cap \desc{a} = \emptyset$ \;
		%    $b \gets$ closest node to $r$ such that $\curly{(\parent{a},a),(\parent{b},b)}$ is a induced cut in $G_v$ and $\desc{a} \cap \desc{b} = \emptyset$ \;
		%    $\mathcal D^2(a) \triangleq \angularbrac{a,\parent{a},\eta(a),b,\parent{b},\eta(b),\gamma(\desc{a},\desc{b}),x,\level{x},v,H_{\desc{a}}^v,\level{v}}$
		%    }
	}
	\caption{1-cut details and 2-cut details (to be run after layered algorithm for min-cut on all nodes $a$)}
	\label{algo:1-2-cut-detail}
\end{algorithm}

\begin{obs}
	For each node $x$, $\mathcal D^1(x)$ (1-cut detail) and $\mathcal D^2(x)$ (2-cut detail) can be computed in $O(D^2)$ time as given in Algorithm \ref{algo:1-2-cut-detail}
\end{obs}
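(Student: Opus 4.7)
The plan is to show that once the layered algorithm for min-cut has run (Observation \ref{obs:layered_algorithm_run_time}), no further communication is needed: each node $a$ can assemble $\mathcal{D}^1(a)$ and $\mathcal{D}^2(a)$ purely from data it has already accumulated locally. Since the layered algorithm itself fits in $O(D^2)$ rounds, the total bound follows.

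First, I would inventory the information available at $a$ after the layered algorithm. For every ancestor $v \in \ancestor{a}$, both the 1-cut routine (Theorem \ref{thm:min-cut-size-1}, based on $\eta_{G_v}$) and the 2-cut routine (Lemma \ref{lemma:induced-cut-2-cut-2-respects}, based on the tree-restricted semigroup function $\zeta$) have been executed inside the pivoted subgraph $\graphatvertex{v}$. By the correctness of those routines, whenever $(\parent{a},a)$ is a 1-cut in $\graphatvertex{v}$ node $a$ is informed; whenever $\{(\parent{a},a),(\parent{b},b)\}$ is a 2-cut in $\graphatvertex{v}$ with $\level{a}\geq\level{b}$, node $a$ receives the identifier of $b$ together with $\parent{b}$, $\eta(b)$, and $\gamma(\desc{a},\desc{b})$, bundled either inside a tuple $Z_{\desc{a}}^{u}$ (mutually disjoint case) or inside the ancestor value $H_{\desc{a}}^{b}$ (nested case). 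The quantities $\eta(a)$ and $H_{\desc{a}}^{v}$ for all $v\in\ancestor{a}$ are available by Lemmas \ref{lemma:eta_known_to_all_nodes} and \ref{lemma:H_v_a_known_to_all_nodes}; the levels of all ancestors are known from the BFS construction.

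Computing $\mathcal{D}^1(a)$ is then an internal scan: $a$ walks through its at most $D$ ancestors, flags each $v$ for which $(\parent{a},a)$ was reported as a 1-cut in $\graphatvertex{v}$, selects the one closest to the root, and packages the already-known fields $\parent{a}$, $\eta(a)$, $H_{\desc{a}}^{v}$, and $\level{v}$. Computing $\mathcal{D}^2(a)$ is analogous: the algorithm enumerates the candidate pivots $v$ for which some 2-cut $\{(\parent{a},a),(\parent{b},b)\}$ was detected in $\graphatvertex{v}$ (with the partner set $bSet$ read directly off the stored $Z$-tuples and $H$-values), chooses the smallest-level $v$ and then the smallest-level $b$, and fills in the eight numeric/identifier fields from the stored data together with $\eta(b)$, $\parent{b}$, $H_{\desc{b}}^{v}$, and $\gamma(\desc{a},\desc{b})$.

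The one field that is not already stored in the obvious way is $z=\LCA(a,b)$ in the mutually-disjoint case. The main obstacle is therefore to certify that $\level{z}$ and the identifier of $z$ can be determined at $a$ without extra rounds. This is obtained by piggybacking on the $\zeta$ convergecast: during the pre-processing of Algorithm \ref{algo:preprocessing-cut-2-respects-tree} each non-tree neighbour of a node already exchanges $\angularbrac{\level{u},\eta(u),u}$ for every ancestor $u$, so the deepest node on $a$'s side of a candidate non-tree edge can compute, and thereafter forward up through the $\odot$-aggregation, a single additional $O(\log n)$-bit field recording $\LCA$ of the two endpoints. Because $\odot$ aggregates only tuples that agree on $(w,\parent{w},\eta(w))$, the $\LCA$ field aggregates consistently too (the minimum of two equal values), so it fits inside the tree-restricted semigroup framework without increasing either message size or round count. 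With this augmentation the entire post-processing at $a$ is local, and the layered algorithm remains the only communication cost, giving the claimed $O(D^2)$ bound.
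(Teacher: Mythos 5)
Your proposal follows the same overall strategy as the paper: run the layered algorithm once in $O(D^2)$ rounds, then argue that both cut-details are assembled by purely local post-processing at each node. The paper's own proof is terse and essentially asserts this without examining the individual fields of $\mathcal D^1,\mathcal D^2$, so the fact that you inventory the available data is a reasonable refinement. You are also right to flag $z=\LCA(a,b)$ as the one field whose availability is not obvious — the paper does not address it.

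However, your proposed fix (augmenting the semigroup $\mathcal Z$ with an extra $\LCA$ coordinate and threading it through the $\odot$-aggregation) is an unnecessary modification of the algorithm. The LCA can be read off directly from the layered algorithm's own output: for the partner $b$ selected by node $a$, the 2-cut $\curly{(\parent a,a),(\parent b,b)}$ lies inside a pivoted subgraph $G_u$ precisely when both $a,b\in\desc u$, and the deepest such $u$ on $a$'s ancestor path is exactly $\LCA(a,b)$. By Observation~\ref{obs:layered_algor_graph_sequence_min_cut}, once found, the 2-cut persists upward, so $a$ simply records the deepest pivot at which the 2-cut $\curly{(\parent a,a),(\parent b,b)}$ was detected — no extra fields or message-size budget are needed. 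A second small gap you should address rather than gloss over is the field $H^{v}_{\desc b}$ (\texttt{node2PivotOutEdges}): node $a$ never receives this directly, but it is derivable locally, since $\curly{(\parent a,a),(\parent b,b)}$ being a 2-cut in $G_v$ forces $H^{v}_{\desc b}=\eta(b)-1-\gamma(\desc a,\desc b)$, both summands of which node $a$ already possesses from the $Z$-tuple. With those two replacements your argument matches the paper's intent and closes the gaps the paper leaves open, without changing the algorithm.
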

\begin{proof}
	In Observation \ref{obs:layered_algorithm_run_time}, we saw that the layered algorithm for min-cut runs in $O(D^2)$ time. We claim that after  the layered algorithm for min-cut is executed we have all the information to compute both 1-cut detail and 2-cut detail. For 1-cut detail of a node $a$ we require $\eta(a)$ which node $a$ knows from the first section. Further, we require node $v$ closest to the root such that the edge $(\parent{a},a)$ persists as a min-cut in the sub-graph $G_v$. This node can be found after layered algorithm for min-cut is run. Also $H_{\desc{a}}^{v}$ is available with node $a$ from Section \ref{subsection:size-of-tree-cuts}.
	
	Similarly for 2-cut detail. Here the layered algorithm uses Lemma \ref{lemma:2-cut-2-respect-disjoint}, which takes $O(D)$ rounds. At the end of the layered algorithm for the min-cut every node $a$ knows all the required induced-cut of size $2$ that it participates in subgraph $G_v$ for all $v \in \ancestor{a}$. Based on that, it can calculate 2-cut detail.
	%\ToDoY{mohit}{in the argument write down why it can find the lca $z$ using Observation also $H_{\desc{b}}^{v}$ \ref{obs:layered_algor_graph_sequence_min_cut}}
\end{proof}
Now, we will give characterization of $\CASE{5}$. This characterization will help us to desginate node $v_2,v_3$ as special nodes in $\desc{x}$.

\begin{lemma}
	\label{lemma:case_5_layered_algo_characterization_A}
	Let the graph $G$ be 3-connected (there are no min-cuts of size $1$ and $2$ in $G$). Let $v_1,v_2,v_3 \in V\setminus \curly{r}$, $\desc{v_2} \subset \desc{v_1}$ and $\desc{v_3} \subset \desc{v_1}$ and $\desc{v_2} \cap \desc{v_3} = \emptyset$, let $\curly{(\parent{v_1},v_1),(\parent{v_2},v_2),(\parent{v_3},v_3)}$ be a min-cut (as in \CASE{5}). Then the following statements are true
	\begin{enumerate}[i)]
		\item Let $x \in \desc{v_1} \setminus \paren{\desc{v_2} \cup \desc{v_3}}$ and  $x$ is not on either of the path $\rho(v_2),\rho(v_3)$  then the edge $(\parent{x},x)$ is not a min-cut in $G_{v_1}$ \label{lemma:case_5_layered_algo_characterization_A_subpt_1}
		\item when $\gamma(\desc{v_2},\desc{v_3}) = 0$, $(\parent{v_2},v_2)$ and $(\parent{v_3},v_3)$ are 1-cuts for sub-graph in $\subgraphsequence{v_2}{v_1}$ and $\subgraphsequence{v_3}{v_1}$ respectively  \label{lemma:case_5_layered_algo_characterization_A_subpt_3}
		\item Let $x \in \desc{v_1} \setminus \paren{\desc{v_2} \cup \desc{v_3}}$ be a node and let $x$ not be on either of the two paths $\rho(v_2),\rho(v_3)$ and also let $\level{x} \in \curly{\level{v_2},\level{v_3}}$. When $\gamma(\desc{v_2},\desc{v_3}) > 0$, then there does not exists a node $y$ such that $\desc{x} \cap \desc{y} = \emptyset$, $\gamma(\desc{x},\desc{y}) > 0$ and the edge set $\curly{(\parent{x},x),(\parent{y},y)}$ is a induced cut of size $2$ in $G_{v_1}$ \label{lemma:case_5_layered_algo_characterization_A_subpt_2}
		\item Let node $z$ be the LCA of $v_2,v_3$. When $\gamma(\desc{v_2},\desc{v_3}) > 0$ then $\curly{(\parent{v_2},v_2),(\parent{v_3},v_3)}$ is a 2-cut for sub-graph sequence $\subgraphsequence{z}{v_1}$  \label{lemma:case_5_layered_algo_characterization_A_subpt_4}
		\item  When $\gamma(\desc{v_2},\desc{v_3}) > 0$ then there does not exists a node $x$ on the path $\rho(v_2)$ such that $\level{x} < \level{v_2}$, $\desc{x} \cap \desc{v_3} = \emptyset$ and $\curly{(\parent{x},x),(\parent{v_3},v_3)}$ is an induced cut of $G_{v_1}$ \label{lemma:case_5_layered_algo_characterization_A_subpt_5}
	\end{enumerate}
\end{lemma}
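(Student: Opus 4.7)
The overall plan is to exploit the structure of the assumed $3$-min-cut: because $\desc{v_2}, \desc{v_3} \subset \desc{v_1}$ are disjoint, Corollary \ref{main-cor} gives $\delta_G(A) = \{(\parent{v_1},v_1),(\parent{v_2},v_2),(\parent{v_3},v_3)\}$ for $A = \desc{v_1} \setminus (\desc{v_2} \cup \desc{v_3})$, and the hypothesis that $G$ has no $1$- or $2$-cuts will turn every bad configuration into a contradiction. I will settle (i), (ii), (iv) by direct edge accounting and (iii), (v) by cut-space XOR.

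For (i), the hypotheses on $x$ force $\desc{x} \subseteq A$ (otherwise $x$ would either lie in $\desc{v_2} \cup \desc{v_3}$ or sit on $\rho(v_2)$ or $\rho(v_3)$), so $\delta_G(\desc{x})$ stays inside $\desc{v_1}$ and coincides with $\delta_{G_{v_1}}(\desc{x})$; a $1$-cut in $G_{v_1}$ here would be a bridge in $G$, contradicting $3$-edge-connectedness. For (ii) and (iv), every cross-edge of $\desc{v_2}$ (resp.\ $\desc{v_2} \cup \desc{v_3}$) that stays inside $\desc{v_1}$ must belong to $\delta_G(A) = \{e_1,e_2,e_3\}$; using $\gamma(\desc{v_2},\desc{v_3}) = 0$ in (ii) this collapses to the single edge $(\parent{v_2},v_2)$, and in (iv) it is exactly $\{(\parent{v_2},v_2),(\parent{v_3},v_3)\}$. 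Edges that genuinely leave $\desc{v_1}$ are absent from $G_v$ and drop out of the accounting, so the claimed $1$- and $2$-cuts persist for every pivot on the respective upward path.

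Parts (iii) and (v) go by contradiction. Given the assumed $2$-cut $\delta_{G_{v_1}}(U)$ with $U = \desc{x} \cup \desc{y}$ (in (iii)) or $U = \desc{x} \cup \desc{v_3}$ (in (v)), I take $C = U \oplus \desc{v_2} \oplus \desc{v_3}$ and apply Corollary \ref{main-cor} to obtain
\[
\delta_G(C) \;=\; \delta_G(U) \oplus \delta_G(\desc{v_2}) \oplus \delta_G(\desc{v_3}).
\]
The edges leaving $\desc{v_1}$ that $\delta_G(U)$ inherits from $\desc{v_2} \cup \desc{v_3}$ cancel against the corresponding contributions in $\delta_G(\desc{v_2})$ and $\delta_G(\desc{v_3})$; the $\gamma(\desc{v_2},\desc{v_3})$ non-tree edges appear in both $\delta_G(\desc{v_2})$ and $\delta_G(\desc{v_3})$ and hence cancel; and the tree edge common to $\delta_G(U)$ and the appropriate $\delta_G(\desc{v_j})$ (namely $(\parent{v_3},v_3)$ in (v)) cancels too. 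What remains has two edges, so $\delta_G(C)$ is a $2$-cut of $G$, again impossible. In the simplest sub-case of (iii), where $U \subseteq A$, no XOR is needed at all: $\delta_G(U) = \delta_{G_{v_1}}(U)$ is already a $2$-cut of $G$.

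The main obstacle will be the case analysis driving these cancellations in (iii) and (v). The node $y$ in (iii) (resp.\ $x$ in (v)) can occupy several tree positions relative to $v_2, v_3$ — inside $\desc{v_j}$, strictly above on $\rho(v_j)$, or off both root-paths — and in each sub-case one must verify that $\delta_G(\desc{y})$'s (resp.\ $\delta_G(\desc{x})$'s) edges that leave $\desc{v_1}$ line up exactly with those from $\desc{v_2} \cup \desc{v_3}$ for the cancellation, and that the edges counted by $\gamma(\desc{v_2},\desc{v_3})$ really are internal to $C$ so that they do not reappear in the residual cut. This bookkeeping, rather than any single clever identity, is where the bulk of the work lies.
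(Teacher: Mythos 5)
Your overall plan coincides with the paper's. Both start from the observation that
$A = \desc{v_1}\setminus(\desc{v_2}\cup\desc{v_3}) = \desc{v_1}\oplus\desc{v_2}\oplus\desc{v_3}$
has $\delta_G(A)$ equal to the assumed $3$-min-cut, so every edge with exactly one endpoint in $A$ is one of the three tree edges. From this ``confinement'' fact, parts (i), (ii), and (iv) follow by exactly the edge-accounting you describe, and the paper does the same.

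For part (v) your XOR is a cosmetic rephrasing: since $\desc{v_2}\subset\desc{x}$ and $\desc{x}\cap\desc{v_3}=\emptyset$, the set $C=(\desc{x}\cup\desc{v_3})\oplus\desc{v_2}\oplus\desc{v_3}$ collapses to $\desc{x}\setminus\desc{v_2}$, which is precisely the set the paper analyses directly to exhibit the forbidden $2$-cut $\{(\parent{x},x),(\parent{v_2},v_2)\}$.

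Part (iii) is where you genuinely deviate, and it is the place where your argument has a gap. You claim that $C=(\desc{x}\cup\desc{y})\oplus\desc{v_2}\oplus\desc{v_3}$ always yields a $2$-cut of $G$ after cancellation, but the cancellation you describe does not happen uniformly. The paper instead performs a case analysis on the tree position of $y$, and the reasons for a contradiction are different in each case. If $y\in\desc{v_2}$ (or $\desc{v_3}$), the confinement fact already forces $\gamma(\desc{x},\desc{y})=0$, which directly violates your hypothesis; no XOR is involved. If $y$ is a strict ancestor of exactly one of $v_2,v_3$ (say $\desc{v_2}\subseteq\desc{y}$ and $\desc{v_3}\cap\desc{y}=\emptyset$), then the assumed $2$-cut of $G_{v_1}$ is already impossible, because the $\gamma(\desc{v_2},\desc{v_3})>0$ non-tree edges cross $(\desc{x}\cup\desc{y},\;\desc{v_1}\setminus(\desc{x}\cup\desc{y}))$ inside $G_{v_1}$ — again, no $3$-connectivity or XOR is needed. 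And if $y$ is a common ancestor of $v_2$ and $v_3$, so that $\desc{v_2}\cup\desc{v_3}\subseteq\desc{y}$, then running your XOR literally gives $\delta_G(C)=\{(\parent{x},x),(\parent{y},y),(\parent{v_2},v_2),(\parent{v_3},v_3)\}$: the two tree edges $(\parent{v_2},v_2),(\parent{v_3},v_3)$ are internal to $\desc{y}$, hence absent from $\delta_G(\desc{x}\cup\desc{y})$, and survive the symmetric difference. A $4$-cut is not a contradiction, so the single uniform XOR calculation cannot replace the case split. (For the record, the paper's case (b) also tacitly assumes $\desc{v_3}\cap\desc{y}=\emptyset$, so it does not cleanly handle this last sub-case either — but that is a separate concern.) The $U\subseteq A$ sub-case you single out is the only one where your plan and the paper's proof coincide. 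To close the gap you would need to spell out the case analysis, at which point you are essentially reproving part (iii) the paper's way.
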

\begin{obs}
	\label{obs:case_5_simple_observation}
	If there is an induced cut as given in Lemma \ref{lemma:case_5_layered_algo_characterization_A}, then no node in the set $\desc{v_1} \setminus \paren{\desc{v_2} \cup \desc{v_3}}$ is connected to any node in $V \setminus \desc{v_1}$. Also apart from the edges $(\parent{v_2},v_2)$ and $(\parent{v_3},v_3)$ node in $\paren{\desc{v_2} \cup \desc{v_3}}$ do not have any edge with nodes in $\desc{v_1} \setminus \paren{\desc{v_2} \cup \desc{v_3}}$.
\end{obs}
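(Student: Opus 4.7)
The plan is to exploit the fact that the cut set is exactly the three specified tree edges, and then trace, for each of these three edges, which ``side'' of the partition each endpoint lies on. All the structural claims then fall out by elimination.

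First I would rewrite the cut in a more convenient form. Since $\desc{v_2}, \desc{v_3} \subseteq \desc{v_1}$ and $\desc{v_2} \cap \desc{v_3} = \emptyset$, the symmetric difference satisfies
$$\desc{v_1} \oplus \desc{v_2} \oplus \desc{v_3} \;=\; \desc{v_1} \setminus \paren{\desc{v_2} \cup \desc{v_3}} \;=:\; A.$$
Applying Corollary \ref{main-cor} and the hypothesis that $\curly{(\parent{v_1},v_1),(\parent{v_2},v_2),(\parent{v_3},v_3)}$ is the induced cut of Lemma \ref{lemma:case_5_layered_algo_characterization_A}, I obtain
$$\delta(A) \;=\; \curly{(\parent{v_1},v_1),(\parent{v_2},v_2),(\parent{v_3},v_3)}.$$
Thus the only edges having exactly one endpoint in $A$ are these three tree edges; any other edge of $G$ either lies entirely inside $A$ or entirely inside $V\setminus A = (V\setminus\desc{v_1}) \cup (\desc{v_2}\cup\desc{v_3})$.

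Next, I would determine for each of the three cut edges, the partition-side of each endpoint. For $(\parent{v_1},v_1)$: $v_1\in A$ (since $v_1\in\desc{v_1}$ while $v_1\notin\desc{v_2}\cup\desc{v_3}$ as $v_2,v_3$ are strict descendants) and $\parent{v_1}\in V\setminus\desc{v_1}$. For $(\parent{v_2},v_2)$: $v_2\in\desc{v_2}$, and $\parent{v_2}\in\desc{v_1}\setminus\desc{v_2}$; moreover $\parent{v_2}\notin\desc{v_3}$ (else $v_2\in\desc{v_3}$, contradicting disjointness), so $\parent{v_2}\in A$. By the symmetric argument, $\parent{v_3}\in A$ and $v_3\in\desc{v_3}$.

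Now both claims fall out. For the first claim, any edge with one endpoint in $A$ and the other in $V\setminus\desc{v_1}$ must sit in $\delta(A)$, hence equal one of the three cut edges; by the classification above only $(\parent{v_1},v_1)$ has its non-$A$ endpoint outside $\desc{v_1}$, so no other edge of $G$ joins $A$ to $V\setminus\desc{v_1}$. For the second claim, any edge joining $\desc{v_2}\cup\desc{v_3}$ to $A$ likewise lies in $\delta(A)$; the only two cut edges with non-$A$ endpoint inside $\desc{v_2}\cup\desc{v_3}$ are $(\parent{v_2},v_2)$ and $(\parent{v_3},v_3)$, proving the claim.

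There is no real obstacle here; the entire content is the bookkeeping step of checking, edge by edge, which side of $(A, V\setminus A)$ each endpoint falls on, once Corollary \ref{main-cor} identifies $A$ as the inducing vertex set. The only thing worth being careful about is the degenerate situation where some of $v_1,v_2,v_3$ coincide with ancestors of one another, but the disjointness/containment conditions in the hypothesis rule this out and guarantee that $\parent{v_2},\parent{v_3}\in A$ as used above.
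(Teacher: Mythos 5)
Your proof is correct and is, in substance, a fully worked-out version of the paper's one-line contradiction argument ("if such an edge existed, the given three edges would not be the cut-set"). Where the paper simply invokes minimality, you make the latent structure explicit: the inducing vertex set is $A = \desc{v_1}\oplus\desc{v_2}\oplus\desc{v_3} = \desc{v_1}\setminus(\desc{v_2}\cup\desc{v_3})$, so by Corollary \ref{main-cor} (or by noting that $A$ is one of the four components of $T$ minus the three tree edges) $\delta(A)$ equals exactly the three tree edges, and the endpoint classification then settles both claims by elimination. The one thing worth calling out, which you silently correct but do not flag, is that the first sentence of the observation as stated is slightly too strong: $v_1\in\desc{v_1}\setminus(\desc{v_2}\cup\desc{v_3})$ is in fact connected to $\parent{v_1}\in V\setminus\desc{v_1}$ via the tree edge $(\parent{v_1},v_1)$, so the correct conclusion (which you actually derive) is that no edge \emph{other than} $(\parent{v_1},v_1)$ joins these two sets; this matches how the observation is actually used in the proof of Lemma \ref{lemma:case_5_layered_algo_characterization_A}, but it would have been better to state the discrepancy.
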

\begin{proof}
	If such was the case then $\curly{(\parent{v_1},v_1),(\parent{v_2},v_2),(\parent{v_3},v_3)}$ would not have been a min-cut. We will prove the above claims using this simple observation.
\end{proof}
\vspace{0.5 cm}
\begin{proof}[Proof of Lemma \ref{lemma:case_5_layered_algo_characterization_A}]
	\begin{enumerate}[i)]
		\item As per given condition $v_2 \notin \desc{x}$ and $v_3 \notin \desc{x}$. By Observation \ref{obs:case_5_simple_observation}, the only nodes in $\desc{v_1}$ which have an adjacent edge to the nodes in $V\setminus \desc{v_1}$ are in the vertex sets $\desc{v_2}$ or $\desc{v_3}$. Now since $(\parent{x},x)$ is a min-cut in $G_{v_1}$ thus no node in $\desc{x}$ has an adjacent edge with a node in $V\setminus \desc{v_1}$ . Thus if $(\parent{x},x)$ is a min-cut for the sub-graph $G_{v_1}$ then it is also a min-cut edge in the whole graph $G$ which is a contradiction because the graph is 3-connected.
		\item Since $\gamma(\desc{v_2},\desc{v_3}) = 0$ thus there are no edges between vertices in the sets $\desc{v_2}$ and $\desc{v_3}$. Also there are no edges between vertices in the vertex set $\desc{v_1}\setminus \desc{v_2}$ and $\desc{v_2}$. Thus $(\parent{v_2},v_2)$ is a cut-edge in subgraph $G_{v_1}$ and in all the subgraphs in $\subgraphsequence{v_2}{v_1}$ by Observation \ref{obs:layered_algor_graph_sequence_min_cut}. Similarly for the edge $(\parent{v_3},v_3)$.
		
		\item
		Lets consider the simple case when $y$ is in either $\desc{v_2}$ or $\desc{v_3}$. Here $\gamma\paren{\desc{x},\desc{y}} = 0$ by Observation \ref{obs:case_5_simple_observation}

		The other case when $(\parent{y},y)$ is on one of the paths out of $\rho(v_2)$ or $\rho(v_3)$. WLOG let $(\parent{y},y)$ be on the path $\rho(v_2)$. Thus $\desc{v_2} \subseteq \desc{y}$. Also given that $\gamma(\desc{v_2},\desc{v_3}) > 0$ thus $\curly{(\parent{x},x),(\parent{y},y)}$ cannot be a cut induced by $\delta(\desc{x} \oplus \desc{y})$ in $G_{v_1}$ since $\desc{v_2} \subseteq \desc{y}$ and $\desc{v_3} \subseteq \desc{v_1} \setminus (\desc{x} \cup \desc{y})$ thus the vertex set $\desc{x} \cup \desc{y}$ and $\desc{v_1} \setminus (\desc{x} \cup \desc{y})$ have at least one edge between them

		Now let $y$ be not on either of the path $\rho(v_2)$ or $ \rho(v_3)$. Also $y$ is not in either of the vertex set $\desc{v_2}$ and $\desc{v_3}$. Now, as per given condition, $(\parent{x},x)$ is not on either of the path $\rho(v_2)$ or $ \rho(v_3)$.     Since both the edges are not on the paths $\rho(v_2)$ and $\rho(v_3)$ thus neither of $v_2$ or $v_3$ are in the vertex set $\desc{x}$ or $\desc{y}$. If here $\curly{(\parent{x},x),(\parent{y},y)}$ is a min-cut then there are edges between vertices of the two sets $\desc{x}$ and $\desc{y}$ and they do not have any other edge to the vertices in $\desc{v_1}\setminus (\desc{x} \cup \desc{y})$. Also here by Observation \ref{obs:case_5_simple_observation} nodes in $\desc{x} \cup \desc{y}$  do not have any edge in $V\setminus \desc{v_1}$. Thus by the above two statements nodes in $\desc{x} \cup \desc{y}$ do not have any adjacent edges to nodes in $V\setminus (\desc{x} \cup \desc{y})$ hence, the edge set $\curly{(\parent{x},x),(\parent{y},y)}$ persists as a min-cut in the whole graph which is a contradiction because the graph is 3-connected.

		\item Since $\gamma(\desc{v_2},\desc{v_3}) > 0$ thus there are some edges between vertices of the sets $\desc{v_2}$ and $\desc{v_3}$. Also by Observation \ref{obs:case_5_simple_observation} there are no edges between vertices in the vertex set $\desc{v_1}\setminus (\desc{v_2} \cup \desc{v_3})$ and $\desc{v_2}$. Similarly there are no edges between vertices in the vertex set $\desc{v_1}\setminus (\desc{v_2} \cup \desc{v_3})$ and $\desc{v_3}$. Here $z$ is the LCA of $v_2,v_3$ and there are edges which go between $\desc{v_2}$ and $\desc{v_3}$.  Also, by Observation \ref{obs:case_5_simple_observation} no edges other than $\curly{(\parent{v_2},v_2),(\parent{v_3},v_3)}$ connect nodes in $\desc{z}/ (\desc{v_2} \cup \desc{v_3})$ to nodes in $\desc{v_2} \cup \desc{v_3}$. Thus $\curly{(\parent{v_2},v_2),(\parent{v_3},v_3)}$ is a 2-cut in $G_z$. Similar argument can be given for all sub-graphs in the graph sequence $\subgraphsequence{z}{v_1}$
		%\ToDoY{Mohit}{Other way here is to prove for $G_{v_1}$ and then use Observation \ref{obs:layered_algor_graph_sequence_min_cut}}

		\item As per the given condition $x \in \desc{v_1} \setminus (\desc{v_2} \cup \desc{v_3})$ and $v_2 \in \desc{x}$. We will focus on the vertex set $\desc{x} \setminus \desc{v_2}$ which is not empty because $\level{x} < \level{v_2}$. If $\curly{(\parent{x},x),(\parent{v_3},v_3)}$ is an induced cut of $G_{v_1}$ then vertices in the set $\desc{x} \cup \desc{v_3}$ do not have an adjacent edge to vertices in $\desc{v_1} \setminus (\desc{x} \cup \desc{v_3})$. Thus vertices in $\desc{x} \setminus \desc{v_2}$ does not have an adjacent edge to vertices in $\desc{v_1} \setminus (\desc{x} \cup \desc{v_3})$. Further  $\desc{x} \setminus \desc{v_2} \subset  \desc{v_1} \setminus (\desc{v_2} \cup \desc{v_3})$ thus by Observation \ref{obs:case_5_simple_observation} vertices in $\desc{x} \setminus \desc{v_2}$ do not have an adjacent edge to vertices in $V \setminus \desc{v_1}$. Hence $\delta(\desc{x} \setminus \desc{v_2}) = \curly{(\parent{x},{x}),(\parent{v_2},{v_2})}$ is an induced cut in the whole graph which is a contradiction.
	\end{enumerate}
\end{proof}

Now we will give a variant of a convergecast algorithm for communicating  1-cut details and 2-cut detail of a node to the ancestors in Algorithm \ref{algo:converge-cast-algorithm}. Similarly for 2-cut detail. Note that for any node $v$ it will not receive all $\mathcal D^1(x)$ and $\mathcal D^2(x)$ for all $x \in \desc{v}$ but the ones which are relevant among others. We will characterize this in Observation \ref{obs:layered_algo_convergecasting_charecterize}.

\begin{algorithm}[h]
	\DontPrintSemicolon
	\SetKwFor{Forp}{for}{parallely}{endfor}
	\SetKwFor{Forw}{for}{wait}{endfor}
	\SetKwProg{phase}{}{}{}
	\phase{convergecasting 1-cut detail}{
		\If{node $a$ is a leaf node}{
			send $<\mathcal{D}^1(a),\level{a}>$ to $\parent{a}$
		}\ElseIf{node $a$ is a internal node}{
			\lFor{round $t = 0$}{send $<\mathcal{D}^1(a),\level{a}>$ to $\parent{a}$}
			\For{each subsequet round $t = 1$ to $t = D - \level{a}$}{
				collect message $\msg_c$ from all node $c \in \child(a)$\;
				\lIf{$\msg_c = \phi\ \forall c \in \child(a)$}{
					send $<\phi,\level{a} + t>$ to $\parent{a}$
				}
				\Else{
					$\mathcal D\gets$ among $\msg_c \forall c \in \child(a)$ choose the 1-cut detail from which 1-cut persists for the sub-graph closest to the root that is the lowest $6^{th}$ element (last) of the 1-cut detail tuple \label{line:converge_cast_contention_for_children}\;
					send $<\mathcal D,\level{a} + t>$ to $\parent{a}$
				}
				
			}
	}}    
	\setcounter{AlgoLine}{0}
	
	\phase{convergecasting 2-cut detail}{
		\If{node $a$ is a leaf node}{
			send $<\mathcal{D}^2(a),\level{a}>$ to $\parent{a}$
		}\ElseIf{node $a$ is a internal node}{
			\lFor{round $t = 0$}{send $<\mathcal{D}^2(a),\level{a}>$ to $\parent{a}$}
			\For{each subsequet round $t = 1$ to $t = D - \level{a}$}{
				collect message $\msg_c$ from all node $c \in \child(a)$\;
				\lIf{$\msg_c = \phi\ \forall c \in \child(a)$}{
					send $<\phi,\level{a} + t>$ to $\parent{a}$
				}
				\Else{
					$\mathcal D \gets$ among $\msg_c \forall c \in \child(a)$ choose the 2-cut detail for which 2-cut persists for the sub-graph closest to the root that is the lowest $11^{th}$ element (last) of the 2-cut detail tuple \label{line:converge_cast_contention_for_children_detail_2}\;
					send $<\mathcal D,\level{a} + t>$ to $\parent{a}$
				}
				
			}
	}}    
	\caption{Converge-cast algorithm to be run at every node $a$}
	\label{algo:converge-cast-algorithm}
\end{algorithm}

\begin{obs}
	\label{obs:layered_algo_convergecasting_charecterize}
	For some nodes $v_1,v_2,v_3 \in V\setminus \curly{r}$, let $\desc{v_2} \subset \desc{v_1}$ and $\desc{v_3} \subset \desc{v_1}$ and $\desc{v_2} \cap \desc{v_3} = \emptyset$. Let $\curly{(\parent{v_1},v_1),(\parent{v_2},v_2),(\parent{v_3},v_3)}$ be a min-cut (as in \CASE{5}). If $\gamma(\desc{v_2},\desc{v_3}) = 0$ then at the end of the Algorithm \ref{algo:converge-cast-algorithm}, $\LCA(v_2,v_3)$ has both ${\mathcal D^{1}(v_2)},\mathcal D^{1}(v_3)$. And when $\gamma(\desc{v_2},\desc{v_3}) >0$ then it has either of $\mathcal D^{2}(v_2)$, $\curly{\mathcal D^{2}(v_3)}$. Also Algorithm \ref{algo:converge-cast-algorithm} takes $O(D)$ rounds.
\end{obs}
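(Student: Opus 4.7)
The plan is to decompose the claim into three parts: the runtime bound, the $\gamma=0$ sub-case, and the $\gamma>0$ sub-case, with a single depth-indexed invariant about the convergecast doing most of the work.

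First, for the runtime, the loop inside Algorithm \ref{algo:converge-cast-algorithm} runs for at most $D-\level{a}$ rounds on each node $a$, each iteration sending one $O(\log n)$-bit message along a tree edge; summed across the tree this gives $O(D)$ rounds.  The key engine is the following invariant, which I would prove by induction on $t$: the message that a node $a$ forwards to $\parent{a}$ in round $t$ equals the non-empty $\mathcal D^i(x)$ (with $i\in\{1,2\}$ the relevant version) having the smallest pivot-node level among all $x$ at depth exactly $t$ in the subtree rooted at $a$, or $\phi$ if every such $x$ has an empty detail.  The base case follows from the round-$0$ line, and the inductive step from the fact that each child $c$ of $a$ contributes (by IH) the best depth-$(t-1)$ detail in its subtree, and these children's depth-$(t-1)$ subtrees partition the depth-$t$ subtree of $a$.

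For the case $\gamma(\desc{v_2},\desc{v_3})=0$, let $c_2$ (resp.\ $c_3$) be the child of $\LCA(v_2,v_3)$ on $\rho(v_2)$ (resp.\ $\rho(v_3)$).  Part (ii) of Lemma \ref{lemma:case_5_layered_algo_characterization_A} gives that $(\parent{v_2},v_2)$ is a $1$-cut in $G_{v_1}$, so the pivot level of $\mathcal D^1(v_2)$ is at most $\level{v_1}$.  Any other node $x$ at tree level $\level{v_2}$ inside $c_2$'s subtree lies in $\desc{v_1}\setminus(\desc{v_2}\cup\desc{v_3})$ and on neither $\rho(v_2)$ nor $\rho(v_3)$ (because at the same level as $v_2$ and in $c_2$'s branch); part (i) of the lemma then says $(\parent{x},x)$ is not a $1$-cut in $G_{v_1}$, and since cut sizes in the pivoted family are monotone as the pivot rises toward the root, the pivot level of $\mathcal D^1(x)$ strictly exceeds $\level{v_1}$.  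The invariant applied at $c_2$ with $t=\level{v_2}-\level{c_2}$ therefore produces $\mathcal D^1(v_2)$ as the forwarded message, which $\LCA(v_2,v_3)$ thus records.  A symmetric argument via $c_3$ delivers $\mathcal D^1(v_3)$.

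For the case $\gamma(\desc{v_2},\desc{v_3})>0$, assume without loss of generality that $\level{v_2}\le\level{v_3}$, so that the pair $\{(\parent{v_3},v_3),(\parent{v_2},v_2)\}$ is admissible in the definition of $\mathcal D^2(v_3)$.  Part (iv) of Lemma \ref{lemma:case_5_layered_algo_characterization_A} makes this pair an induced $2$-cut in $G_{v_1}$, so the pivot level of $\mathcal D^2(v_3)$ is at most $\level{v_1}$.  For every other node $x$ at tree level $\level{v_3}$ in $c_3$'s subtree, $x$ is off both $\rho(v_2)$ and $\rho(v_3)$ and lies outside $\desc{v_2}\cup\desc{v_3}$, so part (iii) of the lemma forbids any $2$-cut involving $x$ in $G_{v_1}$, pushing $\mathcal D^2(x)$'s pivot beyond level $\level{v_1}$.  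Hence the invariant at $c_3$ with $t=\level{v_3}-\level{c_3}$ forwards $\mathcal D^2(v_3)$ to $\LCA(v_2,v_3)$.  The main obstacle throughout is ruling out interference at the critical round by competing same-level nodes; nodes strictly deeper than $v_2$ or $v_3$ do not compete because they contribute only in later rounds, and intermediate nodes on $\rho(v_2)$ or $\rho(v_3)$ between $\LCA(v_2,v_3)$ and the target endpoint are strictly shallower than $v_2$ (resp.\ $v_3$) and hence absent from the depth-$t$ slice that decides the round.
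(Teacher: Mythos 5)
Your proof is correct and rests on the same ingredients as the paper's: parts (i)--(iv) of Lemma~\ref{lemma:case_5_layered_algo_characterization_A}, the monotonicity of cut persistence across pivoted subgraphs (Observation~\ref{obs:layered_algor_graph_sequence_min_cut}), and the observation that at the critical round no same-level competitor can supply a detail with a smaller pivot level. The meaningful difference is organizational rather than substantive: you formalize the ``no contention'' claim as a depth-indexed convergecast invariant proved by induction on the round number, whereas the paper states the corresponding claim informally (``$\mathcal D^1(v_2)$ will not have a contention at any node $u\in\ancestor{v_2}$ \ldots''); your invariant is the cleaner certificate, but it is the same line of attack. Two remarks. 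First, you do not invoke part~(\ref{lemma:case_5_layered_algo_characterization_A_subpt_5}), which the paper uses to argue that the partner recorded inside the forwarded $\mathcal D^2$ is in fact the correct one of $v_2,v_3$; that stronger claim is not part of the observation's statement, so omitting it here is harmless, though it becomes necessary when Algorithm~\ref{algo:case_5} later consumes the detail. Second, the paper's WLOG ``$\level{v_2}\le\level{v_3}$'' is internally inconsistent with its conclusion that $v_2$ holds the $2$-cut detail, since in Algorithm~\ref{algo:1-2-cut-detail} it is the deeper of the two endpoints that records the pair; your choice to track $\mathcal D^2(v_3)$ under $\level{v_2}\le\level{v_3}$ is the reading that matches the algorithm, and is covered by the statement's ``either of''.
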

\begin{proof}
	Let $z$ be the LCA of node $v_2,v_3$. Lets first consider $\gamma(\desc{v_2},\desc{v_3}) = 0$. Here based on Lemma \ref{lemma:case_5_layered_algo_characterization_A} (\ref{lemma:case_5_layered_algo_characterization_A_subpt_1} and \ref{lemma:case_5_layered_algo_characterization_A_subpt_3}) we can say that there does not exists a node  $x \in \desc{v_1} \setminus \paren{\desc{v_2} \cup \desc{v_3}}$ such that $(\parent{x},x)$ persists as a min-cut in $G_{v_1}$. Thus at line \ref{line:converge_cast_contention_for_children} of Algorithm \ref{algo:converge-cast-algorithm}, $\mathcal{D}^1(v_2)$ will not have a contention at any node $u \in \ancestor{v_2}$ such that $\level{u} > \level{z}$. Similarly for $v_3$. The first contention happen at node $z$ (LCA of $v_2$ and $v_3$). Let $z$ be the LCA of node $v_2,v_3$. Here $z$ will have both $\mathcal D^{1}(v_2)$ and $\mathcal D^{1}(v_3)$ received from two different children $c_1,c_2 \in \child(z)$.
	
	Now when $\gamma(\desc{v_2},\desc{v_3}) > 0$. Here by Lemma \ref{lemma:case_5_layered_algo_characterization_A} (pt. \ref{lemma:case_5_layered_algo_characterization_A_subpt_4}) we know that $\curly{(\parent{v_2},{v_2}),(\parent{v_3},{v_3})}$ is an induced 2-cut. WLOG Lets assume $\level{v_2} \leq \level{v_3}$.  Thus using Algorithm \ref{algo:comm_2} node $v_2$ can make a decision about the said induced cut of size 2. Also $\mathcal D^{2}(v_2)$ (2-cut detail of node $v_2$) contains information about the induced cut $\curly{(\parent{v_2},{v_2}),(\parent{v_3},{v_3})}$ because there does not exists any other node $x$ as per Lemma \ref{lemma:case_5_layered_algo_characterization_A} (pt. \ref{lemma:case_5_layered_algo_characterization_A_subpt_5}) such that $\level{x} < \level{v_3}$ and $\curly{(\parent{x},{x}),(\parent{v_2},{v_2})}$ is an induced cut of size 2 in $G_{v_1}$. Further using the Lemma \ref{lemma:case_5_layered_algo_characterization_A} (pt. \ref{lemma:case_5_layered_algo_characterization_A_subpt_3}) there will not be any contention for $\mathcal D^2(v_2)$ to reach to $\LCA(v_2,v_3)$.
	
	Also both the converge-casting modules in Algorithm \ref{algo:converge-cast-algorithm} run for $O(D)$ time because both 1-cut detail and 2-cut detail are of the size $O(\log n)$ bits and for each node the modules run for atmost $O(D)$ rounds.
\end{proof}
\begin{lemma}
	For some nodes $v_1,v_2,v_3 \in V\setminus \curly{r}$, let $\desc{v_2} \subset \desc{v_1}$ and $\desc{v_3} \subset \desc{v_1}$ and $\desc{v_2} \cap \desc{v_3} = \emptyset$ . Let $\curly{(\parent{v_1},v_1),(\parent{v_2},v_2),(\parent{v_3},v_3)}$ be a min-cut (as in \CASE{5}). Then we can find this min-cut in $O(D^2)$ rounds.
\end{lemma}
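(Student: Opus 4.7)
The plan is to assemble the pieces already constructed in this section: run the layered algorithm for 1- and 2-cuts to prime each node with its local view of persistent cuts, convergecast the 1-cut and 2-cut details, and finally apply the characterization from Lemma \ref{lemma:gen_3_cut_3_respect} at the node $z = \LCA(v_2, v_3)$. Concretely, first execute the \emph{Layered Algorithm for min-cut}, which by Observation \ref{obs:layered_algorithm_run_time} takes $O(D^2)$ rounds and supplies every node with knowledge of which of its incident tree edges survive as min-cuts of size $1$ or $2$ in each pivoted subgraph $G_u$ for $u \in \ancestor{\cdot}$. Immediately afterwards, each node $a$ locally assembles the two tuples $\mathcal{D}^1(a)$ and $\mathcal{D}^2(a)$ defined in Algorithm \ref{algo:1-2-cut-detail}; recall that $\mathcal{D}^1(a)$ packages $\eta(a)$ together with the highest pivot $v$ at which $(\parent{a},a)$ still cuts $G_v$ and the relevant $H_{\desc{a}}^v$, while $\mathcal{D}^2(a)$ packages the analogous information for the persisting 2-cut partner.

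Next, I would run the convergecast of Algorithm \ref{algo:converge-cast-algorithm} for both $\mathcal{D}^1$ and $\mathcal{D}^2$. By Observation \ref{obs:layered_algo_convergecasting_charecterize}, this runs in $O(D)$ rounds and guarantees, in the present \CASE{5} situation, that the node $z = \LCA(v_2, v_3)$ ends up in possession of the right witness: either $\mathcal{D}^1(v_2)$ and $\mathcal{D}^1(v_3)$ (when $\gamma(\desc{v_2},\desc{v_3}) = 0$) or $\mathcal{D}^2(v_2)$ (or symmetrically $\mathcal{D}^2(v_3)$) when $\gamma(\desc{v_2},\desc{v_3}) > 0$. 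The key to this guarantee is Lemma \ref{lemma:case_5_layered_algo_characterization_A}: parts (\ref{lemma:case_5_layered_algo_characterization_A_subpt_1}) and (\ref{lemma:case_5_layered_algo_characterization_A_subpt_2}) rule out competing 1- or 2-cut details in $\desc{v_1}$ that would otherwise dominate the contention-resolution step (lines \ref{line:converge_cast_contention_for_children} and \ref{line:converge_cast_contention_for_children_detail_2} of Algorithm \ref{algo:converge-cast-algorithm}) at ancestors of $v_2, v_3$ below $z$, while parts (\ref{lemma:case_5_layered_algo_characterization_A_subpt_3}) and (\ref{lemma:case_5_layered_algo_characterization_A_subpt_4}) confirm that the relevant $\mathcal{D}^1$ or $\mathcal{D}^2$ entries exist in the first place and encode the pivot $v_1$ correctly.

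Finally, node $z$ performs the verification. Observe that $z \in \desc{v_1}$, and by Lemma \ref{lemma:eta_known_to_all_nodes} $z$ already knows $\eta(v_1)$ from the previous phase. In the $\gamma(\desc{v_2},\desc{v_3}) = 0$ branch, combining $\mathcal{D}^1(v_2)$ and $\mathcal{D}^1(v_3)$ yields $\eta(v_2), \eta(v_3), \gamma(\desc{v_1},\desc{v_2}) = H^{v_1}_{\desc{v_2}}, \gamma(\desc{v_1},\desc{v_3}) = H^{v_1}_{\desc{v_3}}$; together with the assumption $\gamma(\desc{v_2},\desc{v_3}) = 0$ and $\eta(v_1)$ already present, $z$ has all six quantities needed to check the three equalities in $P_2$ of Lemma \ref{lemma:gen_3_cut_3_respect}. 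In the $\gamma(\desc{v_2},\desc{v_3}) > 0$ branch, the single tuple $\mathcal{D}^2(v_2)$ directly supplies $\eta(v_2), \eta(v_3), \gamma(\desc{v_2},\desc{v_3}), H^{v_1}_{\desc{v_2}}, H^{v_1}_{\desc{v_3}}$ and names the pivot $v_1$, so $z$ again holds all six quantities and applies Lemma \ref{lemma:gen_3_cut_3_respect}. If the characterization fires, $z$ reports the triple $\{(\parent{v_1},v_1),(\parent{v_2},v_2),(\parent{v_3},v_3)\}$ as a min-cut.

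The round complexity is the maximum of $O(D^2)$ for the layered phase and $O(D)$ for the convergecast and verification, giving $O(D^2)$ overall. The main obstacle I anticipate is not the algorithmic skeleton, which is largely plug-and-play, but rather a careful case check that the contention-resolution rules of Algorithm \ref{algo:converge-cast-algorithm} do not drop the witnessing detail in favour of a competing one on the way from $v_2, v_3$ up to $z$; this is exactly what Lemma \ref{lemma:case_5_layered_algo_characterization_A} has been engineered to handle, so the proof reduces to invoking the right clause in each sub-case and unpacking the tuple entries to match the six quantities demanded by Lemma \ref{lemma:gen_3_cut_3_respect}.
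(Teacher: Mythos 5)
Your proposal mirrors the paper's proof: run the layered algorithm for $1$- and $2$-cuts, assemble $\mathcal{D}^1$/$\mathcal{D}^2$ per Algorithm~\ref{algo:1-2-cut-detail}, convergecast via Algorithm~\ref{algo:converge-cast-algorithm} with the correctness guarantee from Observation~\ref{obs:layered_algo_convergecasting_charecterize} (itself resting on Lemma~\ref{lemma:case_5_layered_algo_characterization_A}), and have $\LCA(v_2,v_3)$ check the six quantities demanded by Lemma~\ref{lemma:gen_3_cut_3_respect}. The one detail you gloss over that the paper treats explicitly is that in the $\gamma(\desc{v_2},\desc{v_3})=0$ branch the two pivot nodes reported by $\mathcal{D}^1(v_2)$ and $\mathcal{D}^1(v_3)$ need not both equal $v_1$; the paper handles this by taking the deeper of the two pivots $v$ and arguing $H_{\desc{v_3}}^{u}=H_{\desc{v_3}}^{v}$, which keeps the verification in Algorithm~\ref{algo:case_5} sound.
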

\begin{proof}
	Here we run Algorithm \ref{algo:case_5} on each node $x$. This algorithm is run after the convergecast of 1-cut details and 2-cut details. This algorithm has two parts: finding min-cut from 1-cut detail and from 2-cut detail as given in Algorithm \ref{algo:converge-cast-algorithm}. We will argue that all the min-cuts of the form as given by \CASE 5 are correctly found by this algorithm. And also prove that this process takes $O(D^2)$ rounds.
	
	We claim that when Algorithm \ref{algo:case_5} is run on LCA of $v_2,v_3$ then it can find the required min-cut. Firstly, let see the case where $\gamma(\desc{v_2},\desc{v_3}) = 0$ here as per Observation \ref{obs:layered_algo_convergecasting_charecterize} the LCA of $v_2,v_3$ has both $\mathcal D^1(v_2)$ and $\mathcal D^1(v_3)$. Let $v$ be the pivotnode from $\mathcal D^1(v_2)$ and  $u$ be the pivotnode from $\mathcal D^1(v_3)$. The information regarding $H_{\desc{v_2}}^{v}$ and $H_{\desc{v_3}}^{u}$ is available from the respective one cut detail. WLOG let $\level{v} \geq \level{u}$. Thus here $H_{\desc{v_3}}^{u} = H_{\desc{v_3}}^{v}$ because the number of edges going out of the vertex set $\desc{v_3}$ which also goes out of the vertex $\desc{u}$ is same as the number of edges going out of the vertex set $\desc{v_3}$ which also goes out of the vertex $\desc{v}$. Now if the condition in Line \ref{line:algo-case5-if-condition-1} is satisfied then the min-cut is correctly found as per Lemma \ref{lemma:gen_3_cut_3_respect}.
	
	Further, moving to case when $\gamma(\desc{v_2},\desc{v_3}) > 0$. Here as per Observation \ref{obs:layered_algo_convergecasting_charecterize} one of $\mathcal D^2(v_2)$ or $\mathcal D^2(v_3)$ is present at node $v_3$. And when condition at Line \ref{line:algo-case5-if-condition-2} then the min-cut is correctly found as per Lemma \ref{lemma:gen_3_cut_3_respect}.
	\begin{algorithm}[h]
		\DontPrintSemicolon
		\SetKwFor{Forp}{for}{parallely}{endfor}
		\SetKwFor{Forw}{for}{wait}{endfor}
		\SetKwProg{phase}{}{}{}
		\phase{from 1-cut detail}{
			$nodePair \gets \curly{(a,b) \mid \mathcal D^1(a),\mathcal D^1(b)\text{ where received from two nodes } c_1,c_2 \in \child(v)}$\;
			\For{each $(a,b) \in nodePair$}{
				$v \gets pivotnode$ from $\mathcal D^1(a), u \gets pivotnode$ from $\mathcal D^1(b)$\;
				$z \gets \underset{z' \in \curly{v,u}}{argmax\ } \level{z'}$\;
				\If{$\eta(z) - 1 = H_{\desc{a}}^{z} + H_{\desc{b}}^{z}$ \& $\eta(a) - 1 = H_{\desc{a}}^{z}$ \& $\eta(b) - 1 = H_{\desc{b}}^{z}$}{ \label{line:algo-case5-if-condition-1}
					$\curly{(\parent{z},{z}),(\parent{a},{a}),(\parent{b},{b})}$ is a min-cut
				}
			}
		}    
		\setcounter{AlgoLine}{0}
		
		\phase{from 2-cut detail}{
			\For{ all node $a$ such that $\mathcal D^2(a)$ was received}{
				$a,b$ be $node1$ and $node2$ in $\mathcal D^2(a)$; $v$ be the pivotNode from $\mathcal D^2(a)$\;
				\If{$\eta(v) - 1 = H_{\desc{a}}^{v} + H_{\desc{b}}^{v}$ \& $\eta(a) - 1 = H_{\desc{a}}^{v} + \gamma(\desc{a},\desc{b}) $\& $\eta(b) - 1 = H_{\desc{b}}^{v}+\gamma(\desc{a},\desc{b}) $}{
					\label{line:algo-case5-if-condition-2}
					$\curly{(\parent{v},{v}),(\parent{a},{a}),(\parent{b},{b})}$ is a min-cut
				}
			}
		}    
		\caption{Algorithm to find min-cut as given by \CASE 5 (to be run at each internal node $x$)}
		\label{algo:case_5}
	\end{algorithm}
	
	Also, this whole process just takes $O(D^2)$ rounds because the layered algorithm for min-cut takes $O(D^2)$ rounds and the covergecast algorithm given in Algorithm \ref{algo:converge-cast-algorithm} takes just $O(D)$ rounds.
\end{proof}	
		%TEX root = ../../thesis.tex
\section{Summary}
In this chapter, we gave noval deterministic algorithm to find min-cuts of size $3$.  The backbone of our algorithm is the characterization presented in Lemma \ref{lemma:gen_3_cut_2_respect} and Lemma \ref{lemma:gen_3_cut_3_respect}. In this chapter, we showed that whenever there exists a min-cut of size $3$ at least one node in the whole network will receive the required quantities as per the characterization  in $O(D^2)$ rounds and thus the min-cut will be found. 

The communication strategies introduced in this chapter are of two flavours: \emph{Sketching} technique where we collected small but relevant information and \emph{Layered Algorithm} which is a convergecast technique based on a contention resolution mechanism.	

\chapter{Future Work}
	%TEX root = ../thesis.tex
We have discussed techniques for finding min-cut. Before our result, a decade old work existed to find small min-cuts by \cite{pritchard2008fast}. They gave an algebraic technique of random circulation for finding min-cut of size $1$ and $2$. On the contrary, we give deterministic and purely combinatorial techniques for finding min-cuts of size $1,2$ and $3$. Our algorithm takes $O(D)$ rounds for finding min-cut of size $1,2$ and $O(D^2)$ rounds for finding min-cut of size $3$. An immediate future goal could be to prove that the lower bound $O(D^2)$ for a min-cut of size $3$ which is not trivial.

Currently, there are some fundamental issues for finding min-cuts of size $4$ and above using our techniques. It would be interesting to extend our techniques for min-cut of size $4$. We are hopeful about it and give the following conjecture

\begin{conj}
There exists a distributed algorithm to find if a min-cut of size $k = o(\log n)$ exists in $\tilde{O}(D^k)$ rounds.
\end{conj}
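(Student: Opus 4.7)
The plan is to prove the conjecture by inductively extending the three techniques developed in the thesis. The base cases $k \in \{1,2,3\}$ are already established. For the inductive step, I would first generalize the cut-tree characterization: any min-cut of size $k$ must $j$-respect a fixed BFS tree $\tree$ for some $j \in [1,k]$, and for each such $j$ the cut set $\{(\parent{v_i},v_i)\}_{i=1}^{j}\cup \{e_1,\ldots,e_{k-j}\}$ is characterized by a system of equations in the single-vertex quantities $\{\eta(v_i)\}_{i \in [j]}$ and the pairwise quantities $\{\gamma(\desc{v_i},\desc{v_l})\}_{i,l}$, together with incidence data for the non-tree edges $e_t$. This extends Lemmas \ref{lemma:gen_3_cut_2_respect} and \ref{lemma:gen_3_cut_3_respect} and follows by applying Corollary \ref{main-cor} to the $j$-fold symmetric difference $\desc{v_1}\oplus\cdots\oplus\desc{v_j}$.

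Second, the sketching machinery would be invoked with its branching parameter set to $k$. The $k$-Sketch of Definition \ref{defn:k_sketch} has size $O(2^{k}\,D\log n)$ bits, which for $k=o(\log n)$ is sub-polynomial, so the construction of Lemma \ref{lemma:k-sketch_algorithm} still assembles every $\Sketch{k}{v}$ in $\tilde{O}(D^{2})$ rounds. Iterated reduced $k$-Sketches would handle the cases where some of the tree-vertices $v_1,\ldots,v_j$ are deeply nested along a common root-path. The hope is that, as in the size-three analysis, for every topology of $v_1,\ldots,v_j$ the $\eta$ and $\gamma$ values needed at the highest branching point all appear together inside a single $k$-Sketch (or reduced $k$-Sketch) held by some node.

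Third, I would apply the layered algorithm recursively. Assuming inductively that min-cuts of size $k-1$ can be detected in $\tilde{O}(D^{k-1})$ rounds, running this sub-algorithm in parallel in every pivoted subgraph $G_v$ at a fixed level of $\tree$ costs the same $\tilde{O}(D^{k-1})$ rounds (the subgraphs at a single level are edge-disjoint), and iterating over the $D$ levels gives $\tilde{O}(D^{k})$. Each node would then compile a generalized $(k-1)$-cut detail analogous to Algorithm \ref{algo:1-2-cut-detail}, and a contention-resolving convergecast---combined with a Broadcast-Type-$k$ primitive generalizing Lemma \ref{lemma:broadcast_1_2}---would deliver the required tuple of details to the lowest common ancestor of the participating tree-vertices for every candidate min-cut of size $k$.

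The hard part will be the case analysis. The tree-topologies realized by $v_1,\ldots,v_j$ proliferate rapidly with $k$---analogous to the seven cases enumerated in Lemma \ref{lemma:min-cut-size-3}---and each topology plausibly demands its own mix of full sketching, reduced sketching, and layered convergecast. In particular, proving a generalization of Lemma \ref{lemma:case_5_layered_algo_characterization_A}, which is what guarantees that at most $O(1)$ competing cut-details race toward any common ancestor, is the step most likely to break: if the number of competitors at an ancestor is allowed to grow with $k$, the convergecast immediately loses its efficiency and the $\tilde{O}(D^{k})$ target becomes infeasible without a genuinely new idea. The restriction $k=o(\log n)$ is precisely what keeps the sketch size $2^{k}D\log n$ sub-polynomial, so any proof that avoids sketching altogether would have to contribute a substantially different structural insight.
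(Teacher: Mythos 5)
The statement you were asked to prove is a \emph{conjecture}, not a theorem: the thesis offers no proof of it. The concluding chapter explicitly concedes that ``there are some fundamental issues for finding min-cuts of size $4$ and above using our techniques,'' and the conjecture is stated precisely because the author could not push the argument past $k = 3$. There is therefore no paper proof against which to compare your proposal, and you should not have expected to be able to close it.

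As a proof \emph{outline}, what you wrote is a sensible account of what an inductive extension would have to look like, and you have correctly located where it would break. The characterization step does extend cleanly by applying Corollary~\ref{main-cor} to the $j$-fold symmetric difference $\desc{v_1} \oplus \cdots \oplus \desc{v_j}$. One small quibble: the $k$-Sketch of size $O(2^k D \log n)$ bits is not assembled in $\tilde{O}(D^2)$ rounds as you claim --- the factor $2^k = n^{o(1)}$ is sub-polynomial but not polylogarithmic, so it is not hidden by the $\tilde{O}$; the correct bound is $O(2^k D^2)$, which is still $O(D^k)$ for $D \geq 2$ and $k \geq 2$, so this is not fatal. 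The genuine and fatal gap is the one you yourself flag: there is no analogue of Lemma~\ref{lemma:case_5_layered_algo_characterization_A} for $k \geq 4$. That lemma is a delicate, hand-crafted structural argument (the thesis needs seven separate cases just for $k = 3$, roughly one per nesting/disjointness pattern of the sets $\desc{v_1},\ldots,\desc{v_j}$), and its only role is to cap the number of cut-detail tuples contending at any ancestor during the layered convergecast. Without such a lemma, the contention-resolution step is swamped and $\tilde{O}(D^k)$ is unrecoverable by your argument. Your recursive appeal to $(k-1)$-cut detection in pivoted subgraphs also leans implicitly on persistence properties of sub-cuts across pivoted subgraph sequences that the thesis establishes only for $1$-cuts and $2$-cuts, and it is open whether the analogous persistence claims for $j$-cuts with $j \geq 3$ even hold. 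In short: you have written an honest proof proposal that matches the thesis's own expectations and pinpoints the same obstruction the thesis gestures at, but it is not a proof, and the conjecture remains open in your submission exactly as it does in the thesis.
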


\begin{singlespace}
  \bibliography{refs}
\end{singlespace}
\newpage
\large{\chapter*{\centering CURRICULUM VITAE}
\begin{tabular}{clll}
	1.&\textbf{NAME} & : & Mohit Daga\\
	2.&\textbf{DATE OF BIRTH} & : & July 5, 1993\\
	3.&\textbf{EDUCATION QUALIFICATIONS}     &  & \\
\end{tabular}
\begin{enumerate}[(a)]
	\item \textbf{Bachelor of Technology}
	
	\begin{tabular}{lll}
		{Institute} & :&The LNM Institute of Information Technology, Jaipur\\
		{Specialization} & : & Communication and Computer Engineering\\
	\end{tabular}
	\item \textbf{M.S. By Research}
	
	\begin{tabular}{lll}
	{Institute} & :&Indian Institute of Technology - Madras\\
	{Registration Date} & : & January 6, 2016\\
	\end{tabular}
\end{enumerate}
}
\newpage
\large{\chapter*{\centering GENERAL TEST COMMITTEE}
\begin{tabular}{lll}
	\textbf{CHAIRPERSON} & : & Prof. N.S. Narayanaswamy                       \\
	&   & Professor                                      \\
	&   & Department of Computer Science and Engineering \\
	&   &                                                \\
	\textbf{GUIDE}       & : & Dr. John Augustine                             \\
	&   & Associate Professor                            \\
	&   & Department of Computer Science and Engineering \\
	&   &                                                \\
	\textbf{MEMBERS}     & : & Dr. Shweta Agrawal                             \\
	&   & Assistant Professor                            \\
	&   & Department of Computer Science and Engineering \\
	&   &                                                \\
	&   & Dr. Krishna Jagannathan                        \\
	&   & Assistant Professor                            \\
	&   & Department of Electrical Engineering          
\end{tabular}
}
\end{document}